\newtheorem{theorem}{Theorem}
\newtheorem{lemma}{Lemma}
\newtheorem{claim}{Claim}[lemma]
\newtheorem{property}{Property}[lemma]
\newtheorem{corollary}{Corollary}[theorem]
\newtheorem{problem}{Problem}
\begin{document}

% General
\newcommand{\defeq}{:=}
\newcommand{\eps}{\varepsilon}
% Pseudocode
\newcommand{\ReturnCode}{\textbf{return}}

% Comments
\newcommand{\sidford}[1]{{\color{ForestGreen} \textbf{Aaron}: #1}} 
\newcommand{\liam}[1]{{\color{blue} \textbf{Liam}: #1}} 
\newcommand{\avi}[1]{{\color{purple} \textbf{Avi}: #1}} 
\newcommand{\vvw}[1]{{\color{Plum} \textbf{Virgi}: #1}} 
\newcommand{\uri}[1]{{\color{red} \textbf{Uri}: #1}} 
\newcommand{\blue}[1]{{\color{blue}#1}}

% Pseudocode
\newcommand{\codestyle}[1]{\texttt{#1}}
\newcommand{\Initialize}{\mbox{\codestyle{Initialize}}}
\newcommand{\Cycle}{\mbox{\codestyle{Cycle}}}
\newcommand{\CycleOdd}{\codestyle{CycleOdd}}
\newcommand{\BallOrCycle}{\codestyle{BallOrCycle}}
\newcommand{\ClusterOrCycleBounded}{\codestyle{ClusterOrCycleBounded}}
\newcommand{\ClusterOrCycle}{\codestyle{ClusterOrCycle}}
\newcommand{\SimpleCycle}{\codestyle{SimpleCycle}}
\newcommand{\Next}{\codestyle{Next}}
\newcommand{\Sample}{\codestyle{Sample}}
\newcommand{\Dijkstra}{\codestyle{Dijkstra}}
\newcommand{\Preprocess}{\codestyle{Preprocess}}
\newcommand{\HashTable}{\codestyle{HashTable}}
\newcommand{\Heap}{\codestyle{Heap}}
\newcommand{\RelaxNext}{\codestyle{RelaxNext}}
\newcommand{\Reminder}[1]{

\vspace{0.5em}
\noindent\textbf{Reminder of~\autoref{#1}.} \textit{\Paste{#1}}
\vspace{0.5em}

% \end{rreminder}
}

\newcommand{\ODD}{_{\normalfont\textsc{odd}}}
\newcommand{\PreprocessGraph}{\codestyle{Initialize}}
\newcommand{\Route}{\codestyle{Route}}
\newcommand{\TreeRoute}{\codestyle{TreeRoute}}
\newcommand{\N}{\mathbb{N}}
\newcommand{\MinCycle}{\codestyle{MinCycle}}
\newcommand{\Query}{\codestyle{Query}}
\newcommand{\TZQuery}{\codestyle{TZQuery}}
\newcommand{\MTZQuery}{\codestyle{MTZQuery}}
\newcommand{\Ball}{\codestyle{Ball}}
\newcommand{\DistanceOracle}{\codestyle{MTZQuery}}
\newcommand{\SparseOrCycle}{\codestyle{SparseOrCycle}}
\newcommand{\Intersection}{\codestyle{Intersection}}
\newcommand{\CycleAdditive}{\codestyle{CycleAdditive}}
\newcommand{\GenerateSi}{\codestyle{ComputeS}}
\newcommand{\ApproximateANSC}{\codestyle{ApproximateANSC}}
\newcommand{\ApproximateANSCUnweighted}{\codestyle{ApproximateANSCUnweighted}}

\newcommand{\codeNull}{\codestyle{null}}
\newcommand{\codeYes}{\codestyle{Yes}}
\newcommand{\codeNo}{\codestyle{No}}
\newcommand{\codeAnd}{~ \mathrm{and} ~}
\newcommand{\codeOr}{~ \mathrm{or} ~}
\newcommand{\wt}{\ell}
\newcommand{\Cl}{CL}
\newcommand{\CL}{CL}
\newcommand{\cl}{c\ell}

\newcommand{\LE}{\;\le\;}
\newcommand{\EQ}{\;=\;}
\newcommand{\GE}{\;\ge\;}
\newcommand{\Ot}{\tilde{O}}

\newcommand{\EE}{\mathbb{E}}
\newcommand{\RR}{\mathbb{R}}
\DeclarePairedDelimiter{\ceil}{\lceil}{\rceil}
\DeclarePairedDelimiter{\pair}{\langle}{\rangle}
\DeclarePairedDelimiter{\floor}{\lfloor}{\rfloor}

\ActivateWarningFilters[pdftoc]
\title{New approximate distance oracles and their applications}
\date{}
\author{Avi Kadria\thanks{Department of Computer Science, Bar Ilan University, Ramat Gan 5290002, Israel. E-mail {\tt avi.kadria3@gmail.com}.} \and Liam Roditty\thanks{Department of Computer Science, Bar Ilan University, Ramat Gan 5290002, Israel. E-mail {\tt liam.roditty@biu.ac.il}. Supported in part by BSF grants 2016365 and 2020356.}}
% \author{}
\maketitle

\thispagestyle{empty}

\begin{abstract}
Let $G = (V, E)$ be an undirected graph with $n$ vertices and $m$ edges, and let $\mu = m/n$.
A \emph{distance oracle} is a data structure designed to answer approximate distance queries, with the goal of achieving low stretch, efficient space usage, and fast query time.
While much of the prior work focused on distance oracles with constant query time, this paper presents a comprehensive study of distance oracles with non-constant query time.
We explore the tradeoffs between space, stretch, and query time of distance oracles in various regimes. Specifically, we consider both weighted and unweighted graphs in the regimes of stretch $< 2$ and stretch $\ge 2$.
In addition, we demonstrate several applications of our new distance oracles to the $n$-Pairs Shortest Paths ($n$-PSP) problem and the All Nodes Shortest Cycles ($ANSC$) problem.
Our main contributions are:

\begin{itemize}
    \item \textbf{Weighted graphs:} We present a new three-way trade-off between stretch, space, and query time, offering a natural extension of the classical Thorup–Zwick distance oracle [STOC’01 and JACM’05] to regimes with larger query time. Specifically, for any $0 < r < 1/2$ and integer $k \ge 1$, we construct a $(2k(1 - 2r) - 1)$-stretch distance oracle with $\tilde{O}(m + n^{1 + 1/k})$ space and $\tilde{O}(\mu n^r)$ query time.
    This construction provides an asymptotic improvement over the classical $(2k - 1)$-stretch and $O(n^{1 + 1/k})$-space tradeoff of Thorup and Zwick in sparse graphs, at the cost of increased query time.
    We also improve upon a result of Dalirrooyfard et al.~[FOCS’22], who presented a $(2k - 2)$-stretch distance oracle with $O(m + n^{1 + 1/k})$ space and $O(\mu n^{1/k})$ query time.
    In our oracle we reduce the stretch from $(2k - 2)$ to $(2k - 5)$ while preserving the same space and query time.
  \item \textbf{Unweighted graphs:} 
  We present a
  $(2k - 5, 4 + 2\ODD)$-approximation\footnote{$2\ODD=2\cdot (d(u,v)\mod 2)$.} distance oracle  with $O(n^{1 + 1/k})$ space and $O(n^{1/k})$ query time. This improves upon 
  a $(2k - 2, 2\ODD)$-approximation distance oracle of Dalirrooyfard et al.~[FOCS’22] while maintaining the same space and query time. 
    We also present a distance oracle that given $u,v\in V$ returns an estimate $\hat{d}(u,v) \le d(u,v) + 2\lceil d(u,v) / 3 \rceil + 2$, using $O(n^{4/3 + 2\eps})$ space and $O(n^{1 - 3\eps})$ query time. To the best of our knowledge, this is the first distance oracle that simultaneously achieves a multiplicative stretch $<2$, and a space complexity $O(n^{1.5 - \alpha})$, for some $\alpha > 0$.
  
  \item \textbf{Applications for $n$-PSP and $ANSC$:} 
  We present an $\tilde{O}(m^{1 - 1/(k+1)} n)$-time algorithm for the $n$-PSP problem, that for every input pair $\langle s_i,t_i \rangle$, where $i\in [n]$, returns an estimate $\hat{d}(s_i, t_i)$ such that $\hat{d}(s_i,t_i) \le d(s_i,t_i) + 2\lceil d(s_i,t_i) / 2k \rceil$. By allowing a small additive error, this result circumvents the conditional running time lower bound of $\Omega\left(m^{2 - \frac{2}{k+1}} \cdot n^{\frac{1}{k+1} - o(1)}\right)$, established by Dalirrooyfard et al.~[FOCS’22] for achieving $(1 + 1/k)$-stretch.
  Additionally, we present an $\Ot(mn^{1 - 1/k})$-time algorithm for the $ANSC$ problem that computes, for every $u \in V$, an estimate $\hat{c}_u$ such that $\hat{c}_u \le SC(u) + 2\lceil SC(u)/2(k - 1) \rceil$, where $SC(u)$ denotes the length of the shortest cycle containing $u$. This improves upon the $\Ot(m^{2 - 2/k}n^{1/k})$-time algorithm of Dalirrooyfard et al.~[FOCS'22], while achieving the same approximation guarantee.
\end{itemize}

We obtain our results by developing several new techniques, among them are the \emph{borderline vertices} technique and the \emph{middle vertex} technique,  which may be of independent interest.
\end{abstract}
\clearpage
\pagenumbering{arabic} 
\newpage
\section{Introduction}\label{S-Int}
Let $G=(V,E)$ be an undirected graph with $n=|V|$ vertices and $m=|E|$ edges with non-negative real edge weights, and let $\mu=m/n$ be the average degree. 
The distance $d_G(u,v)$ between $u$ and $v$ is the length of the shortest path\footnote{We omit $G$ when it is clear from the context.} between $u$ and $v$ in $G$.
An estimation $\hat{d}(u,v)$ of $d(u,v)$ is an  
$\alpha$-stretch if $d(u,v)\leq \hat{d}(u,v) \leq \alpha \cdot d(u,v)$.

In their seminal work, Thorup and Zwick~\cite{DBLP:journals/jacm/ThorupZ05} presented a data structure for distance approximations called a \textit{distance oracle}.
A distance oracle is a data structure with $o(n^2)$ space (otherwise we can trivially store the distance matrix), that supports distance queries between vertices.
For any integer $k\geq 1$, they~\cite{DBLP:journals/jacm/ThorupZ05} showed that in $O(kmn^{1/k})$ expected time, it is possible to preprocess a weighted undirected graph 
and create a distance oracle of size $O(kn^{1+1/k})$. For every $u,v \in V$ the query of the distance oracle returns a $(2k-1)$-stretch for $d(u,v)$ in $O(k)$ time.
Different aspects of distance oracles, such as construction time, query time, and stretch, have been studied since their introduction, more than two decades ago. For more details see for example~\cite{DBLP:journals/siamcomp/BaswanaK10, DBLP:conf/focs/MendelN06,Nilsen12,DBLP:conf/stoc/Chechik14,DBLP:conf/stoc/Chechik15,ElkinNW16,ElkinP16,DBLP:conf/focs/ElkinS23, DBLP:conf/icalp/Chechik024a, DBLP:conf/stoc/AbboudBKZ22,DBLP:conf/stoc/AbboudBF23, DBLP:conf/stoc/BiloCCC0KS23, DBLP:journals/corr/abs-2307-11677Bilo23ADO, bilo2024improved}.

Thorup and Zwick~\cite{DBLP:journals/jacm/ThorupZ05} employed Erd\H{o}s' girth conjecture\footnote{The girth of a graph is the length of its shortest cycle.} to establish a lower bound on the space/stretch tradeoff for distance oracles. The conjecture asserts the existence of graphs with $\Omega(n^{1+1/k})$ edges and girth $\ge 2k+2$.
Under this assumption, they proved that any distance oracle with stretch $t \leq 2k-1$ must use $\Omega(n^{1+1/k})$ bits on some input, for all $k \geq 1$. However, these lower bounds apply to graphs with $m = \Omega(n^{1+1/k})$ edges.

This motivates studying the $(2k-1)$-stretch/$O(n^{1+1/k})$-space tradeoff in sparser graphs, where $m = o(n^{1+1/k})$. \footnote{We remark that a distance oracle in such graphs must use $\Omega(m)$ space, because of similar arguments to the lower bound of~\cite{DBLP:journals/jacm/ThorupZ05}.}
% \footnote{Assuming the girth conjecture, any such result must still incur $\Omega(m)$ space, since subgraphs of graphs with girth $\ge 2k+2$ preserve the girth.}.
This approach has been applied in two different stretch regimes. Distance oracles with stretch $\ge 2$ (see, for example, \cite{DBLP:journals/corr/abs-1201-2703, DBLP:conf/esa/Agarwal14, DBLP:conf/infocom/AgarwalGH11, PatrascuRT12, DBLP:conf/wdag/AbrahamG11, ElkinP16}) and distance oracles with stretch $<2$ (see, for example, \cite{DBLP:conf/esa/Agarwal14, DBLP:conf/soda/AgarwalG13, DBLP:conf/icalp/Chechik024a, DBLP:conf/icalp/KopelowitzKR24, DBLP:journals/algorithmica/PoratR13}).

\subsection{Distance oracles with stretch $\ge 2$}

Our main result for stretch~$\ge 2$ establishes a three-way tradeoff among the key parameters of distance oracles: stretch, space, and query time.  
This new tradeoff improves upon previously known results for specific values of stretch, space, and query time.  
We prove the following:

\begin{theorem}[\autoref{S-2k-1-4c-Weighted}] \label{T-DO-2k-1-4c-Weighted}  \Copy{T-DO-2k-1-4c-Weighted}{
    Let $k \ge 5$ and let $0 < c < \frac{k}{2}-1$ be an integer.
    There is an $\Ot(m+n^{1+1/k})$ space distance oracle that given two query vertices $u,v\in V$ computes in  $O(\mu n^{c/k})$-time 
    a distance estimation $\hat{d}(u,v)$ that satisfies
    $d(u,v)\le \hat{d}(u,v) \le (2k-1-4c)\cdot d(u,v)$. The distance oracle is constructed in $\Ot(mn^{\frac{c+2}{k}})$ expected time. 
    }
\end{theorem}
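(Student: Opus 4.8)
The plan is to build on the Thorup--Zwick framework by introducing extra layers of sampling that let a longer query walk "save" stretch. In the classical $(2k-1)$-stretch oracle, a query from $u$ to $v$ alternates between the two bunches $B(u)$ and $B(v)$, climbing through $k$ levels of pivots $p_0(u), p_1(v), p_2(u), \dots$ and stopping at the first level where the current pivot lies in the other vertex's bunch. The worst case costs $2k-1$ because each of the $k-1$ "hops" up a level can add roughly $2 d(u,v)$ to the estimate. The idea here is that, by spending $O(\mu n^{c/k})$ query time we can afford to explore balls of size $\approx n^{c/k}$ around the relevant pivots; whenever two such balls intersect we get a short connection, effectively collapsing $c$ of the expensive levels at once and thereby cutting the stretch by roughly $4c$. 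So the construction I would use is: sample sets $A_0 \supseteq A_1 \supseteq \dots \supseteq A_k$ with the usual geometrically-decreasing densities, store for each vertex its distances to the witnesses $p_i(\cdot)$ as in TZ (giving the $\Ot(n^{1+1/k})$ space and $\Ot(mn^{(c+2)/k})$ construction time once the extra balls are accounted for), and additionally store, for the pivots at the levels that will be "skipped," a ball / partial-Dijkstra structure of size $\approx n^{c/k}$ that supports the intersection test.

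First I would set up the query procedure: run the standard alternating TZ search, but at the critical level perform a bounded exploration from $p_i(u)$ and from $p_i(v)$ (or from the appropriate pair of borderline pivots) out to $\approx n^{c/k}$ vertices each, and check whether the explored regions meet. The key quantitative step is the stretch accounting: I would show that if the exploration radius is chosen correctly then either (a) the two explorations intersect, in which case the returned estimate is $d(u,v)$ plus the sum of the exploration radii, which is bounded by a small multiple of $d(u,v)$ by the triangle inequality and the fact that we reached the critical level only because earlier estimates were already comparable to $d(u,v)$; or (b) they do not intersect, in which case each exploration region has size $\geq n^{c/k}$, so by a density/packing argument one of the pivots at $c$ levels higher already lies inside the explored region — and then we jump directly $c$ levels up, charging only the exploration radius instead of $c$ separate $2d(u,v)$ increments. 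Summing the increments over the remaining $k-1-c$ ordinary hops plus the single cheap jump gives the $2k-1-4c$ bound. The constraint $c < k/2 - 1$ (equivalently $k \geq 5$ once $c \geq 1$) is exactly what is needed to keep enough ordinary levels above the jump so that the top pivot is still guaranteed to be found in the other bunch.

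For the resources: the space is $\Ot(m + n^{1+1/k})$ because the added ball structures have total size $\sum_{\text{levels}} |A_i| \cdot n^{c/k}$, and the levels at which we store balls are high enough (density $\leq n^{-(c+1)/k}$ roughly) that this sums to $\Ot(n)$ or is absorbed by the $n^{1+1/k}$ term; I would verify the exact level at which balls must be kept to make this bookkeeping work. The query time is the cost of two bounded Dijkstra explorations of $\approx n^{c/k}$ vertices in a graph of average degree $\mu$, i.e.\ $\Ot(\mu n^{c/k})$, plus the $O(k)$ overhead of the ordinary TZ walk. The construction time is dominated by computing these bounded balls from the $|A_i|$ pivots, which is $\Ot(|A_i| \cdot \mu n^{c/k}) = \Ot(m n^{(c+2)/k})$ after substituting $|A_i| \approx n^{1 - i/k}$ at the relevant level $i$.

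The main obstacle I expect is the stretch analysis in case~(b): making precise the claim that a non-intersecting exploration of $n^{c/k}$ vertices forces a pivot $c$ levels higher to already be present, and simultaneously controlling the geometry so that the "cheap jump" really costs only $O(d(u,v))$ rather than something that grows with $c$. This is where a technique like the paper's \emph{borderline vertices} idea is presumably needed — identifying the right pair of pivots at the critical level whose mutual distance and distance to $u,v$ are all $O(d(u,v))$, so that the intersection test is both triggered appropriately and cheap when it succeeds. Getting the constants to land exactly on $4c$ (rather than, say, $4c - O(1)$ or $4c + O(1)$) will require careful case analysis of which vertex's bunch the search is climbing through at the moment of the jump, and handling the parity/asymmetry between the $u$-side and $v$-side explorations.
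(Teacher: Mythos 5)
Your plan is in the right family (spend the extra $O(\mu n^{c/k})$ query time exploring $n^{c/k}$-size neighborhoods so that $c$ levels of the Thorup--Zwick walk can be skipped), but as written it does not reach the stated stretch, and the step you defer is the actual crux. First, the arithmetic: a single ``cheap jump'' that skips $c$ levels saves only about $2c\cdot d(u,v)$, and indeed your own accounting (``$k-1-c$ ordinary hops plus the single cheap jump'') sums to roughly $2(k-1-c)d(u,v)+d(u,v)=(2k-2c-1)d(u,v)$, not $(2k-1-4c)d(u,v)$. The paper's $4c$ saving comes from two separate ingredients, and you have at most one of them. Ingredient (i): the exploration is done around the query vertices themselves, over the augmented clusters $C^*(u,A_c)$ and $C^*(v,A_c)$, and the borderline vertices $\tau_u,\tau_v$ (the farthest path vertices inside these clusters) give $\max\bigl(h_{c+1}(u),h_{c+1}(v)\bigr)\le d(u,v)$ --- i.e., \emph{both} endpoints skip the first $c$ levels, not just one. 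Ingredient (ii): the oracle explicitly stores all distances in $A_{c+1}\times A_{k-c-2}$ (size $n^{(c+2)/k}\cdot n^{1-(c+1)/k}=n^{1+1/k}$, which is also where the space bound is tight), so the alternating climb stops at level $k-c-2$ and the final estimate is $2h_{k-c-2}(\cdot)+d(u,v)+2h_{c+1}(\cdot)$ of the other endpoint; this early stop contributes the second $2c$. Your proposal contains neither this table nor the two-sided bound, so the target $2k-1-4c$ is not attained by the described mechanism.

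Second, the geometric claim at the heart of your case (b) --- that a non-intersecting exploration of $n^{c/k}$ vertices around a mid-level pivot must contain a pivot $c$ levels higher \emph{and} that jumping to it costs only $O(d(u,v))$ --- is precisely what you leave open (``where a technique like the paper's borderline vertices idea is presumably needed''). There is an unresolved tension: a size-bounded exploration gives you the hitting/packing argument but no radius control, while a radius-bounded exploration gives cost control but no size guarantee. The paper avoids this by exploring the cluster of the query vertex with respect to $A_c$: by definition every $w\in C(u,A_c)$ satisfies $d(u,w)<d(w,A_c)$, hence the borderline vertex satisfies $h_c(\tau_u)\le d(u,\tau_u)\le d(u,v)$, which yields either a path of length at most $3d(u,v)$ through $p_c(\tau_u)$ (caught because the query iterates over all of $C^*(u,A_c)$ and runs the TZ query from each such vertex) or $h_{c+1}(v)\le d(u,v)$; no explicit packing argument at query time is needed, and nothing is stored per pivot. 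Relatedly, your case (a) bound (``$d(u,v)$ plus the exploration radii, bounded because earlier estimates were already comparable to $d(u,v)$'') is not justified: the TZ invariant only gives $\min\bigl(h_i(u),h_i(v)\bigr)\le i\cdot d(u,v)$, so reaching a ``critical level'' does not bound the radii you would be charged. To repair the proposal you would need to (a) move the exploration down to $u$ and $v$ with respect to $A_c$ and prove the borderline-vertex bound, and (b) add the $A_{c+1}\times A_{k-c-2}$ distance table and redo the stretch accounting with the early stop.
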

An interesting tradeoff that follows from~\autoref{T-DO-2k-1-4c-Weighted} is a better space/stretch tradeoff than the classical Thorup and Zwick~\cite{DBLP:journals/jacm/ThorupZ05} $(2k-1)$-stretch / $\Ot(n^{1+1/k})$ space tradeoff, at a cost of a larger query time. 
Specifically, by setting $r=\frac ck$, we get a $(2k(1-2r)-1)$-stretch distance oracle that uses $\Ot(m+n^{1+1/k})$ space, at the price of $\Ot(\mu n^{r})$ query time.
% For example, if we allow a  $\Ot(\mu n^{\frac{1}{3}})$ query time we get a $(\frac{2}{3}k-1)$-stretch distance oracle that uses $\Ot(m+n^{1+1/k})$ space.
By setting $c=1$ in~\autoref{T-DO-2k-1-4c-Weighted}, we improve upon a recent result of  Dalirrooyfard, Jin, V. Williams, and Wein~\cite{DBLP:conf/focs/DalirrooyfardJW22}. They presented a $(2k-2)$-stretch distance oracle that uses $\Ot(m+n^{1+1/k})$ space and $\Ot(\mu n^{1/k})$ query time.\footnote{We remark that even more recently, Chechik, Hoch, and Lifshitz~\cite{DBLP:conf/soda/ChechikHL25} presented a $(2k-3)$-stretch $n$-PSP algorithm, however, they did not present a new distance oracle.
% and they used the fact the queries are known in advance.
% but without constructing a querying a distance oracle.
}
We improve the stretch from $(2k-2)$ to $(2k-5)$  while using the same space and query time.

To obtain~\autoref{T-DO-2k-1-4c-Weighted}, we develop a new technique called the \textit{borderline vertices} technique. Given $u,v\in V$, the borderline vertices 
$\tau_u$ and $\tau_v$ are two vertices on the shortest path between $u$ and $v$ that satisfy special properties that allow us to better exploit the structure of the  
Thorup and Zwick distance oracle. 
Using the borderline vertices technique, we also manage to achieve the following distance oracle, which improves upon the construction time of~\autoref{T-DO-2k-1-4c-Weighted} at the cost of increasing the stretch.
\begin{theorem}[\autoref{S-2k-4c-weighted}]\label{T-2k-4c-Weighted}\Copy{T-2k-4c-Weighted}{
    Let $k \ge 4$ and let $0 < c < \frac{k}{2}-1$ be an integer.
    There is an $\Ot(m+n^{1+1/k})$ space distance oracle that given  two query vertices $u,v\in V$ computes in $\Ot(\mu n^{c/k})$-time 
    a distance estimation $\hat{d}(u,v)$ that satisfies
    $d(u,v)\le \hat{d}(u,v) \le (2k-4c)\cdot d(u,v)$. The distance oracle is constructed in $\Ot(mn^{\frac{c+1}{k}})$ expected time. }
\end{theorem}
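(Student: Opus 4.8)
The plan is to follow the same \emph{borderline vertices} strategy that underlies \autoref{T-DO-2k-1-4c-Weighted}, but with a deliberately cheaper preprocessing phase; this is exactly what saves the $n^{1/k}$ factor in the construction time while costing one extra copy of $d(u,v)$ in the stretch. The skeleton is the classical Thorup--Zwick oracle: sample $V=A_0\supseteq A_1\supseteq\cdots\supseteq A_k=\emptyset$, each level from the previous one with probability $n^{-1/k}$, and store the bunches $B(v)$, the pivots $p_0(v),\dots,p_{k-1}(v)$, and the cluster-distance hash tables. This alone uses $\Ot(n^{1+1/k})$ space (absorbing the $\Ot(m)$ term) and is built in $\Ot(mn^{1/k})$ expected time.

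On top of it I would add only a small amount of low-level information. From every vertex $x$, run a truncated Dijkstra that settles the $\Ot(n^{(c+1)/k})$ vertices nearest to $x$; from that search keep just a constant amount of data per source --- the pivots $p_0(x),\dots,p_c(x)$ with their distances, and a pointer to a vertex $a(x)$ inside the settled ball whose distance to $A_c$ is comparable to the ball's radius (a ``cheap'' level-$c$ pivot) rather than to $c$ times it. Because $|A_i|=\Ot(n^{1-i/k})$, a ball of rank $\Ot(n^{(c+1)/k})$ contains, with high probability, everything needed to certify these quantities; the step costs $\Ot(mn^{(c+1)/k})$ time in total and adds only $\Ot(n)$ words of space. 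This is the single place where the proof of \autoref{T-DO-2k-1-4c-Weighted} does strictly more: it pushes this step one level further, to $A_{c+1}$, which is why it pays $n^{(c+2)/k}$ and in return saves the extra copy of $d(u,v)$.

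The query for a pair $u,v$ first runs a truncated Dijkstra from each of $u$ and $v$ settling the $\Ot(n^{c/k})$ nearest vertices, in total time $\Ot(\mu n^{c/k})$; if one of $u,v$ lies in the other's explored ball we already hold $d(u,v)$ exactly. Otherwise, using these two balls together with the precomputed data for levels $0,\dots,c$, I would identify the two \emph{borderline vertices} $\tau_u,\tau_v$ on the shortest $u$--$v$ path --- informally, $\tau_u$ is the vertex farthest from $u$ along the path that the search from $u$ can still ``reach'' with a cheap level-$c$ pivot, and symmetrically for $\tau_v$ --- and then run a Thorup--Zwick-style walk-up that, thanks to $\tau_u,\tau_v$ and their cheap pivots, effectively bypasses the lowest $2c$ levels of the climb. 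Since each level of the walk adds at most $d(u,v)$ to the running estimate and that estimate enters doubled into the final candidate $d(w,u)+d(w,v)$, bypassing $2c$ levels removes about $4c\cdot d(u,v)$ from the classical $(2k-1)\,d(u,v)$ bound; a careful accounting of the two restart steps and of the path segments from $u$ to $\tau_u$ and from $\tau_v$ to $v$ (whose lengths sum to at most $d(u,v)$) yields $\hat d(u,v)\le (2k-4c)\,d(u,v)$. The lower bound $\hat d(u,v)\ge d(u,v)$ is immediate, since every candidate we return is the length of a genuine $u$--$v$ walk and we output the minimum.

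The main obstacle is the stretch analysis, not the bookkeeping. The heart of it is a structural dichotomy for the borderline vertices: one must show that $\tau_u,\tau_v$ are well defined on the path, that the query balls of rank $\Ot(n^{c/k})$ plus the precomputed data suffice (with high probability) to locate them and their cheap level-$c$ pivots, and --- the delicate part --- that the two restart steps plus the walk-up over the remaining top levels genuinely buy $2c$ bypassed levels rather than fewer, i.e. that no more than $O(c)$ extra copies of $d(u,v)$ leak back through the restarts (leaking one more than in \autoref{T-DO-2k-1-4c-Weighted}, precisely because the level-$(c+1)$ structure is missing). Granting this, the expected construction time, the $\Ot(m+n^{1+1/k})$ space, the $\Ot(\mu n^{c/k})$ query time, and high-probability correctness follow from standard Chernoff bounds and the usual Thorup--Zwick estimates.
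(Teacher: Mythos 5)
There is a genuine gap, and it sits exactly where the paper pays its $\Ot(mn^{(c+1)/k})$ preprocessing. In the paper's construction the extra preprocessing is \emph{not} spent on per-vertex truncated searches: it is spent computing and storing the exact distances $d(x,y)$ for all $\pair{x,y}\in A_{k-c-1}\times A_{c}$ (one Dijkstra per vertex of $A_{k-c-1}$, $|A_{k-c-1}|=\Ot(n^{(c+1)/k})$). This table is what lets the query terminate the Thorup--Zwick climb at level $k-c-1$ from the top, via the candidate $h_{c}(u)+d(p_{c}(u),p_{k-c-1}(v))+h_{k-c-1}(v)$. The stretch saving of $4c$ is the sum of two separate savings of $2c$ each: the borderline-vertex/ball-intersection argument bounds $h_{c+1}(\cdot)\le d(u,v)$ and $h_c(\cdot)\le d(u,v)/2$ at the \emph{bottom}, and the stored cross-level table cuts $c$ levels off the \emph{top} of the climb. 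Your oracle keeps only $O(1)$ words per source from the truncated Dijkstras and stores no cross-level distances, so your walk must ascend to level $k-1$; with $h_{c+1}\le d(u,v)$ the standard accounting then gives only about $(2k-2c-1)d(u,v)$. The step ``bypassing $2c$ levels removes about $4c\cdot d(u,v)$'' conflates the two savings: bottom-only bypassing removes $2c$ copies of $d(u,v)$, not $4c$, so the claimed bound $(2k-4c)d(u,v)$ does not follow from your data structure.

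There are also two algorithmic problems in the query as described. First, you propose to ``identify'' the borderline vertices $\tau_u,\tau_v$, but they are defined on the (unknown) shortest path $P(u,v)$; they cannot be located directly. The paper circumvents this by iterating over \emph{every} vertex $u'$ of the augmented cluster $C^*(u,A_c)$ (size $\Ot(\mu n^{c/k})$, with $A_c$ built by \autoref{L-A-center} so this size holds w.h.p.) and evaluating $d(u,u')+\MTZQuery(u',v)$, so that the iteration in which $u'=\tau_u$ certifies the bound without ever knowing $P$; your query performs no such sweep, only a single ``walk-up.'' Second, your query-time exploration is a ball of the $\Ot(n^{c/k})$ nearest vertices by rank, but $C(u,A_c)=\{x: d(u,x)<d(x,A_c)\}$ is not a prefix of the distance ordering from $u$, so this ball need not contain $\tau_u$ at all; the guarantees \autoref{L-bound-C^*-h} and \autoref{L-bound-h_{c+1}(v)-tau_u} are stated for the cluster boundary, not for a rank-truncated ball. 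To repair the proposal you would need (i) to store the $A_{k-c-1}\times A_c$ distance table (this is where the $\Ot(mn^{(c+1)/k})$ construction time and part of the $\Ot(n^{1+1/k})$ space go), (ii) to replace ``identify $\tau_u,\tau_v$'' by the exhaustive cluster sweep with $\MTZQuery$, and (iii) to add the $\Intersection(u,v,B^*(u),B^*(v))$ step that yields $h_c\le d(u,v)/2$; at that point you have essentially reconstructed the paper's argument rather than a shortcut around it.
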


Agarwal, Godfrey, and Har{-}Peled~\cite{DBLP:conf/infocom/AgarwalGH11}  considered also small stretches, and presented a $2$-stretch distance oracle that uses $\Ot(m+n^{3/2})$ space and has $\Ot(\mu n^{1/2})$ query time, that is constructed in $\Ot(mn^{1/2})$ time, and a $3$-stretch distance oracle that uses $\Ot(m+n^{4/3})$ space and has $\Ot(\mu n^{1/3})$ query time, and is constructed in $\Ot(mn^{2/3})$ time.
These results suggest that the following general stretch/space/query time tradeoff may exist for every $k$.
\begin{problem}\label{Plm-1}
    For which values of $k$ it is possible to construct a $k$-stretch distance oracle that uses $\Ot(m+n^{1+1/k})$ space and has $\Ot(\mu n^{1/k})$ query time.
\end{problem}

Agarwal, Godfrey, and Har{-}Peled~\cite{DBLP:conf/infocom/AgarwalGH11} solved~\autoref{Plm-1} for $k=2,3$. 
By setting $c=1$ in~\autoref{T-2k-4c-Weighted} and~\autoref{T-DO-2k-1-4c-Weighted}, respectively, we 
obtain a $4$-stretch distance oracle that uses $\Ot(m+n^{5/4})$ space and has $\Ot(\mu n^{1/4})$ query time, and a $5$-stretch distance oracle that uses $\Ot(m+n^{6/5})$ space and has $\Ot(\mu n^{1/5})$ query time. Thus, we solve~\autoref{Plm-1} for $k=4,5$.
% 
% Furthermore, the borderline vertices technique enables us to achieve the following distance oracle.
% \begin{theorem}[\autoref{S-2k-4c-weighted}]\label{T-2k-3-Weighted} \Copy{T-2k-3-Weighted} 
% {
%     Let $k \ge 3$ be an integer.
%     There is an $\Ot(m+n^{1+1/k})$ space distance oracle that given  two query vertices $u,v\in V$ computes in $O(\mu n^{1/k})$-time a distance estimation $\hat{d}(u,v)$ that satisfies
%     $d(u,v)\le \hat{d}(u,v) \le (2k-3)\cdot d(u,v)$. The distance oracle is constructed in $\Ot(mn^{1/k})$ expected time. 
% }
% \end{theorem}
% By setting $k=3$ in~\autoref{T-2k-3-Weighted} we improve the construction time of the $3$-stretch distance oracle of Agarwal, Godfrey and Har{-}Peled~\cite{DBLP:conf/infocom/AgarwalGH11} from $\Ot(mn^{2/3})$ to $\Ot(mn^{1/3})$.
% Recently, Chechik, Hoch, and Lifshitz~\cite{DBLP:conf/soda/ChechikHL25} considered the problem of approximating $n$ pairs shortest paths ($n$-PSP) and presented a $(2k-3)$-stretch in $O(mn^{1/k})$ time, which can be also achieved by constructing the distance oracle of~\autoref{T-2k-3-Weighted} and querying it $n$ times. However, in their result, they do not create a distance oracle that ensures worst-case query time.

In addition, using the borderline vertices technique we obtain $(2k-3)$-stretch distance oracle that uses  $\Ot(m+n^{1+1/k})$ space, has $\Ot(\mu n^{1/k})$ query time, and is constructed in $\Ot(m n^{1/k})$ time (see~\autoref{T-2k-3-Weighted}).
For $k=3$ this improves the construction time of~\cite{DBLP:conf/infocom/AgarwalGH11} for a $3$-stretch distance oracle from  $\Ot(mn^{2/3})$ to $\Ot(mn^{1/3})$. Our results for weighted graphs are summarized in~\autoref{tab:weighted-new}.

Next, we turn our focus to \textit{unweighted} graphs. 
An estimation $\hat{d}(u,v)$ of $d(u,v)$ is an  
$(\alpha, \beta)$-approximation if $d(u,v)\leq \hat{d}(u,v) \leq \alpha d(u,v)+\beta$.
$(\alpha,\beta)$-approximations were extensively studied in the context of distance oracles, graph spanners, and emulators. (For more details see for example~\cite{DBLP:journals/siamcomp/ElkinP04, DBLP:conf/soda/ThorupZ06,DBLP:journals/talg/BaswanaKMP10,DBLP:conf/soda/Chechik13,VassilevskaSp15,DBLP:conf/icalp/Parter14,DBLP:journals/jacm/AbboudB17,DBLP:journals/siamcomp/AbboudBP18,DBLP:conf/infocom/AgarwalGH11,DBLP:journals/corr/abs-1201-2703, DBLP:journals/corr/abs-2307-11677Bilo23ADO}).

From the girth conjecture, it follows that in unweighted graphs, for every $(\alpha,\beta)$-distance oracle that uses $\Ot(n^{1+1/k})$ space, it must hold that $\alpha+\beta \ge 2k-1$.
However, this inequality must hold only for adjacent vertices. For non-adjacent vertex pairs, we prove the following simple lower bound.
\begin{theorem}[\autoref{S-LB-appendix}]\label{T-LB-2} \Copy{T-LB-2} {
    Assuming the Erd\H{o}s girth conjecture, any distance oracle that uses $o(n^{1+1/k})$ space must have an input graph $G=(V,E)$ and two vertices $u,v\in V$ such that $d(u,v)=2$ and $\hat{d}(u,v) \ge 2k+2$.
}
\end{theorem}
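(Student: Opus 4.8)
The plan is to run the classical Thorup--Zwick space lower bound, but with the hard queries placed between vertices at distance exactly $2$ rather than between adjacent vertices; the only new ingredient is a tiny gadget that shifts every relevant distance up by one. First I would fix the hard instance. By the Erd\H{o}s girth conjecture, for each large $\nu$ there is a graph $H=(V_H,E_H)$ with $|V_H|=\nu$, $|E_H|=\Omega(\nu^{1+1/k})$ and girth at least $2k+2$; after deleting isolated vertices we may assume every vertex of $H$ has a neighbour. Put $n=2\nu$, attach to $V_H$ a fresh pendant vertex $v^{*}$ for each $v\in V_H$, joined to $v$ by a single edge, and for each $E'\subseteq E_H$ let $G_{E'}$ be the graph on $V_H\cup\{v^{*}:v\in V_H\}$ with edge set $E'\cup\{vv^{*}:v\in V_H\}$. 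Every $G_{E'}$ has exactly $n$ vertices, and since each $v^{*}$ has degree one, $d_{G_{E'}}(u,v^{*})=d_{(V_H,E')}(u,v)+1$ for every $e=(u,v)\in E_H$.

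Next I would record the two distance facts driving the reduction, for a fixed $e=(u,v)\in E_H$. If $e\in E'$ then $d_{(V_H,E')}(u,v)=1$, so $d_{G_{E'}}(u,v^{*})=2$. If $e\notin E'$ then $(V_H,E')$ is a subgraph of $H-e$, and gluing $e$ onto any $u$--$v$ path of $H-e$ produces a simple cycle of $H$, which by the girth bound has at least $2k+2$ edges; hence $d_{(V_H,E')}(u,v)\ge d_{H-e}(u,v)\ge 2k+1$, so $d_{G_{E'}}(u,v^{*})\ge 2k+2$. Thus the single query pair $(u,v^{*})$ is at distance $2$ or at distance $\ge 2k+2$ according to whether $e\in E'$.

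Then I would carry out the counting step. A distance oracle always returns $\hat d\ge d$, and a structure occupying $s(n)$ bits has at most $2^{s(n)}$ configurations on $n$-vertex inputs; if $s(n)=o(n^{1+1/k})$ then $2^{s(n)}<2^{|E_H|}$ for all large $n$, so some two distinct $E'\ne E''$ yield identical oracle structures, hence identical answers on all queries. Choosing $e=(u,v)\in E'\triangle E''$, say $e\in E'$, we get $d_{G_{E'}}(u,v^{*})=2$ but $\hat d(u,v^{*})\ge d_{G_{E''}}(u,v^{*})\ge 2k+2$, so $(G_{E'},u,v^{*})$ is the promised bad instance. (Equivalently: an oracle that keeps $\hat d<2k+2$ on every distance-$2$ pair lets one read off $E'$ from its structure, forcing $s(n)\ge|E_H|=\Omega(n^{1+1/k})$.)

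The step I expect to need the most care --- though it is still light --- is the vertex accounting in the gadget. The naive way to force distance $2$ is to subdivide every edge of $H$, but that adds $|E_H|=\Omega(\nu^{1+1/k})$ vertices and would only yield an $\Omega(n)$ bound; attaching a single pendant per vertex of $H$ instead adds just $\nu$ vertices, so $n=\Theta(\nu)$ and all $\Omega(n^{1+1/k})$ switchable pairs are retained. The remaining ingredients --- the girth-to-cycle estimate and the pigeonhole/injectivity bound on space --- are routine and mirror the classical proof.
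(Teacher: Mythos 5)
Your proof is correct and follows essentially the same construction as the paper: a girth-conjecture graph with a pendant vertex attached to every node, so that the hard queries sit at distance exactly $2$, combined with an incompressibility argument to force $\hat{d}\ge 2k+2$ on some such pair. The only difference is in how the space bound is cashed in: you make the counting explicit (a family of subgraphs $G_{E'}$, pigeonhole over the $2^{s(n)}$ memory states, using only $\hat{d}\ge d$), whereas the paper argues more tersely that some edge $(u,v)$ is ``not preserved'' and then invokes the second-shortest distance $d_{>1}(u,v)\ge 2k+1$ together with the assumption that the oracle's answer is the length of a path in $G$ — so your version is, if anything, slightly more general and self-contained.
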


A \textit{spanner} is a subgraph $H\subseteq G$ that approximates distances \textit{without} supporting distance queries. Baswana, Kavitha, Mehlhorn, and Pettie~\cite{DBLP:journals/talg/BaswanaKMP10} and Parter~\cite{DBLP:conf/icalp/Parter14} presented a $k$-spanner for every $u,v\in V$ such that $d(u,v)=2$, and matched the lower bound of~\autoref{T-LB-2} for spanners. 
An interesting question is whether this lower bound can be matched by distance oracles with efficient distance queries as well.
% (that not only preserve distances but also support distance queries). 
Thus, we formulate the following problem.
\begin{problem}\label{Plm-2}
    For which values of $k$  it is possible to construct a distance oracle with $\Ot(n^{1+1/k})$ space that uses $\Ot(n^{1/k})$ query time,
    such that $d(u,v) \le \hat{d}(u,v) \le k\cdot d(u,v)$, for every $u,v\in V$ that satisfy $d(u,v)=2$.
\end{problem}
Agarwal, Godfrey, and Har{-}Peled~\cite{DBLP:conf/infocom/AgarwalGH11} presented 
for unweighted graphs a $(2,1)$-approximation distance oracle that uses $\Ot(n^{3/2})$ space, has $\Ot(n^{1/2})$ query time, and is constructed in $\Ot(mn^{1/2})$ time. They also presented a $(3,2)$-approximation distance oracle that uses $\Ot(n^{4/3})$ space, has $\Ot(n^{1/3})$ query time, that is constructed in $\Ot(mn^{2/3})$ time.

By analyzing more carefully the $(2,1)$-approximation distance oracle of~\cite{DBLP:conf/infocom/AgarwalGH11} it is possible to show that it is actually a $(2,1\ODD)$-approximation\footnote{$x\ODD$ is defined as $x\cdot 1\ODD$, where $1\ODD$ is $d(u,v) \mod 2$} distance oracle, and therefore solves~\autoref{Plm-2} for $k=2$.\footnote{In their work, the additive error comes from the fact that if $B(u)\cap B(v)=\emptyset$ then $\min(h(u),h(v))\le d(u,v)/2+1$. However, this can be improved by showing that if $B(u)\cap B(v)=\emptyset$ then $\min(h(u),h(v))\le d(u,v)/2+1\ODD$, see~\autoref{L-NO-Intersect}}
In the following two theorems, we show that it is possible to solve~\autoref{Plm-2} for $k=3$ and $k=4$ as well.
% The following two theorems, solves~\autoref{Plm-2} for $k=3,4$.
\begin{theorem}[\autoref{S-4-2}] \label{T-DO-3-Unweighted}\Copy{T-DO-3-Unweighted}{
There is a $\Ot(n^{4/3})$ space distance oracle that given two query vertices $u,v\in V$ computes in  $\Ot(n^{1/3})$-time 
a distance estimation $\hat{d}(u,v)$ that satisfies
$d(u,v)\le \hat{d}(u,v) \le 3d(u,v) + 2\ODD$. The distance oracle is constructed in $\Ot(mn^{1/3})$ expected time. }
\end{theorem}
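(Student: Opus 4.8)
I would build a two-level Thorup--Zwick-style hierarchy in which \emph{both} levels of balls are stored explicitly, and spend the $\Ot(n^{1/3})$ query budget scanning these balls for good ``meeting vertices''; the multiplicative $3$ (rather than the $2k-1=5$ of the plain $k=3$ oracle) will come from the \emph{middle vertex} technique. Concretely: sample $A_1\subseteq V$ by keeping each vertex independently with probability $\Theta(n^{-1/3}\log n)$, and $A_2\subseteq A_1$ by keeping each vertex of $A_1$ with probability $\Theta(n^{-1/3}\log n)$, so $|A_1|=\Ot(n^{2/3})$ and $|A_2|=\Ot(n^{1/3})$ w.h.p. For each $v\in V$ store its nearest vertices $p_1(v)\in A_1$, $p_2(v)\in A_2$ together with $h_1(v):=d(v,A_1)$ and $h_2(v):=d(v,A_2)$; the ball $B_1(v):=\{w:d(v,w)<h_1(v)\}$ with the distances on it; and the exact distances $d(v,c)$ to all $c\in A_2$. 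For each $a\in A_1$ store the ball $B_2(a):=\{w:d(a,w)<h_2(a)\}$ with the distances on it, and store the $|A_1|\times|A_1|$ distance matrix. By the standard ball-growing bound, $|B_1(v)|=\Ot(n^{1/3})$ and $|B_2(a)|=\Ot(n^{2/3})$ w.h.p., so each component uses $\Ot(n^{4/3})$ space; pivots/heights, the distances to $A_2$ (one Dijkstra per vertex of $A_2$), the balls, and the $A_1\times A_1$ matrix all come from the usual Thorup--Zwick cluster computation, which runs in $\Ot(mn^{1/3})$ expected time.

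\textbf{Query.} Given $u,v$, return the minimum of the following candidate estimates, each being the length of a walk from $u$ to $v$ (hence at least $d(u,v)$):
\begin{itemize}
\item $d(u,v)$, if $v\in B_1(u)$ or $u\in B_1(v)$;
\item $d(u,w)+d(w,v)$ over $w\in B_1(u)\cap B_1(v)$;
\item $d(u,c)+d(c,v)$ over $c\in A_2$;
\item $h_1(u)+d(p_1(u),p_1(v))+h_1(v)$;
\item for $a:=p_1(u)$, and symmetrically $a:=p_1(v)$: $h_1(u)+d(a,v)$ if $v\in B_2(a)$; $h_1(u)+d(a,w)+d(w,v)$ over $w\in B_1(v)\cap B_2(a)$; $h_1(u)+h_2(a)+d(p_2(a),v)$; and $d(u,p_2(w))+h_2(w)+d(w,v)$ over $w\in B_1(v)$.
\end{itemize}
All intersections are computed with a hash table in time linear in the smaller ball, so the query takes $\Ot(|A_2|+|B_1(u)|+|B_1(v)|)=\Ot(n^{1/3})$ time.

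\textbf{Correctness.} Put $d:=d(u,v)$ and assume the first candidate does not already give $d$, so $h_1(u)\le d$ and $h_1(v)\le d$. If $B_1(u)\cap B_1(v)\neq\emptyset$ then any common vertex $w$ has $d(u,w)<h_1(u)\le d$ and $d(w,v)<h_1(v)\le d$, so the second candidate is below $2d$. Otherwise the first-level balls are disjoint, and looking at the shortest $u$--$v$ path $P$ shows (this is the \emph{middle vertex} lemma) that $B_1(u)\cap P$ is a prefix and $B_1(v)\cap P$ a suffix of $P$, so their disjointness forces $h_1(u)+h_1(v)\le d+1$ and $\min(h_1(u),h_1(v))\le\lceil d/2\rceil$, with the bound $d/2$ when $d$ is even since the midpoint vertex of $P$ cannot lie in both balls. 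Taking $h_1(u)$ to be the smaller height and $a:=p_1(u)$, so $d(u,a)=h_1(u)\le\lceil d/2\rceil$ and $d(a,v)\le\lceil 3d/2\rceil$, a short case split on $B_2(a)$ handles most possibilities: if $v\in B_2(a)$ the estimate $h_1(u)+d(a,v)\le 2\lceil d/2\rceil+d\le 3d+2\ODD$; if $v\notin B_2(a)$ but $B_1(v)\cap B_2(a)\neq\emptyset$ a common vertex $w$ gives $h_1(u)+d(a,w)+d(w,v)\le 2h_1(u)+2d-2\le 3d+2\ODD$ (using $h_2(a)\le d(a,v)\le h_1(u)+d$). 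The remaining case ($v\notin B_2(a)$ and $B_1(v)\cap B_2(a)=\emptyset$) is where a second use of the middle vertex lemma, now on the $a$--$v$ path, forces $h_2(a)+h_1(v)\le d(a,v)+1$, hence $h_2(a)\le 2h_1(u)$, so the $A_2$-candidate through $p_2(a)$ costs at most $6h_1(u)+d$, which is $\le 3d$ once $h_1(u)$ is a small enough fraction of $d$; for the complementary ``balanced'' regime ($h_1(u)$ near $d/2$, hence $h_1(v)$ near $d/2$ too) I would invoke the \emph{borderline vertices} lemma, which pins a vertex $\tau$ near the middle of $P$ whose level-$1$ or level-$2$ pivot is captured by the $h_1(u)+d(p_1(u),p_1(v))+h_1(v)\le 2\bigl(h_1(u)+h_1(v)\bigr)+d$ candidate or by one of the $d(u,p_2(w))+h_2(w)+d(w,v)$ candidates over $w\in B_1(v)$, yielding the claimed $\le 3d+2\ODD$. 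Threading the $\lceil\cdot\rceil$'s from the middle-vertex inequalities through every case shows the additive slack is present only when $d$ is odd.

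\textbf{Main obstacle.} Everything up to and including the case ``$B_1(v)\cap B_2(a)\neq\emptyset$'' is routine. The hard part is the last case, where both the first- and the second-level balls fail to meet: the obvious detour through $p_2(p_1(u))$ can cost up to $5d$, and the two-pivot detour costs $3d+2$ even when $d$ is even. Squeezing this down to $3d$ on even distances is precisely the role of the borderline-vertices technique, and the bookkeeping that shows the residual additive term is $2\ODD$ rather than a fixed constant is the most delicate point of the proof.
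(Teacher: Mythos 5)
You have correctly identified where the difficulty lies, but your proof does not close it, and the gap is real. After the level-$0$ balls $B_1(u),B_1(v)$ are found disjoint, the only candidate you can bound unconditionally is the two-pivot detour $h_1(u)+d(p_1(u),p_1(v))+h_1(v)\le d+2\bigl(h_1(u)+h_1(v)\bigr)\le 3d+2$, so for even $d$ everything rests on your refinement, and that refinement fails at two points. First, the inference ``$h_2(a)+h_1(v)\le d(a,v)+1$, hence $h_2(a)\le 2h_1(u)$'' would need $h_1(u)+h_1(v)\ge d+1$, whereas disjointness of the level-$0$ balls gives only the \emph{opposite} inequality $h_1(u)+h_1(v)\le d+1$; when $h_1(v)$ is small, $h_2(a)$ can be as large as roughly $h_1(u)+d$ and the $p_2(a)$-detour is useless. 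Second, even granting $h_2(a)\le 2h_1(u)$, your bound $6h_1(u)+d$ beats $3d$ only for $h_1(u)\le d/3$, and in the balanced regime $d/3<h_1(u)\le d/2$ you offer no argument: the two-pivot candidate is still only $\le 3d+2$, and nothing forces some $w\in B_1(v)$ to have both $h_2(w)$ and $d(u,p_2(w))$ small. Invoking borderline vertices cannot rescue this, because that lemma bounds $h_2$ of an endpoint only when a pivot of the borderline vertex falls \emph{outside} the level-$1$ bunch $\{w\in A_1:d(v,w)<h_2(v)\}$, and the complementary case is handled by iterating over a stored cluster and testing bunch membership --- neither of which your structure supports. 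Indeed there is a legal configuration of your data structure (even $d$, $h_1(u)=d/2$, $h_1(v)=d/2+1$, the pivots $p_1(u),p_1(v)$ on side paths behind $u$ and $v$, and the only $A_2$ vertices on further side paths at distances $d$ and $d+2$ behind those pivots) in which every intersection you test is empty, every $A_2$-detour costs at least $4d$, and the best candidate is the two-pivot one at exactly $3d+2>3d=3d+2\ODD$; so the query as specified cannot be proven (and is not) within the claimed bound.

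The paper closes exactly this case by storing two objects you dropped and by a different case split (only the middle vertex technique is used for this theorem; borderline vertices are not). It stores the clusters $C(w)=\{x:d(x,w)<d(x,A_1)\}$, kept small by building $A_1$ via \autoref{L-A-center}, and the level-$1$ Thorup--Zwick bunches $B_1(\cdot)\subseteq A_1$, and the query consists of just three candidates: $\Intersection(u,v,C(u),C(v))$, $\Intersection(u,v,B_1(u),B_1(v))$, and the $p_2$-detours. With $\tau$ the midpoint of the shortest path, either $h_1(\tau)>d/2+0.5\ODD$, in which case $\tau\in C(u)\cap C(v)$ and the exact distance is returned (this is precisely the situation your oracle cannot detect), or $p_1(\tau)$ is within $d$ of both endpoints, and then either it lies in both level-$1$ bunches (bunch intersection gives $\le 2d+1\ODD$) or $h_2$ of one endpoint is at most $d+1\ODD$, so the $p_2$-detour gives $\le 3d+2\ODD$. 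If you wish to keep your balls-around-pivots design, you must add candidates playing the role of these two intersections. A secondary problem: computing all radius-$h_2(a)$ balls $B_2(a)$ for $a\in A_1$ is not ``the usual Thorup--Zwick cluster computation''; the TZ charging argument does not apply to such balls, and a single high-degree vertex contained in many of them (e.g.\ the center of a star) already pushes the truncated-Dijkstra cost beyond $\Ot(mn^{1/3})$, so the construction-time claim would also need a separate argument.
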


\begin{theorem}[\autoref{S-DO-4-Unweighted}] \label{T-DO-4-Unweighted} \Copy{T-DO-4-Unweighted}{
There is a $\Ot(n^{5/4})$ space distance oracle that given two query vertices $u,v\in V$ computes in  $\Ot(n^{1/4})$-time 
a distance estimation $\hat{d}(u,v)$ that satisfies
$d(u,v)\le \hat{d}(u,v) \le 4d(u,v) + 3\ODD$. The distance oracle is constructed in $\Ot(mn^{1/2})$ expected time. }
\end{theorem}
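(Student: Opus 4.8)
The plan is to build a four-level sampled hierarchy and exploit the $\Ot(n^{1/4})$ query budget to scan bunches that a constant-query oracle cannot afford. Sample $V = A_0 \supseteq A_1 \supseteq A_2 \supseteq A_3$ with $|A_i| = \Ot(n^{1-i/4})$ (each $A_{i+1}$ a rate-$n^{-1/4}$ sample of $A_i$), and for each $v$ let $p_i(v)$ be its nearest vertex of $A_i$ and $\delta_i(v) = d(v,p_i(v))$. Store, each also in a hash table for $\Ot(1)$ membership-and-distance lookups: all $p_i(v),\delta_i(v)$; a full BFS tree from every $a\in A_3$ (hence every distance from $A_3$); for every $v$ the short ball $B_1(v)=\{w:d(w,v)<\delta_1(v)\}$ (expected size $n^{1/4}$); for every $a\in A_1$ the ball $B_2(a)=\{w:d(w,a)<\delta_2(a)\}$ (expected size $n^{1/2}$); and for every $a\in A_2$ the ball $B_3(a)=\{w:d(w,a)<\delta_3(a)\}$ (expected size $n^{3/4}$). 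Each of the four families totals $\Ot(n^{5/4})$ whp, giving the space bound. The $p_i,\delta_i$ and the $A_3$-BFS trees cost $\Ot(mn^{1/4})$; the balls are built by a bounded Dijkstra from each centre, and charging the edges explored from a centre against the Thorup--Zwick clusters that contain it bounds the total construction time by $\Ot(mn^{1/2})$, which dominates.

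The query on $(u,v)$ outputs the minimum of a constant-size family of candidates, each a sum of stored distances along a $u$-to-$v$ walk through one or two hubs; it only scans $B_1(u)$ and $B_1(v)$ (size $\Ot(n^{1/4})$) and performs $\Ot(1)$ hash work per scanned vertex, so it runs in $\Ot(n^{1/4})$. From $u$'s side put $a_1=p_1(u)$ and $a_2=p_2(a_1)$. The \emph{intersection candidates} are, for each $w\in B_1(v)$: if $w\in B_1(u)$, the value $d(u,w)+d(w,v)$; if $w\in B_2(a_1)$, the value $\delta_1(u)+d(a_1,w)+d(w,v)$; if $w\in B_3(a_2)$, the value $\delta_1(u)+\delta_2(a_1)+d(a_2,w)+d(w,v)$. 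The \emph{landmark candidate} is $\delta_1(u)+\delta_2(a_1)+\delta_3(a_2)+d(p_3(a_2),v)$, whose last term is stored exactly since $p_3(a_2)\in A_3$. All of these are run also with $u$ and $v$ exchanged. The analysis rests on the unweighted ball-intersection dichotomy (cf.\ \autoref{L-NO-Intersect}): for a ball of radius $\delta_x$ about $x$ and a ball of radius $\delta_y$ about $y$, either the two are disjoint, which forces $\delta_x+\delta_y\le d(x,y)+1$ and hence $\min(\delta_x,\delta_y)\le\lceil d(x,y)/2\rceil=(d(x,y)+1\ODD)/2$; or they meet, and then necessarily at a vertex of some shortest $x$--$y$ path, so the matching intersection candidate equals $d(x,y)$ exactly.

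Write $D=d(u,v)$. If the level-$1$ test meets, the answer is $D$. Otherwise $\delta_1(u)+\delta_1(v)\le D+1$, so WLOG $\delta_1(u)\le(D+1\ODD)/2$ and $d(a_1,v)\le(3D+1\ODD)/2$. If the level-$2$ test meets, its candidate is $\delta_1(u)+d(a_1,v)\le 2D+1\ODD$. Otherwise $\delta_2(a_1)+\delta_1(v)\le d(a_1,v)+1$, and the argument now splits on which of $\delta_2(a_1),\delta_1(v)$ is the smaller: when $\delta_1(v)$ is small we restart the whole process from $v$'s side, and when $\delta_2(a_1)$ is small we climb on $u$'s side to $a_2$ and, if its level-$3$ test also fails, on to the $A_3$-landmark, at each step re-anchoring the residual segment at a ``middle vertex'' read off one of the scanned bunches. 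The point of always following the smaller $\delta$ is that the radius of the next crossed ball is then at most half the current residual distance, so the accumulated detours shrink geometrically and, together with the exact final landmark hop, sum to at most $4D$. A naive one-sided climb is \emph{not} enough --- it double-counts the $2\delta_2$ and $2\delta_3$ detours and yields only $\approx 5.75\,D$ --- so the real content is this balanced case analysis; carrying the $1\ODD$ offset through the at-most-three uses of the dichotomy, and using that the exact landmark distance contributes no parity loss, pins the additive term at $3\ODD$. In particular $\hat d(u,v)\le 8$ whenever $d(u,v)=2$, which settles \autoref{Plm-2} for $k=4$. The main obstacle is exactly this balanced, $u\leftrightarrow v$-symmetric, multi-level case analysis together with the middle-vertex re-anchoring; the space bound, the $\Ot(n^{1/4})$ query, and the $\Ot(mn^{1/2})$ construction follow by routine (if careful) arguments from the choices above.
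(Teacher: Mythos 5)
Your space, query-time and construction bounds are fine, but the heart of the theorem --- the bound $\hat d(u,v)\le 4d(u,v)+3\ODD$ --- is exactly the part you leave as a sketch, and the sketch does not go through. Your prescription ``follow the smaller radius; when $\delta_1(v)$ is smaller restart from $v$'s side'' is circular in the natural hard case: take a path-shaped region with $\delta_1(u)=\delta_1(v)=(D+1)/2$ (pivots placed behind $u$ and behind $v$), $\delta_2(a_1)=\delta_2(b_1)=D+1$ with $a_2,b_2$ placed further behind, and $\delta_3(a_2)=\delta_3(b_2)=2D+2$ with the $A_3$ landmarks behind those. One checks that with these radii all three of your intersection tests (levels $1,2,3$, on both sides) are automatically empty, and the only surviving candidates are the two $A_3$-landmark routes, each of value $\delta_1+\delta_2+\delta_3+d(p_3(a_2),v)\approx 8D$. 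So the stated candidate set simply cannot certify $4D+3\ODD$, and the rule ``restart from the other side'' makes no progress (each side's smaller radius is the other endpoint's $\delta_1$). The claim that the detours ``shrink geometrically'' is also false: disjointness at one level only bounds the next radius by roughly the current center-to-center distance, which grows (after $\delta_1(u)\le(D+1)/2$, the radius $\delta_2(a_1)$ can still be as large as $D+1$), which is precisely why your own ``naive'' accounting lands near $6$--$8$ times $D$ rather than $4$.

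The missing ingredient is structural, not just a sharper case analysis. The paper reaches $4D+3\ODD$ in \autoref{S-DO-4-Unweighted} by storing the full distance table $A_2\times A_1$ (size $n^{1/2}\cdot n^{3/4}=n^{5/4}$), which makes the candidate $h_2(u)+d(p_2(u),w)+d(w,v)$ available for \emph{every} scanned $w\in B_1(v)$; combined with the middle-vertex lemmas (\autoref{L-h-rho-big}--\autoref{L-tau-Not-In-B1uB1v}), which bound $h_2(u)$ or $h_2(v)$ by $D+1\ODD$ and let the query hit $w=p_1(\tau)\in B_1(v)$, this is what caps the stretch at $4$. Your storage (BFS trees only from the much sparser $A_3$, plus bunches $B_2$ of $A_1$-vertices and $B_3$ of $A_2$-vertices) cannot return $d(\text{level-2 pivot},w)$ for an arbitrary $w\in B_1(v)$ unless $w$ happens to lie in $B_3(a_2)$, so the ``middle-vertex re-anchoring'' you invoke is not implementable with what you store; replacing it by hops through $A_3$ landmarks loses roughly a factor of two per level. (Two further, smaller losses: your level-2 pivot is $a_2=p_2(p_1(u))$, and $\delta_1(u)+\delta_2(a_1)$ can far exceed $h_2(u)=d(u,A_2)$, which the paper uses directly; and the paper also needs the $B_0$-intersection step to bound $\min(h_1(u),h_1(v))$, which has no counterpart in your candidate list.) To rescue the approach you would have to either add an $A_2\times A_1$-type table (at which point you are back to the paper's construction) or supply a genuinely new analysis showing your candidates suffice --- the configuration above shows they do not.
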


In~\autoref{T-DO-3-Unweighted} we improve the distance oracle of~\cite{DBLP:conf/infocom/AgarwalGH11} from more than a decade ago. In particular we improve the construction time from $\Ot(mn^{2/3})$ to $\Ot(mn^{1/3})$, and the approximation from $(3,2)$ to $(3,2\ODD)$.
% The distance oracle of~\autoref{T-DO-3-Unweighted} improves the construction time from $\Ot(mn^{2/3})$ to $\Ot(mn^{1/3})$, and the approximation from $(3,2)$ to $(3,2\ODD)$ of the $(3,2)$-approximation distance oracle of~\cite{DBLP:conf/infocom/AgarwalGH11} from more than a decade ago. 
Dalirrooyfard, Jin, V. Williams, and Wein~\cite{DBLP:conf/focs/DalirrooyfardJW22} presented a $(2k-2,2\ODD)$-stretch distance oracle that uses $\Ot(n^{1+1/k})$ space and $\Ot(n^{1/k})$ query time.
By setting $k=4$ in the $(2k-2,2\ODD)$-stretch distance oracle of~\cite{DBLP:conf/focs/DalirrooyfardJW22} they obtain a $(6,2\ODD)$-stretch distance oracle. In~\autoref{T-DO-4-Unweighted} we improve the approximation from $(6, 2\ODD)$ to $(4,3\ODD)$ while maintaining the same space and query time.

To obtain our new distance oracles for unweighted graphs, presented in~\autoref{T-DO-3-Unweighted} and~\autoref{T-DO-4-Unweighted}, we develop a new technique for using the middle vertex in the path, called the \textit{middle vertex} technique. 
Let $u,v \in V$, the middle vertex $\tau$ is a vertex on the shortest path between $u$ and $v$ that satisfies $d(u,\tau)=d(u,v)/2+0.5\ODD$ and $d(v,\tau)=d(u,v)/2-0.5\ODD$. 
As with the borderline vertices technique for \textit{sparse weighted} graphs, the middle vertex technique enables us to better exploit the structure of the distance oracles of Thorup and Zwick in \textit{dense unweighted} graphs.

% Using the properties of $\tau$ and $B_0(\tau)$, we bound $h_2(u)$ by $d(u,v)+1\ODD$ rather than $1.5d(u,v)+1\ODD$ as in~\cite{DBLP:conf/focs/DalirrooyfardJW22} and~\cite{DBLP:conf/infocom/AgarwalGH11}.
% This new technique together with our new borderline vertices technique (adapted for unweighted graphs), we obtain all of our new distance oracles for unweighted graphs.

By combining the middle vertex technique with the borderline vertices technique, we manage to achieve two more distance oracles for dense unweighted graphs. 
The first distance oracle improves the approximation of the distance oracle of~\autoref{T-DO-4-Unweighted} from $(4,3\ODD)$ to $(3,2+2\ODD)$ at the cost of increasing the query time to $\Ot(n^{1/2})$. 
\begin{theorem}[\autoref{S-3,2+2ODD-4}] \label{T-DO-3-2-Unweighted} \Copy{T-DO-3-2-Unweighted} {
There is a $\Ot(n^{5/4})$ space distance oracle that given  two query vertices $u,v\in V$ computes in  $\Ot(n^{1/2})$-time 
a distance estimation $\hat{d}(u,v)$ that satisfies
$d(u,v)\le \hat{d}(u,v) \le 3d(u,v) + 2+2\ODD$. The distance oracle is constructed in $\Ot(mn^{1/2})$ expected time. }
\end{theorem}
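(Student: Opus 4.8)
The plan is to \emph{reuse} (with only a minor augmentation) the data structure built for~\autoref{T-DO-4-Unweighted} and to spend the larger $\Ot(n^{1/2})$ query budget on a deeper local exploration around both query vertices. Recall that that construction samples a set $A\subseteq V$ of size $\Ot(n^{3/4})$ and stores, for every $v\in V$, its pivot $p(v)$ (the closest vertex of $A$) together with its ball $B(v)=\{w: d(v,w)<d(v,p(v))\}$ and the exact distances to the vertices of $B(v)$; since $\EE|B(v)|=\Ot(n^{1/4})$ this costs $\Ot(n^{5/4})$. On top of it we keep an auxiliary distance oracle on the sampled vertices that, within the $\Ot(n^{1/2})$ budget, returns good estimates between vertices of $A$ while using $\Ot(n^{5/4})$ space; a low-space stretch-$2$ oracle of the Agarwal--Godfrey--Har-Peled type on a cluster graph of $A$ fits here, as $|A|^{3/2}=\Ot(n^{9/8})$ and $|A|^{1/2}=\Ot(n^{3/8})$. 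Thus the space stays $\Ot(n^{5/4})$, the construction time stays $\Ot(mn^{1/2})$, and only the query changes.

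Given $u,v$, set $d=d(u,v)$ and let $\tau$ be the middle vertex of a fixed shortest $u$--$v$ path $P$. First run a BFS from $u$ and a BFS from $v$ in parallel, each halting after $\Ot(n^{1/2})$ vertices are settled or as soon as the two searches meet; if they meet, return the exact distance. Otherwise we have explored balls $N_u\ni u$, $N_v\ni v$ whose radii $\rho_u,\rho_v$ satisfy $\rho_u+\rho_v<d$, since a common vertex of $P$ would otherwise lie in $N_u\cap N_v$. Now bring in the two \emph{borderline vertices}: let $\tau_u$ be a farthest vertex of $P$ lying in $N_u$ and $\tau_v$ a farthest vertex of $P$ lying in $N_v$; both $d(u,\tau_u)$ and $d(v,\tau_v)$ are known exactly from the BFS, and since we stored $B(\cdot)$ and $p(\cdot)$ for \emph{every} vertex, we also know $p(\tau_u),p(\tau_v),B(\tau_u),B(\tau_v)$ at query time. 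The returned estimate is the minimum length over a small family of genuine $u$--$v$ walks: (a) the Thorup--Zwick walk $d(u,p(u))+\hat d_A(p(u),p(v))+d(p(v),v)$; (b) the \emph{borderline} walk $d(u,\tau_u)+d(\tau_u,p(\tau_u))+\hat d_A\!\big(p(\tau_u),p(\tau_v)\big)+d(p(\tau_v),\tau_v)+d(\tau_v,v)$, taken also over the choice of $\tau_u,\tau_v$ among all sufficiently deep path vertices found in the two explored balls. Since each of these is the length of an actual walk, $\hat d(u,v)\ge d(u,v)$ is automatic.

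For the upper bound I split according to how far apart $\tau_u$ and $\tau_v$ lie on $P$. If $d(u,\tau_u)+d(v,\tau_v)\ge d-O(1)$ --- that is, the two borderline vertices straddle the middle vertex $\tau$ up to an $O(1)$ slack --- then the middle-vertex technique pays off: $d(\tau_u,\tau_v)=O(1)$, hence $d(p(\tau_u),p(\tau_v))\le d(\tau_u,p(\tau_u))+O(1)+d(\tau_v,p(\tau_v))$, so the auxiliary oracle's multiplicative factor is applied only to an $O(1)$-plus-pivot-detour segment; walk (b) then telescopes, with leading behaviour $d+2\cdot\tfrac d2=2d$, while the pivot detours (each bounded by $\tfrac d2+O(1)$ by the very definition of the borderline vertex) account for the extra additive multiple of $d$ that brings the factor to $3$; a careful parity bookkeeping --- replacing every ``$+1$'' coming from $\lceil\cdot/2\rceil$-rounding and from odd detours by its $\ODD$-aware version --- turns the residual $O(1)$ into exactly $2+2\ODD$, giving $\hat d(u,v)\le 3d+2+2\ODD$. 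If instead $\tau_u$ and $\tau_v$ are far apart on $P$, then one BFS, say the one from $u$, got ``stuck'' inside a small radius while having settled $\Omega(n^{1/2})$ vertices; a hitting-set argument (a vertex set of size $\omega(n^{1/4}\log n)$ meets $A$ w.h.p.) then forces $d(u,p(u))$ to be small and forces the bunch of a path vertex deep inside $N_u$ to be available, so that we can ``teleport'' from $u$ onto $P$ far beyond $\tau_u$ essentially for free and re-reduce to the previous case, or directly to the analysis of~\autoref{T-DO-4-Unweighted}, whose guarantee is no worse in this regime.

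The step I expect to be the real obstacle is making this case analysis airtight while keeping the multiplicative factor at exactly $3$ and the additive term at exactly $2+2\ODD$: the borderline walk (b) routes through the auxiliary $A$-oracle, which is itself only $(2,1\ODD)$-approximate on a cluster graph whose edge lengths already carry their own additive slack, so one must argue that this compound error is charged only against an $O(1)$-length portion of $P$ and never against a $\Theta(d)$-length portion --- which is precisely what the borderline-vertex choice is designed to guarantee, but verifying it requires simultaneously controlling the pivot detours $d(\tau_u,p(\tau_u))$, $d(\tau_v,p(\tau_v))$, the gap $d(\tau_u,\tau_v)$, and all parities. The remaining pieces --- the space and construction-time bounds and the never-underestimate property --- are routine given the earlier sections.
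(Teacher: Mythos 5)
Your proposal has genuine gaps, both in the query-time accounting and in the stretch analysis, and these are exactly the points where the paper's construction does something you do not.

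First, your query algorithm cannot actually locate the ``borderline'' vertices $\tau_u,\tau_v$: they are defined relative to an unknown shortest path $P$, so the only way to realize walk (b) is to minimize over candidate pairs from the two explored balls, i.e.\ over $N_u\times N_v$, which is $\Ot(n^{1/2})\cdot\Ot(n^{1/2})=\Ot(n)$ pairs, each requiring a call to your auxiliary $A$-oracle. This breaks the $\Ot(n^{1/2})$ query budget. The paper never pairs two unknown path vertices: its query only iterates over $B_0(u)$ and over $B_0(v)$ \emph{separately} (each of size $\Ot(n^{1/4})$, and unions of bunches of size $\Ot(n^{1/2})$ for the two $\Intersection$ calls), pairing each candidate $w$ with the \emph{fixed} pivots $p_2(w)$ and $p_1(v)$ (resp.\ $p_1(u)$) and reading the exact distance $d(p_2(w),p_1(v))$ from the stored $A_2\times A_1$ table. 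Second, your BFS-truncated balls carry no structural relation to $A$, so the claim that the pivot detours $d(\tau_u,p(\tau_u))$, $d(\tau_v,p(\tau_v))$ are ``bounded by $d/2+O(1)$ by the very definition of the borderline vertex'' is unsupported: a vertex-count cutoff says nothing about $d(\tau_u,A)$. The paper gets such bounds only because its special vertices are defined with respect to the cluster/bunch hierarchy (e.g.\ $h_c(\tau_u)\le d(u,\tau_u)$ since $\tau_u\notin C(u)$, and $h_1(\tau')\le\lceil d(u',v')/2\rceil$ for the middle vertex of $P(u',v')$ unless an $\Intersection$ call already returns $d(u,v)$).

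Third, even in your favorable ``straddling'' case the arithmetic does not land on $3d+2+2\ODD$: the leg you route through the stretch-$2$ oracle is $d(p(\tau_u),p(\tau_v))$, which already contains both pivot detours and is therefore $\Theta(d)$ (up to $\approx d$ even when $d(\tau_u,\tau_v)=O(1)$); applying a multiplicative factor $2$ there gives roughly $d+ (h(\tau_u)+h(\tau_v)) + 2\bigl(h(\tau_u)+h(\tau_v)+O(1)\bigr)\approx 4d$, and no parity bookkeeping repairs that. This is the obstacle you flag yourself, and it is not a technicality: the paper avoids any multiplicative loss on the middle leg precisely by storing exact $A_2\times A_1$ distances (no recursive approximate oracle on $A$ appears anywhere), and its $3d+2+2\ODD$ bound comes from the middle-vertex analysis applied to $u',v'$ (the farthest path vertices in $B_0(u)$, $B_0(v)$), the bound $h_2(u')\le 2\lceil d(u',v')/2\rceil$, and the identity $d(v,v')=h_1(v)-1$, none of which your construction provides. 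The fallback ``teleport'' argument for the non-straddling case is also not a proof, since you cannot identify path vertices deep inside $N_u$. To fix the proposal you would essentially have to replace the query-time BFS and the auxiliary oracle by the paper's precomputed bunches, clusters, and exact $A_2\times A_1$ table.
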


The second distance oracle improves the $(2k-2, 2\ODD)$-approximation of~\cite{DBLP:conf/focs/DalirrooyfardJW22} to $(2k-5, 4+2\ODD)$-approximation,  while using the same space and query time. %, for every value of $k \ge 5$,  
\begin{theorem}[\autoref{S-2k-5+6-unweighted}] \label{T-DO-2k-5-unweighted} \Copy{T-DO-2k-5-unweighted} {
    Let $k \ge 5$.
    There is an $\Ot(n^{1+1/k})$ space distance oracle that given  two query vertices $u,v\in V$ computes in  $\Ot(n^{1/k})$-time 
    a distance estimation $\hat{d}(u,v)$ that satisfies
    $d(u,v)\le \hat{d}(u,v) \le (2k-5)d(u,v)+ 4+2\ODD$. The distance oracle is constructed in $\Ot(mn^{\frac{3}{k}})$ expected time. }
\end{theorem}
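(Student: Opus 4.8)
The plan is to graft the \emph{borderline vertices} and \emph{middle vertex} techniques onto a Thorup--Zwick (TZ) hierarchy. In preprocessing I would sample $A_0=V\supseteq A_1\supseteq\cdots\supseteq A_k=\emptyset$, keeping each vertex of $A_{i-1}$ in $A_i$ independently with probability $n^{-1/k}$, so that $|A_i|=\Ot(n^{1-i/k})$ whp, and store the usual TZ data: the pivots $p_i(v)$ with their exact distances, the bunches $B(v)$, a shortest-path tree from every $A_i$-vertex to the vertices whose bunch it belongs to, and hash tables answering ``$w\in B(v)$?'' and ``$d(v,w)=?$'' in $\Ot(1)$; this costs $\Ot(n^{1+1/k})$ space. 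On top of this I would store, for every $v\in V$, the ball $\mathrm{Ball}(v)$ of its $\Ot(n^{1/k})$ nearest vertices with their distances: since $A_1$ has density $n^{-1/k}$, whp $\mathrm{Ball}(v)$ contains an $A_1$-vertex, so $p_1(v)\in\mathrm{Ball}(v)$ and $d(v,p_1(v))$ is \emph{certified} (we know the true value, not just a TZ upper bound). Finally I would precompute the data for the middle-vertex step at a suitable high level $k-3$ (mirroring the ``$c=1$'' case of \autoref{T-DO-2k-1-4c-Weighted}): BFS information from $A_{k-3}$ that lets a query certify, for any $x$, the nearest $A_{k-3}$-vertex to $x$ and its exact distance; since $|A_{k-3}|=\Ot(n^{3/k})$ this still fits in $\Ot(n^{1+1/k})$ space, and building it is the $\Ot(mn^{3/k})$ bottleneck of the preprocessing.

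For the query on $(u,v)$, I would first run BFS from $u$ and from $v$ in lock-step, each until it has settled $\Theta(n^{1/k}\log n)$ vertices (total time $\Ot(n^{1/k})$). If one search reaches the other I return $d(u,v)$ exactly; and if the two explored balls intersect, then $\min_{x}\bigl(d(u,x)+d(x,v)\bigr)$ over the intersection is provably equal to $d(u,v)$ (the minimizer can be taken on a shortest path, since the path vertex at distance $r_u$ from $u$ then lies in $\mathrm{Ball}(v)$), so I return that. In the remaining case the balls are disjoint, hence $r_u+r_v<d(u,v)$; here I extract the \emph{borderline vertices} $\tau_u,\tau_v$ --- the deepest vertices of $\mathrm{Ball}(u)$, $\mathrm{Ball}(v)$ lying on a shortest $u$--$v$ path, with $d(u,\tau_u)$ and $d(v,\tau_v)$ certified --- and, whenever $\lceil d(u,v)/2\rceil\le\max(r_u,r_v)$ so that it is available, also the \emph{middle vertex} $\tau$ with $d(u,\tau)=\lceil d(u,v)/2\rceil$, $d(v,\tau)=\lfloor d(u,v)/2\rfloor$. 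Starting from $\tau_u$ (or, when it helps, from $\tau$), I run a TZ alternating walk that is simultaneously \emph{shifted} --- begun at level $1$, which is legitimate because $p_1$-distances are certified --- and \emph{truncated} --- if it has not met by level $k-3$, I finish it in one shot via the nearest $A_{k-3}$-vertex index, using $A_k=\emptyset$ to guarantee termination. For every meeting vertex $w$ produced I evaluate $d(u,w)+d(w,v)$ (every term is known: read from a BFS ball, a bunch table, or the $A_{k-3}$ index) and return the minimum over all candidates.

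The stretch bound is then a bookkeeping argument on the TZ invariant. Recall that meeting the walk at level $i$ yields an estimate at most $(2i+1)\cdot d(u,v)$; the shift to level $1$ removes one level at the bottom, and splitting at (or passing through) the middle/borderline vertices together with the level-$(k-3)$ truncation removes one more, so the estimate is governed by $i\le k-3$, giving multiplicative factor $2(k-3)+1=2k-5$. The extra rounding incurred by working with integer BFS radii, by the level shift, and by the parity of $d(u,v)$ (the midpoint is off by $0.5\ODD$ on each side) contributes the additive $4+2\ODD$. The space bound is immediate from $|A_i|=\Ot(n^{1-i/k})$ and the ball sizes; the query time is the sum $\Ot(n^{1/k})+\Ot(k)$; and the $\Ot(mn^{3/k})$ preprocessing follows from the cost of the ball and BFS computations through level $k-3$ --- all routine.

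The main obstacle is the query analysis in the ``disjoint balls'' regime, where the true middle vertex need not be available and one works only with the borderline vertices $\tau_u,\tau_v$: there one must show that the shifted--truncated walk still saves exactly two levels rather than one or zero, and one must propagate the additive error carefully enough to keep it at $4+2\ODD$ rather than letting it grow to an $O(k)$ term. This is precisely where the interaction of the two techniques is delicate, and it is the unweighted, dense analogue of the borderline-vertex analysis behind \autoref{T-DO-2k-1-4c-Weighted} and of the $(3,2+2\ODD)$ oracle of \autoref{T-DO-3-2-Unweighted}; I expect it to be the part requiring the most care.
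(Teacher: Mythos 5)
Your high-level plan (combine the borderline and middle vertex ideas with a Thorup--Zwick hierarchy truncated near level $k-3$) matches the spirit of the paper, but as written it has three concrete gaps. First, the query time. You propose running lock-step BFS from $u$ and $v$ at query time until $\Theta(n^{1/k}\log n)$ vertices are settled and claim this costs $\Ot(n^{1/k})$; in a dense unweighted graph settling a vertex requires scanning its adjacency list, so this step costs up to $\Ot(\mu n^{1/k})$, not $\Ot(n^{1/k})$. Avoiding exactly this $\mu$ factor is the point of the paper's unweighted variant: it precomputes and stores the clusters $C(w)$ and bunches $B_1(\cdot)$ and, at query time, only runs $\Intersection(u,v,C(u),C(v))$ and $\Intersection(u,v,B_1(u),B_1(v))$ and iterates over the stored set $C(u)$ calling $\MTZQuery$ --- no graph exploration at all. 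Relatedly, you cannot ``extract'' $\tau_u,\tau_v$ or the middle vertex $\tau$ at query time, since the shortest path is unknown; the paper never identifies them, it iterates over all of $C(u)$ (which contains the unweighted borderline vertex $\tau'_u$) and lets the analysis argue about the one good iteration, and it replaces ``find $\tau$'' by the set intersections above. Second, the level-$(k-3)$ truncation is not implementable with the data you describe. Knowing only each vertex's nearest $A_{k-3}$-vertex and its distance (an $O(n)$-size index) does not let the query evaluate any path through $p_{k-3}(u)$, because $d(p_{k-3}(u),v)$ is not stored anywhere; storing full distances from $A_{k-3}$ to $V$ would cost $\Ot(n^{1+3/k})$ space, which is too much. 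The paper's key storage decision is the pairwise table $d(x,y)$ for $(x,y)\in A_{k-3}\times A_2$, of size $\Ot(n^{3/k}\cdot n^{1-2/k})=\Ot(n^{1+1/k})$, and the final candidate is $h_{k-3}(u)+d(p_{k-3}(u),p_2(v))+h_2(v)$ (and symmetrically); this cross table is exactly what your proposal is missing, and it is also what makes the $\Ot(mn^{3/k})$ construction time appear.

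Third, the stretch accounting --- the actual heart of the theorem --- is left open: you yourself flag the ``disjoint balls'' regime as the unresolved obstacle, and the vague claim that the shifted-and-truncated walk ``saves two levels'' with additive error magically capped at $4+2\ODD$ is not a proof. The paper's argument is: the middle vertex gives, unless an exact or $2d(u,v)+1\ODD$ estimate is already found via the two intersections, that (wlog) $h_2(u)\le d(u,v)+1\ODD$; the unweighted borderline vertex $\tau'_u\in C(u)$ gives, via the iteration over $C(u)$, either a $3d(u,v)$ estimate or $h_2(v)\le d(u,v)+2$ (the $+2$ is the price of using $C(u)$ instead of the augmented cluster, which is what keeps the query free of $\mu$); then the standard TZ growth property applied from level $2$ up to level $k-3$ yields either a good estimate or $\min(h_{k-3}(u),h_{k-3}(v))\le (k-4)d(u,v)+1\ODD$; finally the $A_{k-3}\times A_2$ table gives $\hat d\le d(u,v)+2\min(h_{k-3}(u),h_{k-3}(v))+2\max(h_2(u),h_2(v))\le(2k-5)d(u,v)+4+2\ODD$. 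Without the explicit bounds $h_2(u)\le d+1\ODD$ and $h_2(v)\le d+2$ and this final combination step, the multiplicative factor $2k-5$ and the additive $4+2\ODD$ are not established by your proposal.
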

Our results for unweighted graphs are summarized in~\autoref{tab:unweighted}.

\subsection{Distance oracles with stretch $<2$}

For distance oracles with stretch $<2$ in weighted graphs, we present the following new three-way tradeoff between: stretch, space, and query time.  
 We prove:
% This new tradeoff improves upon   previously known results for specific values of stretch, space, and query time.  
% We prove the following:

% For distance oracles with stretch less than $2$ in weighted graphs, we present a new space/query-time tradeoff for achieving a given stretch. This tradeoff improves upon the space/stretch bound of Agarwal~\cite{DBLP:conf/esa/Agarwal14}, at the cost of increased query time.
\begin{theorem}[\autoref{S-5-3-weighted}]\label{T-Weighted-5/3}\Copy{T-Weighted-5/3}{
Let $0 < c < 1/3$ be a real constant.
There is a $\Ot(m+n^{2-2c})$-space distance oracle that given  two query vertices $u,v\in V$ computes in  $\Ot(\mu^tn^{tc})$-time 
a distance estimation $\hat{d}(u,v)$ that satisfies 
$d(u,v) \leq \hat{d}(u,v) \le (1+2/t)d(u,v)$. The distance oracle is constructed in $\Ot(mn^{1-c})$ time. }
\end{theorem}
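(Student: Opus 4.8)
The plan is to combine the Thorup--Zwick bunch structure with a sampled ``landmark'' set whose full $A\times A$ distance matrix is stored, and to answer a query by $O(t)$ rounds of bounded shortest‑path exploration from both endpoints, bridging the two explored regions through the landmark matrix. For the \emph{construction}, sample $A\subseteq V$ by picking each vertex independently with probability $\Theta(n^{-c}\log n)$; then $|A|=\Ot(n^{1-c})$ and, for every $v$, the bunch $B(v)=\{w:d(v,w)<d(v,p(v))\}$ (where $p(v)\defeq\arg\min_{a\in A}d(v,a)$) has $\Ot(n^{c})$ vertices, both with high probability. Store the graph $G$ itself [$O(m)$]; for every $v$ the bunch $B(v)$ together with the distances $d(v,w)$, $w\in B(v)$, and $p(v),d(v,p(v))$ [$\Ot(n^{1+c})=\Ot(n^{2-2c})$, using $c<1/3$]; and all pairwise distances $d(a,a')$, $a,a'\in A$ [$\Ot(n^{2-2c})$]. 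The bunches are built by the standard Thorup--Zwick modified‑Dijkstra procedure in $\Ot(mn^{c})$ time, and the landmark matrix by $|A|=\Ot(n^{1-c})$ single‑source computations in $\Ot(mn^{1-c})$ time, which dominates; total space $\Ot(m+n^{2-2c})$.

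For the \emph{query} on $u,v$, return the minimum of several provably valid over‑estimates. The first is the pivot bridge $d(u,p(u))+d(p(u),p(v))+d(p(v),v)$. The rest come from exploration: starting from $\{u\}$ and from $\{v\}$ separately, run $t-1$ rounds, where a round keeps a frontier $F$ of vertices already reached with their \emph{exact} distance from the source, and for each $w\in F$ runs a bounded Dijkstra that settles the $\Ot(n^{c})$ vertices closest to $w$, updating exact distances ``through $w$'' and recording every landmark it meets. Arranging the rounds so that the frontier in round $i$ has size $\Ot(\mu^{i-1}n^{(i-1)c})$, round $i$ costs $\Ot(\mu^{i}n^{ic})$ and the total is $\Ot(\mu^{t}n^{tc})$. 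Writing $R_u,R_v$ for the vertices reached with exact distances and $S_u=R_u\cap A$, $S_v=R_v\cap A$, the remaining estimates are $d(u,w)+d(w,v)$ for $w\in R_u\cap R_v$ and $d(u,a)+d(a,a')+d(a',v)$ for $a\in S_u$, $a'\in S_v$ (the needed $d(a,a')$ being stored); each is $\ge d(u,v)$, so the minimum is legal.

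For the \emph{stretch}, let $P=(u=x_0,\dots,x_\ell=v)$ be a shortest path and $d=d(u,v)$. The core is a dichotomy on how $P$ meets the bunches, in the spirit of the borderline‑vertices analysis of the earlier sections: if $B(u)\cap B(v)$ contains a vertex of $P$ we recover $d$ exactly; otherwise, for a reached path vertex $x_i$, either $d(x_i,p(x_i))\le d/(2t)$, so the landmark $p(x_i)$ anchors $x_i$ to within $d/(2t)$, or the radius‑$(d/2t)$ ball of $x_i$ has only $\Ot(n^{c})$ vertices with high probability, so the bounded exploration from $x_i$ reaches every point of $P$ within distance $d/(2t)$ of $x_i$, i.e.\ it advances along $P$. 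Propagating this from both ends through the $t-1$ rounds yields either that the two explored regions overlap on $P$ (giving $d$ via the $R_u\cap R_v$ term) or landmarks $a\in S_u$, $a'\in S_v$ lying within $\delta_a,\delta_{a'}$ of path points $x_\alpha\preceq x_\beta$ with $\delta_a+\delta_{a'}\le d/t$; then $d(u,a)+d(a,a')+d(a',v)\le (d(u,x_\alpha)+\delta_a)+(\delta_a+d(x_\alpha,x_\beta)+\delta_{a'})+(\delta_{a'}+d(x_\beta,v))=d+2(\delta_a+\delta_{a'})\le(1+2/t)d$. A standard $\Ot(1)$‑overhead guess of the scale of $d$ (from the $\Ot(n^{c})$ candidate distances in $B(u)$) makes the threshold $d/(2t)$ usable algorithmically.

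The step I expect to be the main obstacle is making the two constraints compatible: each round must advance along $P$ by a fixed fraction of $d$, so that $t-1$ rounds accumulate only additive slack $2d/t$, while the frontier in round $i$ must stay of size $\Ot(\mu^{i-1}n^{(i-1)c})$, so that the query time is $\Ot(\mu^{t}n^{tc})$. Both hinge on the high‑probability bunch‑size bound combined with a careful choice of which landmarks are activated and which frontier vertices are expanded in each round; this balancing is where the real work lies, and is presumably carried out by the section's analogue of the borderline‑vertices lemmas, while the remaining book‑keeping (validity of each estimate, the $\Ot(\cdot)$ log‑factor losses, and the $\Ot(mn^{1-c})$ construction bound) is routine.
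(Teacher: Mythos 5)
Your storage and overall architecture match the paper's: the set $A$ with its full $A\times A$ distance matrix, the bunches with pivots, the graph itself, and a query that grows $t$ levels of bunch-based neighborhoods from both endpoints and bridges them through stored landmark distances. The gap is in the stretch analysis, and it is not a book-keeping gap but the core of the proof, which you explicitly defer ("presumably carried out by the section's analogue of the borderline-vertices lemmas"). Two concrete problems. First, in weighted graphs your per-round progress claim fails: settling the $\Ot(n^{c})$ vertices closest to a reached path vertex $x_i$ need not advance along $P$ at all, because the next path edge may be longer than $h(x_i)$, in which case the next path vertex lies outside the bunch and outside your explored set. The paper fixes exactly this by exploring the \emph{augmented} bunch $B^*(x)=B(x)\cup N(B(x))$ and proving (its weighted farthest-vertex lemma) that the farthest path vertex in $B^*(x_i)$ is at distance at least $h(x_i)$ from $x_i$; this one-edge augmentation is also the source of the $\mu^t$ factor in the query time. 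Second, even granting progress, your fixed-threshold accounting does not add up to stretch $1+2/t$: with threshold $d/(2t)$ and $t-1$ rounds per side, the guaranteed total advance when neither side "stops" is only $2(t-1)\cdot d/(2t)=(1-1/t)d<d$, so the two explored regions need not meet, and since neither side stopped you also have no pair of anchors with combined slack $\le d/t$; no listed estimate is then bounded by $(1+2/t)d$. The paper's argument is structurally different: defining $u_i$ (resp.\ $v_i$) as the farthest path vertex in $B^*(u_{i-1})$ (resp.\ $B^*(v_{i-1})$), it shows that if the explored path segments $P_t(u)$ and $P_t(v)$ are disjoint then $\sum_{i=0}^{t-1}\bigl(h(u_i)+h(v_i)\bigr)\le d(u,v)$ (a telescoping bound, with the endpoint term controlled by a disjoint-$B^*$ lemma), and then an averaging argument over the $t$ anti-diagonal pairs $\langle u_i,v_{t-1-i}\rangle$ produces one pair with $h(u_i)+h(v_{t-1-i})\le d(u,v)/t$, giving $(1+2/t)d(u,v)$ with no need to guess the scale of $d$.

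There is also a query-time problem in your bridging step: you enumerate all pairs of landmarks met, $S_u\times S_v$. The number of $A$-vertices inside an explored region of size $\Ot(\mu^{t}n^{tc})$ is typically $\Ot(\mu^{t}n^{(t-1)c})$, so the pair enumeration costs roughly $\Ot(\mu^{2t}n^{2(t-1)c})$, which exceeds the claimed $\Ot(\mu^{t}n^{tc})$ already for $t\ge 3$. This is precisely the pitfall the paper's technical overview flags; it avoids it by bridging only pairs of \emph{explored vertices} from $S_i(u)\times S_{t-1-i}(v)$ for $0\le i\le t-1$ (i.e., $O(t\,\mu^{t-1}n^{(t-1)c})$ pairs), connecting each pair through their stored pivots $p(\cdot)$ and the $A\times A$ matrix; the anti-diagonal restriction is exactly what the averaging argument above needs, so correctness and the time bound are obtained simultaneously. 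Relatedly, your stretch sketch needs the anchor $p(x_\alpha)$ to be available at query time, but $p(x_\alpha)$ need not belong to $S_u=R_u\cap A$ under "settle the closest $\Ot(n^c)$ vertices," so even the estimates you do enumerate may miss the landmark your analysis relies on; using the stored pivots of explored vertices, as the paper does, removes this issue.
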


Using this tradeoff, we obtain the first distance oracle with stretch $< 2$, sublinear query time, and $\Ot(m + n^{1.5-\alpha})$ space, for $\alpha > 0$.   
In particular, by setting $t = 3$ and $c = 1/3 - \eps$, we obtain a $5/3$-stretch distance oracle with  $\Ot(m + n^{4/3 + 2\eps})$-space and $O(\mu^3 n^{1 - 3\eps})$-query time.
For comparison, Agarwal~\cite{DBLP:conf/esa/Agarwal14} presented a distance oracle with stretch $1 + \frac{1}{t + 0.5}$, using $\Ot(m + n^{2 - c})$ space and $\Ot(\mu^t n^{(t+1)c})$ query time.  
Setting $t = 1$ and $c = 1/2 - \eps$  in the construction of~\cite{DBLP:conf/esa/Agarwal14} yields a $5/3$-stretch oracle with $\Ot(m + n^{1.5 + \eps})$-space and $\Ot(\mu n^{0.5 - \eps})$-query time. (See~\autoref{tab:weighted-new}.)

For unweighted graphs, we present two new tradeoffs.  
The first improves upon the tradeoff of the distance oracle of Bil{\`{o}}, Chechik, Choudhary, Cohen, Friedrich, and Schirneck~\cite{DBLP:journals/corr/abs-2307-11677Bilo23ADO}, while preserving the same space and query time.  
Specifically, they presented a distance oracle for unweighted graphs that uses $\Ot(n^{2 - c})$ space, has $\Ot(n^{tc})$ query time, and returns an estimate $\hat{d}(u,v) \le (1 + 1/t)d(u,v)+2$.  
In the following theorem, we slightly improve the stretch bound to $\hat{d}(u,v) \le d(u,v) + 2\ceil*{ \frac{d(u,v)}{2t}}$, while using the same space and query time.
\begin{theorem}[\autoref{S-5-t}]\label{T-Unweighted-1+1/t}\Copy{T-Unweighted-1+1/t}{
Let $t\geq 1$ be an integer and let $0 < c < 1/2$ be a real constant.
There is a $\Ot(n^{2-c})$-space distance oracle that given  two query vertices $u,v\in V$ computes in  $\Ot(n^{c\cdot t})$-time 
a distance estimation $\hat{d}(u,v)$ that satisfies 
$d(u,v) \leq \hat{d}(u,v) \le d(u,v) + 2\ceil*{ \frac{d(u,v)}{2t}}$. The distance oracle is constructed in $\Ot(mn^{1-c})$  time. }
\end{theorem}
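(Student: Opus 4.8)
The plan is to extend the sampling‑plus‑bunches framework behind the Thorup--Zwick oracle — in the spirit of the stretch‑$<2$ constructions of Agarwal~\cite{DBLP:conf/esa/Agarwal14} and Bil\`o et al.~\cite{DBLP:journals/corr/abs-2307-11677Bilo23ADO} — and to replace the usual one‑sided query by a \emph{two‑sided}, $t$‑round ball‑growing procedure; the improvement over the $(1+1/t)d+2$ bound comes entirely from the two‑sidedness. \textbf{Construction.} Sample $A\subseteq V$ by putting each vertex into $A$ independently with probability $\Theta(n^{-c})$, so $|A|=\Ot(n^{1-c})$ whp. A multi‑source BFS from $A$ gives, for each $v$, its pivot $p(v)\in A$ and radius $\rho(v)=d(v,p(v))$; let $\Ball(v)=\{w:d(v,w)<\rho(v)\}$ be the Thorup--Zwick bunch of $v$. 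Store $p(v)$ and $\Ball(v)$ together with $d(v,\cdot)$ restricted to $\Ball(v)$; by the standard argument $|\Ball(v)|=\Ot(n^c)$ whp and all bunches are built by a pruned BFS in $\Ot(mn^c)$ time. Also run BFS from every $a\in A$ and store the full vector $d(a,\cdot)$. Total space $\Ot\!\big(n\cdot|A|+\sum_v|\Ball(v)|\big)=\Ot(n^{2-c}+n^{1+c})=\Ot(n^{2-c})$ since $c<1/2$, and construction time $\Ot(m|A|+mn^c)=\Ot(mn^{1-c})$.

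\textbf{Query.} Given $u,v$: if $u\in A$ or $v\in A$ answer exactly from a stored vector; otherwise grow from $u$ the layers $L_0^u=\{u\}$, $L_i^u=\bigcup_{x\in L_{i-1}^u}\Ball(x)$ for $i=1,\dots,t$, and run Dijkstra on the induced directed ``ball‑graph'' (an edge $x\to w$ of weight $d(x,w)$ for every $x\in\bigcup_{i<t}L_i^u$ and $w\in\Ball(x)$) to get $g^u(x)$ for every reached $x$; define $L_i^v$ and $g^v$ symmetrically from $v$. Since $|\Ball(\cdot)|=\Ot(n^c)$ we have $|L_i^u|=\Ot(n^{ic})$, and — charging each ball‑graph edge to its tail, which lies in $L_0^u\cup\dots\cup L_{t-1}^u$ — building the layers and the Dijkstra both cost $\Ot(n^{tc})$. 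Return the minimum over: (i) $g^u(x)+d(x,v)$ for every reached $x$ with $v\in\Ball(x)$, and the symmetric term; (ii) $g^u(x)+g^v(x)$ for every $x\in(\bigcup_iL_i^u)\cap(\bigcup_iL_i^v)$; (iii) the teleport terms $g^u(x)+\rho(x)+d(p(x),v)$ for every reached $x$, and the symmetric term. Every candidate is the length of a genuine $u$–$v$ walk (the teleport term via the triangle inequality through $p(x)\in A$), so $\hat d(u,v)\ge d(u,v)$.

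\textbf{Stretch.} Fix a shortest path $P=(u=x_0,\dots,x_d=v)$, $d=d(u,v)$. Walking along $P$ from $u$, put $j_0^u=0$ and, while $v\notin\Ball(x_{j_{i-1}^u})$, let $j_i^u=j_{i-1}^u+\rho(x_{j_{i-1}^u})-1$; then $x_{j_i^u}\in\Ball(x_{j_{i-1}^u})\subseteq L_i^u$, $j_i^u\le d$, following $P$ shows $g^u(x_{j_i^u})=j_i^u$, and $\bigcup_{i\le t}L_i^u\supseteq\{x_0,\dots,x_{j_t^u}\}$; define $j_i^v$ symmetrically from $v$. If $v\in\Ball(x_{j_{i-1}^u})$ for some $i\le t$ (or symmetrically), term (i) equals $j_{i-1}^u+(d-j_{i-1}^u)=d$. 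Else if $j_t^u+j_t^v\ge d$, the covered prefix and suffix of $P$ share a vertex $x_\ell$, and term (ii) gives $g^u(x_\ell)+g^v(x_\ell)=\ell+(d-\ell)=d$. Otherwise $j_t^u+j_t^v<d$, i.e.\ $\sum_{i=1}^t\rho(x_{j_{i-1}^u})+\sum_{i=1}^t\rho(x_{j_{i-1}^v})<d+2t$; among these $2t$ positive integers some $\rho$ satisfies $\rho<d/(2t)+1$, hence $\rho\le\ceil*{d/(2t)}$, and its teleport term (iii) is at most $j_{i-1}^u+\rho+\big(\rho+d-j_{i-1}^u\big)=d+2\rho\le d+2\ceil*{d/(2t)}$, using $d(p(x_{j_{i-1}^u}),v)\le\rho+d(x_{j_{i-1}^u},v)=\rho+d-j_{i-1}^u$. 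In all cases $\hat d(u,v)\le d+2\ceil*{d/(2t)}$.

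\textbf{Main obstacle.} The crux is the two‑sided idea. A one‑sided version advances by $\rho-1$ per round while each teleport costs $2\rho$, which only yields $(1+2/t)d$; growing balls from both endpoints and detecting the collision of the two explored prefixes of $P$ forces $\sum(\rho-1)<d$ over $2t$ rounds rather than $t$, which is exactly what turns $2d/t$ into $2\ceil*{d/(2t)}$. The remaining work is bookkeeping: keeping the query at $\Ot(n^{tc})$ rather than $\Ot(n^{(t+1)c})$ by never expanding the last layer, disposing of the $u,v\in A$ and ``stuck'' ($\rho=1$) corner cases, verifying that $g^u$ is exact on the relevant path vertices, and carrying the ceiling through so that the clean integer bound $2\ceil*{d/(2t)}$ appears in place of $2d/t+2$.
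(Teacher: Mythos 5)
Your proposal is correct and follows essentially the same route as the paper: your layers $L_i^u$ are exactly the paper's sets $S_i(u)$, your claim that $g^u$ is exact on the covered prefix of $P$ is the paper's lemma on $d_t$, and your case analysis (explored prefixes meet $\Rightarrow$ exact answer via the intersection; otherwise the $2t$ radii sum to less than $d+2t$, so some $\rho\le\lceil d/(2t)\rceil$ and the pivot term gives $d+2\lceil d/(2t)\rceil$) matches the paper's argument, including the same storage and the same space/construction/query-time accounting. The only cosmetic differences are that you compute explored distances by Dijkstra on a ball-graph instead of the paper's simple recursion, and your teleport candidate $g^u(x)+\rho(x)+d(p(x),v)$ replaces the paper's stored term $d(u,p(w))+d(p(w),v)$; both give the same bound, and the corner cases you defer (path vertices in $A$, $\rho=1$) are glossed over in the paper as well.
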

Next,  by adapting methods of \autoref{T-Weighted-5/3} to unweighted graphs, we obtain another tradeoff that uses less space than the tradeoff of \autoref{T-Unweighted-1+1/t} at the cost of a larger query time.  
We prove:
\begin{theorem}[\autoref{S-5-3}]\label{T-Unweighted-5/3}\Copy{T-Unweighted-5/3}{
Let $t\geq 1$. Let $0 < c < 1/3$ be a real constant.
There is a $\Ot(n^{2-2c})$-space distance oracle that given  two query vertices $u,v\in V$ computes in  $\Ot(n^{tc})$-time 
a distance estimation $\hat{d}(u,v)$ that satisfies 
$d(u,v) \leq \hat{d}(u,v) \le d(u,v)  + 2\ceil{d(u,v) / t} + 2$. The distance oracle is constructed in $\Ot(mn^{1-c})$  time. }
\end{theorem}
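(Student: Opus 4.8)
The plan is to transplant the construction underlying \autoref{T-Weighted-5/3} to the unweighted setting, where the multiplicative slack $(2/t)\,d(u,v)$ becomes the rounded additive slack $2\ceil{d(u,v)/t}$, and to control the remaining off‑by‑one terms with the middle‑vertex technique.

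\textbf{Preprocessing.} I would sample a pivot set $A\subseteq V$ of size $\Ot(n^{1-2c})$ so that, with high probability, every $v\in V$ has a pivot of $A$ among its $\Ot(n^{2c})$ nearest vertices, run a BFS from every $a\in A$, and store the distance vectors $d(a,\cdot)$; this uses $\Ot(n\cdot n^{1-2c})=\Ot(n^{2-2c})$ space. In addition, for every $v\in V$ I would store a bounded‑neighbourhood structure recording its $\Ot(n^{c})$ nearest vertices together with their distances, which costs $\Ot(n^{1+c})=\Ot(n^{2-2c})$ space (using $c<1/3$) and, crucially, lets one run a bounded BFS from any source that reports its $N$ nearest vertices in $\Ot(N)$ time even in dense graphs. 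Both the pivot BFS's and the bounded‑neighbourhood structures can be built in $\Ot(mn^{1-c})$ total time.

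\textbf{Query.} Given $u,v$, I would run $t$ exploration phases from $u$ and, symmetrically, from $v$; after phase $i$ the explored sets $C_u^{(i)}$ and $C_v^{(i)}$ each consist of the $\Ot(n^{ic})$ nearest vertices to $u$, resp.\ $v$, so the final phase touches $\Ot(n^{tc})$ vertices and the query runs in $\Ot(n^{tc})$ time. After each phase I test whether (a) $C_u^{(i)}\cap C_v^{(i)}\neq\emptyset$ — then return $d(u,w)+d(w,v)$ for a meeting vertex $w$, both summands being known from the two explorations — or (b) some pivot $a\in A$ lies in $C_u^{(i)}\cup C_v^{(i)}$ — then return $d(u,a)+d(a,v)$, the remaining summand being read from the stored vector. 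I output the smallest estimate found; for $t\ge 2$ I will show test (b) is necessarily triggered by phase $t$ (since $n^{2c}\le n^{tc}$), while the degenerate case $t=1$ is handled directly by test (a) together with a nearest‑pivot lookup stored per vertex, and is in any case implied by the classical $3$‑stretch oracle.

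\textbf{Analysis and main obstacle.} That $\hat d(u,v)\ge d(u,v)$ is clear because every returned estimate is the length of an actual $u$–$v$ walk. For the upper bound, fix a shortest $u$–$v$ path $P$, split it into $t$ consecutive pieces each of length $\le\ceil{d(u,v)/t}$, and let the middle‑vertex technique furnish, for an appropriate piece, a path vertex $x$ such that the ball of radius $\le\ceil{d(u,v)/t}+1$ around $x$ covers that piece and reaches the adjacent ones. If test (a) fires, the meeting vertex $w$ lies on, or one hop off, $P$ near a piece boundary, so $d(u,w)+d(w,v)=d(u,v)$ up to the additive $2$. If test (b) fires through a pivot $a$ discovered while exploring only the $\le n^{tc}$ nearest vertices of $u$ (or $v$), I will argue that $a$ lies within distance $\le\ceil{d(u,v)/t}+1$ of such an $x\in P$ — otherwise the ball around $x$ of that radius would be a set of size $\le n^{tc}$ containing no pivot, contradicting the sampling guarantee once combined with the information gained from both explorations — and then
\[
\hat d(u,v)\le d(u,a)+d(a,v)\le (d(u,x)+d(x,a))+(d(a,x)+d(x,v))=d(u,v)+2\,d(x,a)\le d(u,v)+2\ceil{d(u,v)/t}+2 .
\]
The crux, which I expect to be the hardest part, is the tight coupling between the $\Ot(n^{tc})$ query budget and the stretch: one must prove that exactly $t$ phases always suffice either to merge the two explorations or to capture a pivot within distance $\ceil{d(u,v)/t}+1$ of a shortest‑path vertex, and that the assorted $\pm 1$'s from strict/non‑strict ball definitions, parity, and the ``one hop off $P$'' slack sum to precisely the additive $+2$ and no more. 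A secondary technical hurdle is realising the bounded exploration in time $\Ot(n^{tc})$ in dense graphs, which is exactly what dictates the precomputed bounded‑neighbourhood structures and the $\Ot(mn^{1-c})$ construction time; verifying that data structure and nailing the constant bookkeeping (including the $t=1$ corner case) will require the most care.
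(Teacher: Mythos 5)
Your preprocessing and time/space accounting are fine, but the stretch analysis has a genuine gap at the heart of test (b). The guarantee you get from sampling $A$ with $|A|=\Ot(n^{1-2c})$ is purely \emph{cardinality-based}: every vertex has a pivot among its $\Ot(n^{2c})$ nearest vertices. It does not imply anything about balls of a prescribed \emph{radius}. Your key step asserts that if no pivot lies within distance $\ceil{d(u,v)/t}+1$ of a suitable path vertex $x$, then ``the ball around $x$ of that radius would be a set of size $\le n^{tc}$ containing no pivot, contradicting the sampling guarantee'' --- but there is no contradiction: if that ball contains far fewer than $n^{2c}$ vertices, it is entirely consistent with the sampling that it contains no pivot, and the non-meeting of the two cardinality-bounded explorations does not force such balls around interior path vertices to be large. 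Concretely, take a shortest path of length $d$ (with $15 < d \ll n^{c}$), hang off each of $u$ and $v$ a pendant path of length $0.4d$ ending in a bushy cluster of $n^{tc}$ vertices, and put no other vertices near $P$. With high probability no pivot lies within distance $0.4d$ of $u$ or $v$ except inside the clusters; your explorations of the $n^{tc}$ nearest vertices stall at radius $\approx 0.4d$ on each side, so they never meet, and every pivot you see is a cluster pivot with $d(u,a)+d(a,v)\approx 1.8d$, which exceeds $d+2\ceil{d/t}+2$ already for $t=3$. So the algorithm as described does not meet the claimed bound, and the ``assorted $\pm1$'' bookkeeping you defer is not the real difficulty --- the missing ingredient is a mechanism coupling exploration progress along $P$ to pivot distances.

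The paper's proof supplies exactly that mechanism, and in a structurally different way from your single-pivot route. It keeps the \emph{denser} pivot set $|A|=O(n^{1-c})$ (bunches of size $O(n^c)$), stores only the $|A|^2=O(n^{2-2c})$ pivot-to-pivot distances (not full vectors), and explores by recursive bunches $S_i(\cdot)$ rather than by the $n^{ic}$ nearest vertices. Along $P$ it defines the chain $u_0=u$, $u_{i}$ the farthest path vertex in $B(u_{i-1})$ (and symmetrically $v_i$); because $d(u_{i-1},u_i)=h(u_{i-1})-1$, the radii telescope against the path, giving $\sum_{i=0}^{t-1}\bigl(h(u_i)+h(v_i)\bigr)\le d(u,v)+2t-1$ whenever $P_t(u)\cap P_t(v)=\emptyset$ (otherwise the intersection step is exact). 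Minimizing over the $t$ pairs $\langle u_i,v_{t-1-i}\rangle$ yields a pair $q$ with $h(q_u)+h(q_v)\le 1+\ceil{d(u,v)/t}$, and the estimate is the \emph{two-pivot} path $u\to q_u\to p(q_u)\to p(q_v)\to q_v\to v$, whose middle leg is read from the stored $A\times A$ table; this is what produces $d+2\ceil{d/t}+2$ within $\Ot(n^{tc})$ query time and $\Ot(n^{2-2c})$ space. Your proposal has no analogue of this telescoping bound, and with the sparser pivot set plus single-pivot estimates the required guarantee simply is not available.
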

In particular, using~\autoref{T-Unweighted-5/3} we obtain  the first distance oracle with $o(n^{1.5-\alpha})$-space, for $\alpha >0$, that achieves a multiplicative error strictly better than~$2$.  
Specifically, by setting $t = 3$ and $c = 1/3 - \eps$, we get a distance oracle with $\Ot(n^{4/3 + 2\eps})$ space and $\Ot(n^{1 - 3\eps})$ query time.
These results are summarized in~\autoref{tab:unweighted}.
\begin{table}[t]
    \centering
    \begin{tabular}{|c|c|c|c|}
     \hline
      Space & Time  & Old Approximation & New Approximation \\
     \hline\hline
     $\Ot(kn^{1+1/k})$ & $\Ot(n^{1/k})$  & $(2k-2, 2\ODD)$ \cite{DBLP:conf/focs/DalirrooyfardJW22} & $(2k-5, 4 + 2\ODD)$ (Th\ref{T-DO-2k-5-unweighted}) \\ \hline
    $\Ot(n^{4/3})$ & $\Ot(n^{1/3}$ ) & $(3,2)$~\cite{DBLP:journals/corr/abs-1201-2703} & $(3,2\ODD)$ (Th\ref{T-DO-3-Unweighted}) \\ \hline
    % $\Ot(n^{4/3})$ & $\Ot(n^{1/3})$  & $(3,2\ODD)$ & Theorem~\ref{T-DO-3-Unweighted}\\ \hline 
    $\Ot(n^{5/4})$ & $\Ot(n^{1/4})$ & $(6, 2\ODD)$ \cite{DBLP:conf/focs/DalirrooyfardJW22} & $(4, 3\ODD)$ (Th\ref{T-DO-4-Unweighted})  \\ \hline
    % $\Ot(n^{5/4})$ & $\Ot(n^{1/4})$  & $(4, 3\ODD)$ & Theorem\ref{T-DO-4-Unweighted}\\ \hline
    $\Ot(n^{5/4})$ & $\Ot(n^{1/2})$  & - & $(3, 2+2\ODD)$ (Th\ref{T-DO-3-2-Unweighted})\\ \hline 
    $\Ot(n^{6/5})$ & $\Ot(n^{1/5})$  & $(8,2\ODD)$~\cite{DBLP:conf/focs/DalirrooyfardJW22} & $(5,4+ 2\ODD)$ (Cor\ref{C-DO-5-unweighted}) \\ \hline
    % $kn^{1+1/k}$ & $\Ot(n^{1/k})$  & $(2k-5, 4 + 2\ODD)$& Theorem~\ref{T-DO-2k-5-unweighted} \\ \hline
    \Xhline{3\arrayrulewidth}
    $\Ot(n^{2-c})$ & $\Ot(n^{tc})$  & $\delta+\delta/t+2$ \cite{DBLP:journals/corr/abs-2307-11677Bilo23ADO} & $\delta+2\lceil \delta/2t\rceil$  (Th\ref{T-Unweighted-1+1/t}) \\ \hline
    % & $\delta+2\lceil \delta/2t\rceil$ & Theorem~\ref{T-Unweighted-1+1/t} \\ \hline 
    $\Ot(n^{3/2+\epsilon})$ & $\Ot(n^{1-2\epsilon})$ & $1.5\delta+2$ \cite{DBLP:journals/corr/abs-2307-11677Bilo23ADO} & $\delta+2\lceil \delta/4\rceil$ (Th\ref{T-Unweighted-1+1/t}) \\ \hline 
    $\Ot(n^{2-2c)}$ & $\Ot(n^{tc})$  & - & $\delta+2\lceil \delta/t\rceil +2$ (Th\ref{T-Unweighted-5/3}) \\ \hline 
    $\Ot(n^{4/3+2\epsilon})$ & $\Ot(n^{1-3\epsilon})$ & - & $\delta+2\lceil \delta/3\rceil+2$ (Th\ref{T-Unweighted-5/3}) \\ \hline 
    % $\Ot(n^{4/3})$ & $\Ot(n^{2/3})$  & - & $\delta+2\lceil \delta/2\rceil+2$ (Th\ref{T-Unweighted-5/3}) \\ \hline 
    \end{tabular}
    \caption{Distance oracles in unweighted graphs. $x\ODD$ is $x \cdot (d(u,v) \mod 2)$}
    \label{tab:unweighted}
\end{table}

\subsection{Applications for $n$-PSP and $ANSC$}
Recently, Dalirrooyfard, Jin, V. Williams, and Wein~\cite{DBLP:conf/focs/DalirrooyfardJW22} (followed by Chechik, Hoch and Lifshitz~\cite{DBLP:conf/soda/ChechikHL25}) studied two fundamental problems in graphs, the $n$-pairs shortest paths ($n$-PSP) problem and the all-nodes shortest cycles ($ANSC$) problem.
The $n$-PSP problem is defined as follows. Given a set of vertex pairs $(s_i,t_i)$, for $1\leq i \leq O(n)$, compute the distance $d(s_i,t_i)$. 
The $ANSC$ problem is defined as follows. Compute the length of the shortest simple cycle that contains $v$, for every $v\in V$. 

A straightforward approach for solving the $n$-PSP problem is to first construct a distance oracle and then query it for every pair in the input set.  
The total runtime of this approach is the sum of the construction time and $n$ times the query time of the distance oracle.
While this method naturally yields an $n$-PSP algorithm, a distance oracle must satisfy additional requirements that an $n$-PSP algorithm does not.
In particular, a distance oracle must use limited space and support queries between any of the $\Omega(n^2)$ pairs of vertices.

% A straightforward approach for solving the $n$-PSP problem is as follows. First, construct a distance oracle and then query it for every pair from the input set. 
% The runtime of an algorithm that follows this approach is the sum of the construction time and $n$ times the query time of the distance oracle. 
% This approach allows us to create $n$-PSP algorithm; however, a distance oracle has more requirements; it needs to use less space, its queries are unknown in the construction time, and the queries must have worst-case guarantee instead of an average over all the $n$ queries. 

% % However, this is one type of algorithm for the $n$-PSP
% Therefore, an efficient construction time of the distance oracle is a prerequisite for applying this approach to obtain an efficient $n$-PSP algorithm.

% We remark that this is one type of algorithm for the $n$-PSP problem, but a distance oracle has more requirements; it needs to use less space, its queries are unknown in advance, and it should have a worst-case query guarantee (instead of average).
% \avi{Notice that the $n$-PSP can be solved by constructing a distance oracle and querying it $n$ times, but it can also use the vertex pairs in advance and thereby is an easier problem.}

Using our new distance oracles, we obtain improved time/stretch tradeoffs for both the $n$-PSP problem and the $ANSC$-problem, improving some of the results of~\cite{DBLP:conf/focs/DalirrooyfardJW22, DBLP:conf/soda/ChechikHL25}.

In~\cite{DBLP:conf/focs/DalirrooyfardJW22}, a lower bound based on the combinatorial $4k$ clique hypothesis for the $n$-PSP problem is presented. They showed that 
$\Omega(m^{2 - \frac{2}{k+1}} \cdot n^{\frac{1}{k+1} - o(1)})$ time is required to achieve $(1+1/k-\eps)$ stretch. As mentioned in \cite{DBLP:conf/focs/DalirrooyfardJW22} using 
the $(1+1/k)$-distance oracle of~\cite{DBLP:conf/esa/Agarwal14} it is possible to solve $n$-PSP with  $(1+1/k)$-stretch in $O(m^{2-\frac{2}{k+1}} n^{\frac{1}{k+1}})$ time, which almost matches the lower bound of~\cite{DBLP:conf/focs/DalirrooyfardJW22}. 
In this paper, we show that it is possible to significantly improve the running time of~\cite{DBLP:conf/esa/Agarwal14} and to bypass the lower bound of \cite{DBLP:conf/focs/DalirrooyfardJW22} at the cost of a small additive error (of up to $2$).
We obtain an $O(m^{1-\frac{1}{k+1}}n)$ time algorithm that returns $\hat{d}(s_i,t_i) \le d(s_i,t_i) + 2\ceil*{\frac{d(s_i,t_i)}{2k}} \le (1+1/k)d(u,v)+2$, as presented in the following theorem.

\begin{theorem}[\autoref{S-apps-npsp}]\label{T-NPSP-1+1/k-Unweighted}\Copy{T-NPSP-1+1/k-Unweighted}{
    Let $k\ge2$. Given an unweighted graph $G$ and vertex pairs $(s_i,t_i)$ for $1\leq i\leq O(n)$, there is an algorithm for $n-PSP$ that computes an estimation $\hat{d}(s_i,t_i)$ such that $d(s_i,t_i) \le \hat{d}(s_i,t_i) \le d(s_i,t_i) + 2\ceil*{\frac{d(s_i,t_i)}{2k}}$ in $\Ot(m^{1-\frac{1}{k+1}}n)$ time.}
\end{theorem}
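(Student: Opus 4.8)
The plan is to obtain this result as an application of the unweighted distance oracle of~\autoref{T-Unweighted-1+1/t}, instantiated with $t=k$ and with the space/query trade-off parameter $c$ tuned to the $n$-PSP setting. The key observation is that $n$-PSP imposes no space constraint, so we are free to choose $c$ (which is fixed to a constant $<1/2$ when the structure is used as a stand-alone oracle, precisely because its $\Ot(n^{2-c})$ space should stay below $n^{3/2}$ to be interesting) so as to balance the $\Ot(mn^{1-c})$ construction time against $n$ times the $\Ot(n^{ck})$ query time. Concretely, I would build the oracle of~\autoref{T-Unweighted-1+1/t} with $t=k$ and $c=c^{\star}$, where $c^{\star}$ is chosen so that $n^{c^{\star}}=m^{1/(k+1)}$, i.e. $c^{\star}=\tfrac{\log m}{(k+1)\log n}$; since $n\le m\le n^2$ we have $\tfrac1{k+1}\le c^{\star}\le \tfrac{2}{k+1}$, so in particular $c^{\star}\in(0,1)$ and, for $k\ge 4$, even $c^{\star}<1/2$.

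Then, for every input pair $(s_i,t_i)$, I would simply issue a query to this oracle and output the returned estimate $\hat d(s_i,t_i)$. The stretch bound is inherited verbatim from~\autoref{T-Unweighted-1+1/t} with $t=k$, namely $d(s_i,t_i)\le \hat d(s_i,t_i)\le d(s_i,t_i)+2\ceil*{d(s_i,t_i)/2k}$. For the running time, the construction costs $\Ot(mn^{1-c^{\star}})=\Ot\!\big(m\cdot n/m^{1/(k+1)}\big)=\Ot(m^{1-1/(k+1)}n)$, and each of the $O(n)$ queries costs $\Ot(n^{c^{\star}k})=\Ot(m^{k/(k+1)})=\Ot(m^{1-1/(k+1)})$, so all the queries together also cost $\Ot(m^{1-1/(k+1)}n)$, which gives the claimed bound.

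The only genuine obstacle is the admissible range of $c$: the balancing value $c^{\star}$ lies in $(0,\tfrac{2}{k+1}]$, which is safely below $1/2$ only for $k\ge 4$; for $k\in\{2,3\}$ and sufficiently dense graphs $c^{\star}$ can reach (or, when $k=2$, exceed) $1/2$, outside the stated range of~\autoref{T-Unweighted-1+1/t} (for $k=2$, $m\approx n^2$ one really needs $c^{\star}=2/3$, giving construction $\Ot(mn^{1/3})=\Ot(n^{7/3})$ and total query time $\Ot(n^{4/3}\cdot n)=\Ot(n^{7/3})$, whereas the oracle as stated is limited to $c<1/2$). The hard part is therefore just to re-run the construction underlying~\autoref{T-Unweighted-1+1/t} directly with $c=c^{\star}\in(0,1)$ rather than invoking it as a black box; its guarantees depend on $c$ only through $\Ot(mn^{1-c})$ preprocessing work and $\Ot(n^{ck})$-size balls, and both the stretch analysis and these cost estimates go through unchanged for any $c<1$, the restriction $c<1/2$ in the oracle statement being irrelevant once space is unconstrained. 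There is also a cosmetic point that $c^{\star}$ depends on $m$ and $n$ and is hence not an absolute constant; since it is fixed by the input this only affects the logarithmic factors hidden inside $\Ot(\cdot)$. I expect this re-instantiation to be routine, so the substance of the proof is really the observation that dropping the space constraint unlocks the optimal choice of~$c$.
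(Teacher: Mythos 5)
Your proposal is essentially the paper's own proof of \autoref{T-NPSP-1+1/k-Unweighted}: it builds the oracle of \autoref{T-Unweighted-1+1/t} with $t=k$ and $c=\frac{\log_n m}{k+1}$ (so $n^{c}=m^{1/(k+1)}$), queries it once per pair, and balances the $\Ot(mn^{1-c})$ construction time against the $\Ot(n\cdot n^{ck})$ total query time, exactly as done in \autoref{S-apps-npsp}. The $c<1/2$ issue you flag is real and is silently glossed over by the paper as well; the one inaccuracy in your patch is the claim that preprocessing stays $\Ot(mn^{1-c})$ for every $c<1$ --- the construction in \autoref{L-Space-T-Unweighted-1+1/t} also pays $\Ot(mn^{c})$ to compute all balls $B(\cdot)$, which dominates once $c>1/2$, so for $k=2$ on graphs with $m\gg n^{3/2}$ both your argument and the paper's leave the claimed $\Ot(m^{1-\frac{1}{k+1}}n)$ bound unjustified as written.
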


In~\cite{DBLP:conf/focs/DalirrooyfardJW22} they presented an $O(m+n^{2/k})$ time  $n$-PSP algorithm that returns $(2k-2)(2k-1)$-stretch. In the following theorem we improve the stretch from $(2k-1)(2k-2)$ to $(2k-1)(2k-3)$ while using the same running time.
% In addition, we present an $n$-PSP algorithm, which improves upon the $(2k-2)(2k-1)$-stretch of~\cite{DBLP:conf/focs/DalirrooyfardJW22} that runs in the same time.
\begin{theorem}[\autoref{S-apps-npsp}]\label{T-NPSP-2k-1-2k-3}\Copy{T-NPSP-2k-1-2k-3}{
    Let $k\ge 3$. Given a weighted graph $G$ and vertex pairs $(s_i,t_i)$ for $1\leq i\leq O(n)$, there is an algorithm for $n-PSP$ that computes an estimation $\hat{d}(s_i,t_i)$ such that $d(s_i,t_i) \le \hat{d}(s_i,t_i) \le (2k-1)(2k-3)$ in $\Ot(m+n^{2/k})$ time.}
\end{theorem}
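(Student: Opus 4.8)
The plan is to run the near-linear-time $n$-PSP framework of Dalirrooyfard, Jin, V.~Williams, and Wein~\cite{DBLP:conf/focs/DalirrooyfardJW22} essentially verbatim, and to replace the single ingredient that is responsible for the $(2k-2)$ factor in their stretch by the sharper $(2k-3)$-stretch construction obtained in this paper via the borderline vertices technique (\autoref{T-2k-3-Weighted}). The product form $(2k-1)(2k-3)$ reflects this: an ``outer'' $(2k-1)$ factor is paid for by a sparse $(2k-1)$-spanner, and an ``inner'' $(2k-3)$ factor is paid for on a small global core.

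Concretely, I would first compute a $(2k-1)$-spanner $H$ of $G$ together with its Baswana--Sen / Thorup--Zwick cluster hierarchy $V=A_0\supseteq A_1\supseteq\cdots\supseteq A_{k-1}$ with $|A_i|=\Ot(n^{1-i/k})$; this costs $\Ot(m)$ time and leaves $|E(H)|=\Ot(n^{1+1/k})$. The top layer $S:=A_{k-1}$, of size $\Ot(n^{1/k})$, is the global core. For every query pair $\langle s_i,t_i\rangle$ one of two cases applies: either the sampling balls of $s_i$ and $t_i$ meet, in which case a bounded local exploration of $H$ already yields a $(2k-1)$-approximation of $d(s_i,t_i)$ whose cost amortizes over the $\Ot(n)$ pairs to $\Ot(m)$ total; or they do not meet, in which case the shortest $s_i$--$t_i$ path is long relative to the local radii and can be rerouted through the pivots $p(s_i),p(t_i)\in S$, so that $\hat d(s_i,t_i):=d_G(s_i,p(s_i))+\hat d(p(s_i),p(t_i))+d_G(p(t_i),t_i)$ is a valid estimate once $\hat d$ on $S\times S$ is available. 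The only deviation from~\cite{DBLP:conf/focs/DalirrooyfardJW22} is in precomputing the $\Ot(n^{2/k})$ pairwise core distances: instead of their $(2k-2)$-stretch estimate we invoke the $(2k-3)$-stretch oracle of \autoref{T-2k-3-Weighted} (equivalently, its underlying borderline-vertices construction), applied to $H$, which already preserves all $G$-distances up to the $(2k-1)$ factor already charged; this yields $d(p,q)\le\hat d(p,q)\le(2k-3)\,d(p,q)$ for all $p,q\in S$ using no more preprocessing or query resources than the $(2k-2)$ estimate it replaces, so the running time remains $\Ot(m+n^{2/k})$. Composing the two factors gives $\hat d(s_i,t_i)\le(2k-1)(2k-3)\,d(s_i,t_i)$ in the long case and only $(2k-1)\,d(s_i,t_i)$ in the local case, while $d(s_i,t_i)\le\hat d(s_i,t_i)$ is immediate because every quantity used is a true $G$-distance or an over-estimate of one.

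The step I expect to be the main obstacle is re-verifying the stretch accounting of the rerouting case with the $(2k-3)$ oracle plugged in: one must re-derive the pivot/radius inequalities of~\cite{DBLP:conf/focs/DalirrooyfardJW22} and check that, when the balls do not meet, the detour $s_i\to p(s_i)\to p(t_i)\to t_i$ together with the $(2k-3)$-factor error on $d_G(p(s_i),p(t_i))$ still multiplies out to \emph{exactly} $(2k-1)(2k-3)\,d(s_i,t_i)$ and not to a larger bound --- i.e.\ that no step of their argument used the weaker $(2k-2)$ bound in a way that is not monotone in the inner stretch. A secondary point is to confirm that running the core oracle on $H$ rather than on $G$ loses nothing beyond the $(2k-1)$ factor already accounted for. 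Everything else --- the spanner and hierarchy construction time, the ball-exploration bookkeeping, and the bound $|S|=\Ot(n^{1/k})$ --- carries over unchanged from~\cite{DBLP:conf/focs/DalirrooyfardJW22}.
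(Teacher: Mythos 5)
Your high-level decomposition is the right one --- the $(2k-1)$ factor comes from a sparse $(2k-1)$-spanner $H$ and the $(2k-3)$ factor from \autoref{T-2k-3-Weighted} applied to $H$ --- but the concrete mechanism you propose does not deliver the product bound, and the step you yourself flag as the ``main obstacle'' is exactly where it breaks. In your rerouting case the estimate is $\hat d(s_i,t_i)=d_G(s_i,p(s_i))+\hat d(p(s_i),p(t_i))+d_G(p(t_i),t_i)$ with only a $(2k-3)$-approximation of the core-to-core distance. Writing $h=d(s_i,p(s_i))$, $h'=d(t_i,p(t_i))$ and using $d(p(s_i),p(t_i))\le d(s_i,t_i)+h+h'$, you get $\hat d\le (2k-3)d(s_i,t_i)+(2k-2)(h+h')$; the terms $h+h'$ sit \emph{outside} the inner approximation, so the error does not compose multiplicatively. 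The Thorup--Zwick bound $h,h'\le (k-1)d$ (which itself only holds via the alternating-bunches argument, using stored distances from pivots to arbitrary clustered vertices, not merely core-to-core distances) then gives roughly $(4k^2-6k+1)d$, strictly worse than $(2k-1)(2k-3)d=(4k^2-8k+3)d$. Also note that if all one needed were distances on $S\times S$ with $|S|=\Ot(n^{1/k})$, one could compute them \emph{exactly} within the budget by running Dijkstra from each core vertex in $H$ ($\Ot(n^{1/k}\cdot n^{1+1/k})=\Ot(n^{1+2/k})$), so the $(2k-2)$ versus $(2k-3)$ improvement cannot live on the core; this is a sign that your reconstruction of the framework of~\cite{DBLP:conf/focs/DalirrooyfardJW22} is not where the inner factor actually arises.

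The paper's proof avoids all of this and is much simpler: build the $(2k-1)$-spanner $H$ of~\cite{DBLP:journals/siamcomp/BaswanaK10} in $\Ot(km)$ time, construct the full distance oracle of \autoref{T-2k-3-Weighted} \emph{on $H$} (construction $\Ot(|E(H)|n^{1/k})=\Ot(n^{1+2/k})$), and answer each pair $(s_i,t_i)$ by a single oracle query; since the average degree of $H$ is $\Ot(n^{1/k})$, each query costs $\Ot(n^{2/k})$ and the $\Ot(n)$ queries cost $\Ot(n^{1+2/k})$ in total. The stretch then composes multiplicatively for free, $\hat d(s_i,t_i)\le(2k-3)\,d_H(s_i,t_i)\le(2k-3)(2k-1)\,d_G(s_i,t_i)$, because the entire $(2k-3)$-stretch analysis is carried out inside $H$ and the only loss relative to $G$ is the spanner factor. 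If you want to salvage your write-up, drop the core/pivot rerouting and the ``balls meet'' case entirely and argue exactly as above; as written, the claimed bound $(2k-1)(2k-3)$ in the rerouting case is not established.
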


For the $ANSC$ problem, \cite{DBLP:conf/focs/DalirrooyfardJW22} presented an $O(m^{2-2/k}n^{1/k})$ running time algorithm that for every $u\in V$ returns estimate $\hat{c}(u)$ such that $\hat{c}_u \le SC(u)+2\ceil*{\frac{SC(u)}{2(k-1)}}$, where $SC(u)$ is the length of the shortest simple cycle that contains $u$. In the following theorem, we improve the running time from $O(m^{2-2/k}n^{1/k})$ of \cite{DBLP:conf/focs/DalirrooyfardJW22} to $\Ot(mn^{1-\frac{1}{k}})$.

% we present an $O(mn^{1-1/k})$ time algorithm that returns $\hat{c}_u \le SC(u)+2\ceil*{\frac{SC(u)}{2(k-1)}}$, where $SC(u)$ is the length of the shortest simple cycle that contains $u$ and $\hat{c}_u$ is the estimation, as presented in the following theorem. 
\begin{theorem}[\autoref{S-ANSC}] \label{T-ANSC-1+1/k-approx-unweighted} \Copy{T-ANSC-1+1/k-approx-unweighted}{
    Given an undirected unweighted graph $G$, let $k$ be a positive integer. There is a randomized algorithm for $ANSC$ that computes for every $u\in V$ an estimation $\hat{c}_u$ such that $SC(u) \le \hat{c}_u \le 1+2\ceil*{\frac{SC(u)}{2(k-1)}}$ in $\Ot(mn^{1-\frac{1}{k}})$-time.}
\end{theorem}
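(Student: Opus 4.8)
The plan is to combine the Thorup--Zwick sampling hierarchy with a bounded search from every vertex, and to show that the shortest cycle through each $u$ can be recovered exactly when it is short and with the stated additive loss when it is long; the long case reuses the ideas behind \autoref{T-NPSP-1+1/k-Unweighted} with parameter $k-1$. For preprocessing, build a hierarchy $V=A_0\supseteq A_1\supseteq\cdots\supseteq A_{k-1}$ with $A_i$ obtained by keeping each element of $A_{i-1}$ independently with probability $n^{-1/k}$, so $|A_1|=\Ot(n^{1-1/k})$ and, whp, every bunch has $\Ot(n^{1/k})$ vertices; run the standard Thorup--Zwick preprocessing (in $\Ot(mn^{1/k})$ expected time) and, additionally, a full BFS from every $a\in A_1$ (in $\Ot(m|A_1|)=\Ot(mn^{1-1/k})$ time), storing $d(a,\cdot)$. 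For each $u$ record $r(u)=d(u,A_1)$, its nearest sample $a(u)$, and the bunch $B(u)=\{v:d(u,v)<r(u)\}$ together with the distances $d(u,\cdot)$ inside it.

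Fix $u$, let $C$ be a shortest cycle through $u$, and put $g=SC(u)$. Every vertex of $C$ is within $\lfloor g/2\rfloor$ of $u$, so if $\lfloor g/2\rfloor<r(u)$ then $C\subseteq B(u)$. In that case, run BFS from $u$ confined to $B(u)$, tag each discovered vertex by the child of $u$ whose subtree contains it, and whenever a scanned edge $(x,y)$ joins two distinct child--subtrees output the simple cycle through $u$ of length $d(u,x)+d(u,y)+1$. Tracing the two arcs of $C$ outward from $u$, their subtree tags start different (the two edges of $C$ at $u$ reach distinct neighbours) and either remain different all the way to the near--antipodal point, whose closing edge is then a cross--subtree edge of length $\le g$, or first agree at some step $i\ge2$, in which case one of the two arc edges entering step $i$ is itself cross--subtree, of length $\le 2i\le g$. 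Either way a genuine cycle through $u$ of length $\le g$ is certified, so the procedure returns $g$ exactly; in particular no error arises whenever $SC(u)\le 2r(u)-1$.

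When instead $\lfloor g/2\rfloor\ge r(u)$, the cycle escapes $B(u)$, which forces $r(u)\le g/2$. Here one walks along $C$ from $u$ toward its near--antipodal vertex (or edge), replacing long stretches by detours through the BFS trees of sampled vertices: each time the current bunch is left, the exit vertex lies among the $\Ot(n^{1/k})$ nearest vertices of the current centre and at distance exactly its $A_1$--radius from it, so the relevant sample is ``visible'' and can be used to advance along $C$, with each detour a pair of near--shortest subpaths meeting at a sampled vertex or across one edge, costing an additive $+2$ and consuming a stretch of length $\Theta(k-1)$ of $C$ on each side (this is where the hierarchy depth enters, via the borderline--vertices and middle--vertex lemmas, mirroring the $n$-PSP construction with parameter $k-1$). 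After $\lceil g/(2(k-1))\rceil$ detours the two ends close up, and since the detours on the two sides stay on opposite arcs of $C$ up to the meeting point the resulting closed walk is a simple cycle through $u$; summing the losses yields $\hat c_u\le g+2\lceil g/(2(k-1))\rceil$. All data used is already precomputed and the per--vertex work is $\Ot(n^{1/k})$, so over all $n$ vertices this is $\Ot(n^{1+1/k})$, dominated by preprocessing; the total time is $\Ot(mn^{1-1/k})$.

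The main obstacle is the long case: the detour rule must be chosen so that each detour provably costs only $+O(1)$ while consuming $\Omega(k-1)$ of the cycle — precisely what the borderline--vertices and middle--vertex machinery is for, and making the constants land on exactly $2\lceil g/(2(k-1))\rceil$ rather than something looser is the crux — and so that the stitched object is a simple cycle through $u$, not merely a closed walk, which requires the disjointness argument between its two halves. By contrast the short case and the running--time bookkeeping are routine given the Thorup--Zwick preprocessing.
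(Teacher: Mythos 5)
Your preprocessing and the ``short cycle'' case are fine (in fact the paper does not even need that case split), but the long case --- which is the entire content of the theorem --- is not proved, and the route you sketch there does not work as stated. Concretely: (i) you never specify the algorithm for the long case (what quantity is computed per vertex, minimized over which candidate set); ``walking along $C$'' is only an analysis device since $C$ is unknown. (ii) The claim that each detour costs an additive $+2$ while consuming $\Theta(k-1)$ edges of $C$ has no support: the borderline-vertices and middle-vertex lemmas you invoke give bounds of the form $h_2(u)\le d(u,v)+1\ODD$ and lead to constant multiplicative stretch, not to a $+2$ additive loss per $\Theta(k)$-length segment. The mechanism that actually yields $d+2\lceil d/(2t)\rceil$ in this paper is different: one builds the iterated bunch sets $S_t(\cdot)$, uses the averaging bound $\sum_{i=0}^{t-1}\bigl(h(u_i)+h(v_i)\bigr)\le d+2t-1$ (\autoref{L-Bound-Sum-Of-h-in-S}), and takes a \emph{single} detour through the pivot $p(w)$ minimizing $h(w)$, paying the whole budget $2\lceil d/(2t)\rceil$ at once. (iii) That mechanism requires exploring sets of size $\Ot(n^{(k-1)/k})$ per query, so your claimed $\Ot(n^{1/k})$ work per vertex is inconsistent with the claimed approximation; with only $\Ot(n^{1/k})$ exploration (one bunch level) you can only certify something like $d+2\lceil d/2\rceil$, i.e.\ stretch about $2$. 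The time bound $\Ot(mn^{1-1/k})$ is met in the paper precisely because the per-query cost is $\Ot(n^{(k-1)/k})$ over $m$ (edge) queries, balanced against construction $\Ot(mn^{1-1/k})$. (iv) Your simplicity argument is also broken: the detours pass through pivots $p(\cdot)$ that need not lie on $C$, so ``the two halves stay on opposite arcs of $C$'' is false and the stitched object is in general a non-simple closed walk; as written you have neither $\hat c_u\ge SC(u)$ nor an upper bound.

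For comparison, the paper's proof is a short reduction that sidesteps all of this: build the distance oracle of \autoref{T-Unweighted-1+1/t} with $t=k-1$ and $c=1/k$, modify its query to ignore the single queried edge (call it $\Query_{>1}$), and for every edge $(u,v)$ set $\hat c_u\leftarrow\min(\hat c_u,\,1+\Query_{>1}(u,v))$ and symmetrically for $v$. Taking $v$ to be $u$'s neighbor on a shortest cycle gives $\hat c_u\le 1+d_{>1}(v,u)+2\lceil d_{>1}(v,u)/(2(k-1))\rceil\le SC(u)+2\lceil SC(u)/(2(k-1))\rceil$, the lower bound holds because a closed walk traversing the edge $(u,v)$ exactly once contains a simple cycle through $u$, and the running time is $\Ot(mn^{1-1/k})$ for construction plus $\Ot(m\cdot n^{(k-1)/k})=\Ot(mn^{1-1/k})$ for the $O(m)$ queries. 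If you want to salvage your write-up, replace the long-case ``detour chain'' with this per-edge query formulation and the $S_t$ averaging lemma, and redo the time accounting accordingly.
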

% This algorithm improves the $O(m^{2-2/k}n^{1/k})$ running time algorithm of~\cite{DBLP:conf/focs/DalirrooyfardJW22} that achieves the same stretch. 
In addition, we present the first, to the best of our knowledge, stretch $<2$ algorithm for the $ANSC$ problem in \textit{weighted graphs}. Specifically, we present an $O(m^{2-1/k})$ time algorithm for weighted graphs that for every $u\in V$ returns an estimate $\hat{c}$ such that $\hat{c}_u \le 1+\frac{1}{k-1}SC(u)$, as presented in the following theorem.
\begin{theorem}[\autoref{S-ANSC}] \label{T-ANSC-1+1/k-approx-weighted}\Copy{T-ANSC-1+1/k-approx-weighted}{
    Given an undirected weighted graph $G$, let $k$ be a positive integer.
    There is a randomized algorithm for $ANSC$ that computes for every $u\in V$ an estimation $\hat{c}_u$ such that $SC(u) \le \hat{c}_u \le (1+\frac{1}{k-1})SC(u)$ in $\Ot(m^{2-\frac{1}{k}})$-time.
    % Let $G=(V, E)$ be an undirected weighted graph with real non-negative edge weights. Let $|V|=n$, let $|E|=m$, let $k>0$ be an integer and let $\mu=m/n$.
    
    % Let $k$ be a positive integer. There is a randomized algorithm for $ANSC$ that for computes for every $u\in V$ an estimation $\hat{c}_u$ such that $SC(u) \le \hat{c}_u \le (1+\frac{1}{k-1})SC(u)$ in $\Ot(m^{2-\frac{1}{k}})$-time.
}
\end{theorem}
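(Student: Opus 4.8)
The plan is to reduce weighted $ANSC$ to approximating, for \emph{every} edge, the length of the shortest cycle through that edge, and then to solve the latter with a sampling hierarchy combined with truncated single-source computations, in the spirit of the $n$-PSP algorithms above.

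\textbf{Step 1 (reduction to edges).} Every simple cycle through $u$ uses exactly two edges incident to $u$, and for an edge $e=(u,v)$ the value $w(e)+d_{G\setminus e}(u,v)$ is the length of the shortest cycle through $u$ that uses $e$; minimizing over $e\ni u$ recovers $SC(u)$. Hence it suffices to compute, for every edge $e=(x,y)$, a value $\hat p_e$ with $d_{G\setminus e}(x,y)\le \hat p_e\le (1+\tfrac{1}{k-1})\,d_{G\setminus e}(x,y)$ and to output $\hat c_u\defeq\min_{e=(u,v)\in E}\bigl(w(e)+\hat p_e\bigr)$. The lower bound on $\hat p_e$ gives $\hat c_u\ge SC(u)$, while applying the upper bound to the first edge $e^{\star}=(u,v^{\star})$ of an actual shortest cycle through $u$ (whose remainder is necessarily a shortest $v^{\star}$--$u$ path in $G\setminus e^{\star}$) gives $\hat c_u\le w(e^{\star})+(1+\tfrac{1}{k-1})\,d_{G\setminus e^{\star}}(u,v^{\star})\le(1+\tfrac{1}{k-1})SC(u)$. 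So correctness is inherited from the per-edge problem.

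\textbf{Step 2 (easy edges and the relevant regime).} Removing $e=(x,y)$ changes the $x$--$y$ distance only when $e$ lies on \emph{every} shortest $x$--$y$ path, and then $d_G(x,y)=w(e)$. So I would first compute, using the distance-oracle / $n$-PSP machinery already developed in this paper (and Dijkstra from a random set of $\Ot(m^{1-1/k})$ vertices), a $(1+\tfrac1{k-1})$-estimate $\tilde d(x,y)$ of $d_G(x,y)$ for every edge; for every edge flagged non-critical by $\tilde d$ we simply set $\hat p_e\defeq\tilde d(x,y)$, since then $d_{G\setminus e}(x,y)=d_G(x,y)$. (It is also worth noting up front that if $m\ge n^{1+1/(k-1)}$ one can afford $n$ exact Dijkstras in time $\Ot(mn)\le\Ot(m^{2-1/k})$ and solve $ANSC$ exactly, so the interesting case is $m<n^{1+1/(k-1)}$.) The remaining work concerns only the critical edges.

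\textbf{Step 3 (critical edges via a $(k-1)$-level hierarchy).} Let $e=(x,y)$ be critical and let $P$ be a shortest $x$--$y$ path in $G\setminus e$, so $w(e)+|P|$ is the shortest cycle through $e$ and $|P|$ is what must be approximated. Split vertices into ``hubs'' of degree $\ge m^{1/k}$ (at most $2m^{1-1/k}$ of them) and low-degree vertices, run full Dijkstra from every hub in total time $\Ot(m^{2-1/k})$, and iterate this at a nested family $R_1\supseteq\cdots\supseteq R_{k-1}$ of hub sets at geometrically spaced degree thresholds. Then: either $P$ meets a hub $a$, in which case stitching $d_G(x,a)$ and $d_G(a,y)$ around $e$ gives a closed walk through $x,y$ avoiding $e$ whose length is, after absorbing the gap between $G$- and $(G\setminus e)$-distances and telescoping the over-estimate across the $k-1$ levels, at most $(1+\tfrac1{k-1})(w(e)+|P|)$; or $P$ avoids all hubs, hence lies in the low-degree part, where a bounded local search from $x$ (Dijkstra restricted to the low-degree subgraph, aborted after $\Ot(m^{1/k})$ relaxations) finds $P$ exactly, at total cost $\Ot(n\cdot m^{1/k})=\Ot(m^{2-1/k})$. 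Choosing the thresholds so both the number of hub-Dijkstras and the per-vertex local budget stay $\Ot(m^{2-1/k})$, and so the over-estimate contributed at each level telescopes, yields the claimed time and stretch.

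\textbf{Main obstacle.} The crux is the edge-avoidance in Step 3: unlike plain $n$-PSP, the distances we must approximately recover live in $G\setminus e$ and must be recovered for all $m$ edges simultaneously within an $\Ot(m^{2-1/k})$ budget. The delicate points are (i) showing that a detour $P$ missing all hubs is genuinely confined to a \emph{sparse} part of the graph, so a bounded local exploration really finds it without being derailed by the weight-ordering of Dijkstra, and (ii) aligning the hop/weight scales of the $k-1$ hub levels so the accumulated multiplicative over-estimate is exactly $1+\tfrac1{k-1}$ while the work at every level is $\Ot(m^{2-1/k})$. I expect this charging argument---concentrated detour caught locally, spread-out detour pierced by a hub at the right scale, with both error and work summing correctly across levels---to be the main technical content, and it is precisely this interplay that makes weighted $ANSC$ with multiplicative stretch $<2$ subtler than the unweighted case of~\autoref{T-ANSC-1+1/k-approx-unweighted}.
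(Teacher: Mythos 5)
Your Step~1 is the same reduction the paper uses (minimize $w(u,v)$ plus an approximation of the shortest $v$--$u$ path that avoids the edge $(u,v)$ over all edges incident to $u$), but Steps~2 and~3, which carry the entire weight of the argument, are not a proof and contain concrete correctness gaps. First, in Step~2 you cannot soundly classify edges as non-critical using only a $(1+\frac{1}{k-1})$-approximation $\tilde d(x,y)$ of $d_G(x,y)$: for a critical edge $e=(x,y)$ one has $d_G(x,y)=w(e)$, the estimate $\tilde d(x,y)$ may equal $w(e)$, and setting $\hat p_e=\tilde d(x,y)$ then gives $\hat p_e \ll d_{G\setminus e}(x,y)$, so $\hat c_u$ can drop below $SC(u)$ and the one-sided guarantee $SC(u)\le \hat c_u$ is lost. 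Second, in Step~3 the stitched value $d_G(x,a)+d_G(a,y)+w(e)$ has the same defect: the $G$-shortest paths from $x$ and $y$ to the hub $a$ may themselves traverse $e$, so the closed walk need not contain a simple cycle through $u$, and again the estimate is not certified by an actual cycle. Edge-avoidance is precisely the difficulty, and you explicitly leave the charging/telescoping argument that is supposed to resolve it (and to make the error come out to exactly $1+\frac{1}{k-1}$ across $k-1$ hub levels within $\Ot(m^{2-1/k})$ work) as ``the main technical content''; as written, the key step is missing rather than proved.

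The paper's proof is much simpler and avoids all of this. It takes the $(1+\frac{1}{k-1})$-stretch distance oracle of Agarwal (\autoref{L-oracle-of-agarwal14}, instantiated with parameter $k-1$), observes that its query grows balls around both endpoints and can therefore be modified to ignore the single edge $(u,v)$ --- this modified query $\Query_{>1}(v,u)$ always returns the length of a real path avoiding that edge, so every estimate $w(u,v)+\Query_{>1}(v,u)$ is the length of an actual cycle through $u$ and the lower bound is automatic --- and sets $\hat c_u=\min_{v\in N(u)}\bigl(w(u,v)+\Query_{>1}(v,u)\bigr)$. Applying the upper bound to the edge of a shortest cycle through $u$ gives $\hat c_u\le(1+\frac{1}{k-1})SC(u)$, and choosing $c$ to balance the construction time $\Ot(mn^{1-c})$ against $m$ queries of cost $\Ot(\mu^{k-1}n^{(k-1)c})$ yields $\Ot(m^{2-\frac{1}{k}})$ total time. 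No fault-tolerant hub hierarchy or truncated local searches are needed; if you want to salvage your route, the minimal fix is to make every estimate the length of a path that provably avoids the queried edge, which is exactly what the paper's $\Query_{>1}$ accomplishes.
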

Our results for the $ANSC$ problem are summarized in~\autoref{tab:ansc}.

The rest of this paper is organized as follows. In the next section, we present some necessary preliminaries. 
In~\autoref{S-Over} we present a technical overview of our main techniques and some of our new distance oracles. 
In~\autoref{S-New-Weighted} we present our borderline vertices technique and our main tradeoffs between space, stretch, and query time.
In~\autoref{S-more-3} we present our new middle vertex technique and four
new distance oracles obtained using this technique for unweighted graphs.
In~\autoref{S-at-most-2} we present distance oracles achieving stretch strictly less than 2 in unweighted graphs. In~\autoref{S-at-most-2weighted} we extend the results from~\autoref{S-at-most-2} to weighted graphs with non-negative real edge weights. 
In~\autoref{S-apps} we present our new algorithms for $n$-PSP and $ANSC$ that are obtained with our new distance oracles.
In~\autoref{S-LB-appendix} we prove the lower bound of~\autoref{T-LB-2}.
For reference, in~\autoref{S-Tables} summarizes all our main theorems in tables.
% In~\autoref{S-Tables} we present our $\frac43(2k-1)$-stretch distance oracle that uses $\Ot(m+n^{(1-c)(1+1/k)})$ space.

\section{Preliminaries}\label{S-Prel}
In this section, we present several definitions and existing tools that we use to develop our new techniques and ideas which we then apply in order to obtain new distance oracles. 
Let $G=(V,E)$ be an undirected graph with $n=|V|$ vertices and $m=|E|$ edges. Throughout the paper, we consider both unweighted graphs and weighted graphs with non-negative real edge weights.

Let $u,v\in V$. The distance $d(u,v)$ between $u$ and $v$ is the length of a shortest path between $u$ and $v$. 
Let $P(u,v)$ be the shortest path between $u$ and $v$. 
Let $N(u)$ be the vertices that are neighbors of $u$ and let $deg(u)=|N(u)|$ be the degree of $u$.
Let $X\subseteq V$. Let $N(X)$ be the vertices that are neighbors of some vertex $u\in X$, i.e, $N(X)=\{w\in V | \exists x\in X:(w,x)\in E\}$. 
The distance $d(u,X)$ between $u$ and $X$ is the distance between $u$ and the closest vertex to $u$ from $X$, that is, 
$d(u,X)= \min_{x\in X}(d(u,x))$. Let $p(u, X)=\arg \min_{x\in X}(d(u,x))$ (ties are broken in favor of the vertex with a smaller identifier). 

Next, we define bunches and clusters as in~\cite{DBLP:journals/jacm/ThorupZ05}. 
Let $u\in V$ and let $X,Y\subseteq V$. 
Let $B(u,X,Y)=\{ v\in X \mid d(u,v)<d(u,Y)\} $ be the bunch of $u$ with respect to $X$ and $Y$. 
Let $C(u,Y)=\{ v\in V \mid d(u,v)<d(v,Y)\} $ be the cluster of $u$ with respect to $Y$. 

The starting point in many algorithms and data structures for distance approximation, and in particular in Thorup and Zwick\cite{DBLP:journals/jacm/ThorupZ05}'s distance oracles, is a hierarchy of vertex sets  $A_0, A_1,\ldots, A_k$, where $A_0=V$, $A_k=\emptyset$ and for $0\leq i \leq k-1$, $A_{i+1}\subseteq A_i$ and $|A_i|=n^{1-i/k}$. For every $u\in V$, let $p_i(u)=p(u,A_{i})$ and let $h_i(u)=d(u,A_i)$.
Using this hierarchy, Thorup and Zwick defined 
$k$ bunches for every vertex $u\in V$ as follows: For every $0\leq i \leq k-1$, the bunch $B_i(u)$ is defined as $B_i(u)=B(u,A_i,A_{i+1})$. 
They also defined a cluster for every vertex $w\in A_i\setminus A_{i+1}$ as follows: $C(w)=C(w,A_{i+1})$. 
Throughout the paper when we save $B_i(u)$, for every  $u\in V$  and $0\leq i \leq k-1$ (or $C(u)=C(u,A_{i+1})$, for every $u\in A_i\setminus A_{i+1}$), 
we mean that we can check whether $x\in B_i(u)$ (or $x\in C(u)$) in constant time and if $x\in B_i(u)$ (or $x\in C(u)$) we can retrieve $d(u,x)$ in constant time as well.

The simplest instance of this hierarchy is when $k=2$ and the hierarchy is composed of the sets $A_0,A_1,A_2$, where $A_0=V$, $A_1\subset V$, and $A_2=\emptyset$. 
Throughout this paper, in such a case,
we will omit subscripts and will refer to $A_1$ simply as $A$. Moreover, since  in this case $B_0(u)=B(u,V,A)$ and $B_1(u)
=B(u,A,\emptyset)=A$, for every $u\in V$, we use $B(u)$ to denote $B_0(u)$ and $A$ to denote $B_1(u)$ and since we have only $p_0(u)$ and $p_1(u)$, where $p_0(u)$ is $u$,   we use  $p(u)$ to denote $p_1(u)$ and $h(u)$ to denote $h_1(u)$. 

Thorup and Zwick~\cite{DBLP:journals/jacm/ThorupZ05} construct the sets $A_1,\ldots, A_{k-1}$,  by 
 adding to $A_{i+1}$, where $i\in [0,k-2]$, every vertex of $A_i$, independently at random with probability $p$.
Constructing the sets in such a way allows Thorup and Zwick~\cite{DBLP:journals/jacm/ThorupZ05} to prove the following:

\begin{lemma}[\cite{DBLP:journals/jacm/ThorupZ05}]\label{L-TZ-Size}
Given an integer parameter $k\geq 2$, we can compute in $\Ot(n)$ time sets $A_1,\ldots, A_{k-1}$, such that $|A_i|=O(n^{1-i/k})$ for every $i\in [1,k-1]$ with high probability, and for every $i\in [0,k-1]$ the size of $B_i(u)$ is $O(n^{1/k})$, with high probability (w.h.p).
The cost of computing $B_i(u)$, for every $u\in V$, is $\Ot(mn^{1/k})$  expected time.
\end{lemma}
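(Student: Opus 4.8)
The plan is to reproduce the construction and analysis of Thorup and Zwick~\cite{DBLP:journals/jacm/ThorupZ05}. Set $p=n^{-1/k}$, let $A_0=V$, and for $i=0,\dots,k-2$ obtain $A_{i+1}$ from $A_i$ by keeping each vertex independently with probability $p$; set $A_k=\emptyset$. This costs $O(n)$ coin flips per level, hence $\Ot(n)$ time overall. Since $\EE|A_i|=np^i=n^{1-i/k}$ and the membership indicators are independent, a Chernoff bound together with a union bound over the levels gives $|A_i|=O(n^{1-i/k})$ for all $i\in[1,k-1]$ with high probability (and if some level exceeds the bound one simply resamples, incurring $O(1)$ expected repetitions).

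For the bunch sizes, fix $u\in V$ and $i\in[0,k-1]$ and list the vertices of $A_i$ as $v_1,v_2,\dots$ in increasing order of $(d(u,v_j),\mathrm{id}(v_j))$ --- the same order used to define $p(\cdot,\cdot)$. The key observation is that $B_i(u)$ is a \emph{prefix} of this list, and that $v_j\in B_i(u)$, i.e.\ $d(u,v_j)<d(u,A_{i+1})$, forces every one of $v_1,\dots,v_j$ to have been discarded when $A_{i+1}$ was sampled from $A_i$ (otherwise a surviving $v_\ell$ with $\ell\le j$ would witness $d(u,A_{i+1})\le d(u,v_j)$). Independence of survival then gives $\Pr[v_j\in B_i(u)]\le(1-p)^j$, so $\EE|B_i(u)|\le\sum_{j\ge1}(1-p)^j\le 1/p=n^{1/k}$, and $\Pr[|B_i(u)|\ge t]\le(1-p)^t\le e^{-pt}$. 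Choosing $t=\Theta(n^{1/k}\log n)$ drives this below $n^{-\Omega(1)}$, so a union bound over all $u$ and all $i$ bounds every bunch by $\Ot(n^{1/k})$ w.h.p.; in particular $\sum_u|B_i(u)|=O(n^{1+1/k})$, which is the form the space bound uses.

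For the construction time, first compute $p_i(u)$ and $h_i(u)=d(u,A_i)$ for all $u$ and $i$ by one shortest-path run per level from a virtual source joined by zero-weight edges to $A_i$: $O(k(m+n\log n))=\Ot(m)$ in total. Then, for a fixed level $i$ and each $w\in A_i\setminus A_{i+1}$, enumerate $C(w)=C(w,A_{i+1})=\{v:d(w,v)<h_{i+1}(v)\}$ by a pruned Dijkstra from $w$ that inserts or expands a vertex $v$ only while its tentative distance is still $<h_{i+1}(v)$; this outputs exactly $C(w)$, and collecting the pairs $(w,v)$ with $v\in C(w)$ recovers every $B_i(v)$ via the duality $v\in C(w)\iff w\in B_i(v)$ (valid because $B_i(v)\subseteq A_i\setminus A_{i+1}$). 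The work of these searches is dominated by edge relaxations, numbering $\sum_w\sum_{v\in C(w)}\deg(v)$; taking expectations and swapping sums, this is $\sum_v\deg(v)\cdot\EE|\{w\in A_i:v\in C(w)\}|\le\sum_v\deg(v)\cdot\EE|B_i(v)|=O(n^{1/k})\sum_v\deg(v)=O(mn^{1/k})$, and with the $O(\log n)$ heap overhead this is $\Ot(mn^{1/k})$ expected time for level $i$ (an extra $O(k)$ for all levels). Storing each $B_i(u)$ in a hash table keyed by vertex and carrying the stored distance gives the $O(1)$ membership test and distance retrieval stated in the preliminaries.

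The main obstacle is the sorted-prefix observation: it is what makes bunch sizes geometrically decaying, hence $O(n^{1/k})$ in expectation, and --- through the $C\leftrightarrow B$ duality --- what lets the cost of each pruned Dijkstra be charged to bunch membership rather than to the $\Theta(n)$ vertices a naive search could touch. The remaining ingredients (Chernoff/union-bound concentration, virtual-source shortest paths, and the correctness of the pruning rule) are routine.
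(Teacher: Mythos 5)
The paper does not prove this lemma at all---it is imported verbatim from Thorup--Zwick \cite{DBLP:journals/jacm/ThorupZ05}---and your proposal correctly reproduces the standard argument from that source: the sampled hierarchy with Chernoff concentration, the sorted-prefix observation giving $\Pr[v_j\in B_i(u)]\le(1-p)^j$ and hence expected bunch size $1/p=n^{1/k}$, and the cluster--bunch duality that charges the truncated Dijkstra work $\sum_w\sum_{v\in C(w)}\deg(v)$ to $\sum_v \deg(v)\,|B_i(v)|$, yielding $\Ot(mn^{1/k})$ expected time. The only nit is that for $i=k-1$ the geometric-decay argument does not apply (since $A_k=\emptyset$ deterministically, $B_{k-1}(u)=A_{k-1}$), but the w.h.p.\ bound $|A_{k-1}|=O(n^{1/k})$ you already established covers that case.
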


\begin{lemma}[\cite{DBLP:conf/spaa/ThorupZ01}]\label{L-A-center}
Given a parameter $p$, we can compute a set $A$ of size $\Ot(np)$ in $\Ot(mp^{-1})$ expected time
such that, $|C(w,A)|=O(1/p)$, for every vertex $w\in V \setminus A$, and $|B(v,V,A)|=O(1/p)$ for every $v\in V$. 
\end{lemma}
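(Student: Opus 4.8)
I would establish the lemma via the iterative ``center'' construction of Thorup and Zwick. Maintain a set $A$, initially empty, together with a set $W$ of \emph{uncovered} vertices, initially $W=V$, and proceed in rounds. In each round, sample $S\subseteq W$ by including every vertex of $W$ independently with probability $p$, set $A\gets A\cup S$, recompute $C(v,A)$ for all $v\in V$, and reset $W\gets\{v\in V:|C(v,A)|>4/p\}$; once $|W|$ has dropped below, say, $np$, add all of $W$ to $A$ and stop. Output $A$. Since $A$ only grows, every $C(v,A)$ only shrinks, so $W$ is monotonically non-increasing and at termination every $w\in V\setminus A$ satisfies $|C(w,A)|\le 4/p=O(1/p)$, which is the cluster bound.

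The heart of the argument — and the step I expect to be the main obstacle — is the geometric decay of $W$: as long as $|W|$ exceeds $\Theta(np)$, a round satisfies $\EE[\,|W'|\mid\mathcal F\,]\le\tfrac12|W|$, where $\mathcal F$ is the history at the start of the round and $W'$ is the uncovered set afterwards. Since $W'\subseteq W$ it suffices to bound $\Pr[w\in W']\le\tfrac12$ for a fixed $w\in W$ lying outside a ``hard core''. The idea is to order $C(w,A)$ as $v_1,v_2,\dots$ by increasing distance from $w$ (ties by identifier); a vertex $v_i$ is expelled from $w$'s cluster as soon as a newly added center lies within distance $d(w,v_i)$ of $v_i$, and including one sufficiently near sampled vertex into $S$ knocks out a long tail of this order. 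As each of $\Theta(1/p)$ candidates is included independently with probability $p$, the probability that none is included is $(1-p)^{\Theta(1/p)}<\tfrac12$. The delicate part, where I would follow the Thorup--Zwick charging argument, is (a) showing that enough of the near cluster-vertices are themselves eligible to be sampled, i.e.\ lie in $W$, and (b) absorbing into the $\Theta(np)$-size ``hard core'' those $w$ whose clusters consist mostly of already-covered vertices; for (b) one uses the identity $\sum_w|C(w,A)|=\sum_v|B(v,V,A)|$ together with the bunch bound below to see $\sum_v|B(v,V,A)|=\Ot(n/p)$, hence the hard core has size $\Ot(np)$.

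Granting the decay bound, the rest is routine. Iterating, after $O(\log n)$ rounds $|W|\le np$ with high probability, at which point dumping $W$ into $A$ leaves every $w\in V\setminus A$ with $|C(w,A)|\le 4/p$. For the size, round $i$ adds in expectation $p\,|W_{i-1}|$ vertices, $\sum_i|W_{i-1}|=O(|W_0|)=O(n)$ by the geometric decay, and the final dump adds $\le np$, so $\EE[|A|]=O(pn)$; a Chernoff bound in each of the $O(\log n)$ rounds plus a union bound upgrades this to $|A|=\Ot(np)$ w.h.p. For the bunch bound, observe that the first round samples $S_1$ from $W_0=V$, so $S_1$ is a Bernoulli$(p)$ sample of $V$ with $S_1\subseteq A$, whence $B(v,V,A)\subseteq B(v,V,S_1)$ for every $v$. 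Ordering $V$ by distance from $v$ as $u_1,u_2,\dots$, the event $u_j\in B(v,V,S_1)$ forces none of $u_1,\dots,u_j$ to be sampled, of probability at most $(1-p)^j$; summing over $j$ and union-bounding over all $v$ yields $|B(v,V,S_1)|=O(p^{-1}\log n)$ for all $v$ w.h.p., hence $|B(v,V,A)|=O(1/p)$ for every $v$ up to the logarithmic factor absorbed into $\Ot$ (Thorup and Zwick obtain the clean $O(1/p)$ by additionally peeling on the event $|B(v,V,A)|>4/p$ and running the analogous decay argument).

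Finally, the running time. At the start of each round I would compute $d(v,A)$ for all $v$ with one multi-source Dijkstra from $A$ in $\Ot(m)$ time, and then compute each $C(w,A)$ by a Dijkstra from $w$ that stops expanding a vertex $v$ once its tentative distance reaches $d(v,A)$, so the search touches only $C(w,A)$ and its incident edges. Summed over $w$, the total is $\Ot\bigl(m+\sum_w|C(w,A)|+\sum_v\deg(v)\,|B(v,V,A)|\bigr)$; since after the first round $|B(v,V,A)|=\Ot(1/p)$, we get $\sum_w|C(w,A)|=\sum_v|B(v,V,A)|=\Ot(n/p)$ and $\sum_v\deg(v)\,|B(v,V,A)|=\Ot(m/p)$, so each round costs $\Ot(m/p)$ (using $m=\Omega(n)$). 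Over the $O(\log n)$ rounds this is $\Ot(mp^{-1})$ expected time, matching the claim.
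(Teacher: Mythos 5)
The paper does not prove this lemma itself (it is imported from Thorup--Zwick's centers construction), so the benchmark is that construction, and your skeleton -- iterative sampling into $A$, the monotone set $W$ of vertices with clusters above the $4/p$ threshold, the truncated-Dijkstra time analysis bounded by $\sum_v \deg(v)\,|B(v,V,A)|=\Ot(m/p)$, and the bunch bound via the first-round sample -- matches it. The genuine gap is in your central decay step, in two places. First, the expulsion mechanism you describe is wrong for clusters: ordering $C(w,A)$ as $v_1,v_2,\dots$ by distance from $w$, sampling a ``near'' $v_j$ does \emph{not} knock out the tail $v_i$, $i>j$. To expel $v_i$ from $C(w,\cdot)$ you need a new center within distance $d(w,v_i)$ \emph{of $v_i$}, and a vertex close to $w$ (or to $v_j$) need not be close to $v_i$; the prefix/tail structure you are invoking is a property of \emph{bunches} (order the candidates around a fixed $v$; sampling $u_j$ removes every farther $u_i$ from $B(v)$), not of clusters ordered around $w$. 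Second, because you sample $W$ with the \emph{fixed} probability $p$ rather than with probability $np/|W|$ as Thorup--Zwick do, the halving claim $\EE[|W'|]\le |W|/2$ is not established once $|W|=o(n)$: the correct global bound one gets from the bunch argument is $\EE\bigl[\sum_{w\in W}|C(w,A\cup S)|\bigr]\le n/p$, hence only $\EE[|W'|]\le n/4$, which is a contraction only while $|W|=\Omega(n)$. Your proposed patch does not close this: the ``hard core'' of vertices $w\in W$ whose clusters are mostly covered is bounded, via $\sum_v|B(v,V,A)|=\Ot(n/p)$ and the threshold $4/p$ (or $2/p$ covered members), only by $\Ot(n)$, not $\Ot(np)$ -- you are off by a factor of roughly $1/p$. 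Consequently the process may stall with $np\ll|W|\ll n$, and at that point neither your stopping rule fires nor can you dump $W$ into $A$ without violating the $\Ot(np)$ size bound.

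The fix is to follow the cited construction exactly: in each round sample each vertex of $W$ with probability $\min(1,np/|W|)$ (expected $np$ new centers per round). Then for every vertex $v$, ordering the $W$-vertices by distance from $v$, only those preceding the first sampled $W$-vertex can retain $v$ in their clusters, so $\EE\bigl[|B(v,A\cup S)\cap W|\bigr]\le |W|/(np)$ and, summing over $v$ and using $\sum_{w\in W}|C(w,A\cup S)|=\sum_v|B(v,A\cup S)\cap W|$, one gets $\EE\bigl[\sum_{w\in W}|C(w,A\cup S)|\bigr]\le |W|/p$; Markov at the threshold $4/p$ (and the observation $W'\subseteq W$) gives $\EE[|W'|]\le |W|/4$ at \emph{every} scale of $|W|$, which yields $O(\log n)$ rounds, $|A|=\Ot(np)$, termination with $|C(w,A)|\le 4/p$ for all $w\notin A$, and your runtime accounting then goes through unchanged. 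The remainder of your write-up (the bunch bound from the first-round Bernoulli sample, and the $\Ot(mp^{-1})$ cluster-computation cost) is correct.
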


We remark that both~\autoref{L-TZ-Size} and~\autoref{L-A-center} have also slower deterministic constructions~\cite{RodittyTZ05, DBLP:conf/spaa/ThorupZ01}. 

Next, we present procedure $\Intersection$. 
The input to $\Intersection$ is two vertices $u,w\in V$ and two sets $U,W\subseteq V$, such that, $u\in U$ and $w\in W$. For every $u'\in U$ and every $w'\in W$, we assume that distance estimations $d'(u,u')$ and $d'(w,w')$ are known. The output of $\Intersection$
is $\min_{x\in U\cap W} d'(u,x) + d'(w,x)$, if $U\cap W\neq \emptyset$ and $\infty$, otherwise. (See Algorithm~\ref{Algorithm-Intersection}.)
In most of our uses of $\Intersection$ we have  $d'(u,u')=d(u,u')$ and $d'(w,w')=d(w,w')$. When this is not the case, we explicitly describe  $d'$ and its relation to $d$. 
\begin{algorithm2e}[t] 
\caption{$\Intersection(u, w, U, W
)$}\label{Algorithm-Intersection}
\If {$U\cap W\neq \emptyset$} {
    \Return $\min_{x\in U\cap W} d'(u,x) + d'(x,w)$;
}
\Return $\infty$;
\end{algorithm2e}
The following lemma regarding the running time of $\Intersection$ is straightforward. 
\begin{lemma}
    $\Intersection(u,w,U,W)$ runs in $O(\min(|U|,|W|))$.
\end{lemma}\label{Intersection-Runtime}
\begin{proof}
    Assume, without loss of generality (wlog), that $|U| \ge |W|$.
    By checking for every vertex $y\in W$ in $O(1)$ time whether $y\in U$ we compute $U\cap W$.  
    For each vertex $x\in U\cap W$, we compute in $O(1)$ time the value $d'(u,x) + d'(x,w)$. 
    Thus, the total running time is 
    $\min(|U|,|W|)\cdot O(1)=O(\min(|U|,|W|))$, as required.
\end{proof}

The following property of $\Intersection(u,w,U,W)$ is the main feature of $\Intersection$, and will be used throughout the paper. 
\begin{property}
\label{P-Intersection-With-P-Returns-d(u,v)}
Let $P=P(u,w)$. If $P\cap (U\cap W)\neq \emptyset$, and there exists $x\in P\cap (U\cap W)$ such that $d'(u,x)=d(u,x)$ and $d'(x,v)=d(x,v)$ then $\Intersection(u,w,U,W)$
returns $d(u,w)$. 
\end{property}

Let $\TZQuery$ be the query of the distance oracle presented by Thorup and Zwick~\cite{DBLP:journals/jacm/ThorupZ05}. We let $\MTZQuery(u,v)$ be $\min(\TZQuery(u,v), \TZQuery(v,u))$. Pseudocode for $\TZQuery$ exists in~\autoref{A-ADO-Query}.
\begin{algorithm2e}[t] 
\caption{$\TZQuery(u,v)$} \label{A-ADO-Query}
\For{$i\in [k]$} {
    \lIf{$p_{i-1}(u)\in B_i(v)$}{\Return $h_{i-1}(u)+d(p_{i-1}(u),v)$}
    $(u,v) \gets (v,u)$
}
\end{algorithm2e}

The following lemma appears explicitly in~\cite{DBLP:journals/jacm/ThorupZ05} regarding the correctness and running time of $\DistanceOracle$.
\begin{lemma}[\cite{DBLP:journals/jacm/ThorupZ05}]\label{L-THZ05-Q-Correctness}
    The running time of $\DistanceOracle$ is $O(k)$ and it holds that $$\DistanceOracle(u,v) \le (2k-1)d(u,v)$$
\end{lemma}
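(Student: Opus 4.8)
Since this is precisely the guarantee of Thorup and Zwick, the plan is to reproduce their analysis. The running-time bound is immediate: the loop of $\TZQuery$ (\autoref{A-ADO-Query}) runs for at most $k$ iterations, and each iteration performs one bunch-membership test, one retrieval of a stored distance, and one swap $(u,v)\gets(v,u)$, all in $O(1)$ time by the storage convention for bunches. Hence $\TZQuery$ runs in $O(k)$ time, and so does $\MTZQuery(u,v)=\min(\TZQuery(u,v),\TZQuery(v,u))$.

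For the stretch bound I would first show that the loop always returns, and then bound how far the returned pivot can be from the two query vertices. Write $\Delta\defeq d(u,v)$, and for a fixed execution let $x,y$ denote the current pair of vertices when a given iteration is entered (these are the original $u,v$ in one of the two orders, so $d(x,y)=\Delta$ throughout). The key claim, proved by induction on the level $i$, is: if level $i$ is reached without a \textbf{return}, then $d(x,p_i(x))\le i\,\Delta$. The base case $i=0$ holds since $A_0=V$ gives $p_0(x)=x$. For the step, not returning at level $i$ means $p_i(x)\notin B_i(y)$, which by the definition of the bunch is exactly the inequality $d(y,A_{i+1})\le d(y,p_i(x))$. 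Combining the triangle inequality with the induction hypothesis, $d(y,p_i(x))\le d(y,x)+d(x,p_i(x))\le \Delta+i\,\Delta=(i+1)\Delta$, hence $d(y,p_{i+1}(y))=d(y,A_{i+1})\le (i+1)\Delta$. Since the algorithm now swaps, $y$ plays the source role at level $i+1$, which is the claim at $i+1$.

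Termination holds because $A_k=\emptyset$ gives $d(\cdot,A_k)=\infty$, so $B_{k-1}(y)=A_{k-1}$; as $p_{k-1}(x)\in A_{k-1}$, the membership test must succeed by the time the level-$(k-1)$ pivot is considered, so the algorithm returns at some level $i\le k-1$. At that point it outputs $h_i(x)+d(p_i(x),y)=d(x,p_i(x))+d(p_i(x),y)$; bounding the first term by $i\,\Delta$ via the claim and the second by $d(p_i(x),x)+d(x,y)\le i\,\Delta+\Delta$ via the triangle inequality gives a total of at most $(2i+1)\,\Delta\le (2k-1)\,\Delta$. Both $\TZQuery(u,v)$ and $\TZQuery(v,u)$ satisfy this, hence so does $\MTZQuery(u,v)$; and since the returned value is the length of a genuine walk from $u$ to $v$ through the pivot, it is also at least $d(u,v)$.

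The only delicate point is the swap bookkeeping: the inductive claim must always be read for whichever of the two vertices currently plays the source role, and one should note that $h_i(\cdot)$ and $d(p_i(\cdot),\cdot)$ in the returned expression are exact stored distances, not estimates, so the walk-length bound is valid. Apart from that, every non-returning level is handled uniformly, since $d(y,A_{i+1})\le d(y,p_i(x))$ is nothing but the negation of the membership condition, so no case split is needed.
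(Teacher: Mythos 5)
Your proof is correct and is essentially the standard Thorup--Zwick analysis; the paper itself does not prove this lemma but simply cites~\cite{DBLP:journals/jacm/ThorupZ05}, and your per-level dichotomy (either the membership test succeeds and the walk through the pivot has length at most $(2i+1)d(u,v)$, or $h_{i+1}$ of the other vertex grows by at most $d(u,v)$) is the same mechanism the paper records right afterwards in \autoref{L-ADO.Query-Correctness} and \autoref{P-ADO.Query-Correctness}. The only cosmetic mismatch is indexing: the paper's pseudocode tests $p_{i-1}(u)\in B_i(v)$ while you test the level-$i$ pivot against the level-$i$ bunch, which is the coherent reading (membership of $p_i(x)\in A_i$ in $B_i(y)$ is exactly the distance condition you use), so this does not affect correctness.
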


The next lemma follows from~\cite{DBLP:journals/jacm/ThorupZ05}, and we prove it here for completeness.
\begin{lemma}\label{L-ADO.Query-Correctness}
    Let $\hat{d}(u,v)$ be $\MTZQuery(u,v)$.
    For every integer $1 \le i \le k$, one of the following holds.
    \begin{itemize}
        \item $\min(h_i(u),h_i(v)) \le \min(h_{i-1}(u),h_{i-1}(v)) + d(u,v)$
        \item $\hat{d}(u,v) \le 2\min(h_{i-1}(u),h_{i-1}(v)) + d(u,v)$
    \end{itemize}
\end{lemma}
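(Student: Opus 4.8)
The plan is to fix an integer $1 \le i \le k$ and case-split on whether the first bullet fails; in that case I will show the second bullet must hold. Suppose
$$\min(h_i(u),h_i(v)) > \min(h_{i-1}(u),h_{i-1}(v)) + d(u,v).$$
Without loss of generality assume $\min(h_{i-1}(u),h_{i-1}(v)) = h_{i-1}(u)$, so the hypothesis gives both $h_i(u) > h_{i-1}(u) + d(u,v)$ and $h_i(v) > h_{i-1}(u) + d(u,v)$. The key object to examine is $p_{i-1}(u)$, the $(i{-}1)$-th pivot of $u$, which lies in $A_{i-1}$ and satisfies $d(u,p_{i-1}(u)) = h_{i-1}(u)$. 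I want to argue that $p_{i-1}(u) \in B_i(v)$, because then the iteration of $\TZQuery(v,u)$ that handles index $i$ (after the swap) returns $h_{i-1}(u) + d(p_{i-1}(u),v)$, and $\MTZQuery(u,v)$ is at most this value.

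The core step is establishing $p_{i-1}(u) \in B_i(v) = B(v,A_{i-1},A_i)$, i.e. $d(v,p_{i-1}(u)) < d(v,A_i) = h_i(v)$. By the triangle inequality,
$$d(v,p_{i-1}(u)) \le d(v,u) + d(u,p_{i-1}(u)) = d(u,v) + h_{i-1}(u),$$
and by assumption $h_i(v) > h_{i-1}(u) + d(u,v) \ge d(v,p_{i-1}(u))$, so indeed $p_{i-1}(u) \in B_i(v)$. (I should double-check that $p_{i-1}(u) \in A_{i-1}$ always holds, which it does since $A_{i-1} \neq \emptyset$ for $i-1 \le k-1$; for $i = k$ one has $p_{k-1}(u) \in A_{k-1}$, and $B_k(v) = B(v,A_{k-1},A_k) = A_{k-1}$ since $A_k = \emptyset$, so the membership is automatic and $h_k(v) = d(v,A_k) = \infty$, making the first bullet trivially true anyway — so the interesting range is really $1 \le i \le k-1$, but the statement holds vacuously or trivially at the endpoint.)

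Given $p_{i-1}(u) \in B_i(v)$, the pseudocode of $\TZQuery$ (Algorithm~\ref{A-ADO-Query}) applied to the pair $(v,u)$ reaches index $i$, finds $p_{i-1}(u) \in B_i(v)$, and returns $h_{i-1}(u) + d(p_{i-1}(u),v)$. Hence
$$\hat d(u,v) = \MTZQuery(u,v) \le h_{i-1}(u) + d(p_{i-1}(u),v) \le h_{i-1}(u) + \big(d(u,v) + h_{i-1}(u)\big) = 2h_{i-1}(u) + d(u,v),$$
which is exactly $2\min(h_{i-1}(u),h_{i-1}(v)) + d(u,v)$, the second bullet. The main subtlety I anticipate is making sure the query actually terminates at or before index $i$ — it might return even earlier at some index $j < i$, but since $\TZQuery$ only ever returns values of the form $h_{j-1}(\cdot) + d(p_{j-1}(\cdot),\cdot)$ which are valid upper bounds on $d(u,v)$ is not what I need; rather I need that $\MTZQuery \le h_{i-1}(u) + d(p_{i-1}(u),v)$, and this is immediate because $\MTZQuery$ takes the minimum over both orderings and $\TZQuery(v,u)$ is one candidate whose returned value (whatever index it stops at) is at most the value it would return at index $i$ — here I should instead appeal directly to the standard fact from~\cite{DBLP:journals/jacm/ThorupZ05} that $\TZQuery(v,u) \le h_{i-1}(u) + d(p_{i-1}(u),v)$ whenever $p_{i-1}(u) \in B_i(v)$, which is built into the monotone structure of the algorithm, rather than re-deriving the stopping behavior. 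That reference does the bookkeeping cleanly, and the rest is the triangle-inequality chain above.
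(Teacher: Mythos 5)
Your proof is correct and takes essentially the same route as the paper's: the paper case-splits on whether $p_{i-1}(u)\in B_i(v)$ under the same wlog assumption $h_{i-1}(u)\le h_{i-1}(v)$, which is just the contrapositive organization of your argument, and it uses the identical triangle-inequality bounds $d(v,p_{i-1}(u))\le h_{i-1}(u)+d(u,v)$ in both cases. The early-termination subtlety you flag is in fact glossed over in the paper's proof as well (it simply asserts $\hat{d}(u,v)\le h_{i-1}(u)+d(p_{i-1}(u),v)$ once $p_{i-1}(u)\in B_i(v)$), so deferring that bookkeeping to the standard Thorup--Zwick analysis, as you do, matches the paper.
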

\begin{proof}
    Wlog, assume that $h_{i-1}(u) \le h_{i-1}(v)$. 
    We divide the proof into two cases.
    The case that $p_{i-1}(u)\in B_i(v)$ and the case that $p_{i-1}(u)\notin B_i(v)$.
    Consider the case that $p_{i-1}(u)\in B_i(v)$. In this case 
    the value of $d(p_{i-1}(u),v)$ is saved in the distance oracle. From the triangle inequality it follows that $d(p_{i-1}(u),v)\le h_{i-1}(u) + d(u,v)$. Therefore, we have that $\hat{d}(u,v) \le h_{i-1}(u)+d(p_{i-1}(u),v) \le 2h_{i-1}(u)+d(u,v)$.
    % , where the last inequality follows from the triangle inequality. 
    Since $h_{i-1}(u) \le h_{i-1}(v)$ we get that $\hat{d}(u,v) \le 2\min(h_{i-1}(u),h_{i-1}(v)) + d(u,v)$,
    as required.
    
    Consider now the case that $p_{i-1}(u)\notin B_i(v)$. From the definition of $B_i(v)$ it follows that $h_i(v) \le d(v,p_{i-1}(u)) \le d(u,v) + h_{i-1}(u)$. Since $h_{i-1}(u) \le h_{i-1}(v)$, we get that $h_i(v) \le \min(h_{i-1}(u),h_{i-1}(v)) + d(u,v)$, and thus $\min(h_i(u),h_i(v)) \le \min(h_{i-1}(u),h_{i-1}(v)) + d(u,v)$, as required.
\end{proof}
The following property follows by applying~\autoref{L-ADO.Query-Correctness} inductively.
\begin{property}\label{P-ADO.Query-Correctness}
    Let $1 \le i \le k$ be an integer, and let $1\le j<i$.
    Then either:
    \begin{itemize}
        \item $\min(h_i(u),h_i(v)) \le \min(h_{i-1}(u),h_{i-1}(v)) + d(u,v)$
        \item $\hat{d}(u,v) \le 2\min(h_{j}(u),h_{j}(v)) + (i-j)d(u,v)$
    \end{itemize}
\end{property}
We remark that all our distance oracles return an estimation that is the length of a path in $G$ and therefore $d(u,v)\leq \hat{d}(u,v)$.

\section{Technical overview}\label{S-Over}
In the  classic distance oracle of Thorup and Zwick~\cite{DBLP:journals/jacm/ThorupZ05}, it is possible to bound 
$h_i(u)$ with $d(u,v)+h_{i-1}(v)$, for every $1\leq i\leq k-1$.
% The key to improving the $(2k-1)$-stretch of the distance oracle is in improving the bound on $h_i(u)$.
% In the past (see for example, \cite{DBLP:conf/infocom/AgarwalGH11, PatrascuR14,DBLP:conf/wdag/AbrahamG11,DBLP:conf/soda/AgarwalG13,DBLP:conf/esa/Agarwal14}), they managed to improve the bound $h_1(u)$ from $d(u,v)$ to $d(u,v)/2$. 
The key to improving the $(2k - 1)$-stretch of the distance oracle lies in tightening the bound on $h_i(u)$.
Previous work (see, for example, \cite{DBLP:conf/infocom/AgarwalGH11, PatrascuR14, DBLP:conf/wdag/AbrahamG11, DBLP:conf/soda/AgarwalG13, DBLP:conf/esa/Agarwal14}) improved the bound on $h_1(u)$ from $d(u,v)$ to $d(u,v)/2$ by exploiting the following structural property of  $B_0(\cdot)$: if  $P(u,v)\not\subseteq B_0(u) \cup B_0(v)$,  then $h_1(u) + h_1(v) \le d(u,v) + 1\ODD$.

% In unweighted graphs, the key idea for improving the bound on $h_1(u)$ from $d(u,v)$ to $d(u,v)/2$ is to show that if $P(u,v)\not\subseteq B_0(u) \cap B_0(v)$ then $h_1(u) + h_1(v) \le d(u,v) + 1$. To extend this idea to weighted graphs, \cite{DBLP:conf/infocom/AgarwalGH11} defined $B_0^*(u)=B_0(u)\cup N(B_0(u))$ and showed that in weighted graphs if $P(u,v)\not\subseteq B_0^*(u) \cap B_0^*(v)$ then $h_1(u) + h_1(v) \le d(u,v)$.

In this paper, we develop two new techniques that exploit structural properties of $C(\cdot)$ to construct a distance oracle with improved stretch guarantees. These techniques enable us to bound not only $h_1(u)$ by $d(u,v)/2$, but also $h_2(u)$ by $d(u,v)$, thereby improving overall stretch.
We begin by presenting the borderline vertices technique, which applies to both weighted and unweighted graphs.

% In this paper, we develop two new techniques that 
% exploit   structural properties  of  $C(\cdot)$ and  
% allow us to obtain a new distance oracle with improved stretch by bounding not only $h_1(u)$ with $d(u,v)/2$ but also bounding $h_2(u)$ with $d(u,v)$. 
% First, we overview the borderline vertices technique that works for both weighted and unweighted graphs. 

\subsection{The borderline vertices technique. [\autoref{S-borderline}]}
% \textbf{The borderline vertices technique. [\autoref{S-borderline}]}

Let $C^*(u)=C(u)\cup N(C(u))$ be an augmented cluster.
Roughly speaking, using the borderline vertices technique we show that if $P(u,v) \not\subseteq C^*(u,A_c) \cap C^*(v,A_c)$ then $\max(h_{c+1}(u), h_{c+1}(v))\le d(u,v)$. In particular, by setting $c=1$ we get an improvement over the result of \cite{DBLP:conf/infocom/AgarwalGH11}, showing not only that $h_1(u)\le d(u,v)/2$ but also that $h_2(u) \le d(u,v)$.

Let $P=P(u,v)$ be a shortest path between $u$ and $v$. 
% For the sake of simplicity, we discuss a simple special case of our borderline vertices technique. 
The borderline vertex $\tau_u(P)$ of $C^*(u)$ in $P$ is the farthest vertex from $u$ in $C^*(u)\cap P$. Similarly, the borderline vertex 
$\tau_v(P)$ of $C^*(u)$ in $P$ is the farthest vertex from $v$ in $C^*(v)\cap P$. 
Obviously, a distance oracle cannot store $\tau_u(P)$ for every shortest path $P$, as this would require $\Theta(n^2)$ space.
We overcome this limitation by using $\Ot(\mu \cdot |C(u)|)$ query time to iterate over all vertices in the augmented cluster $C^*(u)$, and in particular $\tau_u=\tau_u(P)$.

To obtain our improved bound on $h_2(v)$, we analyze two different cases regarding $p_1(\tau_u)$. The case that $p_1(\tau_u) \in B_1(v)$ and the case that $p_1(\tau_u) \notin B_1(v)$.
If $p_1(\tau_u) \in B_1(v)$, then the query algorithm can return as an estimation $\hat{d}(u,v)=u\rightarrow \tau_u\rightarrow p_1(\tau_u)\rightarrow v$, where $x\rightarrow y \rightarrow z\rightarrow w$ is $d(x,y)+d(y,z) + d(z,w)$.
In this case, since $\tau_u\notin C(u,A_1)$ (as it is the farthest vertex from $u$ in $C^*(u)$) we get that $h_1(\tau_u) \le d(u,\tau_u) \le d(u,v)$, and therefore $\hat{d}(u,v) \le 3d(u,v)$.

Otherwise, if $p_1(\tau_u) \notin B_1(v)$ then $h_2(v) \le d(v,p_1(\tau_u)) \le d(v,\tau_u) + h_1(\tau_u)$. Since $h_1(\tau_u) \le d(u,\tau_u)$ we get that $h_2(v) \le d(v,\tau_u) + d(u,\tau_u) = d(u,v)$, as wanted. (See~\autoref{fig:borderline}).
Using symmetric arguments for $\tau_v$, we can also get that $h_2(u) \le d(u,v)$. (See \autoref{S-borderline} for more details.)

In unweighted graphs, we can avoid the $\mu$ factor from the query time at the cost of bounding $h_2(u)$ and $h_2(v)$ by $d(u,v)+2$ rather than $d(u,v)$. 
The difference is that we let $\tau'_u(P)$ be defined as the farthest vertex from $u$ in $C(u) \cap P$ (unlike $\tau_u(P)$ which considers $C^*(u)\cap P$).
% Using similar arguments to the ones described above.
% note that $h_1(\tau'_u) \le d(u,\tau'_u)+2$ and hence $h_2(v) \le d(u,v)+2$. 
(See \autoref{L-bound-h_{c+1}(v)-tau_u-+2}.) 
Next, we provide an overview of our main three-way tradeoff that uses the borderline vertices technique.

\subsection{$(2k-1-4c)$-stretch distance oracle. [\autoref{S-2k-1-4c-Weighted} -~\autoref{T-DO-2k-1-4c-Weighted}]}

% \textbf{$(2k-1-4c)$-stretch distance oracle. [\autoref{S-2k-1-4c-Weighted} -~\autoref{T-DO-2k-1-4c-Weighted}]}
Let $0<c<k/2-1$ be an integer, let $V=A_0,A_c,A_{c+1},\dots,A_k=\emptyset$, such that $|A_i|=n^{1-i/k}$.
The storage of the distance oracle saves: the graph $G$, $d(s_1,s_2)$, for every $s_1\in A_{c+1}, s_2\in A_{k-c-2}$ and $B(u)=\cup_{i\in \{0\cup[c,k]\}} B_i(u)$ for every $u\in V$. Where $B_0(u)=B(u,V,A_c)$.

The query works as follows.
First, we set $\hat{d}(u,v)$ to $\Intersection(u,v,C^*(u, A_{c}), C^*(v, A_{c}))$. Then, we iterate over every $u'\in C^*(u, A_{c})$ (to iterate over $\tau_u$), and update $\hat{d}(u,v)$ be $\min(\hat{d}(u,v), d(u,u')+\MTZQuery(u',v))$. Similarly, for every $v'\in C^*(v, A_{c})$ we update $\hat{d}(u,v)$ to $\min(\hat{d}(u,v), d(v,v')+\MTZQuery(v',u))$.
Finally, the algorithm returns $\min(\hat{d}(u,v), u\rightarrow p_{c+1}(u)\rightarrow p_{k-c-2}(v)\rightarrow v, u\rightarrow p_{k-c-2}(u)\rightarrow p_{c+1}(v)\rightarrow v)$.

\begin{figure}
    \centering
    \tikzset{every picture/.style={line width=0.75pt}} %set default line width to 0.75pt        
    \input{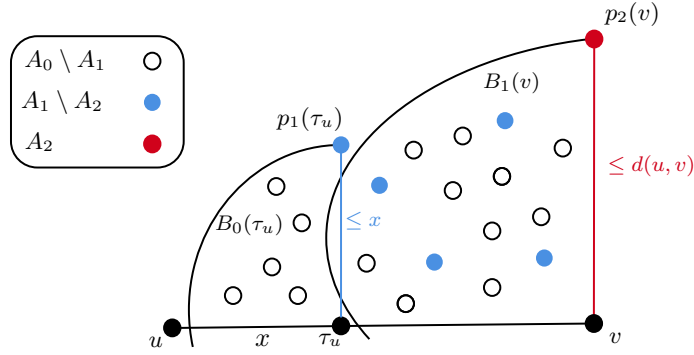}

    \caption{$p_1(\tau_u)\notin B_1(v)$.}
    \label{fig:borderline}
\end{figure}

It is straightforward to see that the space is $O(m+n^{1+1/k})$ and that the query takes $O(\mu |C(u,A_c)|)=O(\mu n^{c/k})$ time (see~\autoref{L-Space-2k-5-Weighted} and~\autoref{L-2k-4c-1-weighted-Q-Time}).
The main technical contribution is in proving that $\hat{d}(u,v) \le (2k - 4c - 1)\cdot d(u,v)$ (see~\autoref{L-Approximation-2k-4c-1-Weighted}).
To prove it, we use the borderline vertices technique.
Since that $\tau_u \in C^*(u, A_c)$, when the algorithm iterates over $C^*(u, A_c)$, there is an iteration in which $u' = \tau_u$. In this iteration we guarantee that $\hat{d}(u,v) \le d(u, \tau_u) + \MTZQuery(\tau_u, v)$.
Since $\tau_u$ is a borderline vertex, we either have that $\hat{d}(u,v) \le u \rightarrow \tau_u \rightarrow p_c(\tau_u) \rightarrow v \le 3d(u,v)$, and the estimation guarantee holds, or that $h_{c+1}(v) \le d(u,v)$.
Similarly, since $\tau_v \in C^*(v, A_c)$ and we iterate over $C^*(u, A_c)$, there is an iteration in which $v' = \tau_v$. In this iteration, either a short path is found ($v \rightarrow \tau_v \rightarrow p_c(\tau_u) \rightarrow u$), or we have that $h_{c+1}(u) \le d(u,v)$.

From the properties of $\MTZQuery$, for every $c+1 < i < k$, either a short path is found or $h_i(u) \le d(u,v) + h_{i-1}(v)$. Since $h_{c+1}(u) \le d(u,v)$ and $h_{c+1}(v) \le d(u,v)$, it follows that
\[
h_{k-c-2}(u) \le \max(h_{c+1}(u), h_{c+1}(v)) + (k - 2c - 3)d(u,v) \le (k - 2c - 2)d(u,v).
\]
In the query procedure, we have that $\hat{d}(u,v) \le u \rightarrow p_{k - c - 2}(u) \rightarrow p_{c + 1}(v) \rightarrow v \le 2h_{k-c-2}(u) + d(u,v) + 2h_{c+1}(v) $. Thus:
\[
\hat{d}(u,v) \le 2(k - 2c - 2)d(u,v) + d(u,v) + 2d(u,v) = (2k - 4c - 1)d(u,v),
\]
as required. (See~\autoref{L-Approximation-2k-4c-1-Weighted} for more details.)

% From the properties of $\MTZQuery$, we have that $h_i(u) \le d(u,v) + h_{i-1}(v)$ for every $c+1<i<k$ or a short path is found.
% Using the fact that $h_{c+1}(u) \le d(u,v)$ and $h_{c+1}(v) \le d(u,v)$, we get that $h_{k-c-2}(u) \le \max(h_{c+1}(u), h_{c+1}(v))+(k-2c-3)d(u,v) \le (k-2c-2)d(u,v)$.
% In our query we have that $\hat{d}(u,v) \le u\rightarrow p_{k-c-2}(u) \rightarrow p_{c+1}(v) \rightarrow v$. From the triangle inequality we have that $u\rightarrow p_{k-c-2}(u) \rightarrow p_{c+1}(v) \rightarrow v \le 2h_{k-c-2}(u) + 2h_{c+1}(v)+d(u,v)$. Since 
% $h_{k-c-2}(u) \le (k-2c-2)d(u,v)$ and $h_{c+1}(v) \le d(u,v)$, we get that 
% $\hat{d}(u,v) \le 2(k-2c-2)d(u,v)+2d(u,v)+d(u,v)=(2k-4c-1)d(u,v)$, as required.

\subsection{The middle vertex technique. [\autoref{S-middle}]}
% \textbf{The middle vertex technique. [\autoref{S-middle}]}

For unweighted graphs, we develop the middle vertex technique that allows us to bound either $h_2(u)$ or $h_2(v)$ by $d(u,v)/2 + 1_{\text{ODD}}$, improving the previous $3d(u,v)/2 + 1_{\text{ODD}}$ bound from~\cite{DBLP:conf/focs/DalirrooyfardJW22}. Roughly speaking, using the middle vertex technique technique we show that if $P(u,v) \not\subseteq C(u,A_1) \cap C(v,A_1)$ then $\min(h_{2}(u), h_{2}(v))\le d(u,v)+1\ODD$. 
The middle vertex technique requires $O(n^{1/k})$ query time, matching the query time of the unweighted borderline technique. However, by using the middle vertex technique we reduce the additive error from  $2$ to $1\ODD$ in the bound of $\min(h_{2}(u), h_{2}(v))$.

For the sake of simplicity, let $A_0,A_1,A_2,A_3$ be a vertex hierarchy, where $A_0=V$, $A_3=\emptyset$ and let $d(u,v)=\delta$ be even.
Let $\tau$ be the middle vertex on $P(u,v)$. That is, $d(u,\tau)=d(\tau,v)=\delta/2$ ($\delta/2$ is an integer since $\delta$ is even). 
The middle vertex $\tau$ has the following useful property. If $h_1(\tau)>\delta/2$ then $u,v\in B_0(\tau)$. 
The problem is that $\tau$ is unknown. However, the case that $u,v\in B_0(\tau)$ is equivalent to the case that $\tau\in C(u)\cap C(v)$. 
We use the $O(n^{1/k})$ query time to compute for every $w\in C(u) \cap C(v)$ the value of $d(u,w)+d(w,v)$. Therefore, even without knowing $\tau$, in this case we have that $\hat{d}(u,v)=\delta$. (See~\autoref{F-mid-vertex}(a).)

If $h_1(\tau)\leq \delta/2$ then $d(u,p_1(\tau))\leq d(u,\tau) + h_1(\tau) \le \delta$ and $d(v,p_1(\tau))\leq \delta$.  
If $p_1(\tau)\in B_1(u)\cap B_1(v)$, we exploit, again, the $O(n^{1/k})$ query time to compute for every $w\in B_1(u)\cap B_1(v)$ the value $d(u,w)+d(w,v)$. Therefore, even without knowing $p_1(\tau)$ the query algorithm returns $\hat{d}(u,v)\le d(u,p_1(\tau))+d(v,p_1(\tau)) \le 2\delta$. (See~\autoref{F-mid-vertex}(b).) 
If $p_1(\tau)\notin B_1(u)\cap B_1(v)$ then without loss of generality $p_1(\tau)\notin B_1(u)$ and therefore $h_2(u)\leq d(u,p_1(\tau))\leq \delta$. (See~\autoref{F-mid-vertex}(c).)

Our distance oracles for unweighted graphs use the middle vertex technique (see~\autoref{S-4-2} and~\autoref{S-DO-4-Unweighted}). When combined with the unweighted borderline vertices technique, this enables the construction of two additional oracles (see~\autoref{S-3,2+2ODD-4} and~\autoref{S-2k-5+6-unweighted}). 
% Next, we provide an overview of our results in the regime of stretch $<2$.
\begin{figure}[t]
\begin{center}
\scalebox{.75}
{
\input{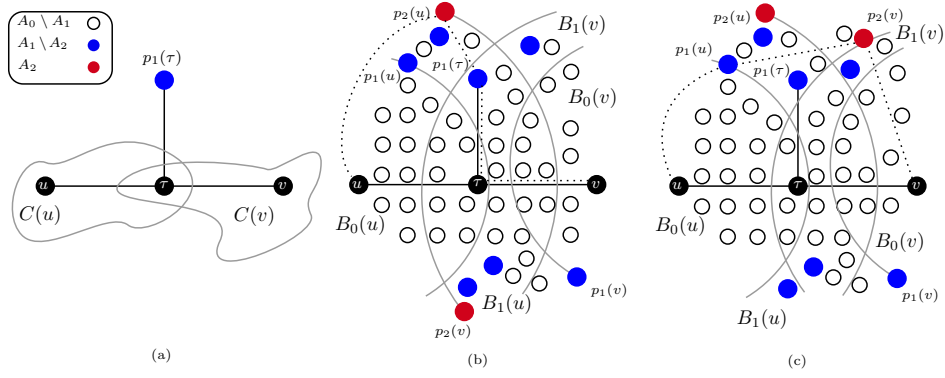}
}
\end{center}
\caption{(a) $\tau\in C(u)\cap C(v)$. (b) $p_1(\tau) \in B_1(u)\cap B_1(v)$. (c) $p_1(\tau) \notin B_1(u)\cap B_1(v)$. In the figure $p_1(\tau) \notin  B_1(v)$ }\label{F-mid-vertex}
\end{figure}

\subsection{Distance oracles with stretch $<2$  [\autoref{S-at-most-2} and \autoref{S-at-most-2weighted}]}
% \textbf{Better than $2$-stretch distance oracles [\autoref{S-at-most-2} and \autoref{S-at-most-2weighted}]}

Let $A\subseteq V$ be a set of size $O(n^{1-c})$, let $u,v\in V$, let $\delta=d(u,v)$ and let $P=P(u,v)$. Let $t \geq 0$ be an integer. Let  $S_t(u)=\bigcup_{w\in S_{t-1}(u)}{B(w)}$, where $S_0(u)=\{u\}$. Let $d_t$ be a distance function induced by $S_t$ (for the exact definition of $d_t$ see~\autoref{S-5-1}). 
Let $u=u_0$ (resp. $v=v_0$). For every $1\leq i \leq t$, let $u_i \in B(u_{i-1})$ (resp. $v_i\in B(v_{i-1})$) be the farthest vertex in $P$ from $u$ (resp. $v$). 
(See Figure~\ref{F-S} for an illustration.)
Let $P_t(u)=P(u,u_t)$. By the definition of $S_{t}(u)$ we have that $P_t(u)\subseteq S_{t}(u)$. 
We prove the following two properties: 
\begin{itemize}
    \item If $P_t(u) \cap P_t(v) \neq  \emptyset$ then $d_t(u,w)+d_t(v,w)=\delta$ for some $w\in P_t(u) \cap P_t(v)$. (See Figure~\ref{F-S}(b).)
    \item If $P_t(u) \cap P_t(v) = \emptyset$ then $\sum_{i=0}^{t-1}h(u_i) + h(v_i) \le \delta+2t-1$. (See Figure~\ref{F-S}(a).)
\end{itemize}

Next, we overview the distance oracle presented in ~\autoref{S-5-t} that uses these properties. 
The distance oracles stores the graph $G$ and $d(x,y)$, for every $\langle x,y \rangle \in A \times V$, at the cost of $O(|V|\cdot |A|)=O(m+n^{2-c})$ space. 
In the query of the distance oracle, we compute $S_t(u)$ and $S_t(v)$. 
To address the case that $P_t(u) \cap P_t(v) \neq \emptyset$
we compute for every $w\in S_t(u)\cap S_t(v)$ in $O(\min (|S_t(u)|,|S_t(v)|))$ time
the value $d_t(u,w)+d_t(v,w)$. In such a case, from the first property we have $d_t(u,w)+d_t(v,w)=\delta$, for some $w$. 

To address the case that $P_t(u) \cap P_t(v) = \emptyset$ we compute $d(u,p(w))+d(p(w),v)$ for every $w\in S_t(u)\cup S_t(v)$. 
Let $w\in \bigcup_{i=0}^{t-1}\{u_i,v_i\}$ be the vertex with minimal $h(w)$ value.
Using the bound $\sum_{i=0}^{t-1}h(u_i) + h(v_i) \le d(u,v)+2t-1$, we can show that $h(w) \le \lfloor (d(u,v) + 2t - 1)/2t \rfloor = \lceil d(u,v)/2t \rceil$ and get that $d(u,p(w))+d(p(w),v)\leq \delta+ 2\lceil \delta/2t \rceil$. 
% To enable the computation of $d(u,p(w))+d(p(w),v)$ in the query we store $d(x,y)$, for every $\langle x,y \rangle \in A \times V$, at the cost of $O(|V|\cdot |A|)=O(n^{2-c})$ space. 

In~\autoref{S-5-3} we reduce the space from $O(n^{2-c})$ to $O(n^{2-2c})$ by  saving $d(x,y)$, for every  $\langle x,y \rangle \in A \times A$ instead of saving $d(x,y)$, for every $\langle x,y \rangle \in A \times V$.

This information prevents us from using paths of the form $u\rightsquigarrow p(w)\rightsquigarrow v$.
We can still use paths of the form $u\rightsquigarrow u_i \rightsquigarrow p(u_i) \rightsquigarrow p(v_j) \rightsquigarrow v_j \rightsquigarrow v$.  However, since we do not know the vertices $u_i$ and $v_j$ we need to consider all vertex pairs in $S_{t-1}(u)\times S_{t-1}(v)$. This increases the query time from $O(n^{tc})$ to $O(n^{2(t-1)c})$. 
To avoid $O(n^{2(t-1)c})$ query time, we show that it suffices to consider only 
vertex pairs in $S_{i}(u)\times S_{t-1-i}(v)$, where $0\leq i\leq t-1$ in $O(tn^{(t-1)c})$-time, and still get the same approximation without iterating over all vertex pairs in $S_{t-1}(u) \times S_{t-1}(v)$.

We define the set $Q=\{\langle u_i,v_{t-i-1}\rangle \mid 0\leq i\leq t-1\}$. 
Let $q=\langle q_u,q_v \rangle\in Q$ be the vertex pair with minimal $h(q_u)+h(q_v)$ value.
We prove that $h(q_u)+h(q_v) \le \lfloor (d(u,v) + 2t - 1)/t \rfloor = \lceil d(u,v)/t \rceil + 1$.
While iterating over the pairs of $S_{i}(u)\times S_{t-1-i}(v)$, we encounter $q$, and get that $\hat{d}(u,v) \le \delta + 2\lceil \delta/t \rceil + 2$. 
Thus, with $O(n^{2-2c})$ space and $O(n^{tc})$ query time we get an estimation $\hat{d}(u,v) \le \delta + 2\lceil \delta/t \rceil + 2$. 
By setting $c=1/3-\eps$ and $t=2,3$ we get two new distance oracles, one with $O(n^{4/3+2\eps})$ space and $O(n^{2/3-2\eps})$ query, and estimation $\hat{d}(u,v) \le \delta + 2\lceil \delta/2 \rceil + 2\leq 2\delta+4$, and another  with $O(n^{4/3+2\eps})$ space and $O(n^{1-3\eps})$ query, and estimation $\hat{d}(u,v) \le \delta + 2\lceil \delta/3 \rceil + 2\leq \frac{5}{3}\delta+4$.

We also consider weighted graphs with non-negative real edge weights. 
Agarwal, Godfrey and Har{-}Peled~\cite{DBLP:journals/corr/abs-1201-2703} defined $B^*(u)$ to be $B(u)\cup N(B(u))$. We revise the definition of $S_t$ to be $\bigcup_{w\in S_{t-1}(u)}{B^*(u)}$. This new definition allows us to extend the results from~\autoref{S-at-most-2} to weighted graphs. The distance oracles obtained using this approach are presented in~\autoref{S-at-most-2weighted}.

We remark that the usage of the sets $S_t(u)$ and $S_t(v)$ in the query algorithm is similar to the usage of the graph $H$ in the work of~\cite{bilo2023improved}. However, in our analysis of the unweighted case, we achieve a slightly better distance approximation. In addition, we introduce a new approach to obtain a distance estimation using two vertices from $A$ that are close to $P(u,v)$ rather than a single vertex from $A$, as in~\cite{bilo2023improved}.
\begin{figure}[t]
\centering
\makebox[\textwidth][c]{\scalebox{.75}{\input{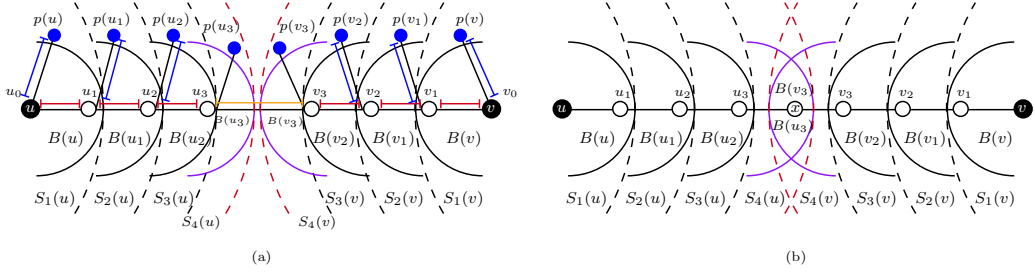}}}
\caption{(a)  $P_4(u)\cap P_4(v) = \emptyset$. The blue interval between $u_i$ ($v_i$) and $p(u_i)$ ($p(v_i)$), where $i\in \{0,1,2\}$, is bounded by the red interval between $u_i$ ($v_i$) and $u_{i+1}$ ($v_{i+1}$) plus 1. The orange interval plus 1 bounds $h(u_3)+h(v_3)$. (b) $P_4(u)\cap P_4(v)= \{ x\}$} \label{F-S}
\end{figure}

% \cite{DBLP:conf/stoc/BiloCCC0KS23} studied weighted graphs with positive edge weights in the range $[1,W]$. We generalize  
% our results to weighted graphs with non-negative real edge weights and remove the additive term $2W$ from the distance approximation at the cost of saving the graph $G$.

\section{A new distance oracles tradeoff in sparse weighted graphs}\label{S-New-Weighted}

In this section we present 
a $(2k-1-4c)$-stretch distance oracle with $\mu n^{\frac{c}{k}}$ query time that uses $O(n^{1+1/k})$ space. 
We start by describing a new technique, called the \textit{borderline vertices technique}, that might be of independent interest. 
We then use the borderline vertices technique to obtain our main stretch/space/query time tradeoff. 
% Finally, we present several interesting implications of our new tradeoffs that improve upon previous works.

\subsection{New technique:  borderline vertices}\label{S-borderline}
Let $G=(V, E)$ be a weighted undirected graph with real non-negative edges.
Let $\mu = m/n$, be the average degree of the graph.
Agarwal, Godfrey, and Har-Peled~\cite{DBLP:journals/corr/abs-1201-2703} showed that in the case of distance oracles in weighted graphs that use $\Omega(m)$ space, it suffices to consider graphs with maximum degree $\mu$. 
Thus, for the rest of this section, we assume that the maximum degree of $G$ is at most $\mu$. 
Agarwal, Godfrey, and Har-Peled~\cite{DBLP:journals/corr/abs-1201-2703} defined augmented clusters as $B^*(u) = B(u) \cup N(B(u))$, following their work we define augmented clusters.
Let $C^*(u,S)=C(u,S)\cup N(C(u,S))$. We bound the size of augmented clusters as follows.
\begin{lemma}\label{L-C^*-size}
    $|C^*(u)| \leq (1+\mu)|C(u)|$
\end{lemma}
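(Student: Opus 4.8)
**Proof plan for Lemma (bounding $|C^*(u)|$).**

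The plan is to prove $|C^*(u)| \le (1+\mu)|C(u)|$ directly from the definition $C^*(u) = C(u) \cup N(C(u))$ and the degree bound on $G$. First I would note that, since we assumed the maximum degree of $G$ is at most $\mu$, every vertex $w \in C(u)$ contributes at most $\mu$ vertices to $N(C(u))$, namely its at most $\mu$ neighbours. Hence $|N(C(u))| \le \sum_{w \in C(u)} \deg(w) \le \mu\,|C(u)|$.

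Next I would combine this with the trivial bound $|C(u) \cup N(C(u))| \le |C(u)| + |N(C(u))|$, which gives
\[
|C^*(u)| \;=\; |C(u) \cup N(C(u))| \;\le\; |C(u)| + |N(C(u))| \;\le\; |C(u)| + \mu\,|C(u)| \;=\; (1+\mu)|C(u)|,
\]
as required. The same reasoning applies verbatim to the parametrised augmented cluster $C^*(u,S) = C(u,S) \cup N(C(u,S))$, yielding $|C^*(u,S)| \le (1+\mu)|C(u,S)|$, which is the form actually used later in the query-time analysis.

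There is no real obstacle here: the statement is essentially a counting observation, and the only ingredient beyond elementary set arithmetic is the maximum-degree reduction of Agarwal, Godfrey, and Har-Peled~\cite{DBLP:journals/corr/abs-1201-2703}, which is invoked just above the lemma and lets us assume the maximum degree of $G$ is at most $\mu$. The one point worth stating explicitly is that this bound on $N(X)$ for a set $X$ holds for \emph{any} set $X$ (the union $N(X) = \bigcup_{w \in X} N(w)$ has size at most $\sum_{w \in X}\deg(w)$, with no need for the sets $N(w)$ to be disjoint), so the argument does not rely on any structural property of clusters beyond their being subsets of $V$.
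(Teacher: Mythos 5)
Your argument is correct and matches the paper's own proof: both bound $|N(C(u))|$ by $\mu\,|C(u)|$ using the maximum-degree reduction and then apply the union bound $|C(u)\cup N(C(u))|\le |C(u)|+|N(C(u))|$. Nothing further is needed.
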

\begin{proof}
        Since the maximum degree of $G$ is bounded by $\mu$, we know that $|N(C(u))| \le \mu \cdot |C(u)|$. From the definition of $C^*(u)$, it follows that $|C^*(u)| = | C(u) \cup N(C(u))| \le |C(u)| + |N(C(u))|$. Since $|N(C(u))|\le \mu |C(u)|$, we get that $|C^*(u)| \le |C(u)| + |N(C(u))| \le |C(u)| + \mu |C(u)| = (1+\mu)|C(u)|$, as required.
\end{proof}
Next, we use the definition of $C^*(\cdot)$ to define the borderline vertices. 
Let $P=P(u,v)$ be a shortest path between $u$ and $v$. Let $c\ge1$ be an integer. 
The borderline vertex $\tau_u(P)$ of $C^*(u,A_c)$ in $P$ is the farthest vertex from $u$ in $C^*(u,A_c)\cap P$, similarly the borderline vertex 
$\tau_v(P)$ of $C^*(u,A_c)$ in $P$ is the farthest vertex from $v$ in $C^*(v,A_c)\cap P$.\footnote{Throughout the paper when  $P$ is clear from the context we use $\tau_u$ and $\tau_v$ instead of $\tau_u(P)$ and $\tau_v(P)$.}
% Instead of a single middle vertex, we now have two middle vertices, $\tau_u(P)$ and $\tau_v(P)$, where $\tau_u(P)$ ($\tau_v(P)$) is the farthest vertex from $u$ ($v$) in $C^*(u, A_c)\cap P$ ($C^*(v, A_c)\cap P$).
In the following lemma, we prove a bound on $h_c(\tau_u)$.
\begin{lemma}\label{L-bound-C^*-h}
    $h_c(\tau_u) \le d(u, \tau_u)$.
\end{lemma}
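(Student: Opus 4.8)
The claim is that $h_c(\tau_u) \le d(u,\tau_u)$, where $\tau_u$ is the farthest vertex from $u$ lying in $C^*(u,A_c) \cap P$. Recall $C^*(u,A_c) = C(u,A_c) \cup N(C(u,A_c))$, and $h_c(\tau_u) = d(\tau_u, A_c)$. The natural case split is according to whether $\tau_u$ itself lies in $C(u,A_c)$ or only in the neighborhood $N(C(u,A_c))$.

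First I would dispose of the boring case: if $\tau_u \in C(u,A_c)$, then by the very definition of the cluster $C(u,A_c) = \{v : d(u,v) < d(v,A_c)\}$, we have $d(\tau_u, A_c) > d(u,\tau_u)$, which is even stronger than what we need — wait, that is the wrong direction. So this case actually needs a different argument. The point of $\tau_u$ being the \emph{farthest} vertex of $C^*(u,A_c) \cap P$ is that the vertex of $P$ immediately after $\tau_u$ (call it $x$, if it exists) is \emph{not} in $C^*(u,A_c)$, in particular not in $C(u,A_c)$, hence $d(u,x) \ge d(x,A_c)$. Since $x \in N(C(u,A_c))$ would put $x$ in $C^*$, and $\tau_u \in C(u,A_c)$ makes $x \in N(C(u,A_c)) \subseteq C^*(u,A_c)$ — contradiction unless $\tau_u = v$ (no successor). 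So in fact, if $v \notin C^*(u,A_c)$ as the terminal endpoint is handled separately, the case $\tau_u \in C(u,A_c)$ with a successor on $P$ cannot occur; and when $\tau_u = v$ is the endpoint, $h_c(v) \le d(u,v)$ should follow from whatever global assumption governs the query (or trivially if $v \notin C^*(u,A_c)$ was excluded). Let me instead structure it cleanly: let $x$ be the successor of $\tau_u$ on $P$ toward $v$ (handling $\tau_u = v$ as its own subcase). Then $x \notin C^*(u,A_c)$, so $x \notin C(u,A_c)$ and $x \notin N(C(u,A_c))$; the latter forces $\tau_u \notin C(u,A_c)$ (else $x$, being adjacent to $\tau_u$, would be in $N(C(u,A_c))$). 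Hence $\tau_u \notin C(u,A_c)$, i.e. $d(u,\tau_u) \ge d(\tau_u, A_c) = h_c(\tau_u)$, which is exactly the claim.

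So the only genuine obstacle is the endpoint case $\tau_u = v$ (the whole path $P$ lies in $C^*(u,A_c)$). Here the above argument has no successor $x$ to exploit. I would handle this by noting that if $\tau_u = v \notin C(u,A_c)$ we are done as before; and if $v \in C(u,A_c)$, then $d(u,v) < d(v,A_c) = h_c(v)$, so $h_c(v) > d(u,v)$ — again the wrong direction, meaning the lemma as literally stated would be false in that sub-subcase. I suspect the resolution is that the lemma is implicitly invoked only under a hypothesis like $P \not\subseteq C^*(u,A_c) \cap C^*(v,A_c)$ (which appears in the overview as the standing assumption for the borderline technique), guaranteeing $P$ leaves $C^*(u,A_c)$ so that a successor $x$ exists; alternatively $\tau_u$ is defined to be the last vertex \emph{strictly before} $P$ exits, and $v$ itself is treated separately in the query. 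I would write the proof under the reading that $\tau_u$ has a successor $x \notin C^*(u,A_c)$ on $P$, present the one-line argument above, and add a remark that when $P \subseteq C^*(u,A_c)$ the $\Intersection$ call already recovers $d(u,v)$ exactly so the bound on $h_c(\tau_u)$ is not needed. The main step to get right is the implication "$x \notin N(C(u,A_c)) \Rightarrow \tau_u \notin C(u,A_c)$," which is just the contrapositive of "an endpoint of an edge lying in a set $X$ puts the other endpoint in $N(X)$" — entirely routine once stated.
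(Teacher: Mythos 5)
Your proof is correct and is essentially the paper's own argument: by maximality of $\tau_u$, the vertex of $P$ after $\tau_u$ lies outside $C^*(u,A_c)$, hence $\tau_u\notin C(u,A_c)$ and so $h_c(\tau_u)=d(\tau_u,A_c)\le d(u,\tau_u)$ (the paper phrases this as a proof by contradiction rather than your contrapositive). Your extra attention to the endpoint case $\tau_u=v$, which the paper's proof silently skips, is justified—there the stated inequality can indeed fail when $v\in C(u,A_c)$—but, as you observe, in that situation the query's \Intersection{} call on $C^*(u,A_c)$ and $C^*(v,A_c)$ already returns $d(u,v)$ exactly, so the lemma's uses are unaffected.
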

\begin{proof}
    Assume towards a contradiction, that $d(u, \tau_u) < h_c(\tau_u)$ thus  $\tau_u\in C(u)$. The  vertex after $\tau_u$ in $P(u,v)$ is in $N(\tau_u)$ and thus in $C^*(u)$, contradicting the fact that $\tau_u$ is the farthest from $u$ in $P$ which is also in $C^*(u)$.
\end{proof}

In the case that $p_c(\tau_u) \notin B_c(v)$ we can also a prove a bound on $h_{c+1}(v)$.
\begin{lemma}\label{L-bound-h_{c+1}(v)-tau_u}
    If $p_c(\tau_u) \notin B_c(v)$ then $h_{c+1}(v) \le d(u,v)$.
\end{lemma}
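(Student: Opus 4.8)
The plan is to chain together the definition of the bunch $B_c(v)$, one triangle inequality through $\tau_u$, the bound on $h_c(\tau_u)$ from the previous lemma, and the additivity of distances along the shortest path $P=P(u,v)$.

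First I would note that $p_c(\tau_u)=p(\tau_u,A_c)\in A_c$ by definition of $p_c(\cdot)$, and that $d(\tau_u,p_c(\tau_u))=d(\tau_u,A_c)=h_c(\tau_u)$. Now recall $B_c(v)=B(v,A_c,A_{c+1})=\{\,w\in A_c \mid d(v,w)<d(v,A_{c+1})\,\}$. Since $p_c(\tau_u)$ belongs to $A_c$ but, by hypothesis, does not belong to $B_c(v)$, the defining inequality must fail, i.e. $d(v,p_c(\tau_u))\ge d(v,A_{c+1})=h_{c+1}(v)$. This yields the key inequality $h_{c+1}(v)\le d(v,p_c(\tau_u))$.

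Next I would expand the right-hand side via the triangle inequality through $\tau_u$:
\[
d(v,p_c(\tau_u)) \le d(v,\tau_u)+d(\tau_u,p_c(\tau_u)) = d(v,\tau_u)+h_c(\tau_u).
\]
Applying \autoref{L-bound-C^*-h}, which gives $h_c(\tau_u)\le d(u,\tau_u)$, we get $h_{c+1}(v)\le d(v,\tau_u)+d(u,\tau_u)$. Finally, since $\tau_u$ is by construction a vertex of the shortest path $P(u,v)$, subpath optimality gives $d(u,\tau_u)+d(\tau_u,v)=d(u,v)$, and hence $h_{c+1}(v)\le d(u,v)$, as claimed.

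\textbf{Main obstacle.} There is essentially no hard step: the statement follows from the bunch definition, a single triangle inequality, and the already-established bound $h_c(\tau_u)\le d(u,\tau_u)$. The only points requiring a moment's care are (i) invoking additivity of distances along $P$, which is legitimate precisely because $\tau_u$ was defined to lie on $P$, and (ii) making sure $p_c(\tau_u)$ is well defined and lies in $A_c$, which holds whenever $A_c\neq\emptyset$, i.e. $c\le k-1$ — the regime in which this lemma is used.
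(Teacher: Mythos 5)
Your proof is correct and matches the paper's argument step for step: the bunch definition gives $h_{c+1}(v)\le d(v,p_c(\tau_u))$, the triangle inequality through $\tau_u$ together with \autoref{L-bound-C^*-h} gives $h_{c+1}(v)\le d(v,\tau_u)+d(u,\tau_u)$, and $\tau_u\in P(u,v)$ finishes it. The extra remarks on well-definedness of $p_c(\tau_u)$ are fine but not needed beyond what the paper assumes.
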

\begin{proof}
    From the definition of $B_c(v)$ it follows that since $p_c(\tau_u) \notin B_c(v)$ then  $h_{c+1}(v) \le d(v,p_c(\tau_u))$.
    From the triangle inequality, it follows that 
    $d(v,p_c(\tau_u)) \le d(v,\tau_u) + h_c(\tau_u)$.
    From~\autoref{L-bound-C^*-h} we know that $h_c(\tau_u) \le d(u,\tau_u)$.
    Thus, $h_{c+1}(v) \le d(v,\tau_u) + h_c(\tau_u) \le d(v,\tau_u) + d(\tau_u,u) = d(u,v)$, where the last inequality follows from the fact that $\tau_u\in P(u,v)$, as required. (See~\autoref{fig:borderline}).
\end{proof}

\subsection{$(2k-1-4c)$-stretch distance oracle with $\mu n^{\frac{c}{k}}$ query time.} \label{S-2k-1-4c-Weighted}
Using the borderline vertices technique, we prove the following theorem:
\Reminder{T-DO-2k-1-4c-Weighted}
\subsubsection{Storage.}
We construct a hierarchy $A_0,A_c,\dots,A_k$, where $A_0=V$, $A_{k}=\emptyset$.
The set $A_c$ is constructed using~\autoref{L-A-center}, with parameter $n^{-c/k}$. 
For every $c < i < k$, the set $A_i$ is constructed using~\autoref{L-TZ-Size}. 
Our distance oracle stores $B_i(u)$ and $p_i(u)$, for every $u\in V$ and $c \le i \leq k-1$. We also save $d(u,v)$ for every $\pair{u,v} \in A_{k-c-2} \times A_{c+1}$. 
In addition, we save the original graph $G$.

\begin{lemma}\label{L-Space-2k-5-Weighted}
    The distance oracle is constructed in $\Ot(mn^{\frac{c+2}{k}})$-time and uses $O(m+n^{1+1/k})$-space.
\end{lemma}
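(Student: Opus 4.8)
The plan is to charge the space and the construction time to the four objects the oracle stores --- the graph $G$, the hierarchy sets $A_c \supseteq A_{c+1} \supseteq \cdots \supseteq A_{k-1}$ together with the pointers $p_i(u)$, the bunches $B_i(u)$ for $u\in V$ and $c\le i\le k-1$, and the table of distances $d(u,v)$ for $\langle u,v\rangle\in A_{k-c-2}\times A_{c+1}$ --- and then add up the contributions. A preliminary point to record is that $k-c-2$ is a legal index, i.e.\ $c+1\le k-c-2\le k-1$: since $c$ is an integer with $c<\tfrac k2-1$ we get $2c\le k-3$, hence $2c+3\le k$, hence $c+2\le k-c-1$ and $k-c-2\ge c+1$; the upper bound is immediate. (The inequality $c+2\le k-c-1$ will be reused below.)

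\textbf{Space.} Storing $G$ costs $O(m)$. By \autoref{L-A-center} applied with $p=n^{-c/k}$ we have $|A_c|=\Ot(n^{1-c/k})$, and by \autoref{L-TZ-Size} the remaining sets satisfy $|A_i|=O(n^{1-i/k})$ w.h.p., so all of $A_c,\dots,A_{k-1}$ together with the $O(k)$ pointers $p_i(u)$ per vertex take $\Ot(n)$ space. Since $A_{c+1},\dots,A_{k-1}$ form a descending chain of rate-$n^{-1/k}$ uniform samples starting from $A_c$, the bunch-size argument of \autoref{L-TZ-Size} applies verbatim (it only uses that each $A_{i+1}$ is a rate-$n^{-1/k}$ sample of $A_i$, not how $A_c$ itself was produced), giving $|B_i(u)|=O(n^{1/k})$ w.h.p.\ for every $u$ and every $c\le i\le k-1$; hence all bunches occupy $\Ot(k\cdot n\cdot n^{1/k})=\Ot(n^{1+1/k})$ space. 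Finally, $|A_{k-c-2}|=n^{(c+2)/k}$ and $|A_{c+1}|=n^{(k-c-1)/k}$, so the distance table has $|A_{k-c-2}|\cdot|A_{c+1}|=n^{(c+2)/k+(k-c-1)/k}=n^{(k+1)/k}=n^{1+1/k}$ entries. Summing, the total space is $O(m+n^{1+1/k})$.

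\textbf{Construction time.} By \autoref{L-A-center} with $p=n^{-c/k}$, the set $A_c$ (with the cluster/bunch size guarantees we will need elsewhere) is built in $\Ot(mp^{-1})=\Ot(mn^{c/k})$ expected time. Sampling $A_{c+1},\dots,A_{k-1}$ from $A_c$ costs $\Ot(n)$, and computing all pointers and bunches $B_c(u),\dots,B_{k-1}(u)$ is exactly the Thorup--Zwick clustering procedure (only modified so that the chain of samples starts at $A_c$ rather than at $V$), so by \autoref{L-TZ-Size} it runs in $\Ot(mn^{1/k})$ expected time. To fill the distance table we run Dijkstra from each vertex of the smaller of $A_{k-c-2}$ and $A_{c+1}$; by the inequality $c+2\le k-c-1$ established above, $|A_{k-c-2}|\le|A_{c+1}|$, so this costs $\Ot(|A_{k-c-2}|\cdot m)=\Ot(mn^{(c+2)/k})$. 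Because $c/k\le (c+2)/k$ and $1/k\le(c+2)/k$, this last term dominates, and the total expected construction time is $\Ot(mn^{(c+2)/k})$.

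\textbf{Main obstacle.} Nothing here is deep; the only step that needs a word of justification is the claim that the bunches $B_i(u)$, $c\le i\le k-1$, still have size $O(n^{1/k})$ w.h.p.\ and are computable within the stated time even though $A_c$ was produced by \autoref{L-A-center} and not by uniform sampling. As noted, this is fine because all of $A_{c+1},\dots,A_{k-1}$ are rate-$n^{-1/k}$ uniform samples of the preceding set, which is precisely the hypothesis used in the size/time analysis of \autoref{L-TZ-Size}; one could alternatively re-derive the bound directly by the usual ``first sampled vertex in the distance order'' argument. I would also double-check that storing $B_0(u)=B(u,V,A_c)$ is \emph{not} among the stored objects (it would cost $\Theta(n^{1+c/k})$, exceeding the claimed bound for $c\ge 2$): the query procedure only ever needs $C^*(u,A_c)$ and $C^*(v,A_c)$ for the two query vertices, which it recovers at query time, so the space accounting above is complete.
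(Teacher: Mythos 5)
Your proposal is correct and follows essentially the same accounting as the paper's proof: charge $O(m)$ for $G$, $\Ot(n^{1+1/k})$ for the bunches via \autoref{L-TZ-Size} (with $|A_c|=\Ot(n^{1-c/k})$ from \autoref{L-A-center}), $|A_{k-c-2}|\cdot|A_{c+1}|=O(n^{1+1/k})$ for the distance table, and for construction time let the Dijkstra runs from $A_{k-c-2}$, costing $\Ot(m\,n^{(c+2)/k})$, dominate the $\Ot(mn^{c/k})$ and $\Ot(mn^{1/k})$ terms. Your additional checks (that $c+1\le k-c-2$, that bunch sizes at levels $\ge c$ are still $O(n^{1/k})$, and that $B_0(u)=B(u,V,A_c)$ is not among the stored objects but is recovered as $C^*(\cdot,A_c)$ at query time) are consistent with the paper's storage description and only make the argument slightly more careful.
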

\begin{proof}
    From~\autoref{L-TZ-Size} it follows that  $|B_i(u)|=O(n^{1/k})$, for every $c < i< k-1$. From~\autoref{L-A-center} it follows that $|A_c|O(n^{1-c/k})$, thus $|A_{k-1}|=|A_c|\cdot \left(n^{-1/k}\right)^{k-c-1}=O(n^{1-c/k}\cdot n^{-(k-c-1)/k})=O(n^{1/k})$.
    Since $A_{k}=\emptyset$, for every $u\in V$, we have that $B_{k-1}(u)=A_{k-1}$ and $|B_{k-1}(u)|=\Ot(n^{1/k})$.
    The cost of saving $p_i(u)$, for every $u\in V$, and $c\leq i \leq k-1$, is $O(n)$. 
    The cost of saving $d(u,v)$ for every $\pair{u,v}\in A_{k-c-2}\times A_{c+1}$ is $|A_{k-c-2}| \cdot |A_{c+1}|=\Ot(n^{(c+2)/k})\cdot \Ot(n^{1-(c+1)/k})=O(n^{1+1/k})$. Saving the graph $G$ takes $O(m)$ space.
    Therefore, we conclude that the total space is $O(m+n^{1+1/k})$. 
    
    From~\autoref{L-A-center} it follows that computing $A_1$ takes $\Ot(mn^{c/k})$ time. 
    From~\autoref{L-TZ-Size} it follows that computing $A_i$ and $B_i(u)$ for every $c < i\leq k-1$ takes $\Ot(mn^{1/k})$ time. 
    Computing $p_i(u)$, for every $u\in V$, and  $0\leq i \leq k-1$, takes $O(km)$ time.
    Computing $d(u,v)$ for every $\pair{u,v}\in A_{k-c-2}\times A_{c+1}$ takes $O(m\cdot |A_{k-c-2}|)=O(n^{\frac{c+2}{k}}(m+n\log{n}))$ time. 
    Therefore, we conclude that the total construction time is $O(n^{\frac{c+2}{k}}(m+n\log{n}))$. 
\end{proof}

\subsubsection{Query algorithm.}
The input to the query algorithm is two vertices $u,v\in V$.
The query output is an estimation $\hat{d}(u,v)$. The query works as follows. 
First,  $\hat{d}(u,v)$ is set to $\Intersection(u,v,C^*(u, A_{c}),C^*(v, A_{c}))$. 
For every $u'\in C^*(u,A_{c})$ 
we set $\hat{d}(u,v)$ to $\min(\hat{d}(u,v), d(u,u')+\MTZQuery(u',v))$.
Similarly, for every $v'\in C^*(v,A_{c})$ 
we set  $\hat{d}(u,v)$ to  $\min(\hat{d}(u,v), d(v,v')+\MTZQuery(v',u))$.
Next, we set $\hat{d}(u,v)$ to  $\min(\hat{d}(u,v), h_{c+1}(u)+d(p_{c+1}(u),p_{k-c-2}(v))+h_{k-c-2}(v), h_{k-c-2}(u)+d(p_{k-c-2}(u),p_{c+1}(v))+h_{c+1}(v))$.
Finally, the algorithm returns $\hat{d}(u,v)$ as its estimation.
A pseudo-code for the query algorithm is given in Algorithm~\ref{Algorithm-Distance-Oracle-2k-4c}.
Next, we bound $\hat{d}(u,v)$. 
\begin{lemma}\label{L-Approximation-2k-4c-1-Weighted}
    $\hat{d}(u,v) \le (2k-1-4c)\cdot d(u,v)$
\end{lemma}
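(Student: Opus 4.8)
The plan is to trace through the query and show that, in every case, it produces a path of length at most $(2k-1-4c)\,d(u,v)$. First I would isolate what the query actually computes. Assuming $u,v\notin A_c$, so that $u\in C^*(u,A_c)$ and $v\in C^*(v,A_c)$ (the boundary cases $u\in A_c$ or $v\in A_c$ need a small separate argument): the iteration $u'=u$ of the first loop gives $\hat d(u,v)\le\MTZQuery(u,v)$, the iteration $u'=\tau_u$ gives $\hat d(u,v)\le d(u,\tau_u)+\MTZQuery(\tau_u,v)$, and the iteration $v'=\tau_v$ of the second loop gives $\hat d(u,v)\le d(v,\tau_v)+\MTZQuery(\tau_v,u)$. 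From the borderline machinery I would use $h_c(\tau_u)\le d(u,\tau_u)$ (\autoref{L-bound-C^*-h}) and its mirror, together with \autoref{L-bound-h_{c+1}(v)-tau_u}: if $p_c(\tau_u)\notin B_c(v)$ then $h_{c+1}(v)\le d(u,v)$, and symmetrically for $\tau_v$.

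Next I would case on whether $p_c(\tau_u)\in B_c(v)$. If it holds, the call $\MTZQuery(\tau_u,v)$ returns at most $h_c(\tau_u)+d(p_c(\tau_u),v)\le 2\,d(u,\tau_u)+d(\tau_u,v)$ (triangle inequality plus \autoref{L-bound-C^*-h}), so since $\tau_u\in P(u,v)$ we get $\hat d(u,v)\le 3\,d(u,\tau_u)+d(\tau_u,v)\le 3\,d(u,v)$, which already suffices because $0<c<k/2-1$ forces $2k-1-4c\ge 5$. Otherwise $h_{c+1}(v)\le d(u,v)$; running the symmetric argument on $\tau_v$ and the second loop, either $\hat d(u,v)\le 3\,d(u,v)$ (done), or $h_{c+1}(u)\le d(u,v)$. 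Hence from now on I may assume $\max(h_{c+1}(u),h_{c+1}(v))\le d(u,v)$.

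In that remaining regime I would propagate the level-$(c+1)$ bound up the hierarchy by chaining \autoref{L-ADO.Query-Correctness} on the pair $(u,v)$ across levels $c+2,\dots,k-c-2$ (a valid range, since $c<k/2-1$ gives $k-c-2\ge c+1$). If the one-step inequality $\min(h_i(u),h_i(v))\le\min(h_{i-1}(u),h_{i-1}(v))+d(u,v)$ holds throughout, then $\min(h_{k-c-2}(u),h_{k-c-2}(v))\le(k-2c-2)\,d(u,v)$; taking WLOG $h_{k-c-2}(u)$ to be the smaller one (otherwise the mirror term of Step~4 is used), the final comparison in the query gives
\[
\hat d(u,v)\le h_{k-c-2}(u)+d(p_{k-c-2}(u),p_{c+1}(v))+h_{c+1}(v)\le 2h_{k-c-2}(u)+d(u,v)+2h_{c+1}(v)\le(2k-4c-1)\,d(u,v),
\]
where $d(p_{k-c-2}(u),p_{c+1}(v))$ is stored (as $p_{k-c-2}(u)\in A_{k-c-2}$ and $p_{c+1}(v)\in A_{c+1}$) and is at most $h_{k-c-2}(u)+d(u,v)+h_{c+1}(v)$. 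If instead the one-step inequality first fails at some level $\ell\le k-c-2$, then \autoref{L-ADO.Query-Correctness} gives $\MTZQuery(u,v)\le 2\min(h_{\ell-1}(u),h_{\ell-1}(v))+d(u,v)$ with $\min(h_{\ell-1}(u),h_{\ell-1}(v))\le(k-2c-3)\,d(u,v)$, so $\hat d(u,v)\le\MTZQuery(u,v)\le(2k-4c-5)\,d(u,v)\le(2k-4c-1)\,d(u,v)$, which exhausts the cases.

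The place I expect the most care is the bound $\MTZQuery(\tau_u,v)\le h_c(\tau_u)+d(p_c(\tau_u),v)$ in the case $p_c(\tau_u)\in B_c(v)$: since $\TZQuery$ returns at the \emph{first} admissible level rather than at level $c$, one has to inspect the query on the nonstandard hierarchy $A_0,A_c,A_{c+1},\dots,A_k$ and argue that level $c$, scanned in the $v\to\tau_u$ direction, is reached within the first couple of iterations and that an earlier return only yields a smaller estimate; the same inspection is what lets us invoke \autoref{L-ADO.Query-Correctness} and \autoref{P-ADO.Query-Correctness} verbatim on this hierarchy. Everything else — the boundary cases $u\in A_c$ or $v\in A_c$, and the inequalities $2k-4c-1\ge 5$ and $k-c-2\ge c+1$ — is routine from $k\ge 5$ and $0<c<k/2-1$.
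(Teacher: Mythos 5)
Your proof is correct and follows essentially the same route as the paper's: you case on whether $p_c(\tau_u)\in B_c(v)$ to conclude $\max(h_{c+1}(u),h_{c+1}(v))\le d(u,v)$ or a $3d(u,v)$ estimate, chain \autoref{L-ADO.Query-Correctness} from level $c+1$ up to level $k-c-2$, and finish with the stored $A_{k-c-2}\times A_{c+1}$ distances via the same final computation. The only (harmless) deviation is that in the first case you bound $h_c(\tau_u)$ by $d(u,\tau_u)$ directly from \autoref{L-bound-C^*-h}, so you never need the paper's auxiliary claim that reduces to $h_c(\tau_u)\le d(\tau_u,v)$ via the $\Intersection$ call; the subtleties you flag (that $\TZQuery$ returns at the first admissible level, and that $u\in C^*(u,A_c)$ needs $u\notin A_c$) are likewise left implicit in the paper's own argument.
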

\begin{proof}
    Let $P=P(u,v)$, let $\tau_u=\tau_u(P)$ and let $\tau_v=\tau_v(P)$.
    We start by bounding $h_{c+1}(u)$ and $h_{c+1}(v)$ using the following two claims:
    \begin{claim}\label{C-2k-1-4c-H1}
        If $h_c(\tau_u) > d(\tau_u,v)$ then $\hat{d}(u,v) = d(u,v)$
    \end{claim}
    \begin{proof}
        If $h_c(\tau_u) > d(\tau_u,v)$ then we know that $v\in B(\tau_u,A_c)$ and therefore $\tau_u\in C^*(v,A_c)$. By definition, $\tau_u\in C^*(u,A_c)$. Therefore, we get that $\tau_u \in C^*(v,A_c) \cap C^*(u,A_c)$ and from~\autoref{P-Intersection-With-P-Returns-d(u,v)} we get that $\hat{d}(u,v) \le \Intersection(u,v,C^*(u,A_c), C^*(v,A_c)) = d(u,v)$, as required.
    \end{proof}
    
    \begin{claim}\label{C-2k-1-4c-H}
        Either $h_{c+1}(v) \le d(u,v)$ or $\hat{d}(u,v)\leq 3d(u,v)$.
    \end{claim}
    \begin{proof}
    We divide the proof into two cases. The case that $p_c(\tau_u)\in B_{c}(v)$ and the case that $p_c(\tau_u)\not\in B_{c}(v)$.
    Consider the case that $p_c(\tau_u)\in B_{c}(v)$. 
    In the query algorithm, we encounter $\tau_u$ because
    $\tau_u\in C^*(u,A_c)$. 
    Since $p_c(\tau_u) \in B_c(v)$, we have that $\MTZQuery(\tau_u,v) \le 2h_c(\tau_u) + d(\tau_u,v)$. From~\autoref{C-2k-1-4c-H1} it follows that either $h_c(\tau_u) \le d(\tau_u,v)$ or $\hat{d}(u,v)= d(u,v)$, and in the latter case $\hat{d}(u,v) \le 3d(u,v)$ and the claim holds. Thus, we can assume that $h_c(\tau_u) \le d(\tau_u,v)$.
    Since $h_c(\tau_u) \le d(\tau_u,v)$
    we get that $\MTZQuery(\tau_u,v) \le 2h_c(\tau_u) + d(\tau_u,v) \le 3d(\tau_u,v) \le 3d(u,v)$, where the last inequality follows from the fact that $\tau_u\in P$.
    
    Consider now the case that $p_c(\tau_u)\not\in B_{c}(v)$. From~\autoref{L-bound-h_{c+1}(v)-tau_u} it follows that $h_{c+1}(v) \le d(u,v)$.
    
    \end{proof}
    From~\autoref{C-2k-1-4c-H} it follows that either $h_{c+1}(v) \le d(u,v)$ or $\hat{d}(u,v)\le 3d(u,v)$, and also that either $h_{c+1}(u) \le d(u,v)$ or $\hat{d}(u,v)\le 3d(u,v)$. Therefore, it follows that $\max(h_{c+1}(u),h_{c+1}(v))\le d(u,v)$ or the lemma holds. Thus, we can assume that $\max(h_{c+1}(u),h_{c+1}(v))\le d(u,v)$.
    Next, we show that either the lemma holds or that $\min(h_{k-c-2}(u), h_{k-c-2}(v)) \le (2k-2-2c)d(u,v)$.
    
    By applying Property~\ref{P-ADO.Query-Correctness} with $i=k-c-2$ and $j=c+1$, and using the fact that $\min(h_{c+1}(u),h_{c+1}(v))\le d(u,v)$ we get that either 
    \[\min(h_{k-c-2}(u), h_{k-c-2}(v)) \le \min(h_{c+1}(u), h_{c+1}(v))+(k-3-2c)d(u,v) \le (k-2-2c)d(u,v) \]
    or
    \[
        \hat{d}(u,v) \le 2(k-4-2c)d(u,v)+2\min(h_{c+1}(u), h_{c+1}(v)) \le (2k-4c-6)d(u,v),
    \]
    Therefore, we get that either $\min(h_{k-c-2}(u), h_{k-c-2}(v)) \leq (k-2-2c)d(u,v)$ or the lemma holds. Thus, we can assume that $\min(h_{k-c-2}(u), h_{k-c-2}(v)) \leq (k-2-2c)d(u,v)$.
    
    The query algorithm guarantees that 
    $\hat{d}(u,v) \le \min(h_{k-c-2}(u) + d(p_{k-c-2}(u), p_{c+1}(v)) + h_{c+1}(v), h_{c+1}(u) + d(p_{c+1}(u), p_{k-c-2}(v)) + h_{k-c-2}(v))$.
    From the triangle inequality, it follows that 
    $\hat{d}(u,v) \le d(u,v) + 2\min(h_{k-c-2}(u)+h_{c+1}(v), h_{c+1}(u) + h_{k-c-2}(v))$.
    Thus, we get:
    \begin{align*}
        \hat{d}(u,v) &\le \min(h_{k-c-2}(u) + d(p_{k-c-2}(u), p_{c+1}(v)) + h_{c+1}(v), h_{c+1}(u) + d(p_{c+1}(u), p_{k-c-2}(v)) + h_{k-c-2}(v)) \\
        &\le \min(2h_{k-c-2}(u) + d(u,v) +2h_{c+1}(v), 2h_{c+1}(u) + d(u,v) + 2h_{k-c-2}(v))\\
        &\le d(u,v)+2\min(h_{k-c-2}(u)+h_{c+1}(v), h_{c+1}(u)+h_{k-c-2}(v))\\
        &\le d(u,v) + 2\min(h_{k-c-2}(u),h_{k-c-2}(v))+2\max(h_{c+1}(u),h_{c+1}(v))\\
        &\le d(u,v) + 2(k-2-2c)d(u,v) + 2d(u,v) = (2k-1-4c)d(u,v),
    \end{align*}
    where the last inequality follows from the fact that we assume that $\min(h_{k-c-2}(u), h_{k-c-2}(v)) \leq (k-2-2c)d(u,v)$ and $\max(h_{c+1}(u),h_{c+1}(v))\le d(u,v)$.
\end{proof}

\begin{lemma} \label{L-2k-4c-1-weighted-Q-Time}
 The query algorithm takes $\Ot(k\mu n^{c/k})$ time.
\end{lemma}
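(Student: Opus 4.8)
The plan is to charge the query's running time to its stages separately --- the initial call to $\Intersection$, the loop over $C^*(u,A_c)$, the loop over $C^*(v,A_c)$, and the final triangle-inequality update --- and sum them. The one size bound that drives everything is the following: since $A_c$ is built via \autoref{L-A-center} with parameter $p=n^{-c/k}$, we have $|C(u,A_c)|,|C(v,A_c)|=O(n^{c/k})$, so by \autoref{L-C^*-size}, $|C^*(u,A_c)|\le(1+\mu)|C(u,A_c)|=O\big((1+\mu)n^{c/k}\big)$, and likewise for $v$; since we may assume $m=\Omega(n)$, this reads as $\Ot(\mu n^{c/k})$.

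First I would bound the cost of materializing $C^*(u,A_c)$ and $C^*(v,A_c)$ together with the distance estimates $d'(u,\cdot),d'(v,\cdot)$ that $\Intersection$ and the two loops consult: obtaining $C(u,A_c)$ with the distances $d(u,\cdot)$ to its members costs $O\big(\mu\,|C(u,A_c)|\big)=O(\mu n^{c/k})$ (either it is read off the stored data, or it is recomputed at query time by a Dijkstra from $u$ that relaxes edges only out of vertices already placed in the cluster, i.e.\ truncated at the cluster boundary using the stored values $h_c(\cdot)$), and scanning the $\le\mu$ neighbours of each cluster vertex then yields $C^*(u,A_c)$ and $d'(u,\cdot)$ in another $O(\mu n^{c/k})$ time, symmetrically for $v$. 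Next, $\Intersection(u,v,C^*(u,A_c),C^*(v,A_c))$ runs in $O\big(\min(|C^*(u,A_c)|,|C^*(v,A_c)|)\big)=O(\mu n^{c/k})$ by \autoref{Intersection-Runtime}. The loop over $u'\in C^*(u,A_c)$ performs $O(\mu n^{c/k})$ iterations, each a constant-time lookup of $d'(u,u')$ plus one call to $\MTZQuery(u',v)$ costing $O(k)$ by \autoref{L-THZ05-Q-Correctness}, for a total of $O(k\mu n^{c/k})$; the loop over $v'$ is the same. The closing update is $O(1)$: the quantities $h_{c+1}(u),h_{k-c-2}(u),h_{c+1}(v),h_{k-c-2}(v)$ and the pivots $p_{c+1},p_{k-c-2}$ are stored, and the two cross distances $d(p_{k-c-2}(u),p_{c+1}(v))$ and $d(p_{c+1}(u),p_{k-c-2}(v))$ lie among the stored $A_{k-c-2}\times A_{c+1}$ distances. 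Adding the contributions gives $O(k\mu n^{c/k})=\Ot(k\mu n^{c/k})$.

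I expect the only delicate point --- rather than the conceptual heart of the argument --- to be justifying that $C^*(u,A_c)$ together with all the distances it carries is obtainable inside the $O(\mu n^{c/k})$ budget: this rests on the observation that a Dijkstra restricted to the cluster touches only $O\big(\mu|C(u,A_c)|\big)$ edges, and on checking that the auxiliary data it needs ($h_c(\cdot)$ to drive the truncation, the $d'(u,u')$ for cluster-neighbours, the $A_{k-c-2}\times A_{c+1}$ distances) is exactly what \autoref{L-Space-2k-5-Weighted} provides. A secondary point to verify is that $\MTZQuery$ on the truncated hierarchy $A_0,A_c,A_{c+1},\dots,A_k$ still performs $O(k)$ iterations, each touching only a stored bunch $B_i(\cdot)$ with $c\le i\le k-1$, so that \autoref{L-THZ05-Q-Correctness} applies verbatim. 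Everything else is routine bookkeeping.
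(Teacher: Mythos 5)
Your proof is correct and follows essentially the same decomposition as the paper's: bound $|C^*(\cdot,A_c)|$ by $O(\mu n^{c/k})$ via \autoref{L-C^*-size}, charge $O(\mu n^{c/k})$ for building the augmented clusters and for $\Intersection$, $O(k)$ per $\MTZQuery$ call over the $O(\mu n^{c/k})$ loop iterations, and $O(1)$ for the final update using the stored $A_{k-c-2}\times A_{c+1}$ distances. The extra detail you give on materializing $C(u,A_c)$ with its distances by a truncated Dijkstra on the stored graph is exactly what the paper leaves implicit, so there is no substantive difference.
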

\begin{proof}
    Computing $C^*(u,A_c)$ and $C^*(v, A_{c})$ takes $\Ot(\mu \cdot |C^*(v, A_{c}|))=\Ot(\mu n^{c/k})$.
    From~\autoref{Intersection-Runtime} computing $\Intersection(u,v,C^*(u, A_{c}),C^*(v, A_{c}))$ takes $O(\min(|C^*(u, A_{c})|, |C^*(u, A_{c})|))$ time. Since $|C^*(u)|=\mu n^{c/k}$, this step costs $O(\mu n^{c/k})$ time.
    For every $w\in C^*(u)$, computing $\MTZQuery(w,v)$ takes $O(k)$ time.
    Since $|C^*(u)|=\mu n^{c/k}$, this step costs $O(k\mu n^{c/k})$ time.
    Computing $\min(\hat{d}(u,v), h_{c+1}(u)+d(p_{c+1}(u),p_{k-c-2}(v))+h_{k-c-2}(v), h_{k-c-2}(u)+d(p_{k-c-2}(u),p_{c+1}(v))+h_{c+1}(v))$ takes $O(1)$ time. 
    Therefore, we conclude that the running time is $O(k \mu n^{1/k})$. 
\end{proof}
\begin{algorithm2e}[t] 
\caption{$\Query(u, v)$}\label{Algorithm-Distance-Oracle-2k-4c}
$\hat{d}(u,v) \gets \Intersection(u,v,C^*(u, A_{c}),C^*(v, A_{c}))$ \\
\lForEach{$u'\in C^*(u, A_{c})$} {
    $\hat{d}(u,v) = \min(\hat{d}(u,v), d(u,u') + \MTZQuery(u',v))$
}
\lForEach{$v'\in C^*(v, A_{c})$} {
    $\hat{d}(u,v) = \min(\hat{d}(u,v), \MTZQuery(u,v') + d(v',v))$
}
$\hat{d}(u,v) = \min(\hat{d}(u,v), h_{k-c-2}(u) + d(p_{k-c-2}(u), p_c(v)) + h_{c+1}(v))$ \\
$\hat{d}(u,v) = \min(\hat{d}(u,v), h_{c+1}(u) + d(p_{c+1}(u), p_{k-c-2}(v)) + h_{k-c-2}(v))$ \\
\Return $\hat{d}(u,v)$
\end{algorithm2e}
The proof of Theorem~\ref{T-DO-2k-1-4c-Weighted} follows from~\autoref{L-Space-2k-5-Weighted} and~\autoref{L-Approximation-2k-4c-1-Weighted} and~\autoref{L-2k-4c-1-weighted-Q-Time}.

\subsection{Immediate corollaries.}\label{S-2k-4c-weighted}
By slightly changing the oracle described above, and instead of saving $d(u,v)$ for every $\pair{u,v}\in A_{k-c-2} \times A_{c+1}$, saving $d(u,v)$ for every $\pair{u,v}\in A_{k-c-1} \times A_c$. 
We get that the storage remains the same, and the construction time is $O(m|A_{k-1-c}|)=O(mn^\frac{c+1}{k})$ ($O(m|A_{k-c}|)=O(mn^\frac{c}{k})$). 
Respectively, we change the last line of the query to be set $\hat{d}(u,v)$ to $\min(\hat{d}(u,v), h_{c}(u)+d(p_{c}(u),p_{k-c-1}(v))+h_{k-c-1}(v), h_{k-c-1}(u)+d(p_{k-c-1}(u),p_{c}(v))+h_{c}(v))$, and to bound $h_c(u) \le d(u,v)/2$ with $\Intersection(B^*(u),B^*(v))$.
We get the following $(2k-4c)$-stretch distance oracle.

\Reminder{T-2k-4c-Weighted}

Next, by setting $c=1$, and saving $d(u,v)$ for every $\pair{u,v}\in A_{k-1} \times A_0$, as in the oracle of~\cite{DBLP:journals/jacm/ThorupZ05}, we get the following $(2k-3)$-stretch distance oracle.

\begin{theorem}[\autoref{S-2k-4c-weighted}]\label{T-2k-3-Weighted} \Copy{T-2k-3-Weighted} 
{
    Let $k \ge 3$ be an integer.
    There is an $\Ot(m+n^{1+1/k})$ space distance oracle that given  two query vertices $u,v\in V$ computes in $O(\mu n^{1/k})$-time a distance estimation $\hat{d}(u,v)$ that satisfies
    $d(u,v)\le \hat{d}(u,v) \le (2k-3)\cdot d(u,v)$. The distance oracle is constructed in $\Ot(mn^{1/k})$ expected time. 
}
\end{theorem}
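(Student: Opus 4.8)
The idea is to obtain Theorem~\ref{T-2k-3-Weighted} by specializing the $(2k-4c)$-stretch oracle of Theorem~\ref{T-2k-4c-Weighted} to $c=1$ and replacing only the pre-stored table of pivot distances, which makes the construction cheaper at the price of a weaker (but still $<2k-1$) stretch. Concretely, I would build the hierarchy $A_0=V,A_1,\dots,A_k=\emptyset$ exactly as in~\autoref{S-2k-1-4c-Weighted} with $c=1$: take $A_1$ from~\autoref{L-A-center} with parameter $p=n^{-1/k}$, so that $|A_1|=\Ot(n^{1-1/k})$ and $|C(u,A_1)|=O(n^{1/k})$ for every $u$, and take $A_2,\dots,A_{k-1}$ from~\autoref{L-TZ-Size}, so that $|A_{k-1}|=\Ot(n^{1/k})$ and $|B_i(u)|=O(n^{1/k})$ for all $u$ and all $i$. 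The oracle stores $G$, the bunches $B_i(u)$ and pivots $p_i(u)$ for every $u\in V$ and $1\le i\le k-1$, and --- in place of the $A_{k-2}\times A_1$ table used in Theorem~\ref{T-2k-4c-Weighted} --- the Thorup--Zwick top layer $\{\,d(x,y)\ :\ x\in A_{k-1},\ y\in V\,\}$. Since $|A_{k-1}|=\Ot(n^{1/k})$, this last table needs $\Ot(n^{1+1/k})$ space and is built by $|A_{k-1}|$ runs of Dijkstra in $\Ot(mn^{1/k})$ time; building $A_1$ costs $\Ot(mp^{-1})=\Ot(mn^{1/k})$ and building the $B_i(\cdot)$ costs $\Ot(mn^{1/k})$ by~\autoref{L-TZ-Size}, so the total is $\Ot(m+n^{1+1/k})$ space and $\Ot(mn^{1/k})$ construction time --- a factor $n^{1/k}$ cheaper to build than Theorem~\ref{T-2k-4c-Weighted}.

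The query on $u,v$ runs Algorithm~\ref{Algorithm-Distance-Oracle-2k-4c} with $c=1$, except for the last two lines: I would set $\hat d(u,v)$ to the minimum of $\Intersection(u,v,C^*(u,A_1),C^*(v,A_1))$ and $\MTZQuery(u,v)$, relax it by $d(u,u')+\MTZQuery(u',v)$ over all $u'\in C^*(u,A_1)$ and symmetrically by $d(v,v')+\MTZQuery(v',u)$ over all $v'\in C^*(v,A_1)$, and finally relax it by $h_{k-1}(u)+d(p_{k-1}(u),v)$ and by $h_{k-1}(v)+d(p_{k-1}(v),u)$; both of the last two terms are available because $p_{k-1}(\cdot)\in A_{k-1}$ and the whole row $d(p_{k-1}(\cdot),\cdot)$ is stored. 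By~\autoref{L-C^*-size} and~\autoref{L-A-center} we have $|C^*(u,A_1)|=O(\mu n^{1/k})$, so building and scanning the augmented clusters, evaluating $\Intersection$, and making $O(\mu n^{1/k})$ calls to $\MTZQuery$ (each $O(k)$) take $\Ot(k\mu n^{1/k})$ in total, and the two final relaxations are $O(1)$.

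For correctness, $d(u,v)\le\hat d(u,v)$ is immediate since every candidate value is the length of a walk in $G$. For the upper bound I would re-run the borderline argument of~\autoref{L-Approximation-2k-4c-1-Weighted} with $c=1$: let $\tau_u,\tau_v$ be the borderline vertices of $C^*(u,A_1)$ and $C^*(v,A_1)$ on $P(u,v)$, which are scanned by the two loops. The $c=1$ versions of~\autoref{C-2k-1-4c-H1} and~\autoref{C-2k-1-4c-H}, applied to $\tau_u$ and to $\tau_v$, show that either $\hat d(u,v)\le 3\,d(u,v)$ --- already $\le(2k-3)\,d(u,v)$ since $k\ge3$ --- or $\max(h_2(u),h_2(v))\le d(u,v)$. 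In the latter case, propagating this bound up the hierarchy via~\autoref{P-ADO.Query-Correctness} applied to $\MTZQuery(u,v)$, exactly as in the proof of~\autoref{L-Approximation-2k-4c-1-Weighted}, yields either an intermediate bound $\hat d(u,v)\le(2k-4)\,d(u,v)\le(2k-3)\,d(u,v)$ or $\min(h_{k-1}(u),h_{k-1}(v))\le(k-2)\,d(u,v)$; say $h_{k-1}(u)\le(k-2)\,d(u,v)$. Then by the triangle inequality $d(p_{k-1}(u),v)\le h_{k-1}(u)+d(u,v)$, so the final relaxation gives $\hat d(u,v)\le h_{k-1}(u)+d(p_{k-1}(u),v)\le 2h_{k-1}(u)+d(u,v)\le(2k-3)\,d(u,v)$. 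Combined with the space and time bounds, this establishes Theorem~\ref{T-2k-3-Weighted}.

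I expect the main obstacle to be not the propagation step (a short induction through~\autoref{P-ADO.Query-Correctness}, identical to the one already done for Theorem~\ref{T-2k-4c-Weighted}), but verifying the $c=1$ analogues of~\autoref{C-2k-1-4c-H1} and~\autoref{C-2k-1-4c-H} in the corner cases: when $\tau_u\in N(C(u,A_1))\setminus C(u,A_1)$ one must check that the value $d(u,\tau_u)$ used in the loop (and in~\autoref{P-Intersection-With-P-Returns-d(u,v)}) is the true distance, and in the sub-case $p_1(\tau_u)\in B_1(v)$ one must check that $\MTZQuery(\tau_u,v)\le 2h_1(\tau_u)+d(\tau_u,v)$, which together with $h_1(\tau_u)\le d(u,\tau_u)$ from~\autoref{L-bound-C^*-h} gives the $3\,d(u,v)$ bound; the degenerate cases $u\in A_1$ or $v\in A_1$ are absorbed by the explicit $\MTZQuery(u,v)$ term in the initialization. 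All of this is a routine transcription of the analysis already carried out for Theorems~\ref{T-DO-2k-1-4c-Weighted} and~\ref{T-2k-4c-Weighted}.
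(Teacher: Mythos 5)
Your proposal is correct and is essentially the paper's own route: the paper obtains Theorem~\ref{T-2k-3-Weighted} exactly by specializing the borderline-vertices construction to $c=1$ and replacing the stored table with the Thorup--Zwick top level $A_{k-1}\times A_0=A_{k-1}\times V$, so that the final relaxation $u\rightarrow p_{k-1}(u)\rightarrow v$ together with $\max(h_2(u),h_2(v))\le d(u,v)$ propagated to $h_{k-1}\le(k-2)d(u,v)$ gives the $(2k-3)$ bound, with $\Ot(mn^{1/k})$ construction and $\Ot(\mu n^{1/k})$ query. The corner case you flag about the query-time value of $d(u,\tau_u)$ for $\tau_u\in N(C(u,A_1))\setminus C(u,A_1)$ is inherited from the paper's Lemma~\ref{L-Approximation-2k-4c-1-Weighted} rather than introduced by your specialization (and is handled by taking the borderline vertex as the first $P$-vertex outside $C(u,A_1)$, whose distance is exact), so it does not affect the comparison.
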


\section{Distance oracles in dense unweighted graphs}\label{S-more-3}
In this section, we first show that when applying the borderline vertices technique in unweighted graphs, we can avoid paying the $\mu$ factor in the query time at a small price of bounding $h_2(u)$ with $d(u,v)+2$ instead of $d(u,v)$.
We then develop a new technique specifically for the case of unweighted graphs, called \textit{the middle vertex technique}, that allows us to bound $h_2(u)$ with $d(u,v) + 1\ODD$ instead of $2d(u,v)$.
This technique is based on certain properties of a special vertex $\tau$ that lies in the middle of the shortest path between $u$ and $v$. More specifically, we show that either $\tau\in C(u)\cap C(v)$ or
$p_1(\tau)\in B_1(u)\cap B_1(v)$ or  $h_2(u)\leq d(u,v) + 1\ODD$. 
We then present several new small stretch distance oracles that are based on the middle vertex technique.

\subsection{Toolbox}
We now prove several properties of bunches. 
These lemmas were proven in various previous works (see, for example,~\cite {DBLP:conf/stoc/BiloCCC0KS23,DBLP:conf/soda/AgarwalG13,PatrascuR14}).
We prove these properties with respect to a set $A\subseteq V$ but we remark that these properties hold for 
any hierarchy $A_0,A_1,\ldots, A_k$, with $k\geq 2$,  by replacing $A$ with $A_1$.
Let $u,v\in V$, and let $P=P(u,v)$ be the shortest path between $u$ and $v$. 

In the next two lemmas, we prove useful properties of $B(u)=B(u, V, A)$ and $B(v)=B(v, V, A)$ in unweighted graphs. When $B(u) \cap B(v) \neq  \emptyset$ we prove the following lemma:
\begin{lemma}\label{L-Intersect}
     If $B(u) \cap B(v) \neq \emptyset$ then $P \cap (B(u) \cap B(v)) \neq \emptyset$.
\end{lemma}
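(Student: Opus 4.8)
The statement asserts that if the bunches $B(u)$ and $B(v)$ intersect, then this intersection already hits the shortest path $P=P(u,v)$. The natural approach is to pick any witness $w\in B(u)\cap B(v)$ and argue that the vertex on $P$ that is ``closest to $w$'' — or more precisely, a suitable vertex on $P$ — must itself lie in both bunches. The key structural fact to exploit is the definition $B(x)=\{y\in V\mid d(x,y)<d(x,A)\}=\{y\mid d(x,y)<h(x)\}$, so membership in $B(x)$ is purely a statement about being strictly closer to $x$ than the nearest $A$-center.

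First I would set $w\in B(u)\cap B(v)$, so $d(u,w)<h(u)$ and $d(v,w)<h(v)$. Consider a shortest path from $u$ to $v$ that passes through $w$: its length is $d(u,w)+d(w,v)$, which may be strictly larger than $d(u,v)$, so $w$ need not be on $P$ itself. The idea is to ``project'' $w$ onto $P$. Let me think about which vertex of $P$ to take. A clean choice: among all vertices $z\in P$, consider the one minimizing $d(z,w)$; call it $z^*$. I claim $z^*\in B(u)\cap B(v)$. To see $z^*\in B(u)$, I want $d(u,z^*)<h(u)$. Here the triangle inequality gives $d(u,z^*)\le d(u,w)+d(w,z^*)$, but that is not obviously $<h(u)$. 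So the minimizing-distance projection may not be the right object, and I would instead argue more carefully using the fact that $P$ is a \emph{shortest} $u$–$v$ path together with a case analysis on where a shortest $u$–$w$ path and a shortest $w$–$v$ path meet $P$.

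The cleaner argument I expect to work: take a shortest path $Q_u$ from $u$ to $w$ and a shortest path $Q_v$ from $w$ to $v$. Let $z$ be the last vertex of $P$ (traversing from $u$ toward $v$) that also lies on $Q_u\cup Q_v$; such a vertex exists since $u\in P\cap Q_u$. Actually the sharper and standard move is: let $z$ be a vertex on $P$ that lies on a shortest path between $u$ and $w$ \emph{and} decompose $d(u,w)=d(u,z)+d(z,w)$ — this requires $z$ on a shortest $u$–$w$ path. For the bunch membership we then need $d(u,z)<h(u)$, which follows from $d(u,z)\le d(u,w)<h(u)$, and symmetrically for $v$ we need $d(v,z)<h(v)$. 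The subtlety is getting a single vertex $z\in P$ that simultaneously satisfies $d(u,z)\le d(u,w)$ and $d(v,z)\le d(v,w)$; since $z\in P$ we have $d(u,z)+d(z,v)=d(u,v)\le d(u,w)+d(w,v)$, so it is not possible for both $d(u,z)>d(u,w)$ and $d(v,z)>d(v,w)$ to fail simultaneously — but I need the correct $z$ so that \emph{both} hold, not just that they can't both fail. I would resolve this by choosing $z$ to be the vertex of $P$ with $d(u,z)=\min(d(u,w),\lfloor\text{something}\rfloor)$; concretely, walk along $P$ from $u$ and stop at the vertex at distance exactly $\min(d(u,w), d(u,v)-d(v,w))$ — and then verify $d(v,z)\le d(v,w)$ from $d(u,z)+d(z,v)=d(u,v)$.

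\textbf{Main obstacle.} The crux is pinning down the right vertex $z\in P$ and proving the two inequalities $d(u,z)<h(u)$ and $d(v,z)<h(v)$ rigorously; the triangle inequality alone is too lossy, so the proof must lean on $P$ being a shortest path (hence additive along $P$) and on the existence of shortest paths through $w$. I would structure it as: (1) fix $w\in B(u)\cap B(v)$; (2) let $z$ be the vertex on $P$ at distance $\min(d(u,w),\,d(u,v)-d(v,w))$ from $u$ along $P$; (3) show $d(u,z)\le d(u,w)<h(u)$ directly from the choice; (4) show $d(v,z)=d(u,v)-d(u,z)\le d(v,w)<h(v)$ using additivity along $P$ and $d(u,v)\le d(u,w)+d(w,v)$; (5) conclude $z\in B(u)\cap B(v)\cap P$, which is exactly the claim. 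I expect step (4) — combining the shortest-path additivity with the correct sign of the inequality — to be where the care is needed, but no heavy computation is involved.
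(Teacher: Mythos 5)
Your plan is sound and, once the choice of $z$ is nailed down, it proves the lemma; it is a direct-construction variant of the paper's argument rather than the paper's own route. The paper takes $u'$ to be the \emph{farthest} vertex of $P\cap B(u)$ from $u$, uses unweightedness to get $d(u,u')=h(u)-1$, and derives a contradiction from $u'\notin B(v)$ by comparing $d(u,v)\ge h(u)-1+h(v)$ (additivity along $P$) with $d(u,v)\le d(u,x)+d(x,v)\le h(u)-1+h(v)-1$ (triangle inequality through the witness $x$). You instead pick $z\in P$ at a prescribed distance from $u$ and verify $d(u,z)<h(u)$, $d(v,z)<h(v)$ directly. Both proofs use exactly the same three ingredients (the witness in $B(u)\cap B(v)$, additivity along the shortest path, the triangle inequality), so the difference is stylistic: yours avoids the contradiction, the paper's avoids having to specify a numeric distance. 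Note that your step of ``the vertex on $P$ at distance exactly $r$ from $u$'' silently uses that the graph is unweighted (vertices of $P$ realize every integer distance in $[0,d(u,v)]$); that is fine here since the lemma is stated in the unweighted toolbox, but it should be said explicitly, as the statement is false for weighted graphs (the paper proves a different weighted analogue with $B^*$).

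The one concrete flaw is your formula $d(u,z)=\min\bigl(d(u,w),\,d(u,v)-d(v,w)\bigr)$: when $d(v,w)>d(u,v)$ this quantity is negative and no such vertex exists on $P$, so step (2) fails as written. The repair is immediate: take $d(u,z)=\min\bigl(d(u,w),\,d(u,v)\bigr)$ instead. Then $d(u,z)\le d(u,w)<h(u)$, and $d(v,z)=d(u,v)-d(u,z)=\max\bigl(d(u,v)-d(u,w),\,0\bigr)\le d(v,w)<h(v)$, where $d(u,v)-d(u,w)\le d(v,w)$ is the triangle inequality through $w$ and $0\le d(v,w)$ is trivial. (Equivalently, clamp your expression at $0$ and observe that in the problematic case $d(u,v)<d(v,w)<h(v)$, so $z=u$ already lies in $B(v)$, and $u\in B(u)$ because $B(u)\neq\emptyset$ forces $h(u)>0$.) With this adjustment your five-step outline goes through.
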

\begin{proof}
    Let $x\in B(u)\cap B(v)$. 
    Let $u'$ be the farthest vertex from $u$ in $P\cap B(u)$. Since the graph is unweighted it holds for every $w\in B(u)$ that $d(u,w)\leq h(u)-1$. This implies that $d(u, u')=h(u)-1$.
    Assume towards a contradiction that $u'\notin B(v)$. By definition of $B(v)$ we have $d(u',v)\geq h(v)$.
    Since $u'\in P$ we have $d(u,v) = d(u,u') + d(u',v)$. Since $d(u, u')=h(u)-1$ and $d(u',v)\geq h(v)$ we get that 
    $d(u,v) \geq h(u)-1+h(v)$. 
    On the other hand, from the triangle inequality, it follows that $d(u,v) \le d(u,x) + d(x,v)$.
    Since $x\in B(u)$ and the graph is unweighted we have that $d(u,x)\leq h(u)-1$. Similarly, since $x\in B(v)$ we have $d(v,x)\leq h(v)-1$.
    Thus, we get that $d(u,v) \le d(u,x) + d(x,v)\leq h(u)-1+h(v)-1$.
    In contradiction to the fact that $d(u,v) \geq h(u)-1+h(v)$. 
    Therefore, $u'\in B(v)$, and $P \cap (B(u) \cap B(v)) \neq \emptyset$, as required.
\end{proof}

When $B(u) \cap B(v) = \emptyset$ we prove the following lemma:
\begin{lemma}\label{L-NO-Intersect}
    If $B(u) \cap B(v) = \emptyset$ then $h(u) + h(v) \le d(u,v)+1$, and specifically $\min(h(u),h(v))\le d(u,v)/2 + 0.5\ODD$.
\end{lemma}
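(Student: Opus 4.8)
I want to show that if $B(u) \cap B(v) = \emptyset$ then $h(u) + h(v) \le d(u,v) + 1$. The natural approach mirrors the proof of \autoref{L-Intersect}: look at the shortest path $P = P(u,v)$, and find a vertex on $P$ that is "just outside" one of the two bunches.

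First, I would let $u'$ be the farthest vertex from $u$ on $P$ that lies in $B(u)$, and let $w$ be the next vertex after $u'$ on $P$ (towards $v$). Since the graph is unweighted, $d(u,u') = h(u) - 1$, so $d(u,w) = h(u)$. Because $w \notin B(u)$ (it is past the borderline vertex $u'$), the definition of the bunch gives $d(w, A) \le d(u,w) = h(u)$; but more directly, $w \notin B(u)$ means $d(u,w) \ge h(u)$, which combined with $d(u,w) = h(u)$ is consistent. The key point is that $w \in P$, so $d(u,w) + d(w,v) = d(u,v)$, hence $d(w,v) = d(u,v) - h(u)$.

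Now I use the hypothesis $B(u) \cap B(v) = \emptyset$ together with the observation that $w \notin B(v)$ must also hold — wait, that is not immediate. Instead, the cleaner route: since $w \notin B(u)$, I should ask whether $w \in B(v)$. If $w \in B(v)$, then $d(v,w) \ge h(v)$ is false; rather $d(v,w) < h(v)$... hmm. Let me reconsider. The right move is: take $u'$ the farthest vertex from $u$ on $P \cap B(u)$ as above, so $d(u,u') = h(u)-1$ and the successor $w$ of $u'$ satisfies $d(u,w) = h(u)$ and $d(w,v) = d(u,v) - h(u)$. I claim $w \notin B(v)$: if $w \in B(v)$ then since also... no. Actually the simplest: since $w$ is adjacent to $u' \in B(u)$, if additionally $w \in B(v)$ we do not directly get a contradiction with $B(u) \cap B(v) = \emptyset$. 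The correct argument is to bound via $w$: since $d(w,v) \ge h(v)$ would give what I want — but why is $d(w,v) \ge h(v)$? Because if $d(w,v) < h(v)$ then $w \in B(v)$; and then, since $u'$ is adjacent to $w$ on the shortest path, $d(u', v) \le d(w,v) + 1 \le h(v)$, so... still not a contradiction by itself. I think the honest resolution is: take the farthest vertex $z$ from $u$ on $P$ with $z \in B(u)$, and the farthest vertex $y$ from $v$ on $P$ with $y \in B(v)$; since $B(u) \cap B(v) = \emptyset$, the vertex just past $z$ (towards $v$) is not in $B(u)$, and we want it to be "at or before" $y$ or "at or after" — and if the two closed segments $P \cap B(u)$ (a prefix of $P$ from $u$) and $P \cap B(v)$ (a suffix of $P$ from $v$) were to overlap, they would share a vertex, contradicting disjointness. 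So they are disjoint sub-paths of $P$, hence $d(u,z) + d(y,v) \le d(u,v)$, i.e. $(h(u)-1) + (h(v)-1) \le d(u,v)$, giving $h(u) + h(v) \le d(u,v) + 2$.

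\textbf{The obstacle.} That gives $+2$, not $+1$; the extra saving must come from a parity / adjacency argument. The gap between the prefix $P\cap B(u)$ and the suffix $P\cap B(v)$ contains at least the two endpoints $z$ and $y$ (if the gap is nonempty) — but actually if $z$ and $y$ are adjacent or equal we must be more careful, and equality is excluded by disjointness so they are distinct; the point is the gap has length at least... I expect the right statement is that the vertex $w$ immediately after $z$ is not in $B(u)$ and also not in $B(v)$ (else it would be in the suffix $P \cap B(v)$, making $z$ adjacent to a $B(v)$-vertex, and then $z$'s successor relationship forces $d(z,v) \le h(v)$, combined with $d(u,z) = h(u)-1$ and $d(u,z)+d(z,v) = d(u,v)$ to get $h(u) - 1 + $ something $\le d(u,v)$). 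The cleanest path to the tight $+1$ bound: let $w$ be the vertex after $z$, so $d(u,w)=h(u)$ and $w\notin B(u)\cup B(v)$ (disjointness plus $z$ being the last $B(u)$ vertex; $w\notin B(v)$ because if $w\in B(v)$ then $d(u,w)\ge h(u)$ and... ). From $w \notin B(v)$: $d(v,w) \ge h(v)$. Then $d(u,v) = d(u,w) + d(w,v) \ge h(u) + h(v)$, which actually gives the stronger $h(u)+h(v) \le d(u,v)$ — too strong in general, so $w \notin B(v)$ cannot always hold, and the honest bound allows $w \in B(v)$, which is exactly where the $+1$ comes from: if $w \in B(v)$ then $d(v,w) = h(v) - 1$ (unweighted, $w$ the farthest from $v$ on $P\cap B(v)$ in that case), and $d(u,v) = h(u) + (h(v)-1) = h(u)+h(v)-1$. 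So in both cases $h(u)+h(v) \le d(u,v)+1$. The delicate point I will need to nail down is that $w$ (the single vertex immediately following $z$ on $P$) is forced to be either outside both bunches or the extreme vertex of $P \cap B(v)$ — i.e. that there is no vertex of $P$ strictly between $z$ and the suffix $P \cap B(v)$ other than possibly $w$ itself. The "specifically" clause then follows immediately: $\min(h(u),h(v)) \le (h(u)+h(v))/2 \le (d(u,v)+1)/2$, and when $d(u,v)$ is even this rounds down to $d(u,v)/2$ while when odd it is $d(u,v)/2 + 1/2$, i.e. $d(u,v)/2 + 0.5\ODD$.
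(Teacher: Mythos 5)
Your final argument is in substance the paper's own proof: both hinge on the farthest vertex $z$ of $P\cap B(u)$ from $u$ (with $d(u,z)=h(u)-1$) and on disjointness forcing $d(z,v)\ge h(v)$, plus the same integrality step for the $\min(h(u),h(v))\le d(u,v)/2+0.5\ODD$ clause. The paper just applies disjointness to $z$ directly ($z\in B(u)\Rightarrow z\notin B(v)\Rightarrow d(z,v)\ge h(v)$, hence $d(u,v)\ge h(u)+h(v)-1$), which collapses your two-case analysis on the successor $w$ and settles the ``delicate point'' you flag: your Case $w\in B(v)$ claim $d(v,w)=h(v)-1$ is exactly $d(v,z)\ge h(v)$ in disguise, so nothing further needs to be nailed down.
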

\begin{proof}
To prove that $\min(h(u),h(v))\le d(u,v)/2 +0.5\ODD$ it is enough to show that  $h(u)+h(v) \le d(u,v)+1$, because this implies that $\min(h(u),h(v))\leq (d(u,v)+1)/2$, and since both $h(u)$ and $h(v)$ are integers we have $\min(h(u),h(v))\leq \lfloor d(u,v)/2 + 1/2 \rfloor=d(u,v)/2 + 0.5\ODD$, as required.

Let $u'$ be the farthest vertex from $u$ in $P\cap B_0(u)$. Thus, $d(u,u')=h(u)-1$. 
Assume for the sake of contradiction that $h(u)+h(v) > d(u,v)+1$. 
This implies that $h(v) > d(u,v)-h(u)+1$. 
    
Since $u'$ is in $P$, we know that $d(u,v) = d(u,u') + d(u',v)$ and since $d(u,u')= h(u)-1$ we get that $d(u,v)=h(u)-1 + d(u',v)$, thus $d(u',v)=d(u,v)-h(u)+1$. 
Now since $h(v) > d(u,v)-h(u)+1$ we get that $h(v)>d(u',v)$. This implies that $u'\in  B_0(v)$, contradiction to the fact that $B_0(u) \cap B_0(v) \neq \emptyset$.
\end{proof}

Next, we define $\tau'_u(P)$ and $\tau'_v(P)$, where $\tau'_u(P)$ ($\tau'_v(P)$) is the farthest vertex from $u$ ($v$) in $C(u, A_c)\cap P$ ($C(v, A_c)\cap P$).

In the following lemma, we prove a bound on $h_c(\tau'_u)$.
\begin{lemma}\label{L-bound-C-h}
    $h_c(\tau'_u) \le d(u,\tau'_u)+2$.
\end{lemma}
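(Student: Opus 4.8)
The plan is to mimic the proof of \autoref{L-bound-C^*-h}, but working with the (non‑augmented) cluster $C(u,A_c)$ in place of $C^*(u,A_c)$. Using only $C(\cdot)$ forces us to reason about a \emph{neighbour of $\tau'_u$ that lies outside the cluster} rather than one inside it, and this is exactly what turns the clean bound $h_c(\tau_u)\le d(u,\tau_u)$ into the weaker $h_c(\tau'_u)\le d(u,\tau'_u)+2$.

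Concretely, let $w$ be the vertex of $P$ immediately following $\tau'_u$ on the walk from $u$ to $v$ (we assume $\tau'_u\ne v$; see the last paragraph). Since $P=P(u,v)$ is a shortest path in an unweighted graph, $w\in N(\tau'_u)$ and the prefix of $P$ from $u$ to $w$ is itself shortest, so $d(u,w)=d(u,\tau'_u)+1$. By the choice of $\tau'_u$ as the farthest vertex from $u$ in $C(u,A_c)\cap P$, and because $d(u,w)>d(u,\tau'_u)$, the vertex $w$ cannot lie in $C(u,A_c)$; unpacking the definition of the cluster, this means $d(u,w)\ge d(w,A_c)=h_c(w)$. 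Applying the triangle inequality along the edge $\{\tau'_u,w\}$ gives $h_c(\tau'_u)=d(\tau'_u,A_c)\le 1+d(w,A_c)=1+h_c(w)\le 1+d(u,w)=d(u,\tau'_u)+2$, which is the claim. Everything here is routine once the correct vertex $w$ is identified.

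The only real subtlety — and what I expect to be the main obstacle to state cleanly — is the degenerate case $\tau'_u=v$, i.e.\ $v\in C(u,A_c)$, for which the inequality as stated can fail (think of $u,v$ adjacent but both far from $A_c$). In our applications this case does not arise: either $v\in A_c$, so $h_c(v)=0$ and the bound is trivial, or $v\notin A_c$, in which case $v\in C^*(u,A_c)\cap C^*(v,A_c)$ and the query algorithm has already recovered $d(u,v)$ exactly via $\Intersection$ (cf.\ \autoref{C-2k-1-4c-H1}), so no bound on $h_c(\tau'_u)$ is needed. Apart from isolating this boundary case, the proof is the same two‑step computation as in \autoref{L-bound-C^*-h}, with the step $h_c(w)\le d(u,w)$ (valid since $w\notin C(u,A_c)$) replacing the containment $N(\tau_u)\subseteq C^*(u,A_c)$.
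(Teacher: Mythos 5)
Your proof is correct and lands on the same basic idea as the paper's — bound $h_c(\tau'_u)$ through the vertex one step past $\tau'_u$ on $P$ and a single triangle inequality — but the way you justify the key intermediate bound is different and, in fact, more robust. The paper's proof routes through the augmented-cluster borderline vertex: it asserts that $\tau_u\in N(\tau'_u)$ with $d(u,\tau_u)=d(u,\tau'_u)+1$ and then invokes \autoref{L-bound-C^*-h} to get $h_c(\tau_u)\le d(u,\tau_u)$. That adjacency claim is not immediate, since $C^*(u,A_c)$ also contains neighbours of cluster vertices lying off $P$, so $\tau_u$ could in principle sit farther along the path than the immediate successor of $\tau'_u$, and the paper's chain of inequalities would then need repair. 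You instead take the successor $w$ of $\tau'_u$ on $P$ directly, observe $w\notin C(u,A_c)$ by maximality of $\tau'_u$, hence $h_c(w)\le d(u,w)=d(u,\tau'_u)+1$, and conclude $h_c(\tau'_u)\le 1+h_c(w)\le d(u,\tau'_u)+2$; this is self-contained, does not need \autoref{L-bound-C^*-h}, and avoids the unverified identification of $\tau_u$ with the successor. Your flagged degenerate case $\tau'_u=v$ is also a fair point: the paper's proofs of both \autoref{L-bound-C^*-h} and \autoref{L-bound-C-h} silently assume a successor exists, and as you note the inequality can genuinely fail when $v\in C(u,A_c)$; this is harmless in the applications exactly for the reason you give (the $\Intersection$ step already returns $d(u,v)$ there), but your explicit treatment of it is an improvement over the paper's write-up rather than a gap in yours.
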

\begin{proof}
    From the definitions of $\tau_u$ and $\tau'_u$ we know that $\tau_u\in N(\tau'_u)$. Therefore, $d(\tau'_u, \tau_u) = 1$. From the triangle inequality we know that $h_c(\tau'_u) \le d(\tau'_u,\tau_u)+h_c(\tau_u) \le 1+h_c(\tau_u)$.
    Since $\tau'_u \in P(u,\tau_u)$ we know that $d(u,\tau_u) = d(u,\tau'_u) + 1$.
    From~\autoref{L-bound-C^*-h} we know that $h_c(\tau_u) \le d(u,\tau_u)$. 
    Overall, we get that 
    $h_c(\tau'_u) \le 1+h_c(\tau_u) \le 1+d(u,\tau_u)+1=d(u,\tau_u)+2$, as required.
\end{proof}

In the case that $p_c(\tau_u) \notin B_c(v)$ we can also a prove a bound on $h_{c+1}(v)$.
\begin{lemma}\label{L-bound-h_{c+1}(v)-tau_u-+2}
    If $p_c(\tau'_u) \notin B_c(v)$ then $h_{c+1}(v) \le d(u,v)+2$.
\end{lemma}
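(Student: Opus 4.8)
The plan is to mirror the proof of \autoref{L-bound-h_{c+1}(v)-tau_u}, but to use the weaker bound on $h_c(\tau'_u)$ from \autoref{L-bound-C-h} in place of the tight bound from \autoref{L-bound-C^*-h}. First I would observe that, since $p_c(\tau'_u) \notin B_c(v)$, the definition of $B_c(v) = B(v, A_c, A_{c+1})$ gives immediately that $h_{c+1}(v) = d(v, A_{c+1}) \le d(v, p_c(\tau'_u))$: if $p_c(\tau'_u)$ were closer to $v$ than $A_{c+1}$, it would belong to $B_c(v)$.

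Next I would apply the triangle inequality to split $d(v, p_c(\tau'_u)) \le d(v, \tau'_u) + d(\tau'_u, p_c(\tau'_u)) = d(v, \tau'_u) + h_c(\tau'_u)$, using that $p_c(\tau'_u) = p(\tau'_u, A_c)$ is the closest vertex of $A_c$ to $\tau'_u$, so $d(\tau'_u, p_c(\tau'_u)) = h_c(\tau'_u)$. Then I would invoke \autoref{L-bound-C-h} to get $h_c(\tau'_u) \le d(u, \tau'_u) + 2$, yielding $h_{c+1}(v) \le d(v, \tau'_u) + d(u, \tau'_u) + 2$. Finally, since $\tau'_u$ lies on the shortest path $P = P(u,v)$ (it is the farthest vertex from $u$ in $C(u,A_c) \cap P$), we have $d(u, \tau'_u) + d(\tau'_u, v) = d(u,v)$, so $h_{c+1}(v) \le d(u,v) + 2$, as required.

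I do not expect any real obstacle here: the argument is a routine composition of the bunch-membership definition, the triangle inequality, and the already-established bound \autoref{L-bound-C-h} on $h_c(\tau'_u)$ (which in turn is where the genuine work was done, via the relation $\tau_u \in N(\tau'_u)$ and the chain to \autoref{L-bound-C^*-h}). The only point worth stating carefully is that $\tau'_u \in P$, so that the two distances $d(u,\tau'_u)$ and $d(\tau'_u, v)$ add up exactly to $d(u,v)$ rather than merely satisfying a triangle inequality; this is immediate from the definition of $\tau'_u$ as a vertex of $C(u, A_c) \cap P$.

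\begin{proof}
    Since $p_c(\tau'_u) \notin B_c(v) = B(v, A_c, A_{c+1})$, the definition of the bunch gives $h_{c+1}(v) = d(v, A_{c+1}) \le d(v, p_c(\tau'_u))$.
    From the triangle inequality and the fact that $p_c(\tau'_u) = p(\tau'_u, A_c)$, it follows that
    $d(v, p_c(\tau'_u)) \le d(v, \tau'_u) + d(\tau'_u, p_c(\tau'_u)) = d(v, \tau'_u) + h_c(\tau'_u)$.
    From~\autoref{L-bound-C-h} we know that $h_c(\tau'_u) \le d(u, \tau'_u) + 2$.
    Since $\tau'_u \in P(u,v)$, we have $d(u, \tau'_u) + d(\tau'_u, v) = d(u,v)$.
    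Combining these inequalities, we get
    $h_{c+1}(v) \le d(v, \tau'_u) + h_c(\tau'_u) \le d(v, \tau'_u) + d(u, \tau'_u) + 2 = d(u,v) + 2$,
    as required.
\end{proof}
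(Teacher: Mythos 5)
Your proof is correct and follows exactly the same route as the paper's: bunch-membership gives $h_{c+1}(v) \le d(v,p_c(\tau'_u))$, the triangle inequality splits this as $d(v,\tau'_u)+h_c(\tau'_u)$, \autoref{L-bound-C-h} bounds $h_c(\tau'_u)$ by $d(u,\tau'_u)+2$, and $\tau'_u\in P(u,v)$ lets the two distances sum to $d(u,v)$. No differences worth noting.
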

\begin{proof}
    From the definition of $B_c(v)$ it follows that since $p_c(\tau'_u) \notin B_c(v)$ then $h_{c+1}(v) \le d(v,p_c(\tau'_u))$.
    From the triangle inequality, it follows that 
    $d(v,p_c(\tau'_u)) \le d(v,\tau'_u) + h_c(\tau'_u)$.
    From~\autoref{L-bound-C-h} we know that $h_c(\tau'_u) \le d(u,\tau'_u)+2$.
    Thus, $h_{c+1}(v) \le d(v,\tau'_u) + h_c(\tau'_u) \le d(v,\tau'_u) + d(\tau'_u,u) +2= d(u,v)+2$, where the last inequality follows from the fact that $\tau'_u\in P(u,v)$, as required.
\end{proof}

\subsection{New technique: the middle vertex \texorpdfstring{$\tau$}{}}\label{S-middle}

Let $P=P(u,v)$ be a shortest path between $u$ and $v$. We define the \textit{middle} vertex  $\tau(P,u,v)$ to be a vertex of $P$ that satisfies $d(u,\tau(P,u,v))=d(u,v)/2+0.5\ODD$. Throughout the paper when  $P$, $u$, and $v$ are clear from the context we use $\tau$ instead of $\tau(P,u,v)$.  In the next two lemmas, we prove useful relations between $C(u)=C(u, A)$, $C(v)=C(v, A)$, and the middle vertex $\tau$.
We first show:

\begin{lemma}
    \label{L-h-rho-big}
    If $h(\tau) > d(u,v)/2+0.5\ODD$ then $\tau\in C(u)\cap C(v)$.
\end{lemma}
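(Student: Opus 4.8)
The statement to prove is: if $h(\tau) > d(u,v)/2 + 0.5\ODD$, then $\tau \in C(u) \cap C(v)$, where $\tau$ is the middle vertex on $P = P(u,v)$, so that $d(u,\tau) = d(u,v)/2 + 0.5\ODD$ and (by the definition of the middle vertex and since $\tau \in P$) $d(v,\tau) = d(u,v) - d(u,\tau) = d(u,v)/2 - 0.5\ODD$. Recall $C(u) = C(u,A) = \{ w \in V \mid d(u,w) < d(w,A) \}$. So to show $\tau \in C(u)$ I need $d(u,\tau) < d(\tau,A) = h(\tau)$, and to show $\tau \in C(v)$ I need $d(v,\tau) < h(\tau)$.

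The first containment is essentially immediate from the hypothesis: $d(u,\tau) = d(u,v)/2 + 0.5\ODD < h(\tau)$ is exactly the assumed inequality, so $\tau \in C(u)$. For the second, I observe that $d(v,\tau) = d(u,v)/2 - 0.5\ODD \le d(u,v)/2 + 0.5\ODD < h(\tau)$, where the last strict inequality is again the hypothesis; hence $\tau \in C(v)$. Combining, $\tau \in C(u) \cap C(v)$, as required.

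The only point that needs a word of care is why $d(u,\tau) = d(u,v)/2 + 0.5\ODD$ forces $d(v,\tau) = d(u,v)/2 - 0.5\ODD$: this is because $\tau$ lies on the shortest path $P$, so $d(u,v) = d(u,\tau) + d(\tau,v)$, and solving gives $d(\tau,v) = d(u,v) - (d(u,v)/2 + 0.5\ODD) = d(u,v)/2 - 0.5\ODD$. (When $d(u,v)$ is even, $0.5\ODD = 0$ and $\tau$ is the exact midpoint; when $d(u,v)$ is odd, $d(u,\tau)$ and $d(v,\tau)$ differ by $1$, which is consistent with integrality.) There is no real obstacle here — the lemma is a direct unwinding of the definitions of $C(\cdot)$ and of $\tau$, and the content is entirely captured by the two inequalities above. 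I would present it in three short sentences: establish $d(v,\tau) = d(u,v)/2 - 0.5\ODD$, then deduce $\tau \in C(u)$ and $\tau \in C(v)$ from the hypothesis.
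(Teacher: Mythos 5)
Your proof is correct and follows essentially the same argument as the paper: it unwinds the definition of $C(\cdot)$, uses $d(u,\tau)=d(u,v)/2+0.5\ODD$ and $d(v,\tau)\le d(u,v)/2+0.5\ODD$, and applies the hypothesis $h(\tau)>d(u,v)/2+0.5\ODD$ to conclude $\tau\in C(u)\cap C(v)$.
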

\begin{proof}
By definition we have  $d(u,\tau)=d(u,v)/2+0.5\ODD$ and $d(v,\tau)\leq d(u,v)/2-0.5\ODD$. Since $h(\tau) > d(u,v)/2+0.5\ODD$ we get that $\tau\in C(u)$ and  $\tau\in  C(v)$, as required.
\end{proof}

Next, we show: 
\begin{lemma}
     \label{L-tau-In-B1uB1v}
     $d(u,p_1(\tau)) + d(p_1(\tau),v)\leq  d(u,v)+2h_1(\tau)$.
\end{lemma}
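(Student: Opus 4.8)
\textbf{Proof plan for Lemma~\ref{L-tau-In-B1uB1v}.} The statement is a pure triangle-inequality estimate that does not reference the shortest path $P$ beyond the fact that $\tau$ lies on it, so the plan is to bound $d(u,p_1(\tau))$ and $d(p_1(\tau),v)$ separately and add. First I would recall that $p_1(\tau)=p(\tau,A_1)$ is the closest vertex of $A_1$ to $\tau$, so by definition $d(\tau,p_1(\tau))=h_1(\tau)$. Then by the triangle inequality $d(u,p_1(\tau))\le d(u,\tau)+d(\tau,p_1(\tau))=d(u,\tau)+h_1(\tau)$, and symmetrically $d(p_1(\tau),v)\le h_1(\tau)+d(\tau,v)$.

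Adding the two bounds gives $d(u,p_1(\tau))+d(p_1(\tau),v)\le d(u,\tau)+d(\tau,v)+2h_1(\tau)$. The only remaining point is that $d(u,\tau)+d(\tau,v)=d(u,v)$, which holds because $\tau$ is the middle vertex of $P=P(u,v)$ and hence lies on a shortest $u$–$v$ path; a vertex on a shortest path splits the distance additively. Substituting yields exactly $d(u,p_1(\tau))+d(p_1(\tau),v)\le d(u,v)+2h_1(\tau)$, as required.

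There is essentially no obstacle here — the lemma is an elementary consequence of the triangle inequality and the definition of $p_1(\cdot)$ and $h_1(\cdot)$; the one thing to be careful about is invoking $\tau\in P(u,v)$ (so that $d(u,\tau)+d(\tau,v)=d(u,v)$ rather than merely $\le$), which is guaranteed by the definition of the middle vertex given just above the statement. This lemma is the workhorse that, combined with Lemma~\ref{L-h-rho-big}, will let the query algorithm conclude that when $h_1(\tau)$ is small, $p_1(\tau)$ is a good hub: if in addition $p_1(\tau)\in B_1(u)\cap B_1(v)$ then the distances $d(u,p_1(\tau))$ and $d(p_1(\tau),v)$ are stored and the estimate is at most $d(u,v)+2h_1(\tau)$, and otherwise $h_2$ of one endpoint is bounded.
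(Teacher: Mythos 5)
Your proof is correct and matches the paper's argument essentially verbatim: bound $d(u,p_1(\tau))$ and $d(p_1(\tau),v)$ via the triangle inequality through $\tau$, then use $\tau\in P(u,v)$ to get $d(u,\tau)+d(\tau,v)=d(u,v)$. No gaps.
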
 
\begin{proof}
From the triangle inequality it follows that $d(u,p_1(\tau))\leq d(u,\tau)+h_1(\tau)$ and $d(v,p_1(\tau))\leq d(v,\tau)+h_1(\tau)$. Thus, 
$d(u,p_1(\tau))+d(p_1(\tau),v)\leq d(u,\tau)+d(v,\tau)+2h_1(\tau)$. 
Since $\tau\in P$  we have $d(u,\tau)+d(v,\tau)=d(u,v)$ and we get $d(u,\tau)+d(v,\tau)+2h_1(\tau)\leq d(u,v)+2h_1(\tau)$, as required.
\end{proof}

Consider now  $B_1(u)$ and $ B_1(v)$ in the case that $k>2$ and assume  that $p_1(\tau)\notin B_1(u) \cap B_1(v)$. We show:
\begin{lemma}
    \label{L-tau-Not-In-B1uB1v}
    If $p_1(\tau) \notin B_1(u) \cap B_1(v)$ then either $h_2(u) \leq d(u,v)/2+0.5\ODD+h_1(\tau)$ or $h_2(v)\leq d(u,v)/2+0.5\ODD+h_1(\tau)$.
\end{lemma}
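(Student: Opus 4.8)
The plan is to exploit the defining property $\tau \notin B_1(u) \cap B_1(v)$ together with the fact that $\tau$ is the middle vertex of $P = P(u,v)$, so that $d(u,\tau), d(v,\tau) \le d(u,v)/2 + 0.5\ODD$. First I would observe that, since $p_1(\tau) \notin B_1(u) \cap B_1(v)$, there is (at least one) endpoint, say $u$, with $p_1(\tau) \notin B_1(u)$ (the other case with $v$ is symmetric). By the definition of the bunch $B_1(u) = B(u, A_1, A_2)$, the fact that $p_1(\tau) \notin B_1(u)$ means $p_1(\tau)$ is not closer to $u$ than the nearest vertex of $A_2$, i.e. $h_2(u) = d(u, A_2) \le d(u, p_1(\tau))$.

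Next I would bound $d(u, p_1(\tau))$ by the triangle inequality: $d(u, p_1(\tau)) \le d(u,\tau) + d(\tau, p_1(\tau)) = d(u,\tau) + h_1(\tau)$, since $p_1(\tau) = p(\tau, A_1)$ and $h_1(\tau) = d(\tau, A_1)$. Combining with $d(u,\tau) = d(u,v)/2 + 0.5\ODD$ (the defining property of the middle vertex), this yields $h_2(u) \le d(u,v)/2 + 0.5\ODD + h_1(\tau)$, which is exactly the first alternative in the statement. The symmetric argument with $v$ in place of $u$ (using $d(v,\tau) \le d(u,v)/2 - 0.5\ODD \le d(u,v)/2 + 0.5\ODD$) gives $h_2(v) \le d(u,v)/2 + 0.5\ODD + h_1(\tau)$ whenever $p_1(\tau) \notin B_1(v)$.

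I do not anticipate a genuine obstacle here; this is a short deduction of the same flavor as \autoref{L-bound-h_{c+1}(v)-tau_u}. The only point requiring a little care is making sure the case split is stated correctly: $p_1(\tau) \notin B_1(u) \cap B_1(v)$ means $p_1(\tau) \notin B_1(u)$ \emph{or} $p_1(\tau) \notin B_1(v)$, and in either case we recover the corresponding bound on $h_2(u)$ or $h_2(v)$, so the disjunction in the conclusion is exactly what we get. A secondary small check is that $d(v,\tau)$ is genuinely at most $d(u,v)/2 + 0.5\ODD$; this holds because $\tau \in P$ implies $d(u,\tau) + d(v,\tau) = d(u,v)$, so $d(v,\tau) = d(u,v) - (d(u,v)/2 + 0.5\ODD) = d(u,v)/2 - 0.5\ODD$, which is certainly $\le d(u,v)/2 + 0.5\ODD$. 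With these observations in place the proof is essentially two invocations of the triangle inequality plus the bunch definition.
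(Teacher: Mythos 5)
Your proposal is correct and follows essentially the same argument as the paper: split on whether $p_1(\tau)\notin B_1(u)$ or $p_1(\tau)\notin B_1(v)$, use the bunch definition to get $h_2(u)\le d(u,p_1(\tau))$ (valid since $p_1(\tau)\in A_1$), and finish with the triangle inequality and the middle-vertex distances. The only blemish is the typo "$\tau\notin B_1(u)\cap B_1(v)$" in your opening sentence, which should read $p_1(\tau)\notin B_1(u)\cap B_1(v)$ as you correctly use thereafter.
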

\begin{proof}
Since $p_1(\tau)\notin B_1(u)\cap B_1(v)$, we have either $p_1(\tau)\notin B_1(u)$ or $p_1(\tau)\notin B_1(v)$. 
If $p_1(\tau)\notin B_1(u)$ then from the definition of $B_1(u)$ we have $h_2(u) \leq d(u,p_1(\tau))$.
Similarly, if $p_1(\tau)\notin B_1(v)$  then $h_2(v) \leq d(v,p_1(\tau))$.
Thus, either $h_2(u) \leq d(u,p_1(\tau))$ or $h_2(v)\leq d(v,p_1(\tau))$.
From $\tau$'s definition  
  $d(u,\tau)=d(u,v)/2+0.5\ODD$ and $d(v,\tau)\leq d(u,v)/2+0.5\ODD$. Thus, 
either  
$h_2(u) \leq d(u,v)/2+0.5\ODD+h_1(\tau)$ or $h_2(v) \leq d(u,v)/2+0.5\ODD+h_1(\tau)$, as required.
\end{proof}

\subsection{$(3,2\ODD)$-approximation with \texorpdfstring{$n^{4/3}$}{n\^(4/3)}-space and \texorpdfstring{$n^{1/3}$}{n\^(1/3)}-query} \label{S-4-2}
In this section we use the the middle vertex technique  and  prove the following theorem:
\Reminder{T-DO-3-Unweighted}
\subsubsection{Storage.}
We construct a hierarchy $A_0,A_1,A_2,A_3$, where $A_0=V$, $A_3=\emptyset$. 
The set $A_1$ is constructed using~\autoref{L-A-center}, with parameter $n^{-1/3}$. 
The set $A_2$ is constructed using~\autoref{L-TZ-Size} with $k=3$. 
We save $B_i(u)$ and $p_i(u)$, for every $u\in V$ and $0\leq i \leq 2$. We also save $C(w)$, for every $w\in V\setminus A_1$.
\begin{lemma}\label{L-Space-3-Unweighted}
    The distance oracle is constructed in  $\Ot(mn^{1/3})$-time  and  uses $\Ot(n^{4/3})$-space.
\end{lemma}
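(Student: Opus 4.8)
\textbf{Proof plan for \autoref{L-Space-3-Unweighted}.}
The plan is to account separately for the space used by each stored component and for the time spent constructing each, then sum. The stored objects are: the bunches $B_i(u)$ and the pivots $p_i(u)$ for every $u\in V$ and $0\le i\le 2$, and the clusters $C(w)$ for every $w\in V\setminus A_1$. For the space bound I would first invoke \autoref{L-A-center} with parameter $p=n^{-1/3}$ to get $|A_1|=\tilde O(n^{2/3})$, and \autoref{L-TZ-Size} (with $k=3$) to get $|A_2|=O(n^{1/3})$ and $|B_i(u)|=O(n^{1/3})$ for every $i$ w.h.p.; since $A_3=\emptyset$ we also have $B_2(u)=A_2$, so $|B_2(u)|=O(n^{1/3})$ as well. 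Hence $\sum_{u\in V}\sum_{i=0}^{2}|B_i(u)|=O(n\cdot n^{1/3})=O(n^{4/3})$, and storing all $p_i(u)$ costs $O(n)$. For the clusters, the key point is that \autoref{L-A-center} also gives $|C(w,A_1)|=O(1/p)=O(n^{1/3})$ for every $w\in V\setminus A_1$, so $\sum_{w\in V\setminus A_1}|C(w)|=O(n\cdot n^{1/3})=O(n^{4/3})$. Summing the three contributions gives total space $\tilde O(n^{4/3})$, as required.

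For the construction time I would again split by component. By \autoref{L-A-center}, computing $A_1$ together with all the bunches $B(v,V,A_1)$ and clusters $C(w,A_1)$ takes $\tilde O(m p^{-1})=\tilde O(mn^{1/3})$ expected time (this single invocation produces both the $B_0$'s and the $C(w)$'s). By \autoref{L-TZ-Size} with $k=3$, computing $A_2$ and the remaining bunches $B_1(u),B_2(u)$ for all $u$ takes $\tilde O(mn^{1/k})=\tilde O(mn^{1/3})$ expected time. Computing all pivots $p_i(u)$ is done by running BFS from each set $A_i$, costing $O(km)=O(m)$ time since $k=3$ is constant. Adding these up yields $\tilde O(mn^{1/3})$ expected construction time.

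I do not expect a genuine obstacle here: this lemma is a bookkeeping statement and every ingredient is supplied by \autoref{L-A-center} and \autoref{L-TZ-Size}. The only point requiring a little care is to make sure the cluster-size bound $|C(w,A_1)|=O(n^{1/3})$ is read off from \autoref{L-A-center} (rather than re-deriving it), and to note that the bunch and cluster computations are folded into a single $\tilde O(mn^{1/3})$-time procedure rather than being charged twice. I would also explicitly remark that all size bounds hold with high probability, so the stated space and time bounds are likewise w.h.p./in expectation, consistent with the statements of the two cited lemmas.
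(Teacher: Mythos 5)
Your proposal is correct and follows essentially the same route as the paper: both read the bunch/cluster size bounds off \autoref{L-A-center} (with $p=n^{-1/3}$) and \autoref{L-TZ-Size} (with $k=3$), charge $O(n)$ for the pivots, and sum to $\Ot(n^{4/3})$ space, with construction time $\Ot(mn^{1/3})$ from the two lemmas plus $O(m)$ for the pivots. Your accounting is in fact slightly more explicit than the paper's (e.g., summing cluster sizes and noting the w.h.p./expectation caveats), but there is no substantive difference.
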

\begin{proof}
    From~\autoref{L-A-center} it follows that $|C(w)|=O(n^{1/3})$, for every $w\in A_1\setminus V$, and also that  $|B_0(u)|=O(n^{1/3})$,  for every $u\in V$.
    From~\autoref{L-TZ-Size} it follows that  $|B_1(u)|=O(n^{1/3})$. 
    Since $B_2(u)=A_2$, $|B_2(u)|=\Ot(n^{1/3})$.
    The cost of saving $p_i(u)$, for every $u\in V$, and  $0\leq i \leq 2$, is $O(n)$. 
    We conclude that the total space  is $\Ot(n^{4/3})$. 
    From~\autoref{L-A-center} it follows that computing $A_1$ takes $\Ot(mn^{1/3})$ time. 
    From~\autoref{L-TZ-Size} it follows that computing $A_2$, $B_1(u)$ and $B_2(u)$ takes $\Ot(mn^{1/3})$ time. 
    Computing $p_i(u)$, for every $u\in V$, and  $0\leq i \leq 2$, takes $O(m)$ time.  We conclude that the running time  is $\Ot(mn^{1/3})$. 
\end{proof}

\subsubsection{Query algorithm.}
The input to the query algorithm is two vertices $u,v\in V$.
The output of the query is an estimation $\hat{d}(u,v)$, which is the minimum between $\Intersection(u,v,C(u), C(v))$, \\$\Intersection(u, v, B_1(u), B_1(v))$ and 
$\min(d(u,p_2(u))+d(p_2(u),v), d(u,p_2(v))+d(p_2(v),v))$. 
A pseudo-code for the query algorithm is given in Algorithm~\ref{Algorithm-Q3}.
\begin{algorithm2e}[t] 
\caption{$\Query(u, v)$}\label{Algorithm-Q3}
$\hat{d}(u,v) \gets \Intersection(u,v,C(u), C(v))$ \\
$\hat{d}(u,v) \gets \min(\hat{d}(u,v), \Intersection(u, v, B_1(u), B_1(v))$ \\
$\hat{d}(u,v) \gets \min(\hat{d}(u,v), d(u,p_2(u))+d(p_2(u),v), d(u,p_2(v))+d(p_2(v),v))$ \\
\Return $\hat{d}(u,v)$
\end{algorithm2e}
Next, we bound $\hat{d}(u,v)$. 

\begin{lemma}\label{L-Q3-Bound}
    $\hat{d}(u,v) \leq 3d(u,v) + 2\ODD$.
\end{lemma}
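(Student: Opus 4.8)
The plan is to use the middle vertex technique from Lemmas~\ref{L-h-rho-big}, \ref{L-tau-In-B1uB1v}, and \ref{L-tau-Not-In-B1uB1v}, together with the intersection property (\autoref{P-Intersection-With-P-Returns-d(u,v)}), exactly mirroring the case analysis sketched in the technical overview. Let $P=P(u,v)$, let $\delta=d(u,v)$, and let $\tau=\tau(P,u,v)$ be the middle vertex, so $d(u,\tau)=\delta/2+0.5\ODD$ and $d(v,\tau)\le\delta/2+0.5\ODD$. I would split on the value of $h_1(\tau)$, since this controls which of the three terms computed by Algorithm~\ref{Algorithm-Q3} gives a short path.

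First I would handle the case $h_1(\tau) > \delta/2 + 0.5\ODD$. By \autoref{L-h-rho-big}, $\tau \in C(u)\cap C(v)$, and $\tau\in P$, so $\tau \in P\cap (C(u)\cap C(v))$ with $d'(u,\tau)=d(u,\tau)$ and $d'(\tau,v)=d(\tau,v)$ exact. By \autoref{P-Intersection-With-P-Returns-d(u,v)}, the first line of the query gives $\hat d(u,v) \le \Intersection(u,v,C(u),C(v)) = \delta$, which beats $3\delta + 2\ODD$. Second, I would handle $h_1(\tau)\le \delta/2 + 0.5\ODD$, splitting further on whether $p_1(\tau)\in B_1(u)\cap B_1(v)$. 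If it is, then $p_1(\tau)\in P\cap(B_1(u)\cap B_1(v))$ is not guaranteed — $p_1(\tau)$ need not lie on $P$ — so here I instead invoke \autoref{L-tau-In-B1uB1v} directly: the second line of the query evaluates $\Intersection(u,v,B_1(u),B_1(v))$, and since $p_1(\tau)$ is a common element with stored exact distances $d(u,p_1(\tau))$ and $d(p_1(\tau),v)$, the intersection returns at most $d(u,p_1(\tau))+d(p_1(\tau),v) \le \delta + 2h_1(\tau) \le \delta + 2(\delta/2+0.5\ODD) = 2\delta + \ODD \le 3\delta + 2\ODD$ (using $\delta\ge 1$ whenever $\ODD=1$, and $2\delta\le 3\delta$ always).

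Third, the remaining case is $h_1(\tau)\le \delta/2+0.5\ODD$ and $p_1(\tau)\notin B_1(u)\cap B_1(v)$. By \autoref{L-tau-Not-In-B1uB1v}, without loss of generality $h_2(u)\le \delta/2+0.5\ODD+h_1(\tau)\le \delta/2+0.5\ODD + \delta/2+0.5\ODD = \delta + \ODD$. Since $A_3=\emptyset$ we have $p_2(u)\in A_2$ and $d(u,p_2(u))=h_2(u)$, and $d(p_2(u),v)$ is retrievable because $p_2(u)\in B_2(v)=A_2$; hence the third line of the query yields $\hat d(u,v)\le d(u,p_2(u))+d(p_2(u),v)\le h_2(u)+(h_2(u)+\delta) = 2h_2(u)+\delta \le 2(\delta+\ODD)+\delta = 3\delta + 2\ODD$. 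The symmetric subcase $h_2(v)\le\delta+\ODD$ is handled identically using $p_2(v)$. Combining the three cases gives $\hat d(u,v)\le 3\delta + 2\ODD$ in all cases.

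The main obstacle I anticipate is being careful with the exact-distance bookkeeping in the $\Intersection$ calls — specifically verifying that $p_1(\tau)$ genuinely lies in both $B_1(u)$ and $B_1(v)$ as stored sets with retrievable exact distances (not just that it is ``close''), and that the storage described in \autoref{L-Space-3-Unweighted} indeed lets the query recover $d(p_2(u),v)$ (which it does, since $A_2=B_2(v)$ for every $v$, so this distance is stored as part of $B_2(v)$). A secondary subtlety is the parity arithmetic: I need $2\delta+\ODD\le 3\delta+2\ODD$, which holds trivially since $\delta\ge 0$ and $\ODD\ge 0$, but I should state it cleanly rather than hand-wave. Everything else is a routine triangle-inequality chase.
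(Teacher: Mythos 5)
Your proposal is correct and follows essentially the same route as the paper's proof: the same three-way case split on $h_1(\tau)$ and on whether $p_1(\tau)\in B_1(u)\cap B_1(v)$, invoking \autoref{L-h-rho-big}, \autoref{L-tau-In-B1uB1v}, \autoref{L-tau-Not-In-B1uB1v}, \autoref{P-Intersection-With-P-Returns-d(u,v)}, and the final triangle-inequality bound $\hat d(u,v)\le 2h_2(u)+d(u,v)$ via $p_2(\cdot)$. The extra bookkeeping you flag (that $d(p_2(u),v)$ is retrievable since $B_2(v)=A_2$) is a fair point the paper leaves implicit, but it does not change the argument.
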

\begin{proof}
Let $P$ be a shortest path between $u$ and $v$. Recall that $\tau=\tau(u,v,P)$, is the middle vertex between $u$ and $v$ in $P$.
If $h_1(\tau) > d(u,v)/2+0.5\ODD$  then it follows from~\autoref{L-h-rho-big} that $\tau\in C(u)\cap C(v)$.
In such a case the value returned by the call to  $\Intersection(u,v,C(u),C(v))$ is $d(u,v)$ and the claim holds. (See~\autoref{F-mid-vertex}(a)).

Otherwise, $h_1(\tau) \leq  d(u,v)/2+0.5\ODD$. 
We divide the rest of the proof into two cases. 
The case that $p_1(\tau) \in B_1(u)\cap B_1(v)$ and the case that $p_1(\tau) \notin B_1(u)\cap B_1(v)$.

If $p_1(\tau) \in B_1(u)\cap B_1(v)$ then $\Intersection(u,v,B_1(u),B_1(v))$ returns a value bounded by $d(u,p_1(\tau))+d(p_1(\tau),v)$. 
From~\autoref{L-tau-In-B1uB1v} it follows that $d(u,p_1(\tau))+d(p_1(\tau),v)\leq d(u,v)+2h_1(\tau)$. 
Since $h_1(\tau) \leq d(u,v)/2+0.5\ODD$ we get that $\hat{d}(u,v) \le d(u,v)+2h_1(\tau)\le d(u,v) + 2(d(u,v)/2+0.5\ODD) = 2d(u,v)+1\ODD$, as required. (See~\autoref{F-mid-vertex}(b)).

Consider now the case that $p_1(\tau) \notin B_1(u)\cap B_1(v)$. It follows from~\autoref{L-tau-Not-In-B1uB1v}
that either $h_2(u) \leq d(u,v)/2+0.5\ODD + h_1(\tau)$ or $h_2(v) \leq d(u,v)/2+0.5\ODD+h_1(\tau)$.
Assume that $h_2(u) \leq d(u,v)/2+0.5\ODD+h_1(\tau)$. (A symmetrical argument holds for the case that $h_2(v) \leq d(u,v)/2+0.5\ODD +h_1(\tau)$.) 
Since $h_1(\tau) \leq  d(u,v)/2+0.5\ODD$, we get that 
$h_2(u) \leq d(u,v)/2+0.5\ODD + h_1(\tau) \leq 2(d(u,v)/2+0.5\ODD)=d(u,v)+1\ODD$.
Since $\hat{d}(u,v)\leq \min(h_2(u)+d(p_2(u),v), d(u,p_2(v))+h_2(v))$, we know that $\hat{d}(u,v) \leq h_2(u)+d(p_2(u),v)$. From the triangle inequality, we know that $d(p_2(u),v) \leq h_2(u) + d(u,v)$, and we get that $\hat{d}(u,v)\leq h_2(u)+d(p_2(u),v)\leq 2h_2(u) + d(u,v)$.
Since $h_2(u) \leq 2(d(u,v)/2+0.5\ODD)=d(u,v)+1\ODD$, we get that $\hat{d}(u,v) \leq 2h_2(u) + d(u,v) \leq d(u,v) + 2(d(u,v)+1\ODD)=3d(U,v)+2\ODD$, as required. (See~\autoref{F-mid-vertex}(c)).
\end{proof}

\begin{lemma}\label{L-Q3-Time}
 The query algorithm takes $O(n^{1/3})$ time.
\end{lemma}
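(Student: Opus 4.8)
The statement to prove is \autoref{L-Q3-Time}: the query algorithm of Algorithm~\ref{Algorithm-Q3} runs in $O(n^{1/3})$ time.

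\medskip

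\noindent\textbf{Plan.} The query algorithm performs exactly three operations, so I would simply bound the cost of each in turn and add them up. First, the call $\Intersection(u,v,C(u),C(v))$: by \autoref{Intersection-Runtime} this costs $O(\min(|C(u)|,|C(v)|))$, and by \autoref{L-A-center} (applied with parameter $n^{-1/3}$, as in the storage description) we have $|C(w)|=O(n^{1/3})$ for every $w\in V\setminus A_1$, so this step is $O(n^{1/3})$. For vertices $u\in A_1$ we do not store $C(u)$; I would note the standard convention (or simply treat $C(u)=\emptyset$ / skip the call) so that the bound is unaffected. Second, the call $\Intersection(u,v,B_1(u),B_1(v))$: again by \autoref{Intersection-Runtime} this is $O(\min(|B_1(u)|,|B_1(v)|))$, and \autoref{L-TZ-Size} with $k=3$ gives $|B_1(u)|=O(n^{1/3})$, so this step is also $O(n^{1/3})$. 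Third, the two evaluations $d(u,p_2(u))+d(p_2(u),v)$ and $d(u,p_2(v))+d(p_2(v),v)$ together with the $\min$ operations: each distance term is either retrieved in $O(1)$ time from stored data (we store $p_2(\cdot)$ and, via the bunches $B_2(\cdot)=A_2$, the relevant distances) or handled by noting that $p_2(u)\in A_2\subseteq B_2(v)$ so $d(p_2(u),v)$ is available in $O(1)$ — so this whole line is $O(1)$.

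\medskip

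\noindent Summing, the total running time is $O(n^{1/3})+O(n^{1/3})+O(1)=O(n^{1/3})$, which is the claim. I would also remark that computing $C(u)\cap C(v)$ and $B_1(u)\cap B_1(v)$ inside $\Intersection$ uses the constant-time membership queries guaranteed by the storage convention stated in the Preliminaries (``when we save $B_i(u)$ \dots we can check whether $x\in B_i(u)$ in constant time and retrieve $d(u,x)$ in constant time''), which is exactly what makes the per-element cost $O(1)$.

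\medskip

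\noindent\textbf{Main obstacle.} There is essentially no technical difficulty here — the proof is a routine accounting argument. The only points that need a sentence of care are: (i) making sure the hidden polylogarithmic factors are acknowledged, since \autoref{L-A-center} and \autoref{L-TZ-Size} give their size bounds with high probability and the paper's $\Ot$ notation already absorbs logs (indeed the theorem statement writes $\Ot(n^{1/3})$), so strictly the bound is $\Ot(n^{1/3})$; and (ii) the boundary case $u\in A_1$ (or $v\in A_1$) where $C(u)$ is not stored, which is handled by the convention that the corresponding $\Intersection$ call is simply omitted or treated as operating on the empty set, so it only ever helps the running time. Neither of these is a real obstacle; the lemma follows directly.
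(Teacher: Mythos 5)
Your proposal is correct and follows essentially the same accounting argument as the paper: bound each $\Intersection$ call by $O(\min(|C(u)|,|C(v)|))$ resp.\ $O(\min(|B_1(u)|,|B_1(v)|))$ via \autoref{Intersection-Runtime} together with the $O(n^{1/3})$ size bounds from \autoref{L-A-center} and \autoref{L-TZ-Size}, and charge $O(1)$ for the final line using the stored $p_2(\cdot)$ values and the fact that $B_2(\cdot)=A_2$. Your extra remarks (the $u\in A_1$ boundary case and the hidden polylog factors) are harmless refinements the paper leaves implicit.
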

\begin{proof}
    By~\autoref{Intersection-Runtime} the runtime of $\Intersection(u,v,C(u),C(v))$  is $O(\min(|C(u)|,|C(v)|))$ 
    and the runtime of $\Intersection(u,v,B_1(u),B_1(v))$ is $O(\min(|B_1(u)|,|B_1(v)|))$. 
    Since $\min(|C(u)|,|C(v)|)=O(n^{1/3})$ and $\min(|B_1(u)|,|B_1(v)|)=O(n^{1/3})$, we get that the cost of these two steps is $O(n^{1/3})$. 
    Computing $d(u,p_2(u))+d(p_2(u),v), d(u,p_2(v))+d(p_2(v),v))$ takes $O(1)$ time. We conclude that the running time  is $O(n^{1/3})$. 
\end{proof}

Theorem~\ref{T-DO-3-Unweighted} follows from~\autoref{L-Space-3-Unweighted},~\autoref{L-Q3-Bound} and~\autoref{L-Q3-Time}.

\subsection{$(4,3\ODD)$-approximation with \texorpdfstring{$n^{5/4}$}{n\^(5/4)}-space and \texorpdfstring{$n^{1/4}$}{n\^(1/4)}-query} \label{S-DO-4-Unweighted}

In this section we show that if we save in addition $d(x,y)$, for  $\langle x,y\rangle \in A_2 \times A_1$,
it is possible using the middle vertex technique to  prove also the following:

\Reminder{T-DO-4-Unweighted}
\subsubsection{Storage.}
We construct a hierarchy $A_0,A_1,A_2,A_3$, where $A_0=V$, $A_3=\emptyset$. 
The set $A_1$ is constructed using~\autoref{L-A-center}, with parameter $n^{-1/4}$. 
The set $A_2$ is constructed using~\autoref{L-TZ-Size} with $k=4$. 
We save $B_i(u)$ and $p_i(u)$, for every $u\in V$ and $0\leq i \leq 2$. We also save $C(w)$, for every $w\in V\setminus A_1$.
In addition, for every $x\in A_2$ and $y\in A_1$ we store $d(x,y)$.

\begin{lemma}\label{L-Distance-Oracle-6-Space-Size}
    The distance oracle is constructed in $\Ot(mn^{1/2})$-time  and  uses $\Ot(n^{5/4})$-space.
\end{lemma}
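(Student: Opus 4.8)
The plan is to mirror the space and time analysis of \autoref{L-Space-3-Unweighted}, accounting for the single additional table that this oracle stores, namely $d(x,y)$ for every $\langle x,y\rangle\in A_2\times A_1$. First I would bound the sizes of the stored objects. Since $A_1$ is built via \autoref{L-A-center} with parameter $n^{-1/4}$, we have $|A_1|=\Ot(n^{3/4})$, $|C(w)|=O(n^{1/4})$ for every $w\in V\setminus A_1$, and $|B_0(u)|=O(n^{1/4})$ for every $u\in V$. Since $A_2$ is built via \autoref{L-TZ-Size} with $k=4$, we get $|A_2|=\Ot(n^{1/2})$ and $|B_1(u)|=O(n^{1/4})$, and because $A_3=\emptyset$ we have $B_2(u)=A_2$ so $|B_2(u)|=\Ot(n^{1/2})$. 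Summing: storing $B_i(u),p_i(u)$ for all $u$ and $i\le 2$ costs $\Ot(n\cdot n^{1/4})+\Ot(n\cdot n^{1/2})=\Ot(n^{3/2})$ — wait, I must be careful here: $\sum_u|B_2(u)|=n\cdot\Ot(n^{1/2})=\Ot(n^{3/2})$, which is \emph{too large}; the correct accounting is that $B_2(u)=A_2$ is the same set for all $u$, so it is stored once at cost $\Ot(n^{1/2})$, and only $p_2(u)$ is stored per vertex at cost $O(n)$. Storing $C(w)$ for all $w\in V\setminus A_1$ costs $\Ot(n^{5/4})$. The new table $d(x,y)$ for $\langle x,y\rangle\in A_2\times A_1$ costs $|A_2|\cdot|A_1|=\Ot(n^{1/2})\cdot\Ot(n^{3/4})=\Ot(n^{5/4})$. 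Hence the total space is $\Ot(n^{5/4})$, as claimed.

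For the construction time, I would invoke the cited lemmas directly: by \autoref{L-A-center}, computing $A_1$ (with parameter $p=n^{-1/4}$) and the bunches $B_0(u)$ and clusters $C(w)$ takes $\Ot(mp^{-1})=\Ot(mn^{1/4})$ expected time; by \autoref{L-TZ-Size} with $k=4$, computing $A_2$ and the bunches $B_1(u),B_2(u)$ takes $\Ot(mn^{1/4})$ expected time. Computing $p_i(u)$ for all $u$ and $i\le 2$ is an additional $O(m)$ time (a constant number of Dijkstra/BFS runs from the source sets). The bottleneck is the new distance table: computing $d(x,y)$ for all $x\in A_2$ and all $y\in A_1$ amounts to running BFS (the graph is unweighted) from each of the $|A_2|=\Ot(n^{1/2})$ vertices of $A_2$, at cost $O(m)$ per source, for a total of $\Ot(mn^{1/2})$ expected time. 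This dominates all the other terms, so the overall construction time is $\Ot(mn^{1/2})$.

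The main obstacle — really the only subtlety — is the bookkeeping pitfall I flagged above: one must not naively write $\sum_u|B_2(u)|$ as the space for the top-level bunches, since $B_2(u)=A_2$ coincides for all $u$ and should be charged once; the genuine per-vertex cost at level $2$ is only the pointer $p_2(u)$. With that handled, the lemma reduces to adding $|A_2|\cdot|A_1|=\Ot(n^{5/4})$ to the space and $|A_2|\cdot O(m)=\Ot(mn^{1/2})$ to the time relative to the $k=3$ oracle of \autoref{S-4-2}, and both terms are absorbed into (respectively dominate) the stated bounds. I would phrase the proof as: cite \autoref{L-A-center} and \autoref{L-TZ-Size} for the sizes of $C(\cdot)$, $B_0(\cdot)$, $B_1(\cdot)$, $A_1$, $A_2$; observe $B_2(u)=A_2$; tally $\Ot(n^{5/4})$ for space with the $A_2\times A_1$ table being the dominant new term; then tally the construction time with the $\Ot(mn^{1/2})$ BFS-from-$A_2$ step dominating.
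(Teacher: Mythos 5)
Your proposal is correct and follows essentially the same accounting as the paper's proof: bound $|C(\cdot)|,|B_0(\cdot)|,|B_1(\cdot)|$ via \autoref{L-A-center} and \autoref{L-TZ-Size}, note $B_2(u)=A_2$, charge the $A_2\times A_1$ table $|A_1|\cdot|A_2|=\Ot(n^{3/4}\cdot n^{1/2})=\Ot(n^{5/4})$ for space, and the shortest-path computations from the $\Ot(n^{1/2})$ sources of $A_2$ at $O(m)$ each, i.e.\ $\Ot(mn^{1/2})$, as the dominant time term. Your extra remark that $B_2(u)=A_2$ must be charged once (with only $p_2(u)$ and $h_2(u)$ stored per vertex) rather than per vertex is a legitimate refinement that the paper's own tally implicitly relies on---storing distances from every vertex to all of $A_2$ would cost $\Ot(n^{3/2})$---and it is harmless for correctness since the query uses only $p_2(\cdot)$, the $A_2\times A_1$ table, and the level-$0$/$1$ bunches and clusters.
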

\begin{proof}
    From~\autoref{L-A-center} it follows that $|C(w)|=O(n^{1/4})$, for every $w\in A_1\setminus V$, and also that  $|B_0(u)|=O(n^{1/4})$,  for every $u\in V$.
    From~\autoref{L-TZ-Size} it follows that  $|B_1(u)|=O(n^{1/4})$. 
    Since $B_2(u)=A_2$, $|B_2(u)|=\Ot(n^{2/4})$.
    The cost of saving $p_i(u)$, for every $u\in V$, and  $0\leq i \leq 2$, is $O(n)$. 
    The cost of saving $d(x,y)$ for every $x\in A_2$ and $y\in A_1$, is $O(|A_1||A_2|)=O(n^{3/4} \cdot n^{2/4})=O(n^{5/4})$.
    We conclude that the total space  is $\Ot(n^{5/4})$. 
    
    By~\autoref{L-A-center}  computing $A_1$ takes $\Ot(mn^{1/4})$ time. 
    By~\autoref{L-TZ-Size}   computing $A_2$, $B_1(u)$ and $B_2(u)$ takes $\Ot(mn^{1/4})$ time. 
    Computing $p_i(u)$, for every $u\in V$, and  $0\leq i \leq 2$, takes $O(m)$ time.
    Computing $d(x,y)$ for every $x\in A_2$ and $y\in A_1$ takes $O(m|A_2|)=\Ot(mn^{1/2})$ time.
    Thus, the running time  is $\Ot(mn^{1/2})$. 
\end{proof}

\subsubsection{Query algorithm.}
The input to the query algorithm is two vertices $u,v\in V$.
The output of the query is an estimation $\hat{d}(u,v)$.
The algorithm first sets $\hat{d}(u,v)$ to be the minimum between \\$\Intersection(u,v,C(u), C(v))$, $\Intersection(u,v,B_0(u), B_0(v))$ and $\Intersection(u,v,B_1(u), B_1(v))$.\;\;
 Next, for each $w_1\in B_1(v)$ the algorithm sets  $\hat{d}(u,v)$ to be $\min (\hat{d}(u,v) ,d(u,p_2(u))+d(p_2(u),w_1)+d(w_1,v))$.
Similarly, for each $w_1\in B(u)$ the algorithm sets   $\hat{d}(u,v)$ to be $\min (\hat{d}(u,v),d(u,w_1)+d(w_1, p_2(v))+d(p_2(v),v))$.
A pseudo-code for the query algorithm is given in Algorithm~\ref{A-DO-4}.
\begin{algorithm2e}[t] 
\caption{$\Query(u, v)$}\label{A-DO-4}
$\hat{d}(u,v) \gets \Intersection(u,v,C(u), C(v))$ \\ 
$\hat{d}(u,v) \gets \min(\hat{d}(u,v), \Intersection(u,v,B_0(u), B_0(v))$ \\ 
$\hat{d}(u,v) \gets \min(\hat{d}(u,v), \Intersection(u,v,B_1(u), B_1(v))$ \\
\lForEach{$w_1\in B_1(v)$} {
    $\hat{d}(u,v)\gets \min(\hat{d}(u,v), d(u,p_2(u))+d(p_2(u),w_1)+d(w_1,v))$
}
\lForEach{$w_1\in B_1(u)$} {
    $\hat{d}(u,v)\gets \min(\hat{d}(u,v), d(u,w_1)+d(w_1, p_2(v))+d(p_2(v),v))$
}
\Return $\hat{d}(u,v)$
\end{algorithm2e}
Next, we bound  $\hat{d}(u,v)$. 
\begin{lemma}\label{L-Q-k-is-4-Bound}
    $\hat{d}(u,v) \leq 4d(u,v) + 3\ODD$.
\end{lemma}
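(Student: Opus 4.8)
The plan is to combine the middle vertex technique with the stored distances from $A_2 \times A_1$ in essentially the same case analysis as in the proof of~\autoref{L-Q3-Bound}, but now pushing one level deeper. Let $P = P(u,v)$ be a shortest path and let $\tau = \tau(P,u,v)$ be the middle vertex, so $d(u,\tau) = d(u,v)/2 + 0.5\ODD$ and $d(v,\tau) \le d(u,v)/2 + 0.5\ODD$. First I would dispose of the easy case: if $h_1(\tau) > d(u,v)/2 + 0.5\ODD$ then by~\autoref{L-h-rho-big} we have $\tau \in C(u) \cap C(v)$, so the call $\Intersection(u,v,C(u),C(v))$ returns exactly $d(u,v)$ and we are done. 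So assume $h_1(\tau) \le d(u,v)/2 + 0.5\ODD$.

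Next I would split on whether $p_1(\tau) \in B_1(u) \cap B_1(v)$. If yes, then as in~\autoref{L-Q3-Bound} the call $\Intersection(u,v,B_1(u),B_1(v))$ returns a value bounded by $d(u,p_1(\tau)) + d(p_1(\tau),v)$, which by~\autoref{L-tau-In-B1uB1v} is at most $d(u,v) + 2h_1(\tau) \le 2d(u,v) + 1\ODD$, well within $4d(u,v) + 3\ODD$. The remaining case is $p_1(\tau) \notin B_1(u) \cap B_1(v)$; by~\autoref{L-tau-Not-In-B1uB1v}, without loss of generality $h_2(u) \le d(u,v)/2 + 0.5\ODD + h_1(\tau) \le d(u,v) + 1\ODD$. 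Now, instead of immediately using $u \to p_2(u) \to v$ (which would give stretch $3$ but costs us nothing here since we have more information), I would look at where $B_0(v)$ sits relative to $p_2(u)$. Actually the cleaner route: consider $p_1(v)$. If $p_1(v) \in B_1(v)$ — which always holds — the query loop "\textbf{for each} $w_1 \in B_1(v)$" in particular examines $w_1 = p_1(v)$, so $\hat d(u,v) \le d(u,p_2(u)) + d(p_2(u),p_1(v)) + d(p_1(v),v)$. But we need $d(p_2(u),p_1(v))$ to be stored, which it is since $p_2(u) \in A_2$ and $p_1(v) \in A_1$. Bounding each piece: $d(u,p_2(u)) = h_2(u) \le d(u,v) + 1\ODD$; $d(p_1(v),v) = h_1(v) \le d(u,v)/2 + 0.5\ODD$ (since $h_1(v) \le h_1(\tau) + d(v,\tau) \le h_1(\tau) + d(u,v)/2 + 0.5\ODD$... wait, that gives $d(u,v) + 1\ODD$, not $d(u,v)/2$). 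Let me instead bound $h_1(v)$ directly by noting $h_1(v) \le d(v, p_1(\tau)) \le d(v,\tau) + h_1(\tau) \le (d(u,v)/2 + 0.5\ODD) + (d(u,v)/2 + 0.5\ODD) = d(u,v) + 1\ODD$ — hmm, so $h_1(v) \le d(u,v) + 1\ODD$ as well. Then by the triangle inequality $d(p_2(u), p_1(v)) \le h_2(u) + d(u,v) + h_1(v) \le 3d(u,v) + 2\ODD$, and summing gives $\hat d(u,v) \le (d(u,v)+1\ODD) + (3d(u,v)+2\ODD) + (d(u,v)+1\ODD)$, which is far too large.

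So the bookkeeping needs to be tighter, and I expect this to be the main obstacle: I must avoid triangle-inequality-through-$u$-and-$v$ for the middle term and instead route $p_2(u) \to p_1(v)$ along the path through $\tau$. The right estimate is $d(p_2(u), p_1(v)) \le h_2(u) + d(u,\tau) + d(\tau, p_1(v)) \le h_2(u) + d(u,\tau) + d(\tau,v) + h_1(v)$? That is still $\ge d(u,v)$. The actual fix, I believe, is to use $p_2(u) \to u \to \tau \to p_1(\tau)$ is not right either since $p_1(v) \ne p_1(\tau)$ in general. Instead: the loop examines $w_1 = p_1(\tau)$ itself, giving $\hat d(u,v) \le d(u,p_2(u)) + d(p_2(u), p_1(\tau)) + d(p_1(\tau), v)$, and now $d(p_2(u),p_1(\tau)) \le h_2(u) + d(u,p_1(\tau)) \le h_2(u) + d(u,\tau) + h_1(\tau)$, $d(p_1(\tau),v) \le d(v,\tau) + h_1(\tau)$. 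Summing and using $h_2(u) \le d(u,v)/2+0.5\ODD+h_1(\tau)$, $h_1(\tau) \le d(u,v)/2+0.5\ODD$, $d(u,\tau)+d(v,\tau)=d(u,v)$: we get $\hat d(u,v) \le 2h_2(u) + d(u,v) + 2h_1(\tau) \le 2(d(u,v)+1\ODD) + d(u,v) + 2(d(u,v)/2+0.5\ODD) = 4d(u,v) + 3\ODD$. Wait — but $p_1(\tau)$ need not lie in $B_1(v)$, so the loop over $B_1(v)$ may not examine it; however the loop over $B_1(u)$ examines every $w_1 \in B_1(u)$, and symmetrically if $h_2(v)$ is the small one we use that loop. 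The genuinely delicate point is showing $p_1(\tau) \in B_1(u)$ (or $B_1(v)$) in the case we need it, or otherwise falling back — this is where I would spend the most care, likely by arguing that if $p_1(\tau) \notin B_1(u)$ then $h_2(u)$ is even smaller and the plain bound $\hat d(u,v) \le 2h_2(u) + d(u,v)$ via $u \to p_2(u) \to v$ already suffices. So the final writeup would be: (i) $\tau \in C(u)\cap C(v)$ case; (ii) $p_1(\tau)\in B_1(u)\cap B_1(v)$ case; (iii) otherwise WLOG $p_1(\tau)\notin B_1(u)$, hence $h_2(u)\le d(u,p_1(\tau)) \le d(u,\tau)+h_1(\tau)$, and then either $p_1(v)$-type vertex is in the stored set letting us close via the $B_1(v)$ loop, or we use $u\to p_2(u)\to v$; carefully tracking constants shows each branch gives at most $4d(u,v)+3\ODD$.
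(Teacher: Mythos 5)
Your cases (i), (ii), and the subcase of (iii) where $p_1(\tau)\in B_1(v)$ (so that the loop over $B_1(v)$ examines $w_1=p_1(\tau)$ and gives $\hat d(u,v)\le 2h_2(u)+2h_1(\tau)+d(u,v)\le 4d(u,v)+3\ODD$) coincide with the paper's proof. The genuine gap is in your remaining branch, where $p_1(\tau)\notin B_1(v)$. Your proposed fallback, ``use $u\to p_2(u)\to v$, i.e.\ $\hat d\le 2h_2(u)+d(u,v)$,'' is not available: this oracle stores pairwise distances only for $A_2\times A_1$ (besides the bunches and clusters), and the query algorithm never forms $h_2(u)+d(p_2(u),v)$ — indeed $d(p_2(u),v)$ is not retrievable for a general $v\notin A_1$. (This is exactly why this oracle fits in $\Ot(n^{5/4})$ space, unlike the one in \autoref{S-4-2}.) Moreover, your parenthetical ``then $h_2(u)$ is even smaller'' gives nothing new: $p_1(\tau)\notin B_1(u)$ yields only the same bound $h_2(u)\le d(u,\tau)+h_1(\tau)\le d(u,v)+1\ODD$ you already had.

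What is actually needed in that branch — and what the paper does — is two further ingredients you never invoke. First, from $p_1(\tau)\notin B_1(v)$ one also gets $h_2(v)\le d(v,\tau)+h_1(\tau)\le d(u,v)+1\ODD$. Second, the query's call to $\Intersection(u,v,B_0(u),B_0(v))$ is used: if $B_0(u)\cap B_0(v)\neq\emptyset$ then by \autoref{L-Intersect} the query is exact, and otherwise \autoref{L-NO-Intersect} gives $\min(h_1(u),h_1(v))\le d(u,v)/2+0.5\ODD$. One then uses whichever of the two stored three-hop paths the loops do examine — $u\to p_2(u)\to p_1(v)\to v$ (loop over $B_1(v)$ with $w_1=p_1(v)$), giving $\hat d\le d(u,v)+2h_2(u)+2h_1(v)$, or $u\to p_1(u)\to p_2(v)\to v$ (loop over $B_1(u)$ with $w_1=p_1(u)$), giving $\hat d\le d(u,v)+2h_2(v)+2h_1(u)$ — choosing the one that pairs the half-bounded $h_1$ with an $h_2$ bounded by $d(u,v)+1\ODD$, which yields $d(u,v)+2(d(u,v)+1\ODD)+2(d(u,v)/2+0.5\ODD)=4d(u,v)+3\ODD$. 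Without the $B_0$-intersection step, the best you can say is $h_1(u),h_1(v)\le d(u,v)+1\ODD$ (via $\tau$), and the same paths only give $5d(u,v)+4\ODD$; so your sketch as written does not close this case.
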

\begin{proof}
Let $P$ be a shortest path between $u$ and $v$ and let  $\tau=\tau(u,v,P)$ be the middle vertex.
If $h_1(\tau) > d(u,v)/2+0.5\ODD$ the claim holds since $\tau\in C(u)\cap C(v)$ by~\autoref{L-h-rho-big} and
 the value returned by the call to  $\Intersection(u,v,C(u),C(v))$ is $d(u,v)$. 
Therefore, for the rest of the proof we have $h_1(\tau) \leq  d(u,v)/2+0.5\ODD$. 

We divide the rest of the proof into the case that $p_1(\tau) \in B_1(u)\cap B_1(v)$ and the case that $p_1(\tau) \notin B_1(u)\cap B_1(v)$. 

If $p_1(\tau) \in B_1(u)\cap B_1(v)$ then the call to $\Intersection(u,v,B_1(u),B_1(v))$ ensures that $\hat{d}(u,v)\leq d(u,p_1(\tau))+d(p_1(\tau),v)$. 
From~\autoref{L-tau-In-B1uB1v} it follows that $d(u,p_1(\tau))+d(p_1(\tau),v)\leq d(u,v)+2h_1(\tau)$. 
Since $h_1(\tau) \leq d(u,v)/2+0.5\ODD$ we get that $\hat{d}(u,v)\leq d(u,v)+2h_1(\tau)\le d(u,v) + 2(d(u,v)/2+0.5\ODD)=2d(u,v)+1\ODD$, as required.

Consider now the case that $p_1(\tau) \notin B_1(u)\cap B_1(v)$. 
Combining~\autoref{L-tau-Not-In-B1uB1v} with the fact that $h_1(\tau) \leq  d(u,v)/2+0.5\ODD$ we get that either $h_2(u) \leq d(u,v)/2+0.5\ODD+h_1(\tau)\leq 2(d(u,v)/2+0.5\ODD)=d(u,v)+1\ODD$ or $h_2(v) \leq d(u,v)/2+0.5\ODD+h_1(\tau) \leq 2(d(u,v)/2+0.5\ODD)=d(u,v)+1\ODD$. 
Assume, wlog, that 
$h_2(u) \leq d(u,v)+1\ODD$.

From~\autoref{L-Intersect} it follows that if $B_0(u)\cap B_0(v) \neq \emptyset$ then $P\cap B_0(u)\cap B_0(v) \neq \emptyset$. In such a case the call to $\Intersection(u,v,B_0(u),B_0(v))$ returns  $d(u,v)$ and the claim holds.
Otherwise, $B_0(u)\cap B_0(v) =\emptyset$ and by~\autoref{L-NO-Intersect} we have $\min(h_1(u), h_1(v)) \le d(u,v)/2+0.5\ODD$. Therefore, for the rest of the proof we have $\min(h_1(u), h_1(v)) \le d(u,v)/2+0.5\ODD$.

We divide the rest of the proof into two cases. The case that  $p_1(\tau)\in B_1(v)$ and the case that $p_1(\tau)\notin B_1(v)$.\footnote{Notice that we assumed that $h_2(u) \leq 2(d(u,v)/2+0.5\ODD)$. In case that $h_2(v) \leq d(u,v)+1\ODD$
the two cases are $p_1(\tau)\in B_1(u)$ and  $p_1(\tau)\notin B_1(u)$.}
Consider first the case that $p_1(\tau)\in B_1(v)$. The query algorithm encounters  $p_1(\tau)$ while iterating over $B_1(v)$.  Thus, $\hat{d}(u,v) \leq h_2(u) + d(p_2(u), p_1(\tau)) + d(p_1(\tau), v)$.
(See the dotted path in~\autoref{F-mid-vertex}(b).)

From the triangle inequality we have that $d(p_2(u), p_1(\tau)) \leq h_2(u) + d(u,\tau) + h_1(\tau)$ and 
$d(p_1(\tau), v) \leq h_1(\tau) + d(\tau,v)$. 

Therefore, we get: 
\begin{align*} 
\hat{d}(u,v)    &\leq  h_2(u) + d(p_2(u), p_1(\tau)) + d(p_1(\tau), v)\leq  h_2(u) +(h_2(u) + d(u,\tau) + h_1(\tau)) + (h_1(\tau) + d(\tau,v)) \\ 
                &= 2h_2(u) + 2h_1(\tau) +d(u,\tau) + d(\tau,v)= d(u,v) + 2h_2(u) + 2h_1(\tau).
\end{align*}

The last equality follows from the fact that  $\tau\in P$ and $d(u,\tau)+d(\tau,v)=d(u,v)$.
Since we are in the case that $h_1(\tau) \leq d(u,v)/2+0.5\ODD$ and the case that $h_2(u) \leq d(u,v)+1\ODD$, it follows that $\hat{d}(u,v) \leq d(u,v) + 2h_2(u) + 2h_1(\tau) \leq 4d(u,v) + 3\ODD$, as required.

Consider now the case that $p_1(\tau)\notin B_1(v)$. 
This implies that $h_2(v) \leq d(v,p_1(\tau))$. From the triangle inequality we have that $d(v,p_1(\tau)) \leq d(v,\tau) + h_1(\tau)$. By definition of $\tau$  we have $d(v,\tau)\leq d(u,v)/2+0.5\ODD$. Recall also that we are in the case that $h_1(\tau) \leq d(u,v)/2+0.5\ODD$. 

Therefore, we get: 
\[
h_2(v)  \leq  d(v,p_1(\tau))
        \leq  d(v,\tau) + h_1(\tau)
        \leq  d(u,v)/2+0.5\ODD+d(u,v)/2+0.5\ODD=d(u,v) + 1\ODD
\]

As $p_1(u)\in B_1(u)$, the query algorithm encounters $p_1(u)$ while iterating over $B_1(u)$. Thus, $\hat{d}(u,v) \leq h_1(u) + d(p_1(u), p_2(v)) + h_2(v)$. (See the dotted path in~\autoref{F-mid-vertex}(c).)
From the triangle inequality we have that $d(p_1(u),p_2(v)) \leq h_1(u) + d(u,v) + h_2(v)$. Thus, we get: \[\hat{d}(u,v) \leq h_1(u) + d(p_1(u), p_2(v)) + h_2(v) \leq h_1(u) +( h_1(u) + d(u,v) + h_2(v)) + h_2(v) = d(u,v)+2h_2(v)+2h_1(u)\]

As $p_1(v)\in B_1(v)$ using a symmetrical argument for $p_1(v)$ we get  $\hat{d}(u,v) \leq  d(u,v)+2h_2(u)+2h_1(v)$. 

Since we are in the case that $\min(h_1(u), h_1(v)) \le d(u,v)/2+0.5\ODD$ then either 
$h_1(u) \leq d(u,v)/2+0.5\ODD$ or $h_1(v) \leq d(u,v)/2+0.5\ODD$.  
In the case that $h_1(u) \leq d(u,v)/2+0.5\ODD$  
we use the inequality $\hat{d}(u,v)\leq d(u,v)+2h_2(v)+2h_1(u)$. 
We also have shown above that $h_2(v) \leq d(u,v)+1\ODD$. Thus, we   
get $\hat{d}(u,v) \leq d(u,v) + 2(d(u,v)+1\ODD)+2d(u,v)/2+0.5\ODD=4d(u,v)+3\ODD$, as required. 

In the case that $h_1(v) \leq d(u,v)/2+0.5\ODD$  
 we use the inequality $\hat{d}(u,v)\leq d(u,v)+2h_2(u)+2h_1(v)$. 
Since we are under the assumption that $h_2(u) \leq d(u,v)+1\ODD$, we   
get that $\hat{d}(u,v) \leq 4d(u,v)+3\ODD$, as required. 

\end{proof}

\begin{lemma}\label{L-Q4-Time}
 The query algorithm takes $O(n^{1/4})$ time.
\end{lemma}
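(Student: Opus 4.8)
The plan is to bound the running time of each instruction of Algorithm~\ref{A-DO-4} in turn and sum the bounds. First I would dispatch the three calls to $\Intersection$. By~\autoref{Intersection-Runtime}, a call $\Intersection(\cdot,\cdot,U,W)$ runs in $O(\min(|U|,|W|))$ time, so it suffices to bound the relevant set sizes. Because $A_1$ is constructed via~\autoref{L-A-center} with parameter $n^{-1/4}$, we have $|C(w)|=O(n^{1/4})$ for every $w\in V\setminus A_1$ and $|B_0(u)|=O(n^{1/4})$ for every $u\in V$; and because $A_2$ is constructed via~\autoref{L-TZ-Size} with $k=4$, we have $|B_1(u)|=O(n^{1/4})$ for every $u\in V$. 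Hence each of $\Intersection(u,v,C(u),C(v))$, $\Intersection(u,v,B_0(u),B_0(v))$ and $\Intersection(u,v,B_1(u),B_1(v))$ costs $O(n^{1/4})$. (These are the same size bounds already used in the proof of~\autoref{L-Distance-Oracle-6-Space-Size}.)

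Next I would bound the two \texttt{forEach} loops. The first iterates over $B_1(v)$ and the second over $B_1(u)$, each of size $O(n^{1/4})$, so each loop performs $O(n^{1/4})$ iterations. It remains to check that a single iteration costs $O(1)$, i.e. that every distance it reads is stored by the oracle. In the first loop the expression is $d(u,p_2(u))+d(p_2(u),w_1)+d(w_1,v)$: here $d(u,p_2(u))=h_2(u)$ is available together with $p_2(u)$; $d(w_1,v)$ is retrievable in $O(1)$ time since $w_1\in B_1(v)$; and $d(p_2(u),w_1)$ is retrievable in $O(1)$ time since $p_2(u)\in A_2$, $w_1\in B_1(v)\subseteq A_1$, and the oracle stores $d(x,y)$ for every $\langle x,y\rangle\in A_2\times A_1$. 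The second loop is symmetric. Thus each loop costs $O(n^{1/4})$.

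Summing the $O(n^{1/4})$ costs of the three $\Intersection$ calls and of the two loops, together with the $O(1)$ cost of computing the minimum and of the final \textbf{return}, gives a total query time of $O(n^{1/4})$, as claimed. There is no real obstacle in this argument; the only subtlety worth spelling out is the bookkeeping that each distance accessed inside the loops is one of the values the storage section commits to keeping — in particular that $w_1\in B_1(v)$ forces $w_1\in A_1$, so that $d(p_2(u),w_1)$ indeed falls in the stored set $A_2\times A_1$.
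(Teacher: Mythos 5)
Your proof is correct and follows essentially the same route as the paper's: bound each $\Intersection$ call via \autoref{Intersection-Runtime} with the $O(n^{1/4})$ size bounds on $C(\cdot)$, $B_0(\cdot)$, $B_1(\cdot)$, and charge $O(1)$ per iteration of the two loops over $B_1(v)$ and $B_1(u)$. The only difference is that you spell out the storage bookkeeping (that $w_1\in B_1(v)\subseteq A_1$ and $p_2(u)\in A_2$, so every distance read is stored), which the paper leaves implicit.
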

\begin{proof}
    By~\autoref{Intersection-Runtime} the runtime of $\Intersection(u,v,C(u),C(v))$  is $O(\min(|C(u)|,|C(v)|))$ 
    and the runtime of $\Intersection(u,v,B_1(u),B_1(v))$ is $O(\min(|B_1(u)|,|B_1(v)|))$. 
    Since $\min(|C(u)|,|C(v)|)=O(n^{1/4})$ and $\min(|B_1(u)|,|B_1(v)|)=O(n^{1/4})$, we get that the cost of these two steps is $O(n^{1/4})$. 
    Computing $d(u,p_2(u))+d(p_2(u),w_1)+d(w_1,v)$ for every $w_1\in B_1(u)$ takes $O(|B_1(u)|)=O(n^{1/4})$ time.
    Computing $d(u,w_1)+d(w_1, p_2(v))+d(p_2(v),v)$ for every $w_1\in B_1(u)$ takes $O(|B_1(u)|)=O(n^{1/4})$ time.
    We conclude that the running time is $O(n^{1/4})$.
\end{proof}
Theorem~\ref{T-DO-4-Unweighted} follows from~\autoref{L-Distance-Oracle-6-Space-Size},~\autoref{L-Q-k-is-4-Bound} and~\autoref{L-Q4-Time}.

\subsection{$(3,2+2\ODD)$-approximation with \texorpdfstring{$n^{5/4}$}{n\^(5/4)}-space and \texorpdfstring{$n^{1/2}$}{n\^(1/2)}-query}\label{S-3,2+2ODD-4}
In this section, we use the distance oracle presented in Section~\ref{S-DO-4-Unweighted} with additional query time. We obtain:

\Reminder{T-DO-3-2-Unweighted}

\subsubsection{Storage.}
The storage is the same as in Section~\ref{S-DO-4-Unweighted}.
We construct a hierarchy $A_0,A_1,A_2,A_3$, where $A_0=V$, $A_3=\emptyset$. 
The set $A_1$ is constructed using Lemma~\ref{L-A-center}, with parameter $n^{-1/4}$. 
The set $A_2$ is constructed using Lemma~\ref{L-TZ-Size} with $k=4$. 
We save $B_i(u)$ and $p_i(u)$, for every $u\in V$ and $0\leq i \leq 2$. We also save $C(w)$, for every $w\in V\setminus A_1$.
In addition, for every $x\in A_2$ and $y\in A_1$ we store $d(x,y)$.
The distance oracle is constructed in  $\Ot(mn^{1/2})$-time  and  uses $\Ot(n^{5/4})$-space, as proven in Lemma~\ref{L-Distance-Oracle-6-Space-Size}.

\subsubsection{Query algorithm.}
The input to the query algorithm is two vertices $u,v\in V$.
The output of the query is an estimation $\hat{d}(u,v)$.
The algorithm first sets $\hat{d}(u,v)$ to be the minimum between 
% $\Intersection(u,v,\bigcup_{w\in B_0(u)}{B_0(w)}, \bigcup_{w\in B_0(v)}{B_0(w)})$, 
$\Intersection(u,v,\bigcup_{w\in B_0(u)}$ \\${C(w)}, \bigcup_{w\in B_0(v)}{C(w)})$ and 
$\Intersection(u,v,\bigcup_{w\in B_0(u)}{B_1(w)}, \bigcup_{w\in B_0(v)}{B_1(w)})$.

Next, for every vertex $w\in B_0(u)$ the algorithm sets $\hat{d}(u,v)$ to be $\min(\hat{d}(u,v), d(u,w) + d(w,p_2(w)) + d(p_2(w),p_1(v)) + d(p_1(v),v))$ and for every vertex $w\in B_0(v)$ the algorithm sets $\hat{d}(u,v)$ to be $\min(\hat{d}(u,p_1(u)) + d(p_1(u),p_2(w)) + d(p_2(w),w) + d(w,v))$.
Finally, the algorithm returns $\hat{d}(u,v)$ as the estimated distance.
A pseudo-code for the query algorithm is given in Algorithm~\ref{A-Query-5/4-1/2-3}.
Next, we bound  $\hat{d}(u,v)$. 
\begin{lemma} \label{L-Bound-4-3}
    $\hat{d}(u,v) \le 3d(u,v) + 2 + 2\ODD$. 
\end{lemma}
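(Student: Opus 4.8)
The plan is to extend the case analysis of the $(4,3\ODD)$ oracle (\autoref{L-Q-k-is-4-Bound}), which uses exactly the same storage, by exploiting that the enlarged query sets $\bigcup_{w\in B_0(u)}C(w)$ and $\bigcup_{w\in B_0(u)}B_1(w)$ let us replace a query vertex by a vertex one hop further along $P$ before invoking its stored cluster, bunch and pivots. This extra hop is what shaves a factor of $d(u,v)$ off the stretch, and the unweighted slack it introduces is the source of the additive $+2$. Write $\delta=d(u,v)$, let $P=P(u,v)$, let $\tau$ be the middle vertex of $P$, and let $u'$ (resp.\ $v'$) be the farthest vertex of $P\cap B_0(u)$ (resp.\ $P\cap B_0(v)$), so that $d(u,u')=h_1(u)-1$ and $d(v,v')=h_1(v)-1$ in an unweighted graph; the degenerate cases $u\in A_1$ or $v\in A_1$, where $B_0(\cdot)$ is empty, I would dispatch first, since there the $(3,2\ODD)$-style argument of \autoref{L-Q3-Bound} applies directly. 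The two structural facts I would record at the outset are: (i) since $u,u'\in B_0(u)$ (and likewise for $v$), every estimate used in \autoref{L-Q-k-is-4-Bound} — in particular those obtained from $C(u),C(u'),B_1(u),B_1(u')$ — is still available through the enlarged sets; and (ii) for every $w\in B_0(u)$ lying on $P$ the query inspects the path $u\leadsto w\leadsto p_2(w)\leadsto p_1(v)\leadsto v$, whose length is at most $\delta+2h_2(w)+2h_1(v)$ by two triangle inequalities, and symmetrically $\delta+2h_1(u)+2h_2(w)$ for $w\in B_0(v)$ on $P$ (this is the ``two pivots close to $P$'' route).

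Then I would split into cases. First, whenever some vertex $x$ of $P$ lies simultaneously in $\bigcup_{w\in B_0(u)}C(w)$ and $\bigcup_{w\in B_0(v)}C(w)$ with the two recorded estimates to $x$ equal to the true distances — which happens when $\tau\in C(u)\cap C(v)$ by \autoref{L-h-rho-big}, and more generally when the middle vertex of the shortened path $P(u',v')$ lies in $C(u')\cap C(v')$ (using $u',v'\in P$ so that the recorded route $u\to u'\to x$ has length $d(u,x)$) — \autoref{P-Intersection-With-P-Returns-d(u,v)} forces the first $\Intersection$ call to return $\delta$. Second, if the pivot of that middle vertex lies in $\bigl(\bigcup_{w\in B_0(u)}B_1(w)\bigr)\cap\bigl(\bigcup_{w\in B_0(v)}B_1(w)\bigr)$, the second $\Intersection$ call returns at most $\delta$ plus twice that pivot's height; bounding the height by \autoref{L-h-rho-big} (or by $d(u',v')/2+\text{parity}$ on the shortened path) gives an estimate of at most $2\delta+O(1)\le 3d(u,v)+2+2\ODD$. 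In the remaining case neither intersection helps, so $B_0(u)\cap B_0(v)=\emptyset$ (otherwise \autoref{L-Intersect} lands us in the first case), and hence \autoref{L-NO-Intersect} gives $h_1(u)+h_1(v)\le\delta+1$, so that wlog $h_1(v)\le\delta/2+0.5\ODD$; meanwhile the failure of the bunch intersection, fed into the unweighted borderline lemma (\autoref{L-bound-h_{c+1}(v)-tau_u-+2} with $c=1$, applied at $u'$ together with the appropriate middle vertex), gives $h_2(u')\le d(u',v')+O(1)=\delta-h_1(u)-h_1(v)+O(1)$. Plugging both into $\hat d(u,v)\le\delta+2h_2(u')+2h_1(v)$ yields $\hat d(u,v)\le 3\delta-2h_1(u)+O(1)$, and using $h_1(u)\ge 1$ together with a careful tracking of the parity offsets collapses the additive constant to exactly $2+2\ODD$.

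The main obstacle is precisely this last, tight accounting. The bound on $h_2(u')$ picks up parity slack twice — once from locating the middle vertex of $P(u',v')$ and once from the unweighted borderline lemma — and $d(u',v')=\delta-h_1(u)-h_1(v)+2$ carries parities of $h_1(u)$ and $h_1(v)$ that interact with $\ODD=\delta\bmod 2$; showing that all of these offsets sum to the stated $2+2\ODD$, rather than to $3+2\ODD$, is where the real work lies, and it is exactly the place where this oracle pays an extra $+2$ relative to the pure middle-vertex oracle of \autoref{L-Q-k-is-4-Bound}. I would also need to handle separately the sub-cases in which $u'$ or $v'$ falls on the ``wrong side'' of the relevant middle vertex, or in which $h_1(u)$ or $h_1(v)$ is $0$ or $1$, falling back in those cases to the plain direct paths or to the estimates already established for the $(4,3\ODD)$ oracle, and verify that the symmetric choice of route (through $B_0(v)$ rather than $B_0(u)$) is taken whenever it is $h_1(u)$ that is the small one.
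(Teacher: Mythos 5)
Your overall skeleton matches the paper's proof: the paper also takes $u'$, $v'$ to be the farthest vertices of $P\cap B_0(u)$, $P\cap B_0(v)$, works with the middle vertex $\tau'=\tau(u',v',P(u',v'))$ of the shortened path, dispatches the cases $\tau'\in C(u')\cap C(v')$ and $p_1(\tau')\in B_1(u')\cap B_1(v')$ through the two $\Intersection$ calls exactly as you sketch, and closes with the route $u\to u'\to p_2(u')\to p_1(v)\to v$, where the $+2h_1(v)$ is cancelled by the $-h_1(v)$ hidden in $d(u',v')=d(u,v)-d(u,u')-h_1(v)+1$ — a cancellation your arithmetic also exhibits. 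The genuine gap is in how you bound $h_2(u')$ in the residual case. You invoke the borderline lemma (\autoref{L-bound-h_{c+1}(v)-tau_u-+2}), but its hypothesis concerns the pivot of a \emph{borderline} vertex of a cluster, and that hypothesis is not implied by the failure of the bunch-union intersection; worse, the complementary sub-case of the borderline technique (pivot of the borderline vertex lying in the other bunch) is resolved in Sections 4 and 4.5 by iterating a cluster and calling $\MTZQuery$, which this oracle's query never does, so that branch cannot be closed here at all. What failing your Case (2) actually hands you is precisely the hypothesis of the middle-vertex lemma \autoref{L-tau-Not-In-B1uB1v} applied to the pair $(u',v')$: $p_1(\tau')\notin B_1(u')\cap B_1(v')$, which together with $h_1(\tau')\le\lceil d(u',v')/2\rceil$ (available because Case (1) failed) yields $h_2(u')\le 2\lceil d(u',v')/2\rceil$ or $h_2(v')\le 2\lceil d(u',v')/2\rceil$. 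That is the paper's step; the borderline detour is both unjustified and unnecessary.

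Two further points. First, your WLOG bookkeeping is keyed to the wrong quantity: the disjunction above decides whether you control $h_2(u')$ or $h_2(v')$, and the route must be chosen to match it (the loop over $B_0(u)$ in the first case, over $B_0(v)$ in the second). Which of $h_1(u),h_1(v)$ is small is irrelevant — the $h_1$ term cancels as noted — so the appeal to \autoref{L-NO-Intersect} and the assumption ``wlog $h_1(v)\le d(u,v)/2+0.5\ODD$'' buy nothing, and if they are what selects the route you may end up pairing the $B_0(v)$-route with a bound on $h_2(u')$ you cannot use. Second, the tight additive accounting that you explicitly defer is the only delicate content of the lemma: the paper substitutes $d(u',v')=d(u,v)-d(u,u')-h_1(v)+1$ into $h_2(u')\le 2\lceil d(u',v')/2\rceil$, drops $d(u,u')\ge 0$, and manipulates the ceiling to reach $d(u,v)+4\lceil d(u,v)/2\rceil+2=3d(u,v)+2+2\ODD$; your plan to recover the constant from ``$h_1(u)\ge 1$ plus parity tracking'' is exactly the unfinished part, so as written the proposal establishes only $3d(u,v)+O(1)$ in the residual case rather than the stated bound.
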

\begin{proof}
    Let $P$ be a shortest path from $u$ to $v$. Let $u'\in B_0(u)$ be the farthest vertex in $P$ from $u$, and let $v'\in B_0(v)$ be the farthest vertex in $P$ from $v$. Let $\tau'=\tau(u',v',P(u',v'))$ be the middle vertex between $u'$ and $v'$ in $P(u',v')$.

    We divide the proof into two cases. The case that $h_1(\tau') > \lceil d(u', v')/2 \rceil$ and the case that $h_1(\tau') \le \lceil d(u', v')/2 \rceil$.
    Consider the case that $h_1(\tau') > \lceil d(u', v')/2 \rceil$. From Lemma~\ref{L-h-rho-big} we know that $\tau'\in C(u')\cap C(v')$.
    Since $u'\in B_0(u)$ and $v'\in B_0(v)$ the value returned by  $\Intersection(u,v,\bigcup_{w\in B_0(u)}{C_0(w)},\bigcup_{w\in B_0(v)}{C_0(w)})$ is at most $d(u,u') + d(u',\tau') + d(v',\tau') + d(v',v)$. 
    Since $B_0(u)\cap B_0(v) =\emptyset$ 
    when traveling on $P$ from $u$ to $v$, we first encounter $u'$ and then $v'$. Since $\tau'$ is the middle vertex between $u'$ and $v'$ in 
    $P(u',v')$ we encounter $\tau'$ when traveling on $P(u',v')$ from $u'$ to $v'$. Thus, $d(u,u') + d(u',\tau') + d(v',\tau') + d(v',v)=d(u,v)$ and the claim holds.
    
    Consider now the case that $h_1(\tau') \le \lceil d(u', v')/2 \rceil$.
    We divide the rest of the proof into two cases. The case that $p_1(\tau')\in B_1(u') \cap B_1(v')$ and the case that $p_1(\tau')\notin B_1(u') \cap B_1(v')$.
    Consider the case that $p_1(\tau')\in B_1(u') \cap B_1(v')$. Since $B_1(u') \subseteq \bigcup_{w\in B_0(u)}{B_1(w)}$ and $B_1(v') \subseteq \bigcup_{w\in B_0(v)}{B_1(w)}$ it follows that $p_1(\tau')\in \big(\bigcup_{w\in B_0(u)}{B_1(w)} \big) \cap \big( \bigcup_{w\in B_0(v)}{B_1(w)}\big)$, and the call $\Intersection(u,v,\bigcup_{w\in B_0(u)}{B_1(w)}, \bigcup_{w\in B_0(v)}{B_1(w)})$ returns a value bounded by $d(u,u') + d(u',p_1(\tau')) + d(v', p_1(\tau')) + d(v',v)$.
    From Lemma~\ref{L-tau-In-B1uB1v} it follows that $d(u',p_1(\tau'))+d(p_1(\tau'),v')\leq d(u',v')+2h_1(\tau')$. 
    Since $h_1(\tau') \leq \lceil d(u',v')/2\rceil$ we get that $\hat{d}(u,v) \le d(u,u') + d(u',v') + d(v',v) + 2h_1(\tau') \le d(u,u') + d(u',v') + d(v',v) + 2\lceil d(u',v')/2 \rceil$.

    As before since $B_0(u)\cap B_0(v) =\emptyset$ we have $d(u,v) = d(u,u') + d(u',v') + d(v',v)$  and $\hat{d}(u,v) \le d(u,v) + 2\lceil d(u',v')/2 \rceil \le d(u,v) + 2\lceil d(u,v)/2 \rceil$, 
    as required.

    Consider now the case that $p_1(\tau')\notin B_1(u') \cap B_1(v')$. It follows from Lemma~\ref{L-tau-Not-In-B1uB1v}
    that either $h_2(u') \leq \lceil d(u',v')/2\rceil+h_1(\tau')$ or $h_2(v') \leq \lceil d(u',v')/2\rceil+h_1(\tau')$.
    Recall that we are in the case that $h_1(\tau') \le \lceil d(u',v')/2 \rceil$ and thus we have that either $h_2(u') \leq 2\lceil d(u',v')/2\rceil)$ or $h_2(v') \leq 2\lceil d(u',v')/2\rceil$.
    Assume, wlog, that $h_2(u') \leq 2\lceil d(u',v')/2\rceil$.

    The algorithm iterates over $B_0(u)$ and encounters $u'$. Thus, $\hat{d}(u, v) \le d(u, u') + d(u', p_2(u')) + \\ d(p_2(u'), p_1(v)) + d(p_1(v), v)$.  
    From the triangle inequality, we know that $d(p_2(u'),p_1(v)) \le h_2(u') + d(u',v) + h_1(v)$. Thus:
    \begin{align*}
        \hat{d}(u,v) &\le d(u,u') + d(u',p_2(u')) + d(p_2(u'),p_1(v)) + d(p_1(v),v) \\
                    &\le d(u,u') + h_2(u') + (h_2(u') + d(u',v) + h_1(v)) + h_1(v) \\
                    &\le d(u,u') + d(u',v) + 2(h_2(u') + h_1(v)) = d(u,v) + 2(h_2(u') + h_1(v))
    \end{align*}
    The last equality follows since $u' \in P$. Next, we bound the term $2(h_2(u') + h_1(v))$. 
    Since $h_2(u') \le 2\lceil d(u',v')/2 \rceil$ it follows that $\hat{d}(u,v) \le d(u,v) + 2(2\lceil d(u',v')/2 \rceil + h_1(v))$.
    Since when going from $u$ to $v$
    on $P$ we first encounter $u'$ and then $v'$ we have that $d(u,v) = d(u,u') + d(u',v') + d(v',v)$. Thus, $d(u',v') = d(u,v) - d(u,u') - d(v',v)$.
    Since the graph is unweighted, we have that $d(v,v')=h_1(v)-1$. Thus, $d(u',v') = d(u,v) - d(u,u') - h_1(v) + 1$. 
    We get:
    \begin{align*}
    \hat{d}(u,v) &\le d(u,v) + 2(h_2(u') + h_1(v)) \le d(u,v) + 2(2\lceil d(u',v')/2 \rceil + h_1(v)) \\
                 &\le d(u,v) + 2(2\lceil (d(u,v) - d(u,u') - h_1(v) + 1)/2 \rceil + h_1(v)) \\
                 &\le d(u,v) + 2(2\lceil d(u,v)/2 - 0/2 - h_1(v)/2 + 1/2\rceil + h_1(v)) \\
                 &\le d(u,v) + 2(2\lceil d(u,v)/2\rceil - h_1(v) + 1 + h_1(v)) = d(u,v) + 4\lceil d(u,v)/2\rceil + 2 = 3d(u,v)+2+2\ODD,
    \end{align*}
    as required.
\end{proof}
\begin{algorithm2e}[t] 
\caption{$\Query(u,v)$} \label{A-Query-5/4-1/2-3}
    $\hat{d}(u,v) \gets \Intersection(u,v,\bigcup_{w\in B_0(u)}{C_0(w)}, \bigcup_{w\in B_0(v)}{C_0(w)})$ \\
    $\hat{d}(u,v) \gets \min(\hat{d}(u,v), \Intersection(u,v,\bigcup_{w\in B_0(u)}{B_1(w)}, \bigcup_{w\in B_0(v)}{B_1(w)}))$ \\
    \ForEach{$w\in B_0(u)$} {
        $\hat{d}(u,v) \gets \min(\hat{d}(u,v), d(u,w) + d(w,p_2(w)) + d(p_2(w),p_1(v)) + d(p_1(v),v))$
    }
    \ForEach{$w\in B_0(v)$} {
        $\hat{d}(u,v) \gets \min(\hat{d}(u,v), d(u,p_1(u)) + d(p_1(u),p_2(w)) + d(p_2(w),w) + d(w,v))$
    }
    \Return $\hat{d}(u,v)$
\end{algorithm2e}

\begin{lemma}\label{L-Q4-3-Time}
 The query algorithm takes $O(n^{1/2})$ time.
\end{lemma}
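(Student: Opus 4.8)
The plan is to bound the running time of \texttt{Query} by accounting for each operation separately, exactly as in the proofs of~\autoref{L-Q3-Time} and~\autoref{L-Q4-Time}, but now taking into account that the query sets are unions of clusters or bunches over $B_0(u)$ (resp. $B_0(v)$). First I would recall the relevant size bounds: by~\autoref{L-A-center} (applied with parameter $n^{-1/4}$) we have $|B_0(u)| = O(n^{1/4})$ and $|C(w)| = O(n^{1/4})$ for every $w \in V \setminus A_1$, and by~\autoref{L-TZ-Size} we have $|B_1(w)| = O(n^{1/4})$ for every $w$. Hence $|\bigcup_{w \in B_0(u)} C(w)| \le \sum_{w \in B_0(u)} |C(w)| = O(n^{1/4} \cdot n^{1/4}) = O(n^{1/2})$, and likewise $|\bigcup_{w \in B_0(u)} B_1(w)| = O(n^{1/2})$. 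The same bounds hold for the $v$-side sets.

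Next I would walk through the lines of Algorithm~\ref{A-Query-5/4-1/2-3}. Computing the two sets $\bigcup_{w \in B_0(u)} C(w)$ and $\bigcup_{w \in B_0(v)} C(w)$ takes $O(n^{1/2})$ time (iterate over $B_0(u)$, and for each $w$ retrieve the stored $C(w)$), and by~\autoref{Intersection-Runtime} the call $\Intersection(u,v,\bigcup_{w\in B_0(u)}{C(w)}, \bigcup_{w\in B_0(v)}{C(w)})$ runs in $O(\min(|\bigcup C(w)|, |\bigcup C(w)|)) = O(n^{1/2})$ time; similarly for the call on the $B_1$-unions. The two \texttt{foreach} loops each iterate $O(|B_0(u)|) = O(n^{1/4})$ (resp. $O(|B_0(v)|)$) times, and each iteration performs a constant number of table lookups (the stored distances $d(x,y)$ for $\langle x,y\rangle \in A_2 \times A_1$, the stored $p_i$ values, and $O(1)$-time cluster/bunch membership queries), hence costs $O(1)$ per iteration and $O(n^{1/4})$ in total. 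Summing all contributions gives $O(n^{1/2})$, as required.

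One subtlety worth stating explicitly: when $\Intersection$ is invoked on a union such as $\bigcup_{w \in B_0(u)} C(w)$, the associated distance estimates $d'(u,x)$ for $x$ in this set are not stored directly; rather, for $x \in C(w)$ with $w \in B_0(u)$ we use $d'(u,x) = d(u,w) + d(w,x)$, which can be computed in $O(1)$ time per element since $d(u,w)$ is retrievable (as $w \in B_0(u)$) and $d(w,x)$ is retrievable (as $x \in C(w)$). Thus forming the set together with its distance labels still costs only $O(n^{1/2})$, and $\Intersection$ then runs within the claimed bound. I expect no genuine obstacle here — the argument is a routine accounting exercise — the only mildly delicate point is confirming that every quantity referenced inside the loops ($p_2(w)$ for $w \in B_0(u)$, and the distances $d(p_2(w), p_1(v))$ with $p_2(w) \in A_2$ and $p_1(v) \in A_1$) is indeed available in $O(1)$ time from the storage described in Section~\ref{S-3,2+2ODD-4}, which it is.
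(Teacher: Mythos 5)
Your accounting matches the paper's own proof: bound each union by $|B_0|\cdot\max(|C(\cdot)|,|B_1(\cdot)|)=O(n^{1/2})$, apply \autoref{Intersection-Runtime} to the two $\Intersection$ calls, and charge $O(1)$ per iteration of the two loops over $B_0(u)$ and $B_0(v)$, giving $O(n^{1/2})$ overall. Your extra remark about how the distance labels $d'(u,x)=d(u,w)+d(w,x)$ are assembled in $O(1)$ per element is a correct detail the paper leaves implicit, but the approach is essentially identical.
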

\begin{proof}
    By Lemma~\ref{Intersection-Runtime} the runtime of $\Intersection(u,v,\bigcup_{w\in B_0(u)}{C_0(w)}, \bigcup_{w\in B_0(v)}{C_0(w)})$ is $\\$ $O(\min(|\bigcup_{w\in B_0(u)}{C_0(w)}|, |\bigcup_{w\in B_0(v)}{C_0(w)}|))$ and the runtime of $\Intersection(u,v,\bigcup_{w\in B_0(u)}{B_1(w)},$\\$ \bigcup_{w\in B_0(v)}{B_1(w)}))$ is $O(\min(|\bigcup_{w\in B_0(u)}{B_1(w)}|,$ $|\bigcup_{w\in B_0(v)}{B_1(w)}|))$. 
    Since $|B_0(u)|=O(n^{1/4})$, $|C_0(w)|=O(n^{1/4})$ and $|B_1(w)|=O(n^{1/4})$ it follows that 
    $O(\min(|\bigcup_{w\in B_0(u)}{C_0(w)}|,|\bigcup_{w\in B_0(v)}{C_0(w)}|))=$\\$O(\min(|\bigcup_{w\in B_0(u)}{B_1(w)}|,|\bigcup_{w\in B_0(v)}{B_1(w)}|)) = O(n^{2\cdot 1/4}=O(n^{1/2})$, and 
    the cost of these two steps is $O(n^{1/2})$.
    
    Computing $d(u,w) + d(w,p_2(w)) + d(p_2(w),p_1(v)) + d(p_1(v),v)$ for every $w\in B_0(u)$ takes $O(|B_0(u)|)=O(n^{1/4})$ time.
    Computing $d(u,p_1(u)) + d(p_1(u),p_2(w)) + d(p_2(w),w) + d(w,v)$ for every $w\in B_0(v)$ takes $O(|B_0(v)|)=O(n^{1/4})$ time.
    We conclude that the running time is $O(n^{2/4})$.
\end{proof}

Theorem~\ref{T-DO-3-2-Unweighted} follows from Lemma~\ref{L-Distance-Oracle-6-Space-Size}, Lemma~\ref{L-Bound-4-3} and Lemma~\ref{L-Q4-3-Time}

\subsection{$(2k-5,4+2\ODD)$-approximation in unweighted graphs}\label{S-2k-5+6-unweighted}
In this section, we use both the \textit{middle vertex technique} and the \textit{borderline vertices technique} to achieve the following results:

\Reminder{T-DO-2k-5-unweighted}
By applying $k=5$, we get the following distance oracle.
\begin{corollary} \label{C-DO-5-unweighted}
    There is an $\Ot(n^{6/5})$-space distance oracle that given two query vertices $u,v\in V$ computes in  $O(n^{1/5})$-time 
    a distance estimation $\hat{d}(u,v)$ that satisfies
    $d(u,v)\le \hat{d}(u,v) \le 5d(u,v) + 4+2\ODD$. The distance oracle is constructed in $\Ot(mn^{\frac{3}{5}})$ expected time. 
\end{corollary}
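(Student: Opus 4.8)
The theorem combines the two techniques developed earlier: the \emph{middle vertex technique} (to bound $\min(h_2(u),h_2(v))$) and the \emph{borderline vertices technique} (to bound the $h$-values further up the hierarchy). I would set up a hierarchy $A_0,A_1,A_2,A_3,\dots,A_k$ with $A_0=V$, $A_k=\emptyset$, where $A_1$ is built via~\autoref{L-A-center} with parameter $n^{-1/k}$, $A_2,\dots,A_{k-1}$ via~\autoref{L-TZ-Size}. The storage consists of $B_i(u)$ and $p_i(u)$ for every $u$ and $0\le i\le k-1$, the clusters $C(w)$ for $w\in V\setminus A_1$, and the pairwise distances $d(x,y)$ for $\langle x,y\rangle\in A_{k-3}\times A_2$ (analogous to the $A_{k-c-2}\times A_{c+1}$ storage in~\autoref{T-DO-2k-1-4c-Weighted}, here with $c=2$, but using $\tau'$ and the middle vertex to shave additive constants instead of multiplicative ones). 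The space bound $\Ot(n^{1+1/k})$ and construction time $\Ot(mn^{3/k})$ follow exactly as in~\autoref{L-Distance-Oracle-6-Space-Size} and~\autoref{L-Space-2k-5-Weighted}: the dominant construction cost is computing $d(x,y)$ for $x\in A_{k-3}$, i.e.\ $\Ot(m|A_{k-3}|)=\Ot(mn^{3/k})$, and the dominant space term is $|A_{k-3}|\cdot|A_2| = \Ot(n^{3/k}\cdot n^{1-2/k}) = \Ot(n^{1+1/k})$.

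\textbf{Query algorithm and approximation.} The query would be a hybrid of Algorithm~\ref{Algorithm-Q3}/Algorithm~\ref{A-DO-4} at the low end and Algorithm~\ref{Algorithm-Distance-Oracle-2k-4c} at the high end: first set $\hat d(u,v)$ to $\Intersection$ over $C(u),C(v)$ and over $B_0(u),B_0(v)$ and $B_1(u),B_1(v)$ (as in~\autoref{T-DO-4-Unweighted}); then iterate over $w\in B_1(u)$ (resp.\ $B_1(v)$) combining $p_2$ of one endpoint with $B_1$-vertices of the other; then, analogous to the weighted oracle, iterate over $C(u)$ and $C(v)$ to reach the borderline vertices $\tau'_u,\tau'_v$, taking $d(u,u')+\MTZQuery(u',v)$; finally use the two paths through $p_{k-3}(\cdot)$ and $p_2(\cdot)$. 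The analysis: by the middle vertex technique (Lemmas~\ref{L-h-rho-big}--\ref{L-tau-Not-In-B1uB1v}), either the $\Intersection$ calls already return $d(u,v)$, or a short path is found, or wlog $h_2(u)\le d(u,v)+1\ODD$. Then by the \emph{unweighted} borderline technique (\autoref{L-bound-h_{c+1}(v)-tau_u-+2} with $c=2$), since $\tau'_v\in C(v,A_2)$ is encountered in the iteration over $C(v)$, either a short path is found or $h_3(v)\le d(u,v)+2$. Now propagate via~\autoref{P-ADO.Query-Correctness} from level $3$ up to level $k-3$: either a short path is found or $\min(h_{k-3}(u),h_{k-3}(v))\le \max(h_2(u),h_3(v))+(k-3-3)d(u,v)\le (k-5)d(u,v)+2+1\ODD$ roughly. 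The final path $u\rightarrow p_{k-3}(u)\rightarrow p_2(v)\rightarrow v$ gives $\hat d(u,v)\le 2\min(h_{k-3}(\cdot),\cdot)+d(u,v)+2\max(h_2(u),h_3(v))$, and plugging in the bounds yields $\hat d(u,v)\le (2k-5)d(u,v)+4+2\ODD$ after carefully tracking the additive constants (the $+2$ from $h_3(v)$ and the $+1\ODD$ from $h_2(u)$ each get doubled once, giving $+4+2\ODD$; the multiplicative part is $2(k-5)+2+1=2k-7$... so the bookkeeping must be arranged so the ``middle'' levels between $2$ and $k-3$ contribute exactly $2k-5$, which is where the choice $A_{k-3}\times A_2$ and the split of the two long paths matters).

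\textbf{Query time.} Each $\Intersection$ call costs $O(\min(|\cdot|,|\cdot|)) = O(n^{1/k})$ since $|C(u)|,|B_0(u)|,|B_1(u)|=O(n^{1/k})$; iterating over $B_1(u)$ or $C(u)$ and doing $O(1)$ or $O(k)$ work per vertex costs $O(kn^{1/k})=\Ot(n^{1/k})$; the final $p_{k-3}/p_2$ paths are $O(1)$. So the total is $\Ot(n^{1/k})$, and~\autoref{C-DO-5-unweighted} follows by setting $k=5$ (then $k-3=2$, so the ``propagation'' range is empty and one just uses $h_2(u)\le d(u,v)+1\ODD$ and $h_2(v)\le d(u,v)+2$ directly in the final path, giving $\hat d\le 5d(u,v)+4+2\ODD$).

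\textbf{Main obstacle.} The delicate part is the additive-constant bookkeeping in the inductive propagation from level $2$/$3$ to level $k-3$: the unweighted borderline lemma only gives $h_3(v)\le d(u,v)+2$ (not $+1\ODD$ or exact), and each application of~\autoref{P-ADO.Query-Correctness} that yields a ``short path'' branch must be checked to actually be realizable by a path the oracle stores (i.e.\ the relevant $\MTZQuery$ or stored distance is available), and in the other branch one must verify the additive error does \emph{not} accumulate with $i$ — it stays $+4+2\ODD$ rather than growing like $O(k)$ — because the $+2$ and $+1\ODD$ terms enter only once, through $h_3(v)$ and $h_2(u)$ respectively, and are then doubled exactly once by the final triangle-inequality step rather than at every level. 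Getting the constant to be precisely $2k-5$ (and not $2k-4$ or $2k-6$) requires choosing the two long paths to be $u\rightarrow p_{k-3}(u)\rightarrow p_2(v)\rightarrow v$ and its mirror, so that the two endpoints sit at levels $2$ and $k-3$ whose ``gap'' is $k-5$, contributing $2(k-5)d(u,v)$, plus $d(u,v)$ from the middle hop, plus $2\cdot(\text{level-}2\text{ bound})=2d(u,v)$ — but only when the level-$2$ bound is $\le d(u,v)$, which is exactly what the middle-vertex case analysis guarantees up to the $+1\ODD$ slack.
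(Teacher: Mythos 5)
Your overall route is the paper's route: the same hierarchy, the same storage (bunches, the clusters $C(\cdot)$ with respect to $A_1$, and the pairwise distances on $A_{k-3}\times A_2$), a query that combines the middle-vertex intersections with an iteration over the clusters using $\MTZQuery$, and a final path through $p_{k-3}(\cdot)$ and $p_2(\cdot)$; at $k=5$ your closing computation $\hat d\le 2h_2(u)+d(u,v)+2h_2(v)\le 2(d(u,v)+1\ODD)+d(u,v)+2(d(u,v)+2)=5d(u,v)+4+2\ODD$ is exactly the paper's bound. The extra query steps you add (the $B_0$-intersection and the $B_1\times p_2$ loops from the $(4,3\ODD)$ oracle) are harmless but not needed.

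There is, however, a genuine inconsistency in the middle of your argument: you invoke the unweighted borderline lemma (\autoref{L-bound-h_{c+1}(v)-tau_u-+2}) with $c=2$, claiming that $\tau'_v\in C(v,A_2)$ "is encountered in the iteration over $C(v)$" and concluding $h_3(v)\le d(u,v)+2$. But the clusters your oracle stores and your query enumerates are $C(\cdot,A_1)$, of size $O(n^{1/k})$; the borderline vertex with respect to $A_2$ is the farthest path vertex in $C(v,A_2)\supseteq C(v,A_1)$, and it need not lie in $C(v,A_1)$, so your iteration does not reach it. Repairing this by iterating over $C(v,A_2)$ would blow the query time to $\Theta(n^{2/k})$ (and the space, if those clusters are stored), contradicting your own $\Ot(n^{1/k})$ query analysis. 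The fix is what the paper does and what your $k=5$ sentence implicitly assumes: apply the borderline lemma at $c=1$, so that $\tau'_u\in C(u,A_1)$ is encountered, and the case analysis on $p_1(\tau'_u)\in B_1(v)$ gives either $\hat d\le 3d(u,v)$ or $h_2(v)\le d(u,v)+2$. With that level-$2$ bound (together with $h_2(u)\le d(u,v)+1\ODD$ from the middle vertex), the propagation via \autoref{P-ADO.Query-Correctness} from level $2$ to level $k-3$ and the final stored $A_{k-3}\times A_2$ hop give $(2k-5)d(u,v)+4+2\ODD$, and setting $k=5$ (where the propagation is vacuous) yields the corollary.
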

\subsubsection{Storage.}
We construct a hierarchy $A_0,A_1,\dots,A_k$, where $A_0=V$, $A_{k}=\emptyset$.
The set $A_1$ is constructed using Lemma~\ref{L-A-center}, with parameter $n^{-1/k}$. 
For every $1 < i < k$, the set $A_i$ is constructed using Lemma~\ref{L-TZ-Size}. 
Our distance oracle stores $B_i(u)$ and $p_i(u)$, for every $u\in V$ and $c \le i \leq k-1$. We also save $d(u,v)$ for every $\pair{u,v} \in A_{k-3} \times A_{2}$. We also save $C(w)$, for every $w\in V\setminus A_1$.
\begin{lemma}\label{L-Space-2k-5-UnWeighted}
    The distance oracle is constructed in $O(mn^{\frac{3}{k}})$-time and uses $\Ot(n^{1+1/k})$-space.
\end{lemma}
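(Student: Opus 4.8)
The plan is to mirror the space and construction-time accounting carried out in~\autoref{L-Space-2k-5-Weighted} and~\autoref{L-Distance-Oracle-6-Space-Size}, adjusting only the relevant set sizes to the present hierarchy $A_0=V,A_1,\dots,A_{k}=\emptyset$. First I would record all the size bounds I need. By~\autoref{L-A-center} applied with parameter $n^{-1/k}$ we get $|A_1|=\Ot(n^{1-1/k})$, $|B_0(u)|=O(n^{1/k})$ for every $u\in V$, and $|C(w)|=O(n^{1/k})$ for every $w\in V\setminus A_1$. By~\autoref{L-TZ-Size} the remaining levels satisfy $|A_i|=O(n^{1-i/k})$, and $|B_i(u)|=O(n^{1/k})$ w.h.p.\ for every $0\le i\le k-1$; in particular, since $A_k=\emptyset$ we have $B_{k-1}(u)=A_{k-1}$, which has size $O(n^{1/k})$ as well.

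Given these bounds the space tally is routine: storing $B_i(u)$ and $p_i(u)$ for all $u\in V$ and all $0\le i\le k-1$ costs $\Ot(n\cdot n^{1/k})=\Ot(n^{1+1/k})$; storing $C(w)$ for all $w\in V\setminus A_1$ costs $O(n\cdot n^{1/k})=O(n^{1+1/k})$; and storing $d(u,v)$ for every $\pair{u,v}\in A_{k-3}\times A_2$ costs $|A_{k-3}|\cdot|A_2|=O(n^{3/k})\cdot O(n^{1-2/k})=O(n^{1+1/k})$. Summing the three contributions gives the claimed $\Ot(n^{1+1/k})$ total space.

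For the construction time, computing $A_1$ via~\autoref{L-A-center} with parameter $n^{-1/k}$ takes $\Ot(mn^{1/k})$ expected time; computing $A_2,\dots,A_{k-1}$ together with all bunches $B_i(u)$ and all clusters $C(w)$ via~\autoref{L-TZ-Size} takes $\Ot(mn^{1/k})$ expected time; computing the pivots $p_i(u)$ for all $u$ and $0\le i\le k-1$ takes $O(km)$ time. The dominant contribution is the table $d(u,v)$ for $\pair{u,v}\in A_{k-3}\times A_2$: running a BFS from each vertex of the \emph{smaller} endpoint set costs $O(m\cdot\min(|A_{k-3}|,|A_2|))=O(m\cdot|A_{k-3}|)=O(mn^{3/k})$. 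Adding everything up, the total expected construction time is $O(mn^{3/k})$, as required.

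I expect the whole lemma to be essentially bookkeeping, with no conceptual obstacle. The one point that needs a moment's care is identifying which side of the pairwise distance table drives the cost: for $k\ge5$ one has $|A_{k-3}|=O(n^{3/k})\le O(n^{1-2/k})=|A_2|$, so the BFS sweep should be run from $A_{k-3}$ (not from $A_2$), which is what yields the $O(mn^{3/k})$ bound rather than a weaker $O(mn^{1-2/k})$ one; and one must likewise check $|A_{k-3}|\cdot|A_2|=O(n^{1+1/k})$ so that this table does not blow up the space.
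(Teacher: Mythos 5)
Your proposal is correct and follows essentially the same accounting as the paper's own proof: the same size bounds from \autoref{L-A-center} and \autoref{L-TZ-Size}, the same $|A_{k-3}|\cdot|A_2|=\Ot(n^{1+1/k})$ bound for the pairwise table, and the same $O(m\cdot|A_{k-3}|)=O(mn^{3/k})$ cost for computing it from the smaller side. If anything, you are slightly more complete, since you also explicitly charge the $C(w)$ storage, which the paper's proof leaves implicit.
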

\begin{proof}
    From Lemma~\ref{L-TZ-Size} it follows that  $|B_i(u)|=O(n^{1/k})$, for every $1 < i< k-1$. From Lemma~\ref{L-A-center} it follows that $|A_1|=\Ot(n^{1-1/k})$, thus $|A_{k-1}|=|A_1|\cdot \left(n^{-1/k}\right)^{k-2}=\Ot(n^{1-1/k}\cdot n^{-(k-2)/k})=\Ot(n^{1/k})$.
    Since $A_{k}=\emptyset$, for every $u\in V$, we have that $B_{k-1}(u)=A_{k-1}$ and $|B_{k-1}(u)|=\Ot(n^{1/k})$.
    The cost of saving $p_i(u)$, for every $u\in V$, and $c\leq i \leq k-1$, is $O(n)$. 
    The cost of saving $d(u,v)$ for every $\pair{u,v}\in A_{k-3}\times A_{2}$ is $|A_{k-3}| \cdot |A_{2}|=\Ot(n^{3/k})\cdot \Ot(n^{1-2/k})=\Ot(n^{1+1/k})$.
    Therefore, we conclude that the total space is $\Ot(n^{1+1/k})$. 
    
    From Lemma~\ref{L-A-center} it follows that computing $A_1$ takes $\Ot(mn^{1/k})$ time. 
    From Lemma~\ref{L-TZ-Size} it follows that computing $A_i$ and $B_i(u)$ for every $1 < i\leq k-1$ takes $\Ot(mn^{1/k})$ time. 
    Computing $p_i(u)$, for every $u\in V$, and  $0\leq i \leq k-1$, takes $O(km)$ time.
    Computing $d(u,v)$ for every $\pair{u,v}\in A_{k-3}\times A_{2}$ takes $O(m\cdot |A_{k-c-2}|)=O(mn^{\frac{3}{k}})$ time. 
    Therefore, we conclude that the total construction time is $O(mn^{\frac{3}{k}})$. 
\end{proof}

\subsubsection{Query algorithm.}
The input to the query algorithm is two vertices $u,v\in V$.
The query output is an estimation $\hat{d}(u,v)$.
The algorithm first sets $\hat{d}(u,v)$ to be the minimum between \\$\Intersection(u,v,C(u), C(v))$ and $\Intersection(u,v,B_1(u), B_1(v))$.\;\;
For every $u'\in C(u)$ 
we set $\hat{d}(u,v)$ to $\min(\hat{d}(u,v), d(u,u')+\MTZQuery(u',v))$.
Similarly, for every $v'\in C(v,A_{c})$ 
we set $\hat{d}(u,v)$ to $\min(\hat{d}(u,v), d(v,v')+\MTZQuery(v',u))$.
Next, we set $\hat{d}(u,v)$ to  $\min(\hat{d}(u,v), h_{2}(u)+d(p_{2}(u),p_{k-3}(v))+h_{k-3}(v), h_{k-3}(u)+d(p_{k-3}(u),p_{2}(v))+h_{2}(v))$.
Finally, the algorithm returns $\hat{d}(u,v)$ as its estimation.
A pseudo-code for the query algorithm is given in Algorithm~\ref{Algorithm-Distance-Oracle-2k-5-Unweighted}.
\begin{algorithm2e}[t] 
\caption{$\Query(u, v)$} \label{Algorithm-Distance-Oracle-2k-5-Unweighted}
$\hat{d}(u,v) \gets \Intersection(u,v,C(u), C(v))$ \\ 
% $\hat{d}(u,v) \gets \min(\hat{d}(u,v), \Intersection(u,v,B_0(u), B_0(v))$ \\ 
$\hat{d}(u,v) \gets \min(\hat{d}(u,v), \Intersection(u,v,B_1(u), B_1(v))$ \\
\lForEach{$u'\in C(u)$} {
    $\hat{d}(u,v)\gets \min(\hat{d}(u,v), d(u,u')+\MTZQuery(u',v))$
}
\lForEach{$v'\in C(v)$} {
    $\hat{d}(u,v)\gets \min(\hat{d}(u,v), d(v,v')+\MTZQuery(v',u))$
}
$\hat{d}(u,v) \gets \min(\hat{d}(u,v), h_{2}(u)+d(p_{2}(u),p_{k-3}(v))+h_{k-3}(v), h_{k-3}(u)+d(p_{k-3}(u),p_{2}(v))+h_{2}(v))$\\
\Return $\hat{d}(u,v)$
\end{algorithm2e}
Next, we bound $\hat{d}(u,v)$. 
\begin{lemma}\label{L-Approximation-2k-5-UnWeighted}
    $\hat{d}(u,v) \le (2k-5)d(u,v)+4+2\ODD$
\end{lemma}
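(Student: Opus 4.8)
The plan is to combine the middle vertex technique (applied to a nearby pair of vertices, as in the $(3,2+2\ODD)$ oracle) with the borderline vertices technique (as in the $(2k-1-4c)$ oracle with $c=1$, in its unweighted $+2$ variant), and then chain the resulting $h_2$ bounds through $\MTZQuery$. Concretely, let $P=P(u,v)$ and recall that $\tau'_u$ and $\tau'_v$ are the farthest vertices from $u$ and $v$ in $C(u,A_1)\cap P$ and $C(v,A_1)\cap P$ respectively. The first block of the query, via $\Intersection(u,v,C(u),C(v))$ and $\Intersection(u,v,B_1(u),B_1(v))$, handles exactly the cases in which the middle vertex $\tau$ of $P$ (or of a shortened subpath) falls into $C(u)\cap C(v)$ or $p_1(\tau)\in B_1(u)\cap B_1(v)$: in those cases $\hat d(u,v)$ is already at most $d(u,v)$ or $2d(u,v)+1\ODD$, so the bound holds trivially. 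So the real work is the remaining case.

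First I would establish that in the remaining case we get good bounds on $h_2(u)$ and $h_2(v)$. On one hand, iterating $u'$ over $C(u)$ we hit $u'=\tau'_u$; then either $p_1(\tau'_u)\in B_1(v)$, in which case $\MTZQuery(\tau'_u,v)\le 2h_1(\tau'_u)+d(\tau'_u,v)\le 3d(u,v)+O(1)$ by Lemma~\ref{L-bound-C-h} (after checking, as in Claim~\ref{C-2k-1-4c-H1}, that $h_1(\tau'_u)\le d(\tau'_u,v)$ up to the $+2$ slack), giving a short path directly; or $p_1(\tau'_u)\notin B_1(v)$, in which case Lemma~\ref{L-bound-h_{c+1}(v)-tau_u-+2} gives $h_2(v)\le d(u,v)+2$. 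Symmetrically (iterating $v'$ over $C(v)$) either a short path is found or $h_2(u)\le d(u,v)+2$. So either the lemma already holds, or $\max(h_2(u),h_2(v))\le d(u,v)+2$.

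Next I would push these through $\MTZQuery$ exactly as in Lemma~\ref{L-Approximation-2k-4c-1-Weighted} with $c=1$: applying Property~\ref{P-ADO.Query-Correctness} with $i=k-3$ and $j=2$, either a short estimate $\hat d(u,v)\le 2\min(h_2(u),h_2(v))+(k-5)d(u,v)\le (2k-1)d(u,v)+O(1)$ is found (need to double-check this is within $(2k-5)d(u,v)+4+2\ODD$ — it is not in general, so I must instead be careful and use the middle-vertex improvement here, see below), or $\min(h_{k-3}(u),h_{k-3}(v))\le \max(h_2(u),h_2(v))+(k-5)d(u,v)\le (k-3)d(u,v)+2$. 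Then the last line of the query gives $\hat d(u,v)\le d(u,v)+2\min(h_{k-3}(u),h_{k-3}(v))+2\max(h_2(u),h_2(v))\le d(u,v)+2((k-3)d(u,v)+2)+2(d(u,v)+2)=(2k-3)d(u,v)+8$, which is \emph{too weak} by $2d(u,v)$ and by additive $4$. The key point — and the main obstacle — is that to save these $2d(u,v)$, one of $h_2(u)$ or $h_2(v)$ must be bounded by roughly $d(u,v)/2$ rather than $d(u,v)$, which is precisely what the middle vertex technique (Lemmas~\ref{L-h-rho-big}--\ref{L-tau-Not-In-B1uB1v}) delivers: in the case $p_1(\tau)\notin B_1(u)\cap B_1(v)$ we get $\min(h_2(u),h_2(v))\le d(u,v)/2+0.5\ODD+h_1(\tau)\le d(u,v)+1\ODD$ — still not $d(u,v)/2$. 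So I expect the correct argument to split on whether $h_1(\tau)\le d(u,v)/2$ finer, or to apply the middle-vertex argument to the \emph{shortened} path between $\tau'_u$ and $\tau'_v$ (as in Lemma~\ref{L-Bound-4-3}), so that one endpoint's $h_2$ is bounded by $\lceil d(\tau'_u,\tau'_v)/2\rceil+h_1(\cdot)$ and then use the borderline bound on the other endpoint; carefully accounting the $d(u,\tau'_u)$, $d(\tau'_v,v)$ detour terms (which are charged against $h_1(u)-1$, $h_1(v)-1$ via $C(\cdot)$ membership and unweightedness) is where the constants $-5$ and $4+2\ODD$ come from. I would therefore structure the proof as: (i) middle-vertex cases handled by the two $\Intersection$ calls; (ii) in the remaining case, bound $h_2$ of one of $u,v$ by $\le d(u,v)/2+O(1)$ using the middle vertex of the shortened path, and $h_2$ of the other by $\le d(u,v)+2$ using the borderline vertex and the $C(\cdot)$-iteration; (iii) feed these asymmetric bounds into Property~\ref{P-ADO.Query-Correctness} to bound $\min(h_{k-3}(u),h_{k-3}(v))$, and conclude via the last line of the query, tracking additive constants to land at $(2k-5)d(u,v)+4+2\ODD$. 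The hard part will be the bookkeeping in step (ii)/(iii): making sure the detour terms $d(u,\tau'_u)$ and $d(v,\tau'_v)$ and the $\lceil\cdot/2\rceil$ roundings combine to exactly $+4+2\ODD$ and not more, which will likely require the same kind of parity-sensitive estimate $d(v,v')=h_1(v)-1$ used in the proof of Lemma~\ref{L-Bound-4-3}.
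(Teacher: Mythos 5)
Your overall skeleton matches the paper's proof: the two $\Intersection$ calls dispose of the cases $\tau\in C(u)\cap C(v)$ and $p_1(\tau)\in B_1(u)\cap B_1(v)$; the iteration over $C(u)$ together with the borderline vertex $\tau'_u$ (via Claim-style arguments and \autoref{L-bound-h_{c+1}(v)-tau_u-+2}) gives $h_2(v)\le d(u,v)+2$ or a $3d(u,v)$ path; and Property~\ref{P-ADO.Query-Correctness} with $i=k-3$, $j=2$ plus the last line of the query finishes. However, there is a genuine gap in the middle of your argument, and it sends you down an unnecessary and unverified detour. Your computation ``$\min(h_{k-3}(u),h_{k-3}(v))\le \max(h_2(u),h_2(v))+(k-5)d(u,v)\le (k-3)d(u,v)+2$'' contains an arithmetic slip ($(d(u,v)+2)+(k-5)d(u,v)=(k-4)d(u,v)+2$, not $(k-3)d(u,v)+2$), and the chain should in any case be seeded with $\min(h_2(u),h_2(v))$, which the middle vertex technique (\autoref{L-tau-Not-In-B1uB1v} together with $h_1(\tau)\le d(u,v)/2+0.5\ODD$) bounds by $d(u,v)+1\ODD$ --- no bound of order $d(u,v)/2$ is needed anywhere. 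With the correct accounting one gets $\min(h_{k-3}(u),h_{k-3}(v))\le (k-4)d(u,v)+1\ODD$ (or a short estimate), and then
\[
\hat{d}(u,v)\le d(u,v)+2\min(h_{k-3}(u),h_{k-3}(v))+2\max(h_2(u),h_2(v))\le d(u,v)+2\bigl((k-4)d(u,v)+1\ODD\bigr)+2\bigl(d(u,v)+2\bigr),
\]
which is exactly $(2k-5)d(u,v)+4+2\ODD$. So the ``main obstacle'' you identify --- that the straightforward combination only yields $(2k-3)d(u,v)+8$ and that one endpoint's $h_2$ must be pushed down to roughly $d(u,v)/2$ --- is an artifact of that miscalculation, not a real barrier.

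Because of this, the repair you sketch (applying the middle vertex to the shortened subpath between $\tau'_u$ and $\tau'_v$, with parity bookkeeping as in \autoref{L-Bound-4-3}) is both unnecessary and, as you yourself note, not carried out: you never verify that the detour terms $d(u,\tau'_u)$, $d(v,\tau'_v)$ and the roundings land at $4+2\ODD$, so the proposal as written does not establish the lemma. The paper's proof instead applies the middle vertex directly to $P(u,v)$ to get (w.l.o.g.) $h_2(u)\le d(u,v)/2+0.5\ODD+h_1(\tau)\le d(u,v)+1\ODD$, uses the borderline vertex $\tau'_u$ only for the other endpoint ($h_2(v)\le d(u,v)+2$ or $\hat d(u,v)\le 3d(u,v)$), and then chains through Property~\ref{P-ADO.Query-Correctness} exactly as you intended in step (iii). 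If you redo your step (ii)/(iii) with these two asymmetric bounds and seed the chain with the $d(u,v)+1\ODD$ one, your argument closes without any shortened-path machinery.
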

\begin{proof}
    Let $P$ be a shortest path between $u$ and $v$. Recall that $\tau=\tau(u,v,P)$, is the middle vertex between $u$ and $v$ in $P$, and let $\tau_u(P)$ and $\tau_v(P)$ be the borderline vertices.
    If $h_1(\tau) > d(u,v)/2+0.5\ODD$  then it follows from Lemma~\ref{L-h-rho-big} that $\tau\in C(u)\cap C(v)$.
    From Property~\ref{P-Intersection-With-P-Returns-d(u,v)} we get that $\Intersection(u,v,C(u),C(v))=d(u,v)$ and the claim holds.

    Otherwise, $h_1(\tau) \leq  d(u,v)/2+0.5\ODD$. 
    We divide the rest of the proof into two cases. 
    The case that $p_1(\tau) \in B_1(u)\cap B_1(v)$ and the case that $p_1(\tau) \notin B_1(u)\cap B_1(v)$.

    If $p_1(\tau) \in B_1(u)\cap B_1(v)$ then $\Intersection(u,v,B_1(u),B_1(v))$ returns a value bounded by $d(u,p_1(\tau))+d(p_1(\tau),v)$. 
    From Lemma~\ref{L-tau-In-B1uB1v} it follows that $d(u,p_1(\tau))+d(p_1(\tau),v)\leq d(u,v)+2h_1(\tau)$. 
    Since $h_1(\tau) \leq d(u,v)/2+0.5\ODD$ we get that $d(u,v)+2h_1(\tau)\le d(u,v) + 2(d(u,v)/2+0.5\ODD)=2d(u,v)+1\ODD$, 
    and $\hat{d}(u,v)\leq d(u,v) + 2(d(u,v)/2+0.5\ODD)=2d(u,v)+1\ODD$, as required.

    Otherwise, if $p_1(\tau) \notin B_1(u)\cap B_1(v)$. It follows from Lemma~\ref{L-tau-Not-In-B1uB1v}
    that either $h_2(u) \leq d(u,v)/2+0.5\ODD+h_1(\tau)$ or $h_2(v) \leq d(u,v)/2+0.5\ODD+h_1(\tau)$. Wlog, we assume that $h_2(u) \leq  d(u,v)/2+0.5\ODD+h_1(\tau)$.

    Next, we bound $h_{2}(v)$ using the following two claims:
    \begin{claim}\label{C-2k-1-4c-H1-UW}
        If $h_1(\tau'_u) > d(\tau'_u,v)$ then $\hat{d}(u,v) = d(u,v)$
    \end{claim}
    \begin{proof}
        If $h_1(\tau'_u) > d(\tau'_u, v)$ then we know that $v\in B(\tau'_u)$ and therefore $\tau'_u\in C(v)$. By definition, $\tau'_u\in C(u)$. Therefore, we get that $\tau'_u \in C(v) \cap C(u)$ and from Lemma~\ref{P-Intersection-With-P-Returns-d(u,v)} we get that\\ $\hat{d}(u,v) \le \Intersection(u,v,C^*(u,A_c), C^*(v,A_c)) = d(u,v)$, as required.
    \end{proof}
    
    \begin{claim}\label{C-2k-1-4c-H-UW}
        Either $h_{2}(v) \le d(u,v)+2$ or $\hat{d}(u,v)\leq 3d(u,v)$.
    \end{claim}
    \begin{proof}
    We divide the proof into two cases. The case that $p_c(\tau'_u)\in B_{1}(v)$ and the case that $p_c(\tau'_u)\not\in B_{1}(v)$.
    Consider the case that $p_1(\tau'_u)\in B_{1}(v)$. 
    In the query algorithm, we encounter $\tau'_u$ because
    $\tau'_u\in C(u)$. 
    Since $p_1(\tau'_u) \in B_1(v)$, we have that $\MTZQuery(\tau'_u,v) \le 2h_1(\tau'_u) + d(\tau'_u,v)$. From Claim~\ref{C-2k-1-4c-H1} it follows that either $h_1(\tau'_u) \le d(\tau'_u,v)$ or $\hat{d}(u,v)\le d(u,v)$ and the claim holds. Thus, we know that $h_1(\tau'_u) \le d(\tau'_u,v)$.
    Since $h_1(\tau'_u) \le d(\tau'_u,v)$
    we get that $\MTZQuery(\tau'_u,v) \le 2h_1(\tau'_u) + d(\tau'_u,v) \le 3d(\tau'_u,v) \le 3d(u,v)$, where the last inequality follows from the fact that $\tau'_u\in P$.
    
    Consider now the case that $p_1(\tau'_u)\not\in B_{c}(v)$. From Lemma~\ref{L-bound-h_{c+1}(v)-tau_u-+2} it follows that $h_{2}(v) \le d(u,v)+2$.
    \end{proof}

    From Claim~\ref{C-2k-1-4c-H-UW} it follows that either $h_{2}(v) \le d(u,v)+2$ or $\hat{d}(u,v)\le 3d(u,v)$. Therefore, it follows that $h_{2}(v) \le d(u,v)+2$ or the lemma holds. Thus, we can assume that $h_{2}(v) \le d(u,v)+2$. Recall that we assume that $h_2(u) \le d(u,v)+1\ODD$.
    Next, we show that either the lemma holds or that $\min(h_{k-3}(u), h_{k-3}(v)) \le (2k-4)d(u,v)$.
    By applying Property~\ref{P-ADO.Query-Correctness} with $i=k-3$ and $j=2$, with the fact that $\min(h_{2}(u),h_{2}(v))\le h_{2}(u) \le d(u,v)+1\ODD$ we get that either 
    \[
    \min(h_{k-3}(u), h_{k-3}(v)) \le \min(h_{2}(u), h_{2}(v))+(k-5)d(u,v) \le (k-5)d(u,v)+d(u,v)+1\ODD=(k-4)d(u,v)+1\ODD \]
    or
    \[
        \hat{d}(u,v) \le 2(k-6)d(u,v)+2\min(h_{2}(u), h_{2}(v)) \le 2(k-6)d(u,v) + d(u,v)+1\ODD \le 2(k-5)d(u,v)+1\ODD.
    \]
    Therefore, we get that either $\min(h_{k-3}(u), h_{k-3}(v)) \leq (k-2-2c)d(u,v)$ or the lemma holds. Thus, we can assume that $\min(h_{k-3}(u), h_{k-3}(v)) \leq (k-5)d(u,v)+d(u,v)+1\ODD=(k-4)d(u,v)+1\ODD$.

    In the query, we have that $\hat{d}(u,v) \le \min(h_{2}(u)+d(p_{2}(u),p_{k-3}(v))+h_{k-3}(v), h_{k-3}(u)+d(p_{k-3}(u),p_{2}(v))+h_{2}(v))$. From the triangle inequality, it follows that 
    $\hat{d}(u,v) \le d(u,v) + 2\min(h_{k-3}(u)+h_{2}(v), h_{2}(u) + h_{k-3}(v))$.
    Thus, we get:
    \begin{align*}
    \hat{d}(u,v) &\le \min(h_{2}(u)+d(p_{2}(u),p_{k-3}(v))+h_{k-3}(v), h_{k-3}(u)+d(p_{k-3}(u),p_{2}(v))+h_{2}(v))\\
    &\le d(u,v) + 2\min(h_{k-3}(u)+h_{2}(v), h_{2}(u) + h_{k-3}(v)) \\
    &\le d(u,v) + 2\min(h_{k-3}(u),h_{k-3}(v)) + 2\max(h_{2}(u),h_{2}(v)) \\
    &\le d(u,v) +2((k-5)d(u,v)+d(u,v)+1\ODD) + 2(d(u,v)+2)=(2k-5)d(u,v)+4+2\ODD,
    \end{align*}
    where the last inequality follows from the fact that we assume that $\min(h_{k-3}(u), h_{k-3}(v)) \leq (k-5)d(u,v)+d(u,v)+1\ODD$ and $\max(h_{2}(u),h_{2}(v))\le d(u,v)+2$.
\end{proof}

\begin{lemma}\label{L-Q2k-5-Time}
 The query algorithm takes $\Ot(n^{1/k})$ time.
\end{lemma}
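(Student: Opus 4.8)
The plan is to bound the running time of each of the query algorithm's steps, in the same spirit as the proofs of \autoref{L-Q3-Time}, \autoref{L-Q4-Time}, \autoref{L-Q4-3-Time} and \autoref{L-2k-4c-1-weighted-Q-Time}. First I would record the set sizes we rely on: by \autoref{L-A-center} (applied with parameter $n^{-1/k}$) we have $|C(w)| = O(n^{1/k})$ for every $w \in V \setminus A_1$ and $|B_0(u)| = O(n^{1/k})$ for every $u \in V$; by \autoref{L-TZ-Size} we have $|B_i(u)| = O(n^{1/k})$ for every $1 < i < k-1$; and as shown in \autoref{L-Space-2k-5-UnWeighted}, $|B_{k-1}(u)| = |A_{k-1}| = \Ot(n^{1/k})$. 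In particular $|C(u)| = O(n^{1/k})$.

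Then I would walk through Algorithm~\ref{Algorithm-Distance-Oracle-2k-5-Unweighted} line by line. The call $\Intersection(u,v,C(u),C(v))$ costs $O(\min(|C(u)|,|C(v)|)) = O(n^{1/k})$ by \autoref{Intersection-Runtime}, and likewise $\Intersection(u,v,B_1(u),B_1(v))$ costs $O(n^{1/k})$. For the loop over $u' \in C(u)$: each iteration computes $\MTZQuery(u',v)$, which by \autoref{L-THZ05-Q-Correctness} takes $O(k)$ time, so the whole loop is $O(k \cdot |C(u)|) = O(k n^{1/k})$; the loop over $v' \in C(v)$ is symmetric and also $O(k n^{1/k})$. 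The final line, which takes a minimum over two explicit three-term sums involving stored values $h_i(\cdot)$, $p_i(\cdot)$ and precomputed distances $d(p_{k-3}(\cdot), p_2(\cdot))$, costs $O(1)$. Summing up, the query time is $O(k n^{1/k})$, which is $\Ot(n^{1/k})$ as claimed (treating $k$ as absorbed into the $\Ot$ notation, consistent with the theorem statement).

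There is no real obstacle here; the only point worth being a little careful about is confirming that every quantity the query accesses — membership tests in $C(u)$, $C(v)$, $B_1(u)$, $B_1(v)$, and retrieval of $h_i$, $p_i$, and the pairwise distances $d(\cdot,\cdot)$ for pairs in $A_{k-3} \times A_2$ — is available in $O(1)$ time from the stored data structure described in the Storage paragraph, so that the cost estimates above are valid. Once that is noted, the proof is just the arithmetic $O(n^{1/k}) + O(n^{1/k}) + O(k n^{1/k}) + O(k n^{1/k}) + O(1) = \Ot(n^{1/k})$, and \autoref{T-DO-2k-5-unweighted} then follows from \autoref{L-Space-2k-5-UnWeighted}, \autoref{L-Approximation-2k-5-UnWeighted} and \autoref{L-Q2k-5-Time}.
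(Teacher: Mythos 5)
Your proposal is correct and matches the paper's own proof essentially step for step: bound the two $\Intersection$ calls via \autoref{Intersection-Runtime} and the $O(n^{1/k})$ cluster/bunch sizes, charge $O(k)$ per $\MTZQuery$ call over the $O(n^{1/k})$ iterations of each loop, and note the final line is $O(1)$, giving $\Ot(n^{1/k})$ overall. The extra remark about $O(1)$-time membership/distance retrieval is consistent with the storage conventions stated in the Preliminaries, so nothing further is needed.
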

\begin{proof}
    By Lemma~\ref{Intersection-Runtime} the runtime of $\Intersection(u,v,C(u),C(v))$  is $O(\min(|C(u)|,|C(v)|))$ 
    and the runtime of $\Intersection(u,v,B_1(u),B_1(v))$ is $O(\min(|B_1(u)|,|B_1(v)|))$. 
    Since $\min(|C(u)|,|C(v)|)=O(n^{1/k})$ and $\min(|B_1(u)|,|B_1(v)|)=O(n^{1/k})$, we get that the cost of these two steps is $O(n^{1/k})$. 
    Computing $\MTZQuery(u,v)$ takes $O(k)$ time, therefore computing $\MTZQuery(v,u')$, for every $u'\in C(u)$, and $\MTZQuery(v',u)$, for every $v'\in C(v)$, takes $O(kn^{1/k})$.
    Computing $h_{2}(u)+d(p_{2}(u),p_{k-3}(v))+h_{k-3}(v), h_{k-3}(u)+d(p_{k-3}(u),p_{2}(v))+h_{2}(v)$ takes $O(1)$ time.
    We conclude that the running time is $\Ot(n^{1/k})$.
\end{proof}

\autoref{T-DO-2k-5-unweighted} follows from~\autoref{L-Space-2k-5-UnWeighted}, ~\autoref{L-Approximation-2k-5-UnWeighted}, and~\autoref{L-Q2k-5-Time}.
% Due to the lack of space, we prove~\autoref{T-DO-3-2-Unweighted} and~\autoref{T-DO-2k-5-unweighted} in~\autoref{S-unweighted-appendix}.

\section{Distance oracles with stretch at most \texorpdfstring{$2$}{2} in unweighted graphs}\label{S-at-most-2}
In this section, we introduce the sets $S_t(\cdot)$. 
Let $A\subseteq V$, let $u,v\in V$ and let $P=P(u,v)$. We compute a set $S_t(u)$ of vertices around $u$, which takes into consideration the set $A$. The set $S_t(u)$  can be viewed as a recursive generalization of $B(u, V, A)$. We use these sets to bound for some vertex $x\in P$ the value of $h(x)$ by roughly $d(u,v)/2t$. 
We then present several new distance oracles that compute $S_t(u)$ and $S_t(v)$ in the query algorithm. 

\subsection{The set \texorpdfstring{$S_t(\cdot)$}{St(.)}}\label{S-5-1}
Let $A\subseteq V$. Recall that $B(u)=B(u,V,A)$, for every $u\in V$. 
Let $t\geq 0$ be an integer. Let $S_t(u)=\bigcup_{w\in S_{t-1}(u)}{B(u)}$, where $S_0(u)=\{u\}$.
We define a distance function $d_t(u,v)$ between $u$ and every $v\in S_t(u)$. Let $d_0(u,u)=0$ and let $d_t(u,v)$  be  the minimum value of  $d_{t-1}(u,x)+d(x,v)$, where $x\in S_{t-1}(u)$ and $v\in B(x)$, that is, $d_t(u,v)=\min_{x\in S_{t-1}(u)\cap C(v)} (d_{t-1}(u,x)+d(x,v))$.
We compute the set $S_t(u)$ and the distance function $d_t(u,\cdot)$ using  a simple recursive procedure $\GenerateSi$, presented in Algorithm~\ref{Algorithm-GenerateSi}. 
\begin{algorithm2e}[t] 
\caption{$\GenerateSi(u, t)$}\label{Algorithm-GenerateSi}
\lIf{$t=0$}{\Return $\{u\}$}
$\langle S_{t-1}(u) , d_{t-1}\rangle \gets \GenerateSi(u,t-1)$\\
\ForEach{$x\in S_{t-1}(u)$}
{
    \ForEach{$v\in B(x)$}
    {
    $d_t(u,v) = \min ( d_t(u,v), d_{t-1}(u,x)+d(x,v))$
    }
}
\Return $\langle \bigcup_{w\in S_{t-1}(u)}{B(w)}, d_t\rangle$
\end{algorithm2e}
Next, let $0<c\leq 1$ be the minimum value for which $|B(w)| \le O(n^{c})$, for every $w\in V$. We show:
\begin{lemma}\label{L-size-and-running-time-of-S_i}
     $\GenerateSi(u,t)$ computes in $O(n^{ct})$ time the set $S_t(u)$ of size $O(n^{ct})$.
\end{lemma}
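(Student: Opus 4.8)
The plan is to prove both the size bound and the running time bound by induction on $t$, following the recursive structure of \GenerateSi. The base case $t=0$ is immediate: $S_0(u)=\{u\}$ has size $1=O(n^{c\cdot 0})$, and the procedure just returns $\{u\}$ in $O(1)$ time. For the inductive step, I would assume that $\GenerateSi(u,t-1)$ returns $\langle S_{t-1}(u),d_{t-1}\rangle$ with $|S_{t-1}(u)|=O(n^{c(t-1)})$ in $O(n^{c(t-1)})$ time, and argue about level $t$.

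First I would bound the size. By definition $S_t(u)=\bigcup_{w\in S_{t-1}(u)} B(w)$, so $|S_t(u)| \le \sum_{w\in S_{t-1}(u)} |B(w)| \le |S_{t-1}(u)|\cdot \max_w |B(w)|$. Using the inductive hypothesis $|S_{t-1}(u)|=O(n^{c(t-1)})$ and the choice of $c$ so that $|B(w)|=O(n^c)$ for every $w$, this gives $|S_t(u)| = O(n^{c(t-1)})\cdot O(n^c) = O(n^{ct})$, as required.

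Next I would bound the running time. The cost of $\GenerateSi(u,t)$ is the recursive call $\GenerateSi(u,t-1)$, which costs $O(n^{c(t-1)})$ by the inductive hypothesis, plus the cost of the two nested \textbf{ForEach} loops. The outer loop runs over $x\in S_{t-1}(u)$ and the inner loop over $v\in B(x)$, performing $O(1)$ work (one comparison and assignment to update $d_t(u,v)$, assuming $d_t$ values are stored in a hash table with $O(1)$ access and default $\infty$) per pair. Hence the loop cost is $O\big(\sum_{x\in S_{t-1}(u)} |B(x)|\big) = O(|S_t(u)|) = O(n^{ct})$. Adding the recursive cost gives a recurrence $T(t) = T(t-1) + O(n^{ct})$, which solves to $T(t) = O\big(\sum_{i=1}^{t} n^{ci}\big) = O(n^{ct})$ since the sum is geometric and dominated by its last term. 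This establishes the claim. I would also note for completeness that the correctness of the computed $d_t(u,\cdot)$ — i.e.\ that it equals $\min_{x\in S_{t-1}(u)\cap C(v)}(d_{t-1}(u,x)+d(x,v))$ — follows directly by unwinding the loop, since $v\in B(x)$ is equivalent to $x\in C(v)$.

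I do not expect any real obstacle here; the argument is a routine induction. The only points requiring a little care are (i) making explicit the data-structure assumption that accessing and updating $d_t(u,v)$ takes $O(1)$ time (via hashing over the $O(n^{ct})$ relevant vertices), and (ii) the summation $\sum_{i=1}^t n^{ci} = O(n^{ct})$, which holds because $c>0$ and $t$ is a constant, so the geometric series is bounded by a constant times its largest term. Both are standard and can be stated in one line each.
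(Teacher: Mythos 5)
Your proof is correct and follows essentially the same inductive argument as the paper: base case $S_0(u)=\{u\}$, and at each level the union of $O(n^{c(i-1)})$ bunches of size $O(n^c)$ gives size and time $O(n^{ci})$. Your extra remarks (the $O(1)$-access data structure for $d_t$ and the geometric-sum bound on the recurrence) only make explicit details the paper leaves implicit.
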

\begin{proof}
    We  prove the claim by induction on $i\in [0,t]$. For $i=0$, the claim holds since $S_0(u)=\{u\}$, and  $|S_0(u)|=1$.
    Next, we assume that $|S_{i-1}(u)| = O(n^{c(i-1)})$ and that  $S_{i-1}(u)$ was computed in $O(n^{c(i-1)})$ time.
    In order to create $S_i(u)$, the algorithm $\GenerateSi(u, i)$ iterates over every $w\in S_{i-1}(u)$
    and adds $B(w)$ to $S_i(u)$ in $O(|B(w)|)=O(n^c)$ time. 
    Since $|S_{i-1}(u)| = O(n^{c(i-1)})$ the total time for computing $S_i(u)$ is $O(n^c \cdot n^{c(i-1)})=O(n^{ci})$.  
    From the same arguments we have $|S_{i}(u)| = O(n^{ci})$. 
\end{proof}

As before let $u,v\in V$ and let $P=P(u,v)$. 
Let $u=u_0$ and let $v=v_0$.
For every $1\leq i \leq t$, let $u_i \in B(u_{i-1})$ be the farthest vertex in $P$ from $u$, and let $v_i\in B(v_{i-1})$ be the farthest vertex in $P$ from $v$. Let $P_t(u)=P(u,u_t)$, that is, the portion on $P$ between $u$ and $u_t$. Notice that every vertex in $P_t(u)$ is in a ball of some vertex from $S_{t-1}(u)$. Therefore, by the definition of $S_{t}(u)$ we have that $P_t(u)\subseteq S_{t}(u)$. 

In the next  lemma we show that for the vertices of $P_t(u)$ the distance function $d_t$ is the same as the distance function $d$ of $G$. 

\begin{lemma}\label{L-Pt-Dt}
If $v\in P_t(u)$ then $d_t(u,v)=d(u,v)$.
\end{lemma}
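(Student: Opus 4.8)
The plan is to prove the statement by induction on $t$, mirroring the recursive structure of the definition of $S_t(u)$ and $d_t(u,\cdot)$. The base case $t=0$ is immediate: $P_0(u)=P(u,u_0)=P(u,u)=\{u\}$, and $d_0(u,u)=0=d(u,u)$.

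For the inductive step, assume the claim holds for $t-1$, i.e.\ for every $w\in P_{t-1}(u)$ we have $d_{t-1}(u,w)=d(u,w)$. Let $v\in P_t(u)=P(u,u_t)$. First I would dispose of the case $v\in P_{t-1}(u)=P(u,u_{t-1})$; since $P_{t-1}(u)$ is a prefix of $P_t(u)$ (because $u_{t-1}\in P$ and, by definition, $u_t\in B(u_{t-1})$ is the farthest vertex in $P$ from $u$, so $d(u,u_{t-1})\le d(u,u_t)$), if $v$ lies between $u$ and $u_{t-1}$ on $P$ then the induction hypothesis gives $d_{t-1}(u,v)=d(u,v)$, and since $d_t(u,v)\le d_{t-1}(u,v)$ (taking $x=v\in S_{t-1}(u)$, $v\in B(v)$ — here one should note $v\in B(v)$ holds trivially since $d(v,v)=0<d(v,A)$ assuming $A$ does not contain $v$, or handle the degenerate case separately) while $d_t(u,v)\ge d(u,v)$ because $d_t$ is realized by an actual walk in $G$ from $u$ to $v$, we get equality.

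The substantive case is $v\in P(u_{t-1},u_t)\setminus\{u_{t-1}\}$, i.e.\ $v$ lies strictly beyond $u_{t-1}$ on $P$. By the choice of $u_t$ as the farthest vertex of $P$ in $B(u_{t-1})$, and since the subpath of $P$ from $u_{t-1}$ to $u_t$ is itself a shortest path, every vertex on $P$ between $u_{t-1}$ and $u_t$ lies in $B(u_{t-1})$ (one uses that $B(u_{t-1})=B(u_{t-1},V,A)$ is "closed along shortest paths towards $u_{t-1}$": if $v$ is on a shortest path from $u_{t-1}$ with $d(u_{t-1},v)\le d(u_{t-1},u_t)<d(u_{t-1},A)$ then $v\in B(u_{t-1})$). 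Hence $v\in B(u_{t-1})$, so $v$ is added to $S_t(u)$ with the candidate value $d_{t-1}(u,u_{t-1})+d(u_{t-1},v)$, which by the induction hypothesis equals $d(u,u_{t-1})+d(u_{t-1},v)$. Since $u_{t-1}$ lies on the shortest path $P$ from $u$ to $v$, we have $d(u,u_{t-1})+d(u_{t-1},v)=d(u,v)$, so $d_t(u,v)\le d(u,v)$. The reverse inequality $d_t(u,v)\ge d(u,v)$ holds because any value of $d_t(u,v)$ corresponds to a walk from $u$ to $v$ in $G$ (formally, a second easy induction: $d_t(u,x)\ge d(u,x)$ for all $x\in S_t(u)$, since $d_t(u,x)=d_{t-1}(u,x')+d(x',x)\ge d(u,x')+d(x',x)\ge d(u,x)$). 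Combining, $d_t(u,v)=d(u,v)$.

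The main obstacle I anticipate is making the "closed along shortest paths" property of bunches fully rigorous — specifically, arguing that every vertex of $P$ strictly between $u_{t-1}$ and $u_t$ belongs to $B(u_{t-1})$, and more generally that $u_{t-1}$ is actually a vertex of the prefix $P_{t-1}(u)$ so that the induction hypothesis applies to it. This hinges on the fact that $u_{t-1}$ is the farthest vertex of $P$ in $B(u_{t-2})$ and that shortest subpaths of shortest paths are shortest paths, plus the monotonicity $d(u,u_0)\le d(u,u_1)\le\cdots\le d(u,u_t)$; I would state these as small observations before the main induction. The arithmetic — that intermediate vertices on a shortest path split the distance additively — is routine and I would not belabor it.
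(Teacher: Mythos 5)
Your proposal is correct and follows essentially the same route as the paper's proof: induction on $t$, with the key observation that any vertex $v$ on the subpath $P(u_{i-1},u_i)$ satisfies $d(u_{i-1},v)\le d(u_{i-1},u_i)<d(u_{i-1},A)$ and hence lies in $B(u_{i-1})$, after which $d_t(u,v)\le d_{t-1}(u,u_{i-1})+d(u_{i-1},v)=d(u,u_{i-1})+d(u_{i-1},v)=d(u,v)$ by the induction hypothesis. Your splitting into a prefix case (via $x=v$, with the caveat $v\notin A$) versus the last segment, and your explicit proof of $d_t(u,v)\ge d(u,v)$, are only minor variations on the paper's single-segment decomposition, which handles the prefix case the same way as the last segment and leaves the $\ge$ direction implicit.
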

\begin{proof}
We prove by induction on $t$. 
For the base case of $t=0$ the claim holds since $P_0(u)=\{u\}$ and $d_0(u,u)=0$. 
We assume the claim holds for every $0< i<t$ and prove the claim for $i=t$. 
Let $v\in P_t(u)$. This implies that $v\in P(u_{i-1},u_i)$, where $1\leq i\leq t$. 
Since $u_i\in P$ is the farthest vertex from $u$ in $B(u_{i-1})$ and since $v\in P(u_{i-1},u_i)$, a shortest path between $u_{i-1}$ and $u_i$, it follows that $v\in B(u_{i-1})$.
By definition of $d_t$ we have $d_t(u,v)\leq d_{t-1}(u,u_{i-1})+d(u_{i-1},v)$, since $v\in B(u_{i-1})$. 
By the induction assumption $d_{t-1}(u,u_{i-1})=d(u,u_{i-1})$, therefore, $d_t(u,v)\leq  d_{t-1}(u,u_{i-1})+d(u_{i-1},v) = d(u,u_{i-1})+d(u_{i-1},v)=d(u,v)$, where the last equality holds since $u_{i-1}\in P(u,v)$. 
\end{proof}

In the next lemmas we examine the case that $P_t(u) \cap P_t(v)  = \emptyset$. This case and the bounds proven in these lemmas are illustrated in Figure~\ref{F-S}(a).

\begin{lemma}
\label{L-No-S-Intersect}
    If $P_t(u) \cap P_t(v) = \emptyset$ then $h(u_{t-1}) + h(v_{t-1}) \leq d(u_{t-1},v_{t-1})+1$
\end{lemma}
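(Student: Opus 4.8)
The plan is to mimic the unweighted ``no intersection'' argument already used in the proof of \autoref{L-NO-Intersect}, but applied one level up in the $S_t$-hierarchy, at the vertices $u_{t-1}$ and $v_{t-1}$. First I would recall the key structural fact that was established on the way to defining the $u_i$'s: $u_t \in B(u_{t-1})$ is by definition the \emph{farthest} vertex from $u$ on $P$ that lies in $B(u_{t-1})$, and similarly $v_t$ is the farthest from $v$ on $P$ in $B(v_{t-1})$. Since the graph is unweighted, every vertex $w \in B(u_{t-1})$ satisfies $d(u_{t-1},w) \le h(u_{t-1}) - 1$, so in particular $d(u_{t-1}, u_t) \le h(u_{t-1}) - 1$. (In fact the farthest such vertex achieves equality, but I only need the inequality in one direction for each side.)

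Next I would use the hypothesis $P_t(u) \cap P_t(v) = \emptyset$. Since $P_t(u) = P(u,u_t)$ and $P_t(v) = P(v,v_t)$ are two sub-paths of $P$ hanging off the two endpoints, disjointness means that when one travels along $P$ from $u$ to $v$, one passes through all of $P_t(u)$, then through a (possibly empty) middle portion, then through all of $P_t(v)$; in particular $u_t$ occurs strictly before $v_t$ on $P$, and $d(u,v) = d(u,u_t) + d(u_t,v_t) + d(v_t,v)$, with $d(u_t,v_t) \ge 1$ (or $\ge 0$; either way they do not coincide, and more importantly the sub-path $P(u_t,v_t)$ contains the sub-path $P(u_{t-1},v_{t-1})$ appropriately — I should be careful here). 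Actually the cleaner route: restrict attention to the sub-path $P' = P(u_{t-1},v_{t-1}) \subseteq P$. Assume towards a contradiction that $h(u_{t-1}) + h(v_{t-1}) > d(u_{t-1},v_{t-1}) + 1$. Let $u_t$ be the farthest vertex from $u_{t-1}$ on $P'$ lying in $B(u_{t-1})$ (this is consistent with the global definition since $u_t$ lies between $u_{t-1}$ and $v_{t-1}$ — this needs the disjointness hypothesis to guarantee $u_t$ is on the $u$-side of $v_{t-1}$). Then $d(u_{t-1},u_t) = h(u_{t-1}) - 1$, hence $d(u_t, v_{t-1}) = d(u_{t-1},v_{t-1}) - h(u_{t-1}) + 1 < h(v_{t-1})$ by the contradiction hypothesis, which forces $u_t \in B(v_{t-1})$. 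But $P_t(u) \cap P_t(v) = \emptyset$ gives $u_t \notin P_t(v) = P(v,v_t)$, and since $u_t \in B(v_{t-1})$ lies on $P$ closer to $v$ than $v_{t-1}$ is... I need to show this contradicts that $v_t$ is the farthest vertex from $v$ on $P$ in $B(v_{t-1})$: namely $u_t$ is a vertex of $P$ that is in $B(v_{t-1})$ and is farther from $v$ than $v_t$, contradicting the choice of $v_t$. That is the contradiction.

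So the logical skeleton is: assume $h(u_{t-1}) + h(v_{t-1}) > d(u_{t-1},v_{t-1}) + 1$; produce the vertex $u_t$ at distance exactly $h(u_{t-1})-1$ from $u_{t-1}$ along $P$; compute $d(u_t,v_{t-1}) < h(v_{t-1})$; conclude $u_t \in B(v_{t-1})$; and then observe $u_t$ is a $P$-vertex in $B(v_{t-1})$ strictly farther from $v$ than $v_t$, contradicting the maximality defining $v_t$ (this last step is exactly where the disjointness hypothesis $P_t(u)\cap P_t(v)=\emptyset$ is used, to know $u_t$ sits between $u$ and $v_t$). I expect the main obstacle to be getting the geometry of the three nested sub-paths $P_t(u), P(u_{t-1},v_{t-1}), P_t(v)$ straight and making sure that ``farthest from $u$'' versus ``farthest from $u_{t-1}$'' are interchangeable on the relevant portion — i.e.\ that under the disjointness hypothesis, $u_t$ (defined globally as farthest from $u$ in $B(u_{t-1})\cap P$) is the same as the farthest-from-$u_{t-1}$ vertex of $B(u_{t-1})$ on the sub-path $P(u_{t-1}, v_{t-1})$, and symmetrically for $v_t$. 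Once that bookkeeping is pinned down, the inequality argument is a verbatim copy of the one in \autoref{L-NO-Intersect} with $(u,v)$ replaced by $(u_{t-1},v_{t-1})$.
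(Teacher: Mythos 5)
Your proposal is correct and is essentially the paper's own argument, just inlined: the paper observes that $B(u_{t-1})\cap P\subseteq P_t(u)$ and $B(v_{t-1})\cap P\subseteq P_t(v)$, so the disjointness hypothesis gives $B(u_{t-1})\cap B(v_{t-1})\cap P=\emptyset$, and then it simply invokes \autoref{L-Intersect} to conclude $B(u_{t-1})\cap B(v_{t-1})=\emptyset$ and \autoref{L-NO-Intersect} (applied to the pair $(u_{t-1},v_{t-1})$, whose shortest path is a subpath of $P$) to get the bound. Using those two toolbox lemmas as black boxes spares you exactly the farthest-vertex and path-positioning bookkeeping that you redo by hand, but the underlying contradiction argument is the same.
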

\begin{proof}
    From the definition of $P_t$, it follows that $B(u_{t-1}) \cap P \subseteq P_t(u)$. Similarly, $B(v_{t-1}) \cap P \subseteq P_t(v)$.
    Thus, $(B(u_{t-1}) \cap P) \cap (B(v_{t-1}) \cap P)= B(u_{t-1}) \cap B(v_{t-1}) \cap P \subseteq P_t(u)\cap P_t(v) = \emptyset$. 
    Since $B(u_{t-1}) \cap B(v_{t-1}) \cap P \subseteq \emptyset$ it follows that $B(u_{t-1}) \cap B(v_{t-1}) \cap P = \emptyset$ and by
    applying Lemma~\ref{L-Intersect} we get $B(u_{t-1}) \cap B(v_{t-1})=\emptyset$. Now, by using Lemma~\ref{L-NO-Intersect} we get  $h(u_{t-1}) + h(v_{t-1}) \le d(u_{t-1},v_{t-1}) + 1$, as required.
\end{proof}

\begin{lemma}
\label{L-Bound-Sum-Of-h-in-S}
    If $P_t(u) \cap P_t(v) = \emptyset$ then $\sum_{i=0}^{t-1}h(u_i) + h(v_i) \le d(u,v)+2t-1$
\end{lemma}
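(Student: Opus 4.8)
The goal is to bound the telescoping sum $\sum_{i=0}^{t-1}(h(u_i)+h(v_i))$ under the hypothesis $P_t(u)\cap P_t(v)=\emptyset$. The key geometric fact is that $u_0,u_1,\dots,u_{t-1}$ and $v_{t-1},\dots,v_1,v_0$ all lie on the shortest path $P$, with the $u_i$ appearing before the $v_j$ in the traversal from $u$ to $v$ (this is exactly what $P_t(u)\cap P_t(v)=\emptyset$ gives us, combined with the fact that $u_i$ is the farthest point from $u$ in $P\cap B(u_{i-1})$, hence the $u_i$'s are monotone along $P$; similarly for the $v_j$'s). So the plan is to express each $h(u_i)$ for $i\le t-2$ in terms of a subpath of $P$, sum, and then invoke Lemma~\ref{L-No-S-Intersect} for the last terms.

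First I would record the per-step bound: since $u_{i+1}\in B(u_i)$ and the graph is unweighted, $d(u_i,u_{i+1})\le h(u_i)-1$, i.e.\ $h(u_i)\le d(u_i,u_{i+1})+1$, for every $0\le i\le t-2$; symmetrically $h(v_i)\le d(v_i,v_{i+1})+1$ for $0\le i\le t-2$. Summing these $2(t-1)$ inequalities yields
\[
\sum_{i=0}^{t-2}\big(h(u_i)+h(v_i)\big)\;\le\;\sum_{i=0}^{t-2} d(u_i,u_{i+1})\;+\;\sum_{i=0}^{t-2} d(v_i,v_{i+1})\;+\;2(t-1).
\]
Because $u_0,\dots,u_{t-1}$ occur in order along $P$, the first telescoping sum equals $d(u,u_{t-1})=d(u_0,u_{t-1})$, and likewise $\sum_{i=0}^{t-2}d(v_i,v_{i+1})=d(v_0,v_{t-1})=d(v,v_{t-1})$. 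Hence
\[
\sum_{i=0}^{t-2}\big(h(u_i)+h(v_i)\big)\;\le\;d(u,u_{t-1})+d(v,v_{t-1})+2(t-1).
\]
Then I add the $i=t-1$ term. By Lemma~\ref{L-No-S-Intersect}, $h(u_{t-1})+h(v_{t-1})\le d(u_{t-1},v_{t-1})+1$. Adding this to the previous display gives
\[
\sum_{i=0}^{t-1}\big(h(u_i)+h(v_i)\big)\;\le\;d(u,u_{t-1})+d(u_{t-1},v_{t-1})+d(v_{t-1},v)+2(t-1)+1.
\]
Finally, since $u_{t-1}$ and $v_{t-1}$ both lie on $P$ in the order $u,\dots,u_{t-1},\dots,v_{t-1},\dots,v$, the three distances add up along $P$: $d(u,u_{t-1})+d(u_{t-1},v_{t-1})+d(v_{t-1},v)=d(u,v)$. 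Therefore $\sum_{i=0}^{t-1}(h(u_i)+h(v_i))\le d(u,v)+2t-1$, as claimed.

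The main obstacle — really the only non-bookkeeping point — is justifying the ordering/collinearity claims: that $u_0,\dots,u_{t-1}$ appear consecutively and in order along $P$, and that $u_{t-1}$ precedes $v_{t-1}$ on $P$ so that the three pieces $u\rightsquigarrow u_{t-1}\rightsquigarrow v_{t-1}\rightsquigarrow v$ partition $P$. Monotonicity of the $u_i$'s follows from the inductive definition ($u_{i+1}$ is chosen as the farthest-from-$u$ vertex of $P$ inside $B(u_i)$, and $u_i\in B(u_i)$ gives $d(u,u_{i+1})\ge d(u,u_i)$), and likewise $d(v,v_{i+1})\ge d(v,v_i)$; the fact that $u_{t-1}$ comes before $v_{t-1}$ is where the hypothesis $P_t(u)\cap P_t(v)=\emptyset$ is used, exactly as in the proof of Lemma~\ref{L-No-S-Intersect} (if $u_{t-1}$ came at or after $v_{t-1}$, the prefix $P(u,u_{t-1})\subseteq P_t(u)$ and suffix $P(v_{t-1},v)\subseteq P_t(v)$ would overlap). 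Once those order facts are in hand the rest is the telescoping computation above. I would state the ordering as a short sub-claim (or simply reference the reasoning of Lemma~\ref{L-No-S-Intersect}) and then present the three displayed inequalities.
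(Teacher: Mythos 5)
Your proof follows essentially the same route as the paper's: bound each of the first $t-1$ pairs $h(u_i)+h(v_i)$ by the corresponding subpaths of $P$ plus $2$, handle the last pair $(u_{t-1},v_{t-1})$ with Lemma~\ref{L-No-S-Intersect}, and use the ordering $u=u_0,\dots,u_{t-1},\dots,v_{t-1},\dots,v_0=v$ along $P$ (from farthest-ness and the disjointness hypothesis) to make the distances telescope to $d(u,v)$. The only difference from the paper is organizational: you sum upper bounds on the $h$'s, while the paper decomposes $d(u,v)$ first and lower-bounds each piece; it is the same computation.

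There is, however, a direction error in your justification of the per-step bound, which is the one inequality the whole argument rests on. From $u_{i+1}\in B(u_i)$ and unweightedness you correctly get $d(u_i,u_{i+1})\le h(u_i)-1$, but this is equivalent to $h(u_i)\ge d(u_i,u_{i+1})+1$, \emph{not} to the bound $h(u_i)\le d(u_i,u_{i+1})+1$ that your summation needs; since you are bounding $\sum_i h(u_i)$ from above, mere ball membership gives the useless direction, and the ``i.e.''\ step as written is false. The inequality you need, $d(u_i,u_{i+1})\ge h(u_i)-1$, comes from $u_{i+1}$ being the \emph{farthest} vertex of $P\cap B(u_i)$ from $u$: every vertex of $P$ lying at most $h(u_i)-1$ steps beyond $u_i$ is at distance at most $h(u_i)-1$ from $u_i$ (subpaths of shortest paths are shortest) and hence in $B(u_i)$, and the hypothesis $P_t(u)\cap P_t(v)=\emptyset$ guarantees $P$ extends at least that far beyond $u_i$ (otherwise $v$ itself would lie in $B(u_i)\cap P$ and hence in $P_t(u)\cap P_t(v)$). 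This is exactly the paper's statement $d(u_{i},u_{i+1})=h(u_{i})-1$. You already invoke farthest-ness to get monotonicity of the $u_i$'s, so the repair is local, but as written the chain of inequalities does not follow from the premise you cite.
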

\begin{proof}
    For every $0\le i \le t-1$, since $u_{i-1},u_i\in P$,  $u_i$ is the farthest vertex from $u$ in $B(u_{i-1})$, and since the graph is unweighted, it follows that $d(u_{i-1}, u_i)=h(u_{i-1})-1$. Similarly, $d(v_{i-1},v_i) = h(v_{i-1})-1$.

    From the assumption that $P_t(u) \cap P_t(v) = \emptyset$ it follows that when going from $u$ to $v$
    on $P$ we first encounter all $u_i$'s and then all $v_i$'s. Thus:
    \begin{align*} 
d(u,v)    &= d(u,u_1) + \dots +d(u_{t-2},u_{t-1})+ d(u_{t-1},v_{t-1}) + d(v_{t-1},v_{t-2})+\dots + d(v_1,v) \\ 
          &=  d(u_{t-1}, v_{t-1}) + \sum_{i=0}^{t-2} ( d(u_i,u_{i+1}) + d(v_i,v_{i+1}) )
\end{align*}

       Since $P_t(u) \cap P_t(v) = \emptyset$ it follows from Lemma~\ref{L-No-S-Intersect} that $h(u_{t-1}) + h(v_{t-1}) \leq d(u_{t-1},v_{t-1})+1 $. Recall that $d(u_{i-1},u_i) \geq h(u_{i-1})-1$ and $d(v_{i-1},v_i) \geq h(v_{i-1})-1$, for every $0\leq i \leq t$.
    Therefore, we get:
   \begin{align*} 
d(u,v)    &= d(u_{t-1}, v_{t-1}) + \sum_{i=0}^{t-2} d(u_i,u_{i+1}) + d(v_i,v_{i+1})  
          \geq h(u_{t-1}) + h(v_{t-1}) - 1 + \sum_{i=0}^{t-2}h(u_i)-1 + h(v_i)-1\\ 
          &= -2(t-1) -1 + \sum_{i=0}^{t-1}h(u_i) + h(v_i)= -2t+1 + \sum_{i=0}^{t-1}h(u_i) + h(v_i).
\end{align*}
    Adding $2t-1$ to both sides we get $\sum_{i=0}^{t-1}{h(u_i) + h(v_i)} \le d(u,v) + 2t - 1$, as required.
\end{proof}

\subsection{Almost \texorpdfstring{$1+1/t$}{1+1/t}-stretch with \texorpdfstring{$n^{2-c}$}{n\^(2-c)}-space and \texorpdfstring{$n^{ct}$}{n\^(ct)}-query}\label{S-5-t}

In this section we use the sets $S_t(\cdot)$ in the query algorithm  and prove the following: 

\Reminder{T-Unweighted-1+1/t}

\subsubsection{Storage.}
We compute a set  $A$ of size $O(n^{1-c})$ using Lemma~\ref{L-TZ-Size}.
For every $u\in V$ we save  $B(u,V,A)$. For every $u\in A$ and  $w\in V$ we save $d(u,w)$. In addition, for every  $u\in V$ we save $p(u,A)$. As before, we refer to $B(u,V,A)$ as $B(u)$, and to $p(u,A)$ as $p(u)$.

\begin{lemma}\label{L-Space-T-Unweighted-1+1/t}
    The distance oracle is constructed in $O(mn^{1-c})$-time and  uses  $O(n^{2-c})$-space.
\end{lemma}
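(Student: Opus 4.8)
The plan is to verify the two claimed bounds separately, since the statement of \autoref{L-Space-T-Unweighted-1+1/t} is just an accounting of the storage described immediately above it. First I would bound the space. By \autoref{L-TZ-Size} (applied with the appropriate value of $k$ so that $|A| = O(n^{1-c})$), for every $u \in V$ the bunch $B(u) = B(u,V,A)$ has size $O(n^{c})$, so storing $B(u)$ for all $u$ costs $\sum_{u \in V} O(n^c) = O(n^{1+c})$ --- wait, that is too much; in fact the relevant bound is $|B(u)| = O(n/|A|) = O(n^c)$ only when $|A| = \Theta(n^{1-c})$, and summing over $n$ vertices gives $O(n \cdot n^{c}) = O(n^{1+c})$, which exceeds $O(n^{2-c})$ only when $c > 1/2$; since the theorem assumes $0 < c < 1/2$ we have $1+c < 2-c$, so this term is absorbed. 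The dominant term is the table $d(u,w)$ for $u \in A$, $w \in V$, which costs $|A| \cdot |V| = O(n^{1-c}) \cdot O(n) = O(n^{2-c})$. Storing $p(u)$ for each $u$ costs $O(n)$. Summing, the total is $O(n^{2-c})$, as claimed.

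Next I would bound the construction time. Computing the set $A$ together with all bunches $B(u,V,A)$ via \autoref{L-TZ-Size} takes $\Ot(mn^{1/k})$ expected time where $n^{1/k}$ corresponds to the bunch size $n^c$; I would note this is $\Ot(m n^{c}) = \Ot(m n^{1-c})$ when $c \le 1/2$ --- again using $c < 1/2$ so $n^c \le n^{1-c}$. Hmm, more carefully: the construction of $A$ with parameter giving $|A| = O(n^{1-c})$ costs $\Ot(m/p^{-1})$-type bounds; by \autoref{L-A-center} with parameter $p = n^{-c}$ one gets $A$ of size $\Ot(n \cdot n^{-c}) = \Ot(n^{1-c})$ in $\Ot(m \cdot n^{c})$ expected time, and the bunches $B(u)$ of size $O(n^c)$. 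Since $c < 1/2$, $n^c < n^{1-c}$, so this is $O(mn^{1-c})$. Computing $d(u,w)$ for all $u \in A$, $w \in V$ amounts to running a single-source shortest path (BFS, since the graph is unweighted) from each $u \in A$, costing $O(|A| \cdot m) = O(n^{1-c} m)$. Computing $p(u)$ for all $u$ is a single BFS from the set $A$, costing $O(m)$. Hence the total construction time is $O(mn^{1-c})$, as claimed.

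The main (very mild) obstacle here is bookkeeping consistency: the excerpt cuts off mid-proof, so I must make sure I invoke the correct lemma (\autoref{L-TZ-Size} versus \autoref{L-A-center}) with the parameter that yields $|A| = O(n^{1-c})$ and $|B(u)| = O(n^c)$, and then carefully check that every storage and time term is dominated by $O(n^{2-c})$ and $O(mn^{1-c})$ respectively using the hypothesis $0 < c < 1/2$. There is no real mathematical difficulty --- it is a routine resource count --- but one should be careful that the $B(u)$-storage term $O(n^{1+c})$ really is $\le O(n^{2-c})$, which holds precisely because $c < 1/2$ forces $1 + c < 2 - c$; if $c$ were allowed up to $1$ this accounting would fail, so the hypothesis $c < 1/2$ is exactly what makes the bound tight. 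I would therefore present the proof as two short paragraphs, one for space and one for time, each listing the three contributing terms and noting which one dominates.
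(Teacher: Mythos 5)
Your accounting matches the paper's proof essentially term for term: the same three space contributions (bunches $O(n^{1+c})$, the $A\times V$ distance table $O(n^{2-c})$, pivots $O(n)$) with domination via $1+c<2-c$ from $c<1/2$, and the same three time contributions (bunch construction $\Ot(mn^{c})$, BFS from each vertex of $A$ costing $O(m|A|)=O(mn^{1-c})$, pivots $O(m)$). The only cosmetic difference is your hedging between \autoref{L-TZ-Size} and \autoref{L-A-center}; the paper simply invokes \autoref{L-TZ-Size}, but the parametrization and the count are the same.
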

\begin{proof}
From Lemma~\ref{L-TZ-Size} it follows that  $|B(u,V,A)|=O(n^c)$.  
In addition, for every $w\in A$ and $v\in V$, we store $d(w,v)$ at a cost of $n|A|=O(n^{2-c})$ space. 
The total space is $O(n^{2-c} + n^{1+c})$. Since  
$0 < c < 1/2$ we have that $2-c>1+c$ and the space is $O(n^{2-c})$.
From Lemma~\ref{L-TZ-Size} it follows that the set $A$ and $B(u,V,A)$, for every $u\in V$, are computed in $O(mn^c)$ time. 
Computing $p(u)$, for every $u\in V$ takes $O(m)$ time.
Computing $d(w,v)$ for every $w\in A$ and $v\in V$ takes $O(m|A|)=O(mn^{1-c})$ time.
Since $1-c>c$ we get that the running time  is $O(mn^{1-c})$. 
\end{proof}

\subsubsection{Query algorithm.}
The input to the query algorithm is two vertices $u,v\in V$.
The output of the query is an estimation $\hat{d}(u,v)$.
The algorithm first creates the sets $S_t(u)$ and $S_t(v)$ using $\GenerateSi$.  
Then, the algorithm sets $\hat{d}(u,v)$ to be the value returned by $\Intersection(u,v,S_t(u), S_t(v))$.
Next, for every vertex $w\in S_t(u) \cup S_t(v)$ the algorithm sets $\hat{d}(u,v)$ to be $\min (\hat{d}(u,v),d(u, p(w)) + d(p(w), v))$.
Finally, the algorithm returns $\hat{d}(u,v)$ as the estimated distance.
A pseudo-code for the query algorithm is given in Algorithm~\ref{Algorithm-Distance-Oracle-1+1/t}.

\begin{algorithm2e}[t] 
\caption{$\Query(u,v,t)$} \label{Algorithm-Distance-Oracle-1+1/t}
$S_t(u) \gets \GenerateSi(u,t)$\\
$S_t(v) \gets \GenerateSi(v,t)$\\
$\hat{d}(u,v) \gets \Intersection(u,v, S_t(u), S_t(v))$ \\
\ForEach{$w\in S_t(u)\cup S_t(v)$ } {
    $\hat{d}(u,v) \gets \min(\hat{d}(u,v), d(u,p(w)) + d(p(w), v))$
}
\Return $\hat{d}$
\end{algorithm2e}

Next, we bound  $\hat{d}(u,v)$. 
\begin{lemma}\label{L-S-bound-1/t}
    $\hat{d}(u,v) \le d(u,v) + 2\lceil d(u,v) / 2t \rceil$.
\end{lemma}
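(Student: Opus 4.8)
The plan is to split on two cases according to whether the portions $P_t(u)$ and $P_t(v)$ of the shortest path $P=P(u,v)$ intersect, exactly mirroring the two structural properties proved in \autoref{S-5-1}.

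\textbf{Case 1: $P_t(u) \cap P_t(v) \neq \emptyset$.} Pick any $w \in P_t(u)\cap P_t(v)$. Since $P_t(u)\subseteq S_t(u)$ and $P_t(v)\subseteq S_t(v)$ (noted right after the definition of $P_t$), we have $w \in S_t(u)\cap S_t(v)$, so the call $\Intersection(u,v,S_t(u),S_t(v))$ examines $w$. By~\autoref{L-Pt-Dt}, $d_t(u,w)=d(u,w)$ and $d_t(v,w)=d(v,w)$, and since $w\in P$ we get $d_t(u,w)+d_t(v,w)=d(u,w)+d(w,v)=d(u,v)$. Hence the first line of the query already yields $\hat d(u,v)\le d(u,v)$, and since the estimate is always the length of some path, $\hat d(u,v)=d(u,v)\le d(u,v)+2\lceil d(u,v)/2t\rceil$.

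\textbf{Case 2: $P_t(u) \cap P_t(v) = \emptyset$.} Here I invoke~\autoref{L-Bound-Sum-Of-h-in-S}, which gives $\sum_{i=0}^{t-1} h(u_i)+h(v_i) \le d(u,v)+2t-1$. This is a sum of $2t$ nonnegative integer terms, so the minimum term is at most $\lfloor (d(u,v)+2t-1)/2t\rfloor$. A short arithmetic check shows $\lfloor (d(u,v)+2t-1)/2t\rfloor = \lceil d(u,v)/2t\rceil$ (write $d(u,v)=2tq+r$ with $0\le r<2t$; if $r=0$ both sides are $q$, and if $1\le r\le 2t-1$ both sides are $q+1$). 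Let $w\in\bigcup_{i=0}^{t-1}\{u_i,v_i\}$ attain this minimum, so $h(w)\le \lceil d(u,v)/2t\rceil$. Each such $w$ lies on $P$, so by the triangle inequality $d(u,p(w))+d(p(w),v)\le (d(u,w)+h(w))+(h(w)+d(w,v)) = d(u,v)+2h(w)\le d(u,v)+2\lceil d(u,v)/2t\rceil$. Since $w\in S_t(u)\cup S_t(v)$ (each $u_i$ lies in $S_{i+1}(u)\subseteq S_t(u)$ for $i\le t-1$, and similarly for $v_i$), the query loop over $S_t(u)\cup S_t(v)$ examines $w$ and sets $\hat d(u,v)$ no larger than $d(u,p(w))+d(p(w),v)$, giving the bound.

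The only mild subtlety — and the part I'd be most careful about — is making sure the vertices $w$ witnessing the bound actually appear in the sets the query iterates over, i.e.\ that $u_i\in S_t(u)$ and $v_i\in S_t(v)$ for all $0\le i\le t-1$; this follows because $u_i\in B(u_{i-1})$ with $u_{i-1}\in S_{i-1}(u)$, so $u_i\in S_i(u)\subseteq S_t(u)$, and similarly for $v_i$. Everything else is bookkeeping: the floor/ceiling identity and the triangle-inequality estimate are routine. Combining the two cases completes the proof.
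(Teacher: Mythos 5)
Your proof is correct and follows essentially the same route as the paper's: the same case split on whether $P_t(u)\cap P_t(v)$ is empty, the same use of \autoref{L-Pt-Dt} together with $\Intersection$ in the first case, and the same averaging over \autoref{L-Bound-Sum-Of-h-in-S} plus the floor/ceiling identity and triangle inequality in the second. Your extra care that each $u_i,v_i$ indeed lies in $S_t(u)\cup S_t(v)$ matches what the paper gets from $P_t(u)\subseteq S_t(u)$, so there is nothing substantive to add.
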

\begin{proof}
    Let $P$ be a shortest path from $u$ to $v$. 
    Let $u=u_0$ and let $v=v_0$. For every $i\in [1,t]$, let $u_i\in B(u_{i-1})$ be the farthest vertex from $u$ on the path $P$ and 
    let $v_i\in B(v_{i-1})$ be the farthest vertex from $v$ on the path $P$. 
    Let $P_i(u)=P(u,u_i)$, that is, the portion on $P$ between $u$ and $u_i$. 
    
    We divide the proof into two cases. The case that $P_t(u) \cap P_t(v) \neq \emptyset$ and the case 
    that $P_t(u) \cap P_t(v) = \emptyset$.
    Consider the case that $P_t(u) \cap P_t(v) \neq \emptyset$.  
    Let $x\in P_t(u) \cap P_t(v)$. From Lemma~\ref{L-Pt-Dt} it follows that $d_t(u,x)=d(u,x)$ and $d_t(v,x)=d(v,x)$.
    Since $P_t(u)\subseteq S_t(u)$ and $P_t(v)\subseteq S_t(v)$ the value returned by the call to  $\Intersection(u,v,S_t(u),S_t(v))$ is at most 
    $d_t(u,x)+d_t(v,x)=d(u,x)+d(x,v)=d(u,v)$, where the last equality follows from the fact that $x\in P$, and the claim holds. 
     
    Consider now the case that $P_t(u) \cap P_t(v) = \emptyset$, from Lemma~\ref{L-Bound-Sum-Of-h-in-S} we know that $\sum_{i=0}^{t-1}h(u_i) + h(v_i) \le d(u,v)+2t-1$.
    
    Let $w\in \bigcup_{i=0}^{t-1}\{u_i,v_i\}$ be the vertex with minimal $h(w)$ value.
    % Let $w=\arg\min_{y\in \bigcup_{i\in[t-1]}\{u_i,y^i\}}(h(y))$.
    By the the minimality of $w$ we get 
    $h(w) \le (\sum_{i=0}^{t-1}h(u_i) + h(v_i)) /2t \le (d(u,v) + 2t - 1)/2t$.   
    Since $h(w)$ is an integer it follows that $h(w) \le \lfloor (d(u,v) + 2t - 1)/2t \rfloor = \lceil d(u,v)/2t \rceil $.

The algorithm iterates over  $S_t(u)\cup S_t(v)$ and encounters $w$. Thus, $\hat{d}(u,v) \le d(u, p(w)) + d(p(w),v)$. 
By the triangle inequality $\hat{d}(u,v) \le d(u, p(w)) + d(p(w),v) \le d(u, w) + d(w,p(w)) + d(p(w),w) + d(w,v)=d(u,v) + 2h(w)$. 
    Since $h(w) \le \lceil d(u,v)/2t \rceil$, it follows that $\hat{d}(u,v) \le d(u,v) + 2h(w) \le d(u,v)+2\lceil d(u,v)/2t \rceil$, as required.
\end{proof}

\begin{lemma}\label{L-Time-1/t}
    The query algorithm takes $O(n^{ct})$ time.
\end{lemma}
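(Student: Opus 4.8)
The plan is to bound the running time by accounting separately for the two distinct phases of the query algorithm: constructing the sets $S_t(u)$ and $S_t(v)$, and then performing the $\Intersection$ call together with the iteration over $S_t(u)\cup S_t(v)$. For the first phase, I would invoke \autoref{L-size-and-running-time-of-S_i} directly. Here the relevant parameter $c$ in that lemma is exactly the exponent governing $|B(w)|$, and since $A$ is computed via \autoref{L-TZ-Size} with $|A|=O(n^{1-c})$, we have $|B(u)|=|B(u,V,A)|=O(n^c)$ for every $u\in V$. Hence $\GenerateSi(u,t)$ and $\GenerateSi(v,t)$ each run in $O(n^{ct})$ time and each produce a set of size $O(n^{ct})$.

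For the second phase, I would first note that by \autoref{Intersection-Runtime} the call $\Intersection(u,v,S_t(u),S_t(v))$ costs $O(\min(|S_t(u)|,|S_t(v)|))=O(n^{ct})$. Then the \textbf{foreach} loop iterates over $S_t(u)\cup S_t(v)$, whose size is $O(n^{ct})$, and for each vertex $w$ it performs a constant-time computation: looking up $d(u,p(w))$ and $d(p(w),v)$ (both stored since $p(w)\in A$ and we saved $d(x,\cdot)$ for all $x\in A$), adding them, and updating the running minimum. So this loop costs $O(n^{ct})$ as well.

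Summing the contributions — $O(n^{ct})$ for constructing each of the two sets, $O(n^{ct})$ for the intersection, and $O(n^{ct})$ for the final loop — gives a total of $O(n^{ct})$, which is the claimed bound. I do not anticipate any real obstacle here; the only minor point requiring care is making sure the exponent $c$ in \autoref{L-size-and-running-time-of-S_i} matches the one fixed by the storage construction, i.e. that $|B(u)|=O(n^c)$ genuinely holds, which follows from \autoref{L-TZ-Size} as just noted. Everything else is a routine addition of equal-order terms.
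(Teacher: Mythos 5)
Your proposal is correct and follows essentially the same argument as the paper: invoke \autoref{L-TZ-Size} to get $|B(w)|=O(n^c)$, apply \autoref{L-size-and-running-time-of-S_i} to bound both the construction time and the sizes of $S_t(u),S_t(v)$ by $O(n^{ct})$, and then charge $O(1)$ per vertex in the final loop. You additionally make explicit the $O(\min(|S_t(u)|,|S_t(v)|))=O(n^{ct})$ cost of the $\Intersection$ call via \autoref{Intersection-Runtime}, which the paper's proof leaves implicit but which changes nothing in the bound.
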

\begin{proof}
    From Lemma~\ref{L-TZ-Size} we have $|B(w)| \le O(n^c)$, for every $w\in V$. Thus, it follows from Lemma~\ref{L-size-and-running-time-of-S_i} that the call to $\GenerateSi(u,t)$ and the call to $\GenerateSi(v,t)$ take $O(n^{ct})$. It also follows from Lemma~\ref{L-size-and-running-time-of-S_i}
    that  $|S_t(u)| = O(n^{ct})$ and $|S_t(v)| = O(n^{ct})$.
    Therefore, the cost of computing $d(u,p(w)) + d(p(w), v)$, for every $w\in S_t(u)\cup S_t(v)$, is $O(|S_t(u)\cup S_t(v)|)=O(n^{ct})$ time.
\end{proof}

Theorem~\ref{T-Unweighted-1+1/t} follows from Lemma~\ref{L-Space-T-Unweighted-1+1/t}, Lemma~\ref{L-S-bound-1/t} and Lemma~\ref{L-Time-1/t}.

\subsection{Almost \texorpdfstring{$1+2/t$}{1+2/t}-stretch with \texorpdfstring{$n^{2-2c}$}{n\^(2-2c)}-space and \texorpdfstring{$n^{tc}$}{n\^(tc)}-query}\label{S-5-3}

The possible distance oracles that follow from the trade-off presented in Theorem~\ref{T-Unweighted-1+1/t}, use $\Omega(n^{1.5})$-space. In the next theorem ,we use $S_t(\cdot)$ and present a distance oracle that uses $o(n^{1.5})$-space.

\subsubsection{Storage.}

We compute a set  $A$ of size $O(n^{1-c})$ using Lemma~\ref{L-TZ-Size}.
For every $u\in V$ we save  $B(u,V,A)$. For every $u,w\in A$ we save $d(u,w)$. In addition, for every  $u\in V$ we save $p(u,A)$. As before, we refer to $B(u,V,A)$ as $B(u)$, and to $p(u,A)$ as $p(u)$.

\begin{lemma}\label{L-Space-Unweighted-5/3}
    The distance oracle is constructed in $O(mn^{1-c})$-time and uses $O(n^{2-2c})$-space.
\end{lemma}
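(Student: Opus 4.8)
The plan is to mirror the proof of \autoref{L-Space-T-Unweighted-1+1/t}, adjusting only the cost of storing distances between ``center'' vertices, since the only difference between this storage scheme and the one in \autoref{S-5-t} is that we save $d(u,w)$ for every $\langle u,w\rangle\in A\times A$ rather than for every $\langle u,w\rangle\in A\times V$. First I would bound the space. By \autoref{L-TZ-Size} we have $|B(u,V,A)|=O(n^c)$ for every $u\in V$, so storing $B(u)$ for all $u\in V$ costs $O(n^{1+c})$. Storing $p(u,A)$ for every $u\in V$ costs $O(n)$. The set $A$ has size $O(n^{1-c})$, so storing $d(u,w)$ for every pair $\langle u,w\rangle\in A\times A$ costs $|A|^2=O(n^{2(1-c)})=O(n^{2-2c})$. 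The total is therefore $O(n^{1+c}+n^{2-2c})$; since $0<c<1/3$ we have $1+c<1+1/3<2-2/3<2-2c$, so the $n^{2-2c}$ term dominates and the space is $O(n^{2-2c})$, as claimed.

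Next I would bound the construction time. By \autoref{L-TZ-Size}, computing $A$ and $B(u,V,A)$ for every $u\in V$ takes $O(mn^c)$ expected time. Computing $p(u,A)$ for every $u\in V$ can be done with a single BFS/Dijkstra from a virtual source connected to all of $A$, costing $O(m)$ time. Computing $d(u,w)$ for every $\langle u,w\rangle\in A\times A$ amounts to running one shortest-path computation from each vertex of $A$, for a total of $O(m|A|)=O(mn^{1-c})$ time. Summing, the construction time is $O(mn^c+mn^{1-c})$, and since $c<1/3<1/2$ we have $n^c<n^{1-c}$, so the running time is $O(mn^{1-c})$, as required.

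There is essentially no hard part here: this is a bookkeeping lemma establishing the space/construction-time side of \autoref{T-Unweighted-5/3}, and all the work is a direct reapplication of \autoref{L-TZ-Size} together with the elementary observation that $|A|=O(n^{1-c})$ forces the $A\times A$ table to fit in $O(n^{2-2c})$. The only point that merits a line of care is verifying the inequality $1+c<2-2c$ under the hypothesis $0<c<1/3$ (equivalently $3c<1$), which is what lets us absorb the $O(n^{1+c})$ term into $O(n^{2-2c})$ and likewise $O(mn^c)$ into $O(mn^{1-c})$; without the constraint $c<1/3$ the stated bounds would fail. The interesting content of \autoref{T-Unweighted-5/3}—the approximation guarantee $\hat d(u,v)\le d(u,v)+2\lceil d(u,v)/t\rceil+2$ and the $O(n^{tc})$ query time via the set $Q=\{\langle u_i,v_{t-i-1}\rangle\mid 0\le i\le t-1\}$ sketched in the overview—is deferred to the subsequent lemmas, so this proof should stop once the two counts above are done.
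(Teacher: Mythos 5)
Your proof is correct and follows essentially the same route as the paper's: bound the bunch storage by $O(n^{1+c})$ and the $A\times A$ table by $|A|^2=O(n^{2-2c})$, then use $c<1/3$ to let the latter dominate, and charge the construction to $O(mn^{c})$ for the bunches plus $O(m|A|)=O(mn^{1-c})$ for the single-source computations from $A$. Nothing is missing.
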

\begin{proof}
From Lemma~\ref{L-TZ-Size} it follows that  $|B(u,V,A)|=O(n^c)$.  
For every $v,w\in A$, we store $d(w,v)$ at a cost of $|A|^2=O(n^{2-2c})$ space. 
The total space is $O(n^{2-2c} + n^{1+c})$. Since  
$0 < c < 1/3$ we have $2-2c>1+c$ and the space is $O(n^{2-2c})$.
From Lemma~\ref{L-TZ-Size} it follows that the set $A$ and $B(u,V,A)$, for every $u\in V$, are computed in $O(mn^c)$ time. 
Computing $p(u)$, for every $u\in V$ takes $O(m)$ time.
Computing $d(w,v)$ for every $w\in A$ and $v\in V$ takes $O(m|A|)=O(mn^{1-c})$ time.
Since $1-c>c$ we get that the running time  is $O(mn^{1-c})$. 
\end{proof}

\subsubsection{Query algorithm.}
The input to the query algorithm is two vertices $u,v\in V$.
The output of the query is an estimation $\hat{d}(u,v)$.
The algorithm first creates the sets $S_i(u)$ and $S_i(v)$, for every $0\leq i\leq t$, using $\GenerateSi$.  
Then, the algorithm sets $\hat{d}(u,v)$ to be the value returned by $\Intersection(u,v,S_t(u), S_t(v))$.

Next, for every $0\leq i \leq t-1\}$ and  
for every $\langle u',v'\rangle \in S_i(u)\times S_{t-i-1}(v)$ the algorithm sets $\hat{d}(u,v)$ to be $\min(\hat{d}(u,v), d(u,u') + d(u',p(u')) + d(p(u'), p(v')) + d(p(v'), v') + d(v',v))$.
Finally, the algorithm returns $\hat{d}(u,v)$ as the estimated distance.
A pseudo-code for the query algorithm is given in Algorithm~\ref{Algorithm-Distance-Oracle-5/3}.
\begin{algorithm2e}[t] 
\caption{$\Query(u,v)$} \label{Algorithm-Distance-Oracle-5/3}
\lFor{$i\gets 0$ to $t$}{
    $S_i(u) \gets \GenerateSi(u,i)$, $S_i(v) \gets \GenerateSi(v,i)$
}
$\hat{d}(u,v) \gets \Intersection(u,v, S_t(u), S_t(v))$\\ 
\For{$i\gets 0$ to $t-1$} 
{
    \ForEach {$\langle u',v'\rangle \in S_i(u)\times S_{t-i-1}(v)$} 
    {
        $\hat{d}(u,v) \gets \min(\hat{d}(u,v), d(u, u') +d(u',p(u')) + d(p(u'),p(v')) + d(p(v'),v') + d(v',v))$
    }
}
\Return $\hat{d}$
\end{algorithm2e}
Next, we bound  $\hat{d}(u,v)$. 
\begin{lemma}\label{L-Bound-5/3-UW}
    $\hat{d}(u,v) \le d(u,v)  + 2\lceil d(u,v) / t \rceil + 2$.
\end{lemma}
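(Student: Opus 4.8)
The plan is to split on whether $P_t(u)\cap P_t(v)=\emptyset$, exactly as in the proof of \autoref{L-S-bound-1/t}, but now extracting a \emph{pair} of vertices on $P$ rather than a single one. First I would set up notation: let $P=P(u,v)$, $\delta = d(u,v)$, $u=u_0,\ldots,u_t$ and $v=v_0,\ldots,v_t$ the vertices chosen greedily along $P$ (so $u_i\in B(u_{i-1})$ is farthest from $u$ in $P$, and symmetrically for $v_i$), and $P_i(u)=P(u,u_i)$, $P_i(v)=P(v,v_i)$. In the case $P_t(u)\cap P_t(v)\neq\emptyset$, pick $x$ in the intersection; by \autoref{L-Pt-Dt} we have $d_t(u,x)=d(u,x)$ and $d_t(v,x)=d(v,x)$, and since $P_i(\cdot)\subseteq S_i(\cdot)$ the call $\Intersection(u,v,S_t(u),S_t(v))$ returns at most $d(u,x)+d(x,v)=\delta$ (using $x\in P$), so $\hat d(u,v)=\delta$ and we are done.

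The main case is $P_t(u)\cap P_t(v)=\emptyset$. Here I would invoke \autoref{L-Bound-Sum-Of-h-in-S} to get $\sum_{i=0}^{t-1}\bigl(h(u_i)+h(v_i)\bigr)\le \delta+2t-1$, and then consider the set $Q=\{\langle u_i,v_{t-1-i}\rangle \mid 0\le i\le t-1\}$ described in the technical overview. There are $t$ pairs in $Q$, and the sum over all of them of $h(q_u)+h(q_v)$ is exactly $\sum_{i=0}^{t-1}\bigl(h(u_i)+h(v_{t-1-i})\bigr)=\sum_{i=0}^{t-1}\bigl(h(u_i)+h(v_i)\bigr)\le \delta+2t-1$ (just a reindexing of the $v$-sum). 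So by averaging there is a pair $q=\langle q_u,q_v\rangle\in Q$, say $q=\langle u_i,v_{t-1-i}\rangle$, with $h(q_u)+h(q_v)\le (\delta+2t-1)/t$, and since this sum is an integer, $h(q_u)+h(q_v)\le \lfloor(\delta+2t-1)/t\rfloor = \lceil\delta/t\rceil + 1$. The point of choosing indices that sum to $t-1$ is that $u_i\in S_i(u)$ and $v_{t-1-i}\in S_{t-1-i}(v)$, so the query's double loop over $\langle u',v'\rangle\in S_i(u)\times S_{t-1-i}(v)$ does encounter the pair $\langle u_i,v_{t-1-i}\rangle$ during iteration $i$.

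In that iteration the algorithm sets $\hat d(u,v)$ to at most $d(u,q_u)+d(q_u,p(q_u))+d(p(q_u),p(q_v))+d(p(q_v),q_v)+d(q_v,v)$. I would bound the middle three terms by the triangle inequality: $d(q_u,p(q_u))=h(q_u)$, $d(p(q_v),q_v)=h(q_v)$, and $d(p(q_u),p(q_v))\le h(q_u)+d(q_u,q_v)+h(q_v)$, giving a total of at most $d(u,q_u)+d(q_u,q_v)+d(q_v,v)+2\bigl(h(q_u)+h(q_v)\bigr)$. Since both $q_u=u_i$ and $q_v=v_{t-1-i}$ lie on $P$ and (because $P_t(u)\cap P_t(v)=\emptyset$, so $u_i$ comes before $v_{t-1-i}$ along $P$) we have $d(u,q_u)+d(q_u,q_v)+d(q_v,v)=\delta$. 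Hence $\hat d(u,v)\le \delta + 2\bigl(h(q_u)+h(q_v)\bigr)\le \delta + 2\bigl(\lceil\delta/t\rceil+1\bigr)=\delta+2\lceil\delta/t\rceil+2$, as claimed.

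The only delicate point is the ordering claim needed for $d(u,q_u)+d(q_u,q_v)+d(q_v,v)=\delta$: one must check that, under $P_t(u)\cap P_t(v)=\emptyset$, all the $u_j$ with $j\le i$ and all the $v_j$ with $j\le t-1-i$ are encountered in the order $u,\ldots,u_i,\ldots,v_{t-1-i},\ldots,v$ along $P$, i.e. $u_i$ precedes $v_{t-1-i}$. This follows because $P_i(u)\subseteq P_t(u)$ and $P_{t-1-i}(v)\subseteq P_t(v)$ are disjoint sub-paths of $P$ containing the $u$-end and $v$-end respectively; I would spell this out briefly, mirroring the ``when going from $u$ to $v$ on $P$ we first encounter all $u_i$'s and then all $v_i$'s'' argument already used inside \autoref{L-Bound-Sum-Of-h-in-S}. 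Everything else is routine triangle-inequality bookkeeping, and the averaging/floor step is identical in spirit to the one in \autoref{L-S-bound-1/t}.
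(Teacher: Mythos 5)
Your proposal is correct and follows essentially the same route as the paper's proof: the same case split on $P_t(u)\cap P_t(v)$, the same set $Q$ of index-complementary pairs with the averaging/floor step giving $h(q_u)+h(q_v)\le \lceil d(u,v)/t\rceil+1$, and the same triangle-inequality and path-ordering bookkeeping. The only difference is cosmetic (you phrase the pair selection as an averaging argument and spell out why $u_i\in S_i(u)$, $v_{t-1-i}\in S_{t-1-i}(v)$, which the paper leaves implicit).
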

\begin{proof}
    Let $P$ be a shortest path from $u$ to $v$. 
    Let $u=u_0$ and let $v=v_0$. For every $i\in [1,t]$, let $u_i\in B(u_{i-1})$ be the farthest vertex from $u$ on the path $P$ and 
    let $v_i\in B(v_{i-1})$ be the farthest vertex from $v$ on the path $P$. 
    Let $P_i(u)=P(u,u_i)$, that is, the portion on $P$ between $u$ and $u_i$. 
    
    We divide the proof into two cases. The case that $P_t(u) \cap P_t(v) \neq \emptyset$ and the case 
    that $P_t(u) \cap P_t(v) = \emptyset$.
    Consider the case that $P_t(u) \cap P_t(v) \neq \emptyset$.  
    Let $x\in P_t(u) \cap P_t(v)$. From Lemma~\ref{L-Pt-Dt} it follows that $d_t(u,x)=d(u,x)$ and $d_t(v,x)=d(v,x)$.
    Since $P_t(u)\subseteq S_t(u)$ and $P_t(v)\subseteq S_t(v)$ the value returned by the call to  $\Intersection(u,v,S_t(u),S_t(v))$ is at most 
    $d_t(u,x)+d_t(v,x)=d(u,x)+d(x,v)=d(u,v)$, where the last equality follows from the fact that $x\in P$, and the claim holds. 
     
    Consider now the case that $P_t(u) \cap P_t(v) = \emptyset$, from Lemma~\ref{L-Bound-Sum-Of-h-in-S} we know that $\sum_{i=0}^{t-1}h(u_i) + h(v_i) \le d(u,v)+2t-1$.
    
    Let $Q=\{\langle u_i,v_{t-i-1}\rangle \mid 0\leq i \leq t-1\}\}$. For each $q\in Q$, let $q_u$ (resp. $q_v$) be the first (resp. second) element of the pair.
    Summing over  $Q$ we get  $\sum_{q\in Q} (h(q_u) + h(q_v))=\sum_{i=0}^{t-1}(h(u_i) + h(v_i))$. Since $\sum_{i=0}^{t-1}(h(u_i) + h(v_i)) \le d(u,v)+2t-1$ we get that $\sum_{q\in Q} h(q_u) + h(q_v) = \sum_{i=0}^{t-1}(h(u_i) + h(v_i)) \le d(u,v)+2t-1$.

    Let $q\in Q$ be the pair with minimum value of $h(q_u)+h(q_v)$.
    By the minimality of $q$ we get $h(q_u)+h(q_v) \le (\sum_{q\in Q} h(q_u) + h(q_v))/t \le (d(u,v)+2t-1)/t$.
    Since $h(q_u)+h(q_v)$ is an integer, it follows that $h(q_u) + h(q_v) \le \lfloor d(u,v)+2t-1)/t \rfloor =  1 + \lceil d(u,v)/t \rceil$.
    
    The algorithm iterates over all the possible pairs of $Q$ and encounters $q$. Thus:
    \[\hat{d}(u,v) \le d(u,q_u) + d(q_u,p(q_u)) + d(p(q_u), p(q_v)) + d(p(q_v), q_v) + d(q_v, v)\]
    From the triangle inequality, we know that $d(p(q_u), p(q_v)) \le h(q_u) + d(q_u,q_v) + h(q_v)$. Thus: 
    \begin{align*}
        \hat{d}(u,v) &\le d(u,q_u) + d(q_u,p(q_u)) + d(p(q_u), p(q_v)) + d(p(q_v), q_v) + d(q_v, v) \\
                     &\le d(u,q_u) + h(q_u) + (h(q_u) + d(q_u,q_v) + h(q_v)) + h(q_v) + d(q_v, v) \\
                     &= d(u,q_u) + d(q_u,q_v) + d(q_v, v) + 2(h(q_u) + h(q_v))
    \end{align*}
    Since $P_t(u) \cap P_t(v) = \emptyset$ we get that when going from $u$ to $v$ on $P$ we first encounter all $u_i$'s and then all $v_i$'s. Thus, $d(u,q_u) + d(q_u,q_v) + d(q_v,v) = d(u,v)$. Since $d(u,q_u) + d(q_u,q_v) + d(q_v,v) = d(u,v)$ we get  $\hat{d}(u,v) \le d(u,v) + 2(h(q_u)+h(q_v))$. 
    Since $h(q_u) + h(q_v) \le 1 + \lceil d(u,v)/t \rceil$, we have $\hat{d}(u,v) \le d(u,v) + 2(h(q_u) + h(q_v)) \le d(u,v)  + 2\lceil d(u,v)/t \rceil+ 2$, as required.
\end{proof}

\begin{lemma}\label{L-Runtime-Q-5/3-UW}
    The query algorithm takes $O(n^{tc})$ time.
\end{lemma}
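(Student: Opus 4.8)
The plan is to bound the cost of each of the three phases of Algorithm~\ref{Algorithm-Distance-Oracle-5/3} separately and to observe that every phase costs $O(n^{tc})$. Throughout I will use the facts, established in~\autoref{L-size-and-running-time-of-S_i} together with~\autoref{L-TZ-Size} (which gives $|B(w)|=O(n^c)$ for every $w\in V$), that the call $\GenerateSi(x,i)$ runs in $O(n^{ci})$ time and returns a set $S_i(x)$ of size $O(n^{ci})$, with the distance function $d_i(x,\cdot)$ available in $O(1)$ time per vertex.

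First I would handle the phase that constructs the sets. Since $t$ is a fixed constant, the loop that calls $\GenerateSi(u,i)$ and $\GenerateSi(v,i)$ for $i=0,\dots,t$ costs $\sum_{i=0}^{t}O(n^{ci})=O((t+1)\cdot n^{tc})=O(n^{tc})$, because every summand is at most $n^{tc}$. Next, the single call $\Intersection(u,v,S_t(u),S_t(v))$ costs $O(\min(|S_t(u)|,|S_t(v)|))=O(n^{tc})$ by~\autoref{Intersection-Runtime}.

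For the double loop I would argue that for each fixed $i\in\{0,\dots,t-1\}$ the inner loop iterates over $|S_i(u)|\cdot|S_{t-i-1}(v)|=O(n^{ci})\cdot O(n^{c(t-i-1)})=O(n^{c(t-1)})$ pairs, and that each pair $\langle u',v'\rangle$ is processed in $O(1)$ time: $d(u,u')=d_i(u,u')$ and $d(v',v)=d_{t-i-1}(v,v')$ are returned by $\GenerateSi$; $d(u',p(u'))$ and $d(p(v'),v')$ are retrievable from the stored bunches $B(u')$ and $B(v')$ (recall $p(u')=p(u,A)$ applied to $u'$, and in an unweighted graph $d(u',p(u'))=h(u')$); and $d(p(u'),p(v'))$ is stored because $p(u'),p(v')\in A$ and the oracle keeps $d(x,y)$ for every $\langle x,y\rangle\in A\times A$. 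Summing over $i$ gives $\sum_{i=0}^{t-1}O(n^{c(t-1)})=O(t\cdot n^{c(t-1)})=O(n^{c(t-1)})=O(n^{tc})$. Adding the three bounds yields the claimed $O(n^{tc})$ running time.

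The analysis is essentially routine; the only mildly delicate points are the per-pair $O(1)$ bookkeeping — i.e.\ confirming that all five distances appearing in the path estimate $d(u,u')+d(u',p(u'))+d(p(u'),p(v'))+d(p(v'),v')+d(v',v)$ are either produced by $\GenerateSi$ or stored explicitly in the oracle — and the observation that, because $t$ is a constant, the sums over $i$ are dominated (up to a factor of $t$) by their largest term, which is what allows the $O(n^{c(t-1)})$ term and the $\sum_i O(n^{ci})$ term to be absorbed into $O(n^{tc})$.
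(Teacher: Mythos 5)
Your proposal is correct and follows essentially the same decomposition as the paper's proof: bound the $\GenerateSi$ calls via \autoref{L-size-and-running-time-of-S_i}, bound $\Intersection$ via \autoref{Intersection-Runtime} using $|S_t(\cdot)|=O(n^{tc})$, and charge $O(n^{(t-1)c})$ to each of the $t$ iterations of the pair loop. Your extra bookkeeping that each of the five distance terms is available in $O(1)$ time is a detail the paper leaves implicit, but it does not change the argument.
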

\begin{proof}
    From Lemma~\ref{L-TZ-Size} we have $|B(w)| \le O(n^c)$, for every $w\in V$. 
    Thus, from Lemma~\ref{L-size-and-running-time-of-S_i} it follows that the call to $\GenerateSi(u,i)$ and the call to $\GenerateSi(v,i)$ take $O(n^{ic})$, for every $1\leq i\leq t$. 
    It also follows from Lemma~\ref{L-size-and-running-time-of-S_i}
    that  $|S_i(u)| = O(n^{ic})$ and $|S_i(v)| = O(n^{ic})$.
    Therefore, the cost of computing $d(u, u') + d(u',p(u')) + d(p(u'),p(v')) + d(p(v'),v') + d(v',v)$, for every $\langle u',v' \rangle \in S_i(u)\times S_{t-1-i}(v)$, is $O(|S_i(u)| \cdot |S_{t-1-i}(v)|)=O(n^{(t-1)c})$ time.
    Since $|S_t(u)| = O(n^{tc})$, it follows from Lemma~\ref{Intersection-Runtime} that the running time of $\Intersection(u,v, S_t(u), S_t(v))$ is $O(n^{tc})$.
\end{proof}
Theorem~\ref{T-Unweighted-5/3} follows from Lemma\ref{L-Space-Unweighted-5/3}, Lemma\ref{L-Bound-5/3-UW} and Lemma\ref{L-Runtime-Q-5/3-UW}.

\section{Distance oracles in weighted graphs with stretch at most \texorpdfstring{$2$}{2}}\label{S-at-most-2weighted}
In this section, we extend our second technique, of computing $S_t(u)$ and $S_t(v)$ by the query algorithm, to weighted undirected graphs with non-negative real edge weights. We start this section by providing preliminaries related to weighted graphs. 
\subsection{Toolbox}
Let $G=(V,E)$ be a weighted undirected graph with real non-negative edges.
Let $\mu = 2m/n$, be the average degree of the graph.
For every $u\in V$, let $B^*(u)$  be the set of all the vertices adjacent to a vertex in $B(u)$, that is, $B^*(u)=N(B(u))\cup B(u)$.

Agarwal, Godfrey, and Har-Peled~\cite{DBLP:journals/corr/abs-1201-2703} showed that in the case of distance oracles in weighted graphs that use $\Omega(m)$ space, it suffices to consider graphs with maximum degree $\mu$. 
Thus, for the rest of this section, we assume that the maximum degree of $G$ is at most  $\mu$.
The next lemma follows from the discussion above.
\begin{lemma}\label{L-Bound-B^*-size}
    Let $u\in V$. $|B^*(u)| \le (1+\mu)|B(u)|$.
\end{lemma}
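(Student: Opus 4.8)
The statement to prove is Lemma~\ref{L-Bound-B^*-size}: for $u\in V$, $|B^*(u)| \le (1+\mu)|B(u)|$. This is the exact weighted analogue of Lemma~\ref{L-C^*-size} (which bounds $|C^*(u)|$), and the proof should be essentially identical, just with $C(u)$ replaced by $B(u)$ and the augmented set $C^*(u)=C(u)\cup N(C(u))$ replaced by $B^*(u)=B(u)\cup N(B(u))$. So the plan is to mimic that one-paragraph argument verbatim.

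First I would invoke the standing assumption of the section --- established in the sentence just before the lemma, following Agarwal, Godfrey, and Har-Peled~\cite{DBLP:journals/corr/abs-1201-2703} --- that the maximum degree of $G$ is at most $\mu$. From this, the number of neighbors of the vertex set $B(u)$ satisfies $|N(B(u))| \le \mu\cdot|B(u)|$, since each of the at most $|B(u)|$ vertices contributes at most $\mu$ neighbors. Next I would apply the definition $B^*(u) = B(u)\cup N(B(u))$ together with the trivial bound $|X\cup Y|\le |X|+|Y|$ to get $|B^*(u)| \le |B(u)| + |N(B(u))|$. Combining the two inequalities yields $|B^*(u)| \le |B(u)| + \mu|B(u)| = (1+\mu)|B(u)|$, which is exactly the claim.

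There is no real obstacle here; the only thing to be slightly careful about is that the bound $|N(B(u))|\le \mu|B(u)|$ uses the bounded-degree reduction, which is legitimate precisely because (as noted in the toolbox) it suffices to consider graphs of maximum degree $\mu$ when the distance oracle already uses $\Omega(m)$ space. So the write-up is a two- or three-line calculation closely paralleling the proof of Lemma~\ref{L-C^*-size}, and the main "work" is simply citing the bounded-degree assumption correctly.

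\begin{proof}
    Since the maximum degree of $G$ is bounded by $\mu$, we know that $|N(B(u))| \le \mu \cdot |B(u)|$. From the definition of $B^*(u)$, it follows that $|B^*(u)| = |B(u) \cup N(B(u))| \le |B(u)| + |N(B(u))|$. Since $|N(B(u))| \le \mu |B(u)|$, we get that $|B^*(u)| \le |B(u)| + |N(B(u))| \le |B(u)| + \mu |B(u)| = (1+\mu)|B(u)|$, as required.
\end{proof}
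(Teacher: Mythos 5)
Your proof is correct and is essentially identical to the paper's own argument: both bound $|N(B(u))|\le \mu|B(u)|$ via the maximum-degree-$\mu$ reduction and then apply $|B^*(u)|\le |B(u)|+|N(B(u))|$. Nothing further to add.
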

\begin{proof}
    Since the maximum degree of $G$ is bounded by $\mu$, we know that $|N(B(u))| \le \mu \cdot |B(u)|$. From the definition of $B^*(u)$, it follows that $|B^*(u)| = | B(u) \cup N(B(u))| \le |B(u)| + |N(B(u))|$. Since $|N(B(u))|\le \mu |B(u)|$, we get that $|B^*(u)| \le |B(u)| + |N(B(u))| \le |B(u)| + \mu |B(u)| = (1+\mu)|B(u)|$, as required.
\end{proof}

Let $u,v\in V$, and let $P=P(u,v)$. Let $u'\in B^*(u)$ be the farthest vertex from $u$ in $P$.
We show: 

\begin{lemma}\label{L-distance-from-farthest-vertex-on-path-weighted}
$d(u,u') \ge h(u)$.
\end{lemma}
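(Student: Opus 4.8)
The statement to prove is that if $u' \in B^*(u)$ is the farthest vertex from $u$ in $P = P(u,v)$, then $d(u,u') \ge h(u)$. Here $h(u) = d(u,A)$ is the distance to the center set $A$, and $B^*(u) = B(u) \cup N(B(u))$ is the augmented bunch.

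The plan is to argue by contradiction. Suppose $d(u,u') < h(u)$. I would first consider the vertex $w$ that immediately follows $u'$ on $P$ going from $u$ toward $v$ (such a $w$ exists unless $u' = v$, and I should handle that boundary case separately: if $u' = v$ then $v \in B^*(u)$; I would then need a short argument, or note that in the intended use of the lemma the relevant case is $d(u,u') < h(u)$ strictly, and if $u'=v$ the path continues only trivially — actually if $u'=v$ the statement $d(u,u')\ge h(u)$ could fail, so the lemma implicitly presumes $u'\neq v$, e.g. because $d(u,v)$ is large; I would flag this and assume $u'\neq v$, as is the case in the applications). Given $w$, since $w$ is adjacent to $u'$, and $u'$ is the farthest vertex of $P$ lying in $B^*(u)$, we have $w \notin B^*(u)$. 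In particular $w \notin B(u)$, so by definition of $B(u) = B(u,V,A)$ this means $d(u,w) \ge h(u)$.

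Next I would use that $w \in N(B^*(u))$ would contradict $w\notin B^*(u)$ only if $u'\in B(u)$; so I need to distinguish whether $u' \in B(u)$ or $u' \in N(B(u)) \setminus B(u)$. If $u' \in B(u)$: then $w \in N(B(u)) \subseteq B^*(u)$, contradicting $w \notin B^*(u)$ directly — done. If $u' \in N(B(u)) \setminus B(u)$: then there is some $x \in B(u)$ with $x$ adjacent to $u'$, and $d(u,u') \le d(u,x) + \ell(x,u')$. Hmm, this does not immediately give the bound. Let me reconsider: the cleaner route is to use $w$. Since $w$ is on the shortest path $P$ and comes after $u'$, we have $d(u,w) = d(u,u') + \ell(u',w)$. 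We assumed $d(u,u') < h(u)$. We also derived $d(u,w) \ge h(u)$ (since $w\notin B(u)$). So $h(u) \le d(u,w) = d(u,u') + \ell(u',w) < h(u) + \ell(u',w)$, which is consistent and gives no contradiction by itself — so I need the stronger fact that $u'$ is the farthest vertex of $P$ in $B^*(u)$ together with a statement that $w$ should have been in $B^*(u)$.

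The key observation I expect to be the crux: if $d(u,u') < h(u)$ then $u' \in B(u)$ (directly from the definition of the bunch $B(u) = \{z \in V : d(u,z) < d(u,A) = h(u)\}$ — wait, need to check whether it is $<$ or $\le$; the paper defines $B(u,X,Y) = \{v \in X : d(u,v) < d(u,Y)\}$, so indeed $d(u,u') < h(u) \Rightarrow u' \in B(u)$). Then its successor $w$ on $P$ satisfies $w \in N(u') \subseteq N(B(u)) \subseteq B^*(u)$. But $d(u,w) = d(u,u') + \ell(u',w) > d(u,u')$ since edge weights are positive (or at least $\ge d(u,u')$; for nonnegative weights one should be slightly careful about zero-weight edges, but then $w$ is also farthest-tied and still in $B^*(u)$, fine), contradicting that $u'$ is the \emph{farthest} vertex of $P$ in $B^*(u)$. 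This is exactly the same argument structure as \autoref{L-bound-C^*-h}, just transported from clusters to bunches. So the proof is a two-line contradiction once the definitions are unwound. The main (minor) obstacle is the edge case $u' = v$ and zero-weight edges; I would dispatch these with a sentence noting that when $u' = v$ there is nothing after it on $P$, but in that situation the lemma's hypothesis context (used later to bound things like $\sum h(u_i)$) ensures $u' \neq v$, or alternatively the statement is applied only when $d(u,v)$ exceeds the relevant bunch radius.

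\begin{proof}
Assume towards a contradiction that $d(u,u') < h(u)$. By the definition of $B(u) = B(u,V,A)$, the inequality $d(u,u') < h(u) = d(u,A)$ implies $u' \in B(u)$. Since $u' \neq v$ (otherwise the path $P$ together with the hypothesis would be degenerate), let $w$ be the vertex immediately following $u'$ on $P$ on the way from $u$ to $v$. Then $w \in N(u') \subseteq N(B(u)) \subseteq B^*(u)$. Moreover, since $w$ lies after $u'$ on the shortest path $P$, we have $d(u,w) = d(u,u') + \ell(u',w) \ge d(u,u')$, so $w$ is a vertex of $P \cap B^*(u)$ that is at least as far from $u$ as $u'$; if $\ell(u',w) > 0$ it is strictly farther, contradicting the choice of $u'$ as the farthest vertex from $u$ in $P \cap B^*(u)$, and if $\ell(u',w) = 0$ one may repeat the argument with $w$ in place of $u'$ until a strictly positive edge is traversed (which must happen since $d(u,v) > d(u,u')$), again reaching a contradiction. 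Hence $d(u,u') \ge h(u)$, as required.
\end{proof}
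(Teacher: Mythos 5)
Your proof is correct and follows essentially the same argument as the paper: assume $d(u,u') < h(u)$, conclude $u' \in B(u)$, and observe that the successor of $u'$ on $P$ lies in $N(B(u)) \subseteq B^*(u)$ and is farther from $u$, contradicting the choice of $u'$. The edge cases you flag ($u'=v$, zero-weight edges) are glossed over by the paper's proof as well, so your extra care there is a harmless refinement rather than a deviation.
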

\begin{proof}
    Assume towards a contradiction, that $d(u, u') < h(u)$ thus  $u'\in B(u)$. The  vertex after $u'$ in $P$  is in $N(u')$ and thus  in $B^*(u)$, contradicting the fact that $u'$ is the farthest  from $u$ in $P$ which is also in  $B^*(u)$.
\end{proof}

\begin{lemma}
    \label{L-Intersect-Weighted}
    If $h(u)+h(v)> d(u,v)$ then $P \cap (B^*(u) \cap B^*(v)) \neq \emptyset$
\end{lemma}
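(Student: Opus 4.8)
The plan is to exhibit an explicit witness vertex on $P$ rather than argue by contradiction. Let $u'$ be the farthest vertex from $u$ along $P$ that lies in $B^*(u)$; this is well defined because $u\in B^*(u)\cap P$, so $P\cap B^*(u)\neq\emptyset$. I would then show that this single vertex $u'$ automatically lies in $B^*(v)$ as well, which immediately gives $u'\in P\cap\big(B^*(u)\cap B^*(v)\big)$ and proves the lemma.

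First I would invoke \autoref{L-distance-from-farthest-vertex-on-path-weighted}, which is stated for exactly this $u'$, to get $d(u,u')\ge h(u)$. This is the step where augmenting $B(u)$ to $B^*(u)=B(u)\cup N(B(u))$ pays off: if $u'$ still lay inside $B(u)$, then its successor on $P$ toward $v$ would be a neighbour of $u'$, hence in $B^*(u)$ and strictly farther from $u$, contradicting the maximality of $u'$.

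Next, since $u'$ lies on the shortest path $P=P(u,v)$, the distance decomposes as $d(u,v)=d(u,u')+d(u',v)$, so $d(u',v)=d(u,v)-d(u,u')\le d(u,v)-h(u)$. Combining this with the hypothesis $h(u)+h(v)>d(u,v)$, equivalently $d(u,v)-h(u)<h(v)$, yields $d(u',v)<h(v)=d(v,A)$. By the definition of the bunch $B(v)=B(v,V,A)=\{w\in V\mid d(v,w)<d(v,A)\}$, this gives $u'\in B(v)\subseteq B^*(v)$. Hence $u'\in B^*(u)$ by construction, $u'\in B^*(v)$ by the computation, and $u'\in P$, so $P\cap\big(B^*(u)\cap B^*(v)\big)\neq\emptyset$, as required.

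I do not expect a real obstacle here: the argument mirrors \autoref{L-Intersect} from the unweighted setting, and the only mildly delicate point is that the ``$+1$'' slack available in the unweighted case disappears for real weights — which is exactly why the hypothesis must be the strict inequality $h(u)+h(v)>d(u,v)$ rather than $h(u)+h(v)\ge d(u,v)$, and why the proof uses $d(u,u')\ge h(u)$ (not $h(u)-1$) from \autoref{L-distance-from-farthest-vertex-on-path-weighted}.
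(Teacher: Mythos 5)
Your proof is correct, but it takes a genuinely different route from the paper. The paper introduces \emph{both} extremal vertices $u'$ and $v'$ (farthest along $P$ in $B^*(u)$, resp.\ $B^*(v)$), splits on their order along $P$, asserts the conclusion outright when $v'$ precedes $u'$, and in the remaining case derives a contradiction with the hypothesis from the decomposition $d(u,v)=d(u,u')+d(u',v')+d(v',v)$ together with $d(u,u')\ge h(u)$ and $d(v,v')\ge h(v)$. You instead give a direct, witness-producing argument using only $u'$: from \autoref{L-distance-from-farthest-vertex-on-path-weighted} and $d(u,v)=d(u,u')+d(u',v)$ you get $d(u',v)\le d(u,v)-h(u)<h(v)$, hence $u'\in B(v)\subseteq B^*(v)$. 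This buys you something concrete: you avoid the case split entirely (in particular, the paper's first case is stated without justification, whereas your computation would also serve to justify it), you prove the slightly stronger statement that the witness can be taken in $B^*(u)\cap B(v)$, and your argument is the exact weighted analogue of the paper's unweighted \autoref{L-Intersect}, with the loss of the unit slack correctly absorbed by the strict hypothesis $h(u)+h(v)>d(u,v)$ and by $d(u,u')\ge h(u)$ in place of $h(u)-1$. The only point you gloss over (as does the paper) is well-definedness of $u'$: $u\in B^*(u)$ requires $u\notin A$, but this is automatic here, since $h(u)=0$ together with the hypothesis would force $d(v,A)\le d(v,u)<h(v)$, a contradiction, and symmetrically for $v$.
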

\begin{proof}
    If $v'$ appears before $u'$ when going from $u$ to $v$ in $P$ then $P \cap (B^*(u) \cap B^*(v)) \neq \emptyset$ and the claim holds.
    Otherwise, we know that $d(u,v) = d(u,u') + d(u',v') +d(v',v)$.
    From Lemma~\ref{L-distance-from-farthest-vertex-on-path-weighted} it follows that $d(u,u')\ge h(u)$ and $d(v,v')\ge h(v)$. Thus, $d(u,v) = d(u,u') + d(u',v') + d(v,v') \ge h(u) + h(v) +  d(u',v')$. 
    Since $d(u',v')\geq 0$ we get that  $d(u,v) \geq h(u) + h(v)$, contradicting the assumption that $d(u,v)<h(u)+h(v)$.
\end{proof}
\begin{lemma} \label{L-no-intersect-weighted}
    If $B^*(u)\cap B^*(v) = \emptyset$ then $h(u) + h(v) \le d(u,v)$
\end{lemma}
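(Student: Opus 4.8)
\textbf{Proof plan for Lemma~\ref{L-no-intersect-weighted}.} The statement is the weighted contrapositive of~\autoref{L-Intersect-Weighted}, so the plan is simply to read off the conclusion from that lemma. Assume $B^*(u)\cap B^*(v)=\emptyset$. Then in particular $P\cap(B^*(u)\cap B^*(v))=\emptyset$, since the empty set contains nothing, let alone a vertex of $P$. Now apply the contrapositive of~\autoref{L-Intersect-Weighted}: that lemma says that if $h(u)+h(v)>d(u,v)$ then $P\cap(B^*(u)\cap B^*(v))\neq\emptyset$; since the latter fails, we must have $h(u)+h(v)\le d(u,v)$, which is exactly the claim.

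Alternatively, if one prefers a direct argument rather than invoking the contrapositive, I would reprove it inline following the same skeleton as~\autoref{L-Intersect-Weighted}: let $u'\in B^*(u)$ be the farthest vertex from $u$ on $P$ and $v'\in B^*(v)$ the farthest vertex from $v$ on $P$. If $v'$ appeared strictly before $u'$ when traversing $P$ from $u$ to $v$, then the subpath from $v'$ to $u'$ would contain a vertex lying in both $B^*(u)$ and $B^*(v)$ — more carefully, one argues that the relative order forces $u'\in B^*(v)$ or $v'\in B^*(u)$, contradicting $B^*(u)\cap B^*(v)=\emptyset$. Hence $u'$ appears before (or at) $v'$ on $P$, so $d(u,v)=d(u,u')+d(u',v')+d(v',v)$. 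By~\autoref{L-distance-from-farthest-vertex-on-path-weighted}, $d(u,u')\ge h(u)$ and $d(v,v')\ge h(v)$, and since $d(u',v')\ge 0$ we conclude $d(u,v)\ge h(u)+h(v)$, as required.

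I expect no real obstacle here: the only point requiring a little care is the ordering argument — ruling out the case where $v'$ precedes $u'$ on $P$ — but disjointness of $B^*(u)$ and $B^*(v)$ makes this immediate, because a single vertex cannot lie in both augmented bunches, so the "overlap" configuration is impossible. The cleanest writeup is the two-line reduction to~\autoref{L-Intersect-Weighted}, which is almost certainly what the authors intend given that they placed that lemma immediately before.
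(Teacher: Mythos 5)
Your primary argument is correct but takes a different route from the paper. You obtain the lemma as an immediate consequence of \autoref{L-Intersect-Weighted}: since $B^*(u)\cap B^*(v)=\emptyset$ implies $P\cap(B^*(u)\cap B^*(v))=\emptyset$, the contrapositive of that lemma gives $h(u)+h(v)\le d(u,v)$. This is sound and non-circular (the paper proves \autoref{L-Intersect-Weighted} beforehand, without using the present lemma), and it in fact yields the slightly stronger statement that emptiness of the intersection restricted to $P$ already suffices, which is closer to what is actually invoked later in \autoref{L-No-S-Intersect-Weighted}. The paper's own proof is instead self-contained: it takes only $u'$, the farthest vertex of $B^*(u)$ on $P$, uses \autoref{L-distance-from-farthest-vertex-on-path-weighted} to get $d(u,u')\ge h(u)$, and from the contradiction hypothesis $h(u)+h(v)>d(u,v)$ deduces $d(u',v)\le d(u,v)-h(u)<h(v)$, hence $u'\in B(v)\subseteq B^*(v)$, contradicting disjointness. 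The paper's version buys independence from the earlier lemma and needs no case analysis; yours buys brevity.

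One caution about your fallback ``direct'' argument: the step asserting that if $v'$ precedes $u'$ on $P$ then the ordering forces $u'\in B^*(v)$ or $v'\in B^*(u)$ is not justified in weighted graphs. $B^*(x)\cap P$ need not be a contiguous segment of $P$ (a distant path vertex can enter $B^*(x)$ through an off-path neighbor lying in $B(x)$), so the maximality of $v'$ only excludes vertices farther from $v$ than $v'$ and says nothing about $u'$, which lies on the $v$-side of $v'$. As written, that sketch has a gap (the same assertion is also left unargued in the paper's proof of \autoref{L-Intersect-Weighted}); the clean fix is precisely the paper's computation showing $d(u',v)<h(v)$ and hence $u'\in B(v)$, which requires no ordering discussion. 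Since your main proof is the two-line reduction, the proposal stands.
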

\begin{proof}
Let $u'\in B^*(u)$ be the farthest vertex from $u$ in $P$. From lemma~\ref{L-distance-from-farthest-vertex-on-path-weighted} it follows that $d(u,u')\ge h(u)$. 
Assume for the sake of contradiction that $h(u)+h(v) > d(u,v)$. 
This implies that $h(v) > d(u,v)-h(u)$. 

Since $u'$ is in $P$, we know that $d(u,v) = d(u,u') + d(u',v)$ and since $d(u,u') \ge h(u)$ we get that $d(u,v)\ge h(u) + d(u',v)$, thus $d(u',v)\le d(u,v)-h(u)$. 
Now since $h(v) > d(u,v)-h(u)$ we get that $h(v) > d(u',v)$. This implies that $u'\in B^*(v)$, contradiction to the fact that $B^*(u)\cap B^*(v) = \emptyset$.
\end{proof}
\subsection{The set \texorpdfstring{$S_t(\cdot)$}{S_t(\cdot)} in the weighted case}

Let $t\geq 0$ be an integer. Let $S_t(u)=\bigcup_{w\in S_{t-1}(u)}{B^*(u)}$, where $S_0(u)=\{u\}$.
We compute the set $S_t(u)$ using  a simple recursive procedure $\GenerateSi$, presented in Algorithm~\ref{Algorithm-GenerateSi-Weighted}. 

\begin{algorithm2e}[t] 
\caption{$\GenerateSi(u, t)$}\label{Algorithm-GenerateSi-Weighted}
\lIf{$t=0$}{\Return $\{u\}$}
$\langle S_{t-1}(u) , d_{t-1}\rangle \gets \GenerateSi(u,t-1)$\\
\ForEach{$x\in S_{t-1}(u)$}
{
    \ForEach{$v\in B^*(x)$}
    {
    $d_t(u,v) = \min ( d_t(u,v), d_{t-1}(u,x)+d(x,v))$
    }
}
\Return $\langle \bigcup_{w\in S_{t-1}(u)}{B^*(w)}, d_t\rangle$
\end{algorithm2e}

Next, let $0<c\leq 1$ be the minimum value for which $|B^*(w)| \le O(\mu n^{c})$, for every $w\in V$. We show:
\begin{lemma}\label{L-size-and-running-time-of-S_i-weighted}
     $\GenerateSi(u,t)$ computes in $O(\mu^t n^{ct})$ time the set $S_t(u)$ of size $O(\mu^tn^{ct})$.
\end{lemma}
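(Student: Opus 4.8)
The plan is to mimic the proof of~\autoref{L-size-and-running-time-of-S_i}, the unweighted analogue, and adapt it to the augmented balls $B^*(\cdot)$. I would argue by induction on $i\in[0,t]$ that $\GenerateSi(u,i)$ runs in $O(\mu^i n^{ci})$ time and produces a set $S_i(u)$ of size $O(\mu^i n^{ci})$. The base case $i=0$ is immediate: $S_0(u)=\{u\}$ is returned in $O(1)$ time, and $|S_0(u)|=1$.

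For the inductive step, assume $|S_{i-1}(u)|=O(\mu^{i-1}n^{c(i-1)})$ and that $S_{i-1}(u)$ together with $d_{i-1}$ has been computed in $O(\mu^{i-1}n^{c(i-1)})$ time. In the $i$-th call, $\GenerateSi$ iterates over every $x\in S_{i-1}(u)$ and over every $v\in B^*(x)$, performing $O(1)$ work per such pair to update $d_i(u,v)$. By the choice of $c$ we have $|B^*(x)|=O(\mu n^c)$ for every $x$, so the inner loop costs $O(\mu n^c)$ per vertex of $S_{i-1}(u)$, and the total cost of this pass is $|S_{i-1}(u)|\cdot O(\mu n^c)=O(\mu^{i-1}n^{c(i-1)})\cdot O(\mu n^c)=O(\mu^i n^{ci})$. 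The union $\bigcup_{x\in S_{i-1}(u)}B^*(x)=S_i(u)$ is accumulated during the same pass, and its size is bounded by the same product $|S_{i-1}(u)|\cdot\max_x|B^*(x)|=O(\mu^i n^{ci})$. Adding the inductive cost of the recursive call, which is dominated by $O(\mu^i n^{ci})$, gives the claimed running time. Setting $i=t$ completes the induction.

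I do not expect a genuine obstacle here; the only point requiring a small remark is that $c$ is \emph{defined} to be the minimal value with $|B^*(w)|\le O(\mu n^c)$ for all $w$, so the bound $|B^*(x)|=O(\mu n^c)$ is available unconditionally for every $x\in S_{i-1}(u)$, and in particular for the vertices reached recursively. (If one prefers an explicit witness, \autoref{L-Bound-B^*-size} together with \autoref{L-A-center} gives $|B^*(w)|\le(1+\mu)|B(w)|=O(\mu/p)$ for the set $A$ of size $\Ot(np)$, which is $O(\mu n^{c})$ once $p=n^{-c}$.) Everything else is the same counting argument as in the unweighted case, with the single extra factor of $\mu$ per level accounting for the $\mu$ factor in $|B^*|$ versus $|B|$.
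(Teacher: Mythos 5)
Your proposal is correct and follows the same argument as the paper: the identical induction on $i$, with the per-level cost $|S_{i-1}(u)|\cdot O(\mu n^c)=O(\mu^i n^{ci})$ bounding both the running time and $|S_i(u)|$, using that $c$ is chosen so that $|B^*(w)|=O(\mu n^c)$ for every $w$. The extra remark justifying the bound on $|B^*(w)|$ is consistent with the paper's setup and does not change the argument.
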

\begin{proof}
    We prove the claim by induction on $i\in [0,t]$. For $i=0$, the claim holds since $S_0(u)=\{u\}$, and  $|S_0(u)|=1$.
    Next, assume that $|S_{i-1}(u)| = O(\mu^{i-1} n^{c(i-1)})$ and that  $S_{i-1}(u)$ was computed in $O(\mu^{i-1}n^{c(i-1)})$ time.
    In order to create $S_i(u)$, the algorithm $\GenerateSi(u, i)$ iterates over every $w\in S_{i-1}(u)$
    and adds $B^*(w)$ to $S_i(u)$ in $O(|B^*(w)|)=O(\mu n^c)$ time. 
    Since $|S_{i-1}(u)| = O(\mu^{i-1} n^{c(i-1)})$ the total time for computing $S_i(u)$ is $O(\mu n^c \cdot \mu^{i-1} n^{c(i-1)})=O(\mu^ {i} n^{ci})$.  
    From the same arguments we have $|S_{i}(u)| =O(\mu^ {i} n^{ci})$. 
\end{proof}

For every $1\leq i \leq t$, let $u_i \in B^*(u_{i-1})$ be the farthest vertex in $P$ from $u$, and let $v_i\in B^*(v_{i-1})$ be the farthest vertex in $P$ from $v$. Let $P_t(u)=P(u,u_t)$, that is, the portion on $P$ between $u$ and $u_t$. Notice that every vertex in $P_t(u)$ is in a $B^*$ of some vertex from $S_{t-1}(u)$. Therefore, by the definition of $S_{t}(u)$ we have that $P_t(u)\subseteq S_{t}(u)$. 

In the next lemma, we show that for the vertices of $P_t(u)$ the distance function $d_t$ is the same as the distance function $d$ of $G$. 
\begin{lemma}\label{L-Pt-Dt-W}
If $v\in P_t(u)$ then $d_t(u,v)=d(u,v)$.
\end{lemma}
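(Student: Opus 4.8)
The statement to prove is \autoref{L-Pt-Dt-W}: if $v \in P_t(u)$ then $d_t(u,v) = d(u,v)$, where now the sets $S_t(\cdot)$ are built using augmented bunches $B^*(\cdot)$ rather than bunches $B(\cdot)$. This is the weighted analogue of \autoref{L-Pt-Dt}, so the plan is to mirror that proof almost verbatim, with the only substantive change being that we track membership in $B^*(\cdot)$ instead of $B(\cdot)$.

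The plan is to induct on $t$. For the base case $t = 0$, we have $P_0(u) = \{u\}$ and $d_0(u,u) = 0$, so the claim holds trivially. For the inductive step, assume the claim holds for all $i < t$ and let $v \in P_t(u)$. By the definition of the $u_i$'s, the path $P_t(u) = P(u, u_t)$ decomposes into the concatenation of the subpaths $P(u_{i-1}, u_i)$ for $1 \le i \le t$ (using $u_0 = u$), so $v$ lies on $P(u_{i-1}, u_i)$ for some $i \in [1,t]$. First I would argue that $v \in B^*(u_{i-1})$: since $u_i \in B^*(u_{i-1})$ is the farthest vertex from $u$ on $P$ that lies in $B^*(u_{i-1})$, and $v$ lies on the shortest path $P(u_{i-1}, u_i) \subseteq P$ between $u_{i-1}$ and $u_i$, the vertex $v$ is no farther from $u$ than $u_i$ and hence $v \in B^*(u_{i-1})$. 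Then, since $u_{i-1} \in S_{i-1}(u) \subseteq S_{t-1}(u)$ and $v \in B^*(u_{i-1})$, the definition of $d_t$ gives $d_t(u,v) \le d_{t-1}(u, u_{i-1}) + d(u_{i-1}, v)$. By the induction hypothesis $d_{t-1}(u, u_{i-1}) = d(u, u_{i-1})$ (note $u_{i-1} \in P_{i-1}(u) \subseteq P_{t-1}(u)$), and since $u_{i-1}$ and $v$ both lie on the shortest path $P$ with $u_{i-1}$ between $u$ and $v$, we get $d(u, u_{i-1}) + d(u_{i-1}, v) = d(u, v)$. Hence $d_t(u,v) \le d(u,v)$. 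The reverse inequality $d_t(u,v) \ge d(u,v)$ holds because every value of $d_t$ is the length of an actual walk in $G$ from $u$ to $v$ (each step $d_{t-1}(u,x) + d(x, \cdot)$ is realized by concatenating a walk achieving $d_{t-1}(u,x)$ with a shortest $x$-to-$\cdot$ path), so $d_t(u,v) = d(u,v)$.

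The one subtle point — which I expect to be the only real obstacle, and it is minor — is the claim that $v \in B^*(u_{i-1})$. In the unweighted proof this followed because $u_i$ was farthest in $B(u_{i-1})$ and $v$ sat on a shortest $u_{i-1}$-$u_i$ subpath. Here the analogous fact requires care: $u_i$ is defined as the farthest vertex from $u$ (not from $u_{i-1}$) lying in $B^*(u_{i-1}) \cap P$. Since $P(u_{i-1}, u_i)$ is a contiguous subpath of $P$ and $v$ lies between $u_{i-1}$ and $u_i$ on it, $v$ is closer to $u$ than $u_i$; but being closer to $u$ does not immediately place $v$ in $B^*(u_{i-1})$. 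The correct argument is instead the one already used implicitly when defining $P_t(u)$: by construction every vertex of $P(u_{i-1}, u_i)$ lies in $B^*(u_{i-1})$, because $u_i$ being \emph{farthest} along $P$ in $B^*(u_{i-1})$ together with the fact that $B^*(u_{i-1})$ restricted to the shortest path $P$ forms a contiguous prefix (walking from $u_{i-1}$ outward along $P$, once you leave $B^*(u_{i-1})$ you stay out, since distances along $P$ are monotone) implies all intermediate vertices are in $B^*(u_{i-1})$. I would state this monotonicity explicitly: for $w$ on $P$ between $u_{i-1}$ and $u_i$, $d(u_{i-1}, w) \le d(u_{i-1}, u_i)$, and contiguity of $B^*(u_{i-1}) \cap P(u_{i-1}, \cdot)$ as a prefix gives $w \in B^*(u_{i-1})$. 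With that, the chain of equalities above closes and the lemma follows. The whole proof is short and essentially identical in structure to \autoref{L-Pt-Dt}.
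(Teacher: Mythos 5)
Your overall structure is exactly the paper's: induction on $t$, locate $v$ on the subpath $P(u_{i-1},u_i)$, deduce $v\in B^*(u_{i-1})$, apply the definition of $d_t$ with the induction hypothesis, and use $u_{i-1}\in P(u,v)$ to collapse the sum to $d(u,v)$ (the paper does not even bother with the reverse inequality $d_t\ge d$, which you add; that part is fine and immediate). So the route is the same; the only substantive difference is that you try to justify the step $v\in B^*(u_{i-1})$, which the paper simply asserts.

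Unfortunately, the justification you give for that step does not hold. You claim that $B^*(u_{i-1})\cap P$ is a contiguous prefix of $P$ starting at $u_{i-1}$ ``since distances along $P$ are monotone.'' Monotonicity of distances along the shortest path only gives contiguity for the ball $B(u_{i-1})$, because membership in $B(u_{i-1})$ is defined by the threshold $d(u_{i-1},\cdot)<h(u_{i-1})$. But $B^*(u_{i-1})=B(u_{i-1})\cup N(B(u_{i-1}))$ also contains every neighbor of a ball vertex, and in a weighted graph a single heavy edge from a ball vertex (possibly off $P$) can land on a vertex far along $P$ while the intermediate vertices of $P$ are neither in the ball nor adjacent to it. Concretely, if $P=x_0x_1\cdots x_{10}$ with unit edges, $h(x_0)=1.5$, and a vertex $b$ with $d(x_0,b)=1$ has an edge of weight $100$ to $x_{10}$, then $x_{10}\in B^*(x_0)$ but $x_5\notin B^*(x_0)$; the ``farthest vertex of $P$ in $B^*(x_0)$'' is at least $x_{10}$, and your contiguity claim (and hence $v\in B^*(u_{i-1})$ for $v=x_5$) fails. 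The clean way to close this is not via contiguity of $B^*\cap P$ but by choosing $u_i$ as the endpoint of the ball prefix, i.e.\ the first vertex of $P$ beyond $u_{i-1}$ that lies outside $B(u_{i-1})$: then every vertex of $P(u_{i-1},u_i)$ lies in $B(u_{i-1})\cup\{u_i\}\subseteq B^*(u_{i-1})$ by definition, and the property actually needed elsewhere, $d(u_{i-1},u_i)\ge h(u_{i-1})$, is preserved. (The paper's own one-line assertion of $v\in B^*(u_{i-1})$ glosses over the same point, so you were right to flag it --- but your patch, as written, is the part that would fail.)
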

\begin{proof}
We prove by induction on $t$. 
For the base case of $t=0$ the claim holds since $P_0(u)=\{u\}$ and $d_0(u,u)=0$. 
We assume the claim holds for every $0< i < t$ and prove the claim for $i=t$. 
Let $v\in P_t(u)$. This implies that $v\in P(u_{i-1},u_i)$, where $1\leq i\leq t$. 
Since $u_i\in P$ is the farthest vertex from $u$ in $B^*(u_{i-1})$ and since $v\in P(u_{i-1},u_i)$, a shortest path between $u_{i-1}$ and $u_i$, it follows that $v\in B^*(u_{i-1})$.
By definition of $d_t$ we have $d_t(u,v)\leq d_{t-1}(u,u_{i-1})+d(u_{i-1},v)$, since $v\in B^*(u_{i-1})$. 
By the induction assumption $d_{t-1}(u,u_{i-1})=d(u,u_{i-1})$, therefore, $d_t(u,v)\leq  d_{t-1}(u,u_{i-1})+d(u_{i-1},v) = d(u,u_{i-1})+d(u_{i-1},v)=d(u,v)$, where the last equality holds since $u_{i-1}\in P(u,v)$. 
\end{proof}

In the next lemmas, we examine the case that $P_t(u) \cap P_t(v)  = \emptyset$. 

\begin{lemma}
\label{L-No-S-Intersect-Weighted}
    If $P_t(u) \cap P_t(v) = \emptyset$ then $h(u_{t-1}) + h(v_{t-1}) \leq d(u_{t-1},v_{t-1})$.
\end{lemma}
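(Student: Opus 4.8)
The plan is to mirror the proof of the unweighted analogue (\autoref{L-No-S-Intersect}) but using the $B^*(\cdot)$-based lemmas of the weighted toolbox in place of the $B(\cdot)$-based ones. First I would observe that, by the definition of $P_t(u)$ as the prefix of $P$ ending at $u_t$, every vertex of $P$ that lies in $B^*(u_{t-1})$ must already lie in $P_t(u)$; indeed $u_t$ is by definition the farthest vertex from $u$ on $P$ that belongs to $B^*(u_{t-1})$, so $B^*(u_{t-1})\cap P \subseteq P(u_{t-1},u_t) \subseteq P_t(u)$. The symmetric statement gives $B^*(v_{t-1})\cap P \subseteq P_t(v)$.

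Next I would intersect these two containments: $(B^*(u_{t-1})\cap B^*(v_{t-1}))\cap P \subseteq P_t(u)\cap P_t(v)$, and by the hypothesis $P_t(u)\cap P_t(v)=\emptyset$ this forces $B^*(u_{t-1})\cap B^*(v_{t-1})\cap P=\emptyset$. I then want to upgrade this to $B^*(u_{t-1})\cap B^*(v_{t-1})=\emptyset$ outright, which is exactly the contrapositive of \autoref{L-Intersect-Weighted} applied to the pair $(u_{t-1},v_{t-1})$ and the shortest path $P(u_{t-1},v_{t-1})$ — note $P(u_{t-1},v_{t-1})$ is the sub-path of $P$ between $u_{t-1}$ and $v_{t-1}$, so a vertex of $B^*(u_{t-1})\cap B^*(v_{t-1})$ on $P(u_{t-1},v_{t-1})$ is also on $P$, hence none exists. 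Thus \autoref{L-Intersect-Weighted} yields $h(u_{t-1})+h(v_{t-1}) \le d(u_{t-1},v_{t-1})$. Finally, applying \autoref{L-no-intersect-weighted} to $(u_{t-1},v_{t-1})$ with $B^*(u_{t-1})\cap B^*(v_{t-1})=\emptyset$ directly gives $h(u_{t-1})+h(v_{t-1})\le d(u_{t-1},v_{t-1})$, which is the claim. (In fact, once $B^*(u_{t-1})\cap B^*(v_{t-1})=\emptyset$ is established, \autoref{L-no-intersect-weighted} alone finishes it; the appeal to \autoref{L-Intersect-Weighted} is the route to that emptiness, so only one of the two lemmas is strictly needed depending on how the argument is phrased.)

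The one subtlety to get right — the analogue of the only real content in the unweighted version — is the passage from "no common vertex of $B^*(u_{t-1})$ and $B^*(v_{t-1})$ lies on $P$" to "$B^*(u_{t-1})$ and $B^*(v_{t-1})$ are disjoint everywhere." This is precisely where \autoref{L-Intersect-Weighted} does the work: it says that whenever $h(u_{t-1})+h(v_{t-1})>d(u_{t-1},v_{t-1})$ the intersection of the two augmented balls must meet the shortest path between them. So I would argue by contradiction: if $h(u_{t-1})+h(v_{t-1}) > d(u_{t-1},v_{t-1})$, then \autoref{L-Intersect-Weighted} produces a vertex in $B^*(u_{t-1})\cap B^*(v_{t-1})\cap P(u_{t-1},v_{t-1}) \subseteq B^*(u_{t-1})\cap B^*(v_{t-1})\cap P$, contradicting the emptiness derived from $P_t(u)\cap P_t(v)=\emptyset$. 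Hence $h(u_{t-1})+h(v_{t-1})\le d(u_{t-1},v_{t-1})$, completing the proof. No new ideas beyond the weighted toolbox lemmas are required; the main thing to be careful about is that $P(u_{t-1},v_{t-1})$ is genuinely a sub-path of $P$ (true since $u_{t-1},v_{t-1}\in P$ and $P$ is a shortest path, so its sub-path between those two vertices is a shortest $u_{t-1}$–$v_{t-1}$ path), so that "on $P(u_{t-1},v_{t-1})$" implies "on $P$".
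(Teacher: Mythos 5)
Your proposal is correct and follows essentially the same route as the paper: the same containments $B^*(u_{t-1})\cap P \subseteq P_t(u)$ and $B^*(v_{t-1})\cap P \subseteq P_t(v)$, the same conclusion that $B^*(u_{t-1})\cap B^*(v_{t-1})\cap P=\emptyset$, and then an appeal to \autoref{L-Intersect-Weighted} (with \autoref{L-no-intersect-weighted} as the alternative finish). The only difference is cosmetic and in your favor: you invoke the contrapositive of \autoref{L-Intersect-Weighted} to get the distance bound directly (correctly noting that the subpath of $P$ between $u_{t-1}$ and $v_{t-1}$ is a shortest path, so only one of the two toolbox lemmas is really needed), whereas the paper phrases this step as first deducing full disjointness of the augmented bunches and then applying \autoref{L-no-intersect-weighted}.
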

\begin{proof}
    From the definition of $P_t$, it follows that $B^*(u_{t-1}) \cap P \subseteq P_t(u)$. Similarly, $B^*(v_{t-1}) \cap P \subseteq P_t(v)$. Thus, $(B^*(u_{t-1}) \cap P) \cap (B^*(v_{t-1}) \cap P)= B^*(u_{t-1}) \cap B^*(v_{t-1}) \cap P \subseteq P_t(u)\cap P_t(v) = \emptyset$. 
    Since $B^*(u_{t-1}) \cap B^*(v_{t-1}) \cap P \subseteq \emptyset$ it follows that $B^*(u_{t-1}) \cap B^*(v_{t-1}) \cap P = \emptyset$ and by
    applying Lemma~\ref{L-Intersect-Weighted} we get $B^*(u_{t-1}) \cap B^*(v_{t-1})=\emptyset$.
    Now, by using Lemma~\ref{L-no-intersect-weighted} we get $h(u_{t-1}) + h(v_{t-1}) \le d(u_{t-1},v_{t-1})$, as required.
\end{proof}

\begin{lemma}
\label{L-Bound-Sum-Of-H-in-S-Weighted}
    If $P_t(u) \cap P_t(v) = \emptyset$ then $\sum_{i=0}^{t-1}h(u^i) + h(v^i) \le d(u,v)$.
\end{lemma}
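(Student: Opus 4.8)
The plan is to mirror exactly the proof of \autoref{L-Bound-Sum-Of-h-in-S} (its unweighted analogue), replacing every use of an unweighted fact with its weighted counterpart from the toolbox. First I would write out the shortest path $P=P(u,v)$ as a concatenation. Under the assumption $P_t(u)\cap P_t(v)=\emptyset$, when we traverse $P$ from $u$ to $v$ we first pass through $u_0=u, u_1,\dots, u_{t-1}$ and only afterwards through $v_{t-1},\dots, v_1, v_0=v$; this is because $B^*(u_{t-1})\cap P\subseteq P_t(u)$ and $B^*(v_{t-1})\cap P\subseteq P_t(v)$ are disjoint portions of $P$, and the $u_i$'s (resp.\ $v_i$'s) are nested along $P$. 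Hence
\[
d(u,v) = d(u_{t-1},v_{t-1}) + \sum_{i=0}^{t-2}\bigl(d(u_i,u_{i+1}) + d(v_i,v_{i+1})\bigr).
\]

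Second, I would lower-bound each segment. Since $u_{i+1}\in B^*(u_i)$ is the farthest vertex from $u$ on $P$ lying in $B^*(u_i)$, and $u_i,u_{i+1}$ both lie on $P$, the sub-path $P(u_i,u_{i+1})$ is a shortest $u_i$–$u_{i+1}$ path, so \autoref{L-distance-from-farthest-vertex-on-path-weighted} (applied with the roles $u\mapsto u_i$) gives $d(u_i,u_{i+1})\ge h(u_i)$; symmetrically $d(v_i,v_{i+1})\ge h(v_i)$. For the middle term, \autoref{L-No-S-Intersect-Weighted} gives exactly $h(u_{t-1})+h(v_{t-1})\le d(u_{t-1},v_{t-1})$. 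Plugging these in,
\[
d(u,v) \ge h(u_{t-1}) + h(v_{t-1}) + \sum_{i=0}^{t-2}\bigl(h(u_i) + h(v_i)\bigr) = \sum_{i=0}^{t-1}\bigl(h(u_i) + h(v_i)\bigr),
\]
which is the claimed inequality $\sum_{i=0}^{t-1} h(u_i)+h(v_i)\le d(u,v)$. (I note the statement writes $h(u^i),h(v^i)$, which is evidently the same indexing as $h(u_i),h(v_i)$ used in the surrounding lemmas; I would keep the notation consistent with \autoref{L-No-S-Intersect-Weighted}.)

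The only subtlety — and the place I would be most careful — is justifying the ordering claim that all the $u_i$'s precede all the $v_i$'s on $P$, and that the segments $P(u_i,u_{i+1})$ are genuinely edge-disjoint sub-paths of $P$ whose lengths sum as above. This is where $P_t(u)\cap P_t(v)=\emptyset$ is used: since $P_i(u)\subseteq P_t(u)$ and $P_i(v)\subseteq P_t(v)$ for all $i\le t$, no $u_i$ can lie in the $v$-side portion and vice versa, and since each $u_{i+1}$ is defined as the \emph{farthest from $u$} within $B^*(u_i)\cap P$, the prefixes $P_0(u)\subseteq P_1(u)\subseteq\dots\subseteq P_{t-1}(u)$ are nested, giving the telescoping decomposition. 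Everything else is the same bookkeeping as in the unweighted proof, except that the weighted lemmas carry no additive $+1$ or $+2$ slack, so the final bound is the clean $d(u,v)$ rather than $d(u,v)+2t-1$.
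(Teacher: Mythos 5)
Your proposal is correct and follows essentially the same argument as the paper: the same telescoping decomposition of $P$ using the disjointness of $P_t(u)$ and $P_t(v)$, the same per-segment lower bound $d(u_i,u_{i+1})\ge h(u_i)$ via \autoref{L-distance-from-farthest-vertex-on-path-weighted}, and the same use of \autoref{L-No-S-Intersect-Weighted} for the middle term. No gaps to report.
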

\begin{proof}
    Since $u_{i-1},u_i\in P$, $u_i$ is the farthest vertex from $u$ in $B^*(u_{i-1})$, it follows from Lemma~\ref{L-distance-from-farthest-vertex-on-path-weighted} that $d(u_{i-1}, u_i)=h(u_{i-1})$ for every $0< i \le t$. Similarly, $d(v_{i-1},v_i) = h(v_{i-1})$.
    
    % We first show that $d(u_{i-1},u_{i})\ge h(u_{i-1})$, for every $0\leq i\leq  t$.
    % Let $y\in B^*(u_{i-1})$ be the farthest vertex from $u_{i-1}$ in $P$. From Lemma~\ref{L-distance-from-farthest-vertex-on-path-weighted} we have that $d(u_{i-1},y) \ge h(u_{i-1})$. Since $S_i(u)=\bigcup_{w\in S_{i-1}(u)}{B^*(u)}$ and $u_{i-1}\in S_{i-1}(u)$ it follows that $B^*(u_{i-1}) \subseteq S_i(u)$. Recall that $u_i\in P$ is the farthest vertex from $u$ in $S_i(u)$. 
    % As $B^*(u_{i-1}) \subseteq S_i(u)$ it follows that $d(u,u_i) \ge d(u,y)$. 
    % Since $P$ is a shortest path and $u,u_{i-1},u_i,y\in P$ we have 
    % $d(u,u_i) = d(u,u_{i-1})+d(u_{i-1},u_i)$
    % and $d(u,y) = d(u,u_{i-1})+d(u_{i-1},y)$. Thus,
    % $d(u_{i-1},u_i) \ge d(u_{i-1},y) \ge h(u_{i-1})$, as wanted.
    % From the same arguments as above, it holds that $d(v_{i-1},v_i) \geq h(v_{i-1})$, for every $0\leq i \leq t$.

    From the assumption that $P_t(u) \cap P_t(v) = \emptyset$ it follows that when going from $u$ to $v$
    on $P$ we first encounter all $u_i$'s and then all $v_i$'s. Thus:
    \begin{align*} 
d(u,v)    &= d(u,u_1) + \dots +d(u_{t-2},u_{t-1})+ d(u_{t-1},v_{t-1}) + d(v_{t-1},v_{t-2})+\dots + d(v_1,v) \\ 
          &=  d(u_{t-1}, v_{t-1}) + \sum_{i=0}^{t-2} ( d(u_i,u_{i+1}) + d(v_i,v_{i+1}) )
\end{align*}
       Since $P_t(u) \cap P_t(v) = \emptyset$ it follows from Lemma~\ref{L-No-S-Intersect-Weighted} that $h(u_{t-1}) + h(v_{t-1}) \leq d(u_{t-1},v_{t-1})$. Recall that $d(u_{i-1},u_i) \geq h(u_{i-1})$ and $d(v_{i-1},v_i) \geq h(v_{i-1})$, for every $0\leq i \leq t$.
    Therefore, we get:
\[  
d(u,v)    = d(u_{t-1}, v_{t-1}) + \sum_{i=0}^{t-2} (d(u_i,u_{i+1}) + d(v_i,v_{i+1}) )
          \geq h(u_{t-1}) + h(v_{t-1}) + \sum_{i=0}^{t-2}(h(u_i) + h(v_i))
          = \sum_{i=0}^{t-1}(h(u_i) + h(v_i))
\] as required.
\end{proof}

\subsection{\texorpdfstring{$1+2/t$}{1+2/t}-stretch with \texorpdfstring{$n^{2-2c}$}{n\^(2-2c)}-space and \texorpdfstring{$\mu^t n^{tc}$}{(m/n)\^3 n\^(tc)}-query}\label{S-5-3-weighted}
Similar to the unweighted case, the $(1+1/t)$-stretch presented by Agarwal~\cite{DBLP:conf/esa/Agarwal14} uses $\Omega(n^{1.5})$-space. In the next theorem, we use $S_t(\cdot)$ in a similar manner to Section~\ref{S-5-3} and present a distance oracle that uses $m+o(n^{1.5})$-space.

\Reminder{T-Weighted-5/3}

\subsubsection{Storage.}

We compute a set $A$ of size $O(n^{1-c})$ using Lemma~\ref{L-TZ-Size}.
For every $u\in V$ we save  $B(u)$. For every $u,w\in A$ we save $d(u,w)$. For every  $u\in V$ we save $p(u)$. 
We also save the graph $G$.
\begin{lemma}\label{L-Space-T-Weighted-5/3}
    The distance oracle is constructed in $\Ot(mn^{1-c})$-time and  uses  $O(m+n^{2-2c})$-space.
\end{lemma}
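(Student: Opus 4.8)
\textbf{Proof proposal for Lemma~\ref{L-Space-T-Weighted-5/3}.}

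The plan is to mirror the proof of \autoref{L-Space-Unweighted-5/3}, adjusting only for the fact that we now also store the graph $G$ and that bunch sizes are governed by \autoref{L-TZ-Size} exactly as before (the $B(u)$ here are the ordinary bunches with respect to the single set $A$ of size $O(n^{1-c})$, not the augmented $B^*(u)$). First I would account for the space. By \autoref{L-TZ-Size} we have $|B(u)|=O(n^{c})$ for every $u\in V$ (here $c$ is chosen so that $|A|=O(n^{1-c})$, which is the setting of that lemma with the appropriate parameter), so storing $B(u)$ over all $u\in V$ costs $O(n^{1+c})$. Storing $d(u,w)$ for every pair $u,w\in A$ costs $|A|^2=O(n^{2-2c})$. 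Storing $p(u)$ for every $u\in V$ costs $O(n)$, and storing $G$ costs $O(m)$. The total is therefore $O(m+n^{2-2c}+n^{1+c}+n)$. Since $0<c<1/3$ we have $2-2c>1+c$, so $n^{1+c}$ and $n$ are dominated by $n^{2-2c}$, and the space simplifies to $O(m+n^{2-2c})$, as claimed.

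Next I would account for the construction time. By \autoref{L-TZ-Size}, computing the set $A$ together with all the bunches $B(u)$ takes $\Ot(mn^{c})$ expected time. Computing $p(u)=p(u,A)$ for every $u\in V$ can be done with a single multi-source Dijkstra from $A$ (equivalently BFS layers weighted), in $O(m+n\log n)=\Ot(m)$ time; alternatively it falls out of the same computation that produces the bunches. Computing $d(w,v)$ for every $w\in A$ and $v\in V$ requires one single-source shortest-path computation from each $w\in A$, for a total of $O(|A|\cdot(m+n\log n))=\Ot(mn^{1-c})$ time; in particular this dominates the $\Ot(mn^{c})$ bunch-construction cost because $c<1/3<1-c$. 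Hence the overall construction time is $\Ot(mn^{1-c})$, as required.

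Since all of the ingredients are direct invocations of \autoref{L-TZ-Size} together with elementary counting and standard shortest-path subroutines, I do not anticipate a genuine obstacle here; the only point that needs a sentence of care is the comparison of exponents $2-2c$ versus $1+c$ (and $1-c$ versus $c$), both of which hold precisely because the hypothesis restricts $c$ to the interval $(0,1/3)$. I would state those two inequalities explicitly so that the simplification of the $O(\cdot)$ bounds is transparent.
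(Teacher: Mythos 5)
Your proposal matches the paper's own proof essentially line for line: the same space accounting ($O(n^{1+c})$ for bunches, $|A|^2=O(n^{2-2c})$ for pairwise $A$-distances, $O(m)$ for $G$, dominated via $2-2c>1+c$ since $c<1/3$) and the same time accounting ($\Ot(mn^{c})$ for $A$ and the bunches via \autoref{L-TZ-Size}, $\Ot(m)$ for the $p(u)$'s, and $\Ot(m|A|)=\Ot(mn^{1-c})$ for the shortest-path computations from $A$, which dominates since $1-c>c$). It is correct as written; no gaps.
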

\begin{proof}
From Lemma~\ref{L-TZ-Size} it follows that  $|B(u)|=O(n^c)$.  
In addition, for every $w,v\in A$, we store $d(w,v)$ at a cost of $|A|^2=O(n^{2-2c})$ space. 
Saving $G$ requires $O(m)$ space.
The total space is $O(m + n^{2-2c} + n^{1+c})$. Since 
$0 < c < 1/3$ we have $2-2c>1+c$ and the space is $O(m + n^{2-2c})$.
From Lemma~\ref{L-TZ-Size} it follows that for every $u\in V$ the set $B(u)$ and $A$ are computed in $O(mn^c)$ time. 
Computing $p(u)$, for every $u\in V$ takes $\Ot(m)$ time.
Computing $d(w,v)$ for every $w\in A$ and $v\in V$ takes $\Ot(m|A|)=\Ot(mn^{1-c})$ time.
Since $1-c>c$ we get that the running time  is $\Ot(mn^{1-c})$. 
\end{proof}

\subsubsection{Query algorithm.}
The query algorithm is identical to the query algorithm from Section~\ref{S-5-3}.
The only difference is that $S_t(\cdot)$ is computed using the weighted version of $\GenerateSi$. 
Next, we bound  $\hat{d}(u,v)$. 
\begin{lemma}\label{L-Bound-5/3-Weighted}
    $\hat{d}(u,v) \le (1+2/t)d(u,v)$.
\end{lemma}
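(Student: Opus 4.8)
\textbf{Proof plan for Lemma~\ref{L-Bound-5/3-Weighted}.}
The plan is to mirror the unweighted argument of Lemma~\ref{L-Bound-5/3-UW}, but using the weighted toolbox ($B^*$, $S_t$ in the weighted sense, and Lemmas~\ref{L-Pt-Dt-W} and~\ref{L-Bound-Sum-Of-H-in-S-Weighted}) so that every constant becomes cleaner. Fix a shortest path $P=P(u,v)$, set $u=u_0$, $v=v_0$, and for $1\le i\le t$ let $u_i\in B^*(u_{i-1})$ (resp. $v_i\in B^*(v_{i-1})$) be the farthest vertex of $P$ from $u$ (resp. $v$), and $P_i(u)=P(u,u_i)$. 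As in the unweighted case I split on whether $P_t(u)\cap P_t(v)=\emptyset$.

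First I would handle the easy case $P_t(u)\cap P_t(v)\ne\emptyset$: pick $x$ in the intersection; since $P_t(u)\subseteq S_t(u)$ and $P_t(v)\subseteq S_t(v)$ and, by Lemma~\ref{L-Pt-Dt-W}, $d_t(u,x)=d(u,x)$ and $d_t(v,x)=d(v,x)$, the call $\Intersection(u,v,S_t(u),S_t(v))$ returns at most $d(u,x)+d(x,v)=d(u,v)$ because $x\in P$; hence $\hat d(u,v)=d(u,v)$ and the bound holds trivially. Then, in the case $P_t(u)\cap P_t(v)=\emptyset$, I would invoke Lemma~\ref{L-Bound-Sum-Of-H-in-S-Weighted}, which here gives the sharper bound $\sum_{i=0}^{t-1}(h(u_i)+h(v_i))\le d(u,v)$ (no additive $2t-1$ term, since in the weighted case $d(u_{i-1},u_i)\ge h(u_{i-1})$ rather than $\ge h(u_{i-1})-1$). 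Define $Q=\{\langle u_i,v_{t-i-1}\rangle\mid 0\le i\le t-1\}$, so $\sum_{q\in Q}(h(q_u)+h(q_v))=\sum_{i=0}^{t-1}(h(u_i)+h(v_i))\le d(u,v)$, and let $q\in Q$ minimize $h(q_u)+h(q_v)$; by averaging over the $t$ pairs, $h(q_u)+h(q_v)\le d(u,v)/t$.

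Finally I would observe that the query algorithm, while iterating over pairs in $S_i(u)\times S_{t-1-i}(v)$ for $0\le i\le t-1$, encounters $q$ (with $q_u\in S_i(u)$ and $q_v\in S_{t-1-i}(v)$ for the appropriate $i$), so
\[
\hat d(u,v)\le d(u,q_u)+d(q_u,p(q_u))+d(p(q_u),p(q_v))+d(p(q_v),q_v)+d(q_v,v).
\]
Using $d(q_u,p(q_u))=h(q_u)$, $d(p(q_v),q_v)=h(q_v)$, and the triangle inequality $d(p(q_u),p(q_v))\le h(q_u)+d(q_u,q_v)+h(q_v)$, this is at most $d(u,q_u)+d(q_u,q_v)+d(q_v,v)+2(h(q_u)+h(q_v))$. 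Since $P_t(u)\cap P_t(v)=\emptyset$, travelling along $P$ from $u$ to $v$ we meet all the $u_i$'s before all the $v_i$'s, so $d(u,q_u)+d(q_u,q_v)+d(q_v,v)=d(u,v)$, giving $\hat d(u,v)\le d(u,v)+2(h(q_u)+h(q_v))\le d(u,v)+2d(u,v)/t=(1+2/t)d(u,v)$, as required. The only real subtlety — and the step I would be most careful about — is confirming that $d(p(q_v),q_v)=h(q_v)$ (and likewise for $q_u$), i.e. that $p(\cdot)=p(\cdot,A)$ realizes $h(\cdot)=d(\cdot,A)$, and that $q$ is indeed among the pairs the algorithm actually iterates over; everything else is the same bookkeeping as in the unweighted Lemma~\ref{L-Bound-5/3-UW}.
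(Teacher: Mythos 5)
Your proposal is correct and follows essentially the same argument as the paper's proof: the same case split on $P_t(u)\cap P_t(v)$, the same use of Lemma~\ref{L-Pt-Dt-W} and Lemma~\ref{L-Bound-Sum-Of-H-in-S-Weighted}, the same set $Q$ of pairs with the averaging bound $h(q_u)+h(q_v)\le d(u,v)/t$, and the same final chain of inequalities. The subtleties you flag (that $p(\cdot)$ realizes $h(\cdot)$ and that $q\in S_i(u)\times S_{t-1-i}(v)$ is actually visited, with its $d_i$-distances equal to true distances by Lemma~\ref{L-Pt-Dt-W}) indeed hold, so no gap remains.
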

\begin{proof}
    Let $P=P(u,v)$. 
    Let $u=u_0$ (resp. $v=v_0$). For every $1 \leq i \leq t$, let $u_i\in B^*(u_{i-1})$ (resp. $v_i\in B^*(v_{i-1})$ be the farthest vertex from $u$ (resp. $v$) on the path $P$.
    Let $P_i(u)=P(u,u_i)$, that is, the portion on $P$ between $u$ and $u_i$. 
    
    We divide the proof into two cases. The case that $P_t(u) \cap P_t(v) \neq \emptyset$ and the case 
    that $P_t(u) \cap P_t(v) = \emptyset$.
    Consider the case that $P_t(u) \cap P_t(v) \neq \emptyset$.  
    Let $x\in P_t(u) \cap P_t(v)$. From Lemma~\ref{L-Pt-Dt-W} it follows that $d_t(u,x)=d(u,x)$ and $d_t(v,x)=d(v,x)$.
    Since $P_t(u)\subseteq S_t(u)$ and $P_t(v)\subseteq S_t(v)$ the value returned by the call to $\Intersection(u,v,S_t(u),S_t(v))$ is at most 
    $d_t(u,x)+d_t(v,x)=d(u,x)+d(x,v)=d(u,v)$, where the last equality follows from the fact that $x\in P$, and the claim holds. 
     
    Consider now the case that $P_t(u) \cap P_t(v) = \emptyset$. From Lemma~\ref{L-Bound-Sum-Of-H-in-S-Weighted} it follows that $\sum_{i=0}^{t-1}h(u_i) + h(v_i) \le d(u,v)$.
    
    Let $Q=\{\langle u_i,v_{t-1-i}\rangle \mid 0\leq i \leq t-1\}$. For each $q\in Q$, let $q_u$ (resp. $q_v$) be the first (resp. second) element of the pair.
    By summing over the set $Q$ we get that $\sum_{q\in Q} (h(q_u) + h(q_v))=\sum_{i=0}^{t-1}(h(u_i) + h(v_i))$. Since $\sum_{i=0}^{t-1}(h(u_i) + h(v_i)) \le d(u,v)$ we get that $\sum_{q\in Q} h(q_u) + h(q_v) = \sum_{i=0}^{t-1}(h(u_i) + h(v_i)) \le d(u,v)$.

    Let $q\in Q$ be the pair with minimum value of $h(q_u)+h(q_v)$.
    By the minimality of $q$ we get $h(q_u)+h(q_v) \le (\sum_{q\in Q} h(q_u) + h(q_v))/t \le d(u,v)/t$.

    The algorithm iterates over all the possible pairs of $Q$ and encounters $q$. Thus:
    \[\hat{d}(u,v) \le d(u,q_u) + d(q_u,p(q_u)) + d(p(q_u), p(q_v)) + d(p(q_v), q_v) + d(q_v, v)\]
    From the triangle inequality, we know that $d(p(q_u), p(q_v)) \le h(q_u) + d(q_u,q_v) + h(q_v)$. Thus: 
\begin{align*}
    \hat{d}(u,v) &\le d(u,q_u) + d(q_u,p(q_u)) + d(p(q_u), p(q_v)) + d(p(q_v), q_v) + d(q_v, v) \\
                 &\le d(u,q_u) + h(q_u) + (h(q_u) + d(q_u,q_v) + h(q_v)) + h(q_v) + d(q_v, v) \\
                 &= d(u,q_u) + d(q_u,q_v) + d(q_v, v) + 2(h(q_u) + h(q_v))
\end{align*}
Since $P_t(u) \cap P_t(v) = \emptyset$ we get that when going from $u$ to $v$ on $P$ we first encounter all $u_i$'s and then all $v_i$'s. Thus we get $d(u,q_u) + d(q_u,q_v) + d(q_v,v) = d(u,v)$. Since $d(u,q_u) + d(q_u,q_v) + d(q_v,v) = d(u,v)$ we get that $\hat{d}(u,v) \le d(u,v) + 2(h(q_u)+h(q_v))$. 

Since $h(q_u) + h(q_v) \le d(u,v)/t$, it follows that $\hat{d}(u,v) \le d(u,v) + 2(h(q_u) + h(q_v)) \le (1+2/t)d(u,v)$, as required.        
\end{proof}

\begin{lemma}\label{Runtime-Q-5/3-Weighted}
    The query algorithm takes $O(\mu^t n^{tc})$ time.
\end{lemma}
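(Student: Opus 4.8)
The plan is to mirror the running-time analysis of Lemma~\ref{L-Runtime-Q-5/3-UW}, the unweighted analogue, but now keeping track of the $\mu$ factors introduced by working with $B^*(\cdot)$ instead of $B(\cdot)$. The query algorithm is literally the one from Algorithm~\ref{Algorithm-Distance-Oracle-5/3}, with the only change being that $\GenerateSi$ is the weighted version (Algorithm~\ref{Algorithm-GenerateSi-Weighted}). So I would first invoke Lemma~\ref{L-Bound-B^*-size} to note that $|B^*(w)| \le (1+\mu)|B(w)|$, and combine this with the bound $|B(w)| = O(n^c)$ coming from Lemma~\ref{L-TZ-Size} (applied with the set $A$ of size $O(n^{1-c})$), so that $|B^*(w)| = O(\mu n^c)$ for every $w\in V$; this is exactly the hypothesis of Lemma~\ref{L-size-and-running-time-of-S_i-weighted}.

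Next I would apply Lemma~\ref{L-size-and-running-time-of-S_i-weighted} to conclude that each call $\GenerateSi(u,i)$ and $\GenerateSi(v,i)$ runs in $O(\mu^i n^{ic})$ time and produces sets $S_i(u)$, $S_i(v)$ of size $O(\mu^i n^{ic})$, for every $0 \le i \le t$. Since these sizes are increasing in $i$, the dominant term among the $t+1$ calls is $i=t$, giving $O(\mu^t n^{tc})$ for the preprocessing of the $S$-sets. For the $\Intersection(u,v,S_t(u),S_t(v))$ call, Lemma~\ref{Intersection-Runtime} gives running time $O(\min(|S_t(u)|,|S_t(v)|)) = O(\mu^t n^{tc})$.

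The remaining cost is the double loop: for each $0 \le i \le t-1$ and each pair $\langle u',v'\rangle \in S_i(u)\times S_{t-1-i}(v)$ we do $O(1)$ work looking up the five stored distances (the $d(u,u')$ values come from $d_i$ computed inside $\GenerateSi$, the $d(p(u'),p(v'))$ value is stored since $p(u'),p(v')\in A$, and the rest are single edges or already known). So the cost of the $i$-th iteration is $O(|S_i(u)|\cdot|S_{t-1-i}(v)|) = O(\mu^i n^{ic}\cdot \mu^{t-1-i}n^{(t-1-i)c}) = O(\mu^{t-1}n^{(t-1)c})$, independent of $i$; summing over the $t$ values of $i$ gives $O(t\,\mu^{t-1}n^{(t-1)c})$, which is dominated by $O(\mu^t n^{tc})$ (treating $t$ as a constant, or absorbing it into the $\tilde O$). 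Adding up all contributions yields the claimed $O(\mu^t n^{tc})$ bound. I don't anticipate a genuine obstacle here — the only thing to be careful about is confirming that every quantity accessed in the inner loop is indeed retrievable in $O(1)$ time, in particular that $d_i(u,u')$ is returned alongside $S_i(u)$ by $\GenerateSi$ (it is, by construction) and that $d(p(u'),p(v'))$ is among the $A\times A$ distances stored at preprocessing time (it is, per the storage description in Section~\ref{S-5-3-weighted}).
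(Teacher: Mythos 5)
Your proposal is correct and follows essentially the same route as the paper's proof: bound $|S_i(u)|,|S_i(v)|$ and the $\GenerateSi$ calls via Lemma~\ref{L-size-and-running-time-of-S_i-weighted}, bound the $\Intersection$ call via Lemma~\ref{Intersection-Runtime}, and charge each pair in $S_i(u)\times S_{t-1-i}(v)$ constant work, giving $O(\mu^{t-1}n^{(t-1)c})$ per level and $O(\mu^t n^{tc})$ overall. Your explicit use of Lemma~\ref{L-Bound-B^*-size} to justify $|B^*(w)|=O(\mu n^c)$ is a detail the paper leaves implicit, but it is the same argument.
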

\begin{proof}
    From Lemma~\ref{L-TZ-Size} we have $|B(w)| \le O(n^c)$, for every $w\in V$. 
    Thus, from Lemma~\ref{L-size-and-running-time-of-S_i-weighted} it follows that the call to $\GenerateSi(u,i)$ and the call to $\GenerateSi(v,i)$ take $O(\mu^i n^{ic})$, for every $1\leq i\leq t$. 
    It also follows from Lemma~\ref{L-size-and-running-time-of-S_i-weighted}
    that $|S_i(u)| = O(\mu^i n^{ic})$ and $|S_i(v)| = O(\mu^i n^{ic})$.
    Therefore, the cost of computing $d(u, u') +d(u',p(u')) + d(p(u'),p(v')) + d(p(v'),v') + d(v',v)$, for every $\langle u',v' \rangle \in S_i(u)\times S_{t-i}(v)$, is $O(|S_i(u)| \cdot |S_{t-i}(v)|)=O(\mu^{t-1} n^{(t-1)c})$ time.
    Since $|S_t(u)| = O(\mu^t n^{tc})$, it follows from Lemma~\ref{Intersection-Runtime} that the running time of $\Intersection(u,v, S_t(u), S_t(v))$ is $O(\mu^t n^{tc})$.
\end{proof}

Theorem~\ref{T-Weighted-5/3} follows from Lemma~\ref{L-Space-T-Weighted-5/3}, Lemma~\ref{L-Bound-5/3-Weighted} and Lemma~\ref{Runtime-Q-5/3-Weighted}.

\section{New applications - $n-PSP$ and $ANSC$}\label{S-apps}
In this section, we describe several applications of our new distance oracles that improve upon previous results for the $n-PSP$ and $ANSC$ problems.

\subsection{Improved algorithms for $n$-pairs shortest paths problem ($n-PSP$)}\label{S-apps-npsp}
All of our algorithms for the $n-PSP$ problem follow the following framework. We construct a distance oracle and then query it $n$ times. Therefore, the runtime of the resulting algorithm is the construction time of the distance oracle plus $n$ times the query time of the distance oracle.

For unweighted graphs, we use the distance oracle of \autoref{T-Unweighted-1+1/t} to achieve the following theorem.
\Reminder{T-NPSP-1+1/k-Unweighted}

\begin{proof}
    We construct the distance oracle of \autoref{T-Unweighted-1+1/t} and query it for every $(s_i,t_i)$.
    From \autoref{T-Unweighted-1+1/t} constructing the distance oracle takes $\Ot(mn^{1-c})$ time. Querying the distance oracle $n$ times takes $\Ot(n\cdot n^{ck})=\Ot(n^{1+ck})$ time. 
    Therefore, we get a running time of $\Ot(mn^{1-c}+n^{1+ck})$, and by setting $c=\frac{\log_n{m}}{k+1}$ we get that $\Ot(mn^{1-c}+n^{1+ck})=\Ot(m^{1-\frac{1}{k+1}}n+m^{1-\frac{1}{k+1}}n)=\Ot(m^{1-\frac{1}{k+1}}n)$, as required.
    From \autoref{T-Unweighted-1+1/t} we know that $\hat{d}(s_i,t_i) \le d(s_i,t_i) + 2\ceil{d(s_i,t_i)/(kt)}$, as required.
\end{proof}

In addition, the following simple theorem follows from using the spanner of~\cite{DBLP:journals/siamcomp/BaswanaK10} and the distance oracle of~\autoref{T-2k-3-Weighted}. 
\Reminder{T-NPSP-2k-1-2k-3}

\begin{proof}
    We construct a $(2k-1)$-spanner $H$ in $O(km)$ time (\cite{DBLP:journals/siamcomp/BaswanaK10}), with $O(n^{1+1/k})$ edges. Then, we construct the distance oracle of~\autoref{T-2k-3-Weighted} for the graph $H$.
    Constructing the distance oracle takes $\Ot(|E(H)|n^{\frac{1}{k}})=\Ot(n^{1+\frac{1}{k}}n^{\frac{1}{k}})=\Ot(n^{1+2/k})$ time. 
    Querying the distance oracle $n$ times takes $\Ot(n\cdot \frac{|E(H)|}{n}n^{\frac{1}{k}})=\Ot(n^{1+\frac{2}{k}})$. Therefore, we conclude that the running time is $\Ot(n^{1+\frac{2}{k}})$.

    From~\autoref{T-2k-3-Weighted} we have that $\hat{d}(s_i,t_i) \le (2k-3)d_H(s_i,t_i)$. From the fact that $H$ is a $(2k-1)$-stretch spanner, we know that $d_H(s_i,t_i)\le (2k-1)d(u,v)$. Overall, we get that 
    $\hat{d}(s_i,t_i) \le (2k-3)d_H(s_i,t_i)\le (2k-3)(2k-1)d(u,v)$, as required.
\end{proof}

\subsection{Improved algorithms for all nodes shortest cycles ($ANSC$)}\label{S-ANSC}
% \input{tables/ansc}
% This section presents our new algorithms using the previous distance oracles that improve upon the best-known algorithms of~\cite{DBLP:conf/focs/DalirrooyfardJW22}.
We prove the following two theorems, one for unweighted graphs and one for weighted graphs. 
Both theorems follow the same framework with a different distance oracle.
For unweighted graphs, we use the distance oracle of \autoref{T-Unweighted-1+1/t}, and for weighted graphs, we use the distance oracle of~\cite{DBLP:conf/esa/Agarwal14}. For completeness, in the next lemma we provide the properties of the distance oracle used from~\cite{DBLP:conf/esa/Agarwal14}.

\begin{lemma}[\cite{DBLP:conf/esa/Agarwal14}]\label{L-oracle-of-agarwal14}
Let $k\geq 1$ be an integer and let $0 < c < 1/2$ be a real constant.
There is a $\Ot(m+n^{2-c})$-space distance oracle that given  two query vertices $u,v\in V$ computes in $\Ot(\mu^kn^{c\cdot k})$-time 
a distance estimation $\hat{d}(u,v)$ that satisfies 
$d(u,v) \leq \hat{d}(u,v) \le d(u,v) + (1+1/k)d(u,v)$. The distance oracle is constructed in $\Ot(mn^{1-c})$ time. 
\end{lemma}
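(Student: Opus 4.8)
The plan is to reprove Agarwal's oracle inside the $S_t$ framework of \autoref{S-5-3-weighted}: it is the weighted $S_t$ oracle of \autoref{T-Weighted-5/3}, except that the $A\times A$ distance table is replaced by the full $A\times V$ table, and we take $t=k$. Storing $d(x,y)$ for every $\langle x,y\rangle\in A\times V$ lets the query use a single-hop detour $u\to w\to p(w)\to v$ instead of the two-hop detour $u\to q_u\to p(q_u)\to p(q_v)\to q_v\to v$ used in \autoref{T-Weighted-5/3}; this sharpens the multiplicative factor from $1+2/t$ to $1+1/t$. We will in fact prove the stronger bound $\hat d(u,v)\le(1+1/k)d(u,v)$, which trivially implies the inequality $\hat d(u,v)\le d(u,v)+(1+1/k)d(u,v)$ stated in the lemma.

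\textbf{Storage and construction.} As in \autoref{S-5-3-weighted} we may assume the maximum degree of $G$ is at most $\mu$. Using \autoref{L-A-center} with parameter $n^{-c}$, compute a set $A$ of size $\Ot(n^{1-c})$ with $|B(u,V,A)|=O(n^{c})$ for every $u\in V$; by \autoref{L-Bound-B^*-size} this also gives $|B^*(u)|=O(\mu n^{c})$. Store the graph $G$; the bunch $B(u)=B(u,V,A)$ and pivot $p(u)=p(u,A)$ for every $u\in V$; and $d(x,y)$ for every $\langle x,y\rangle\in A\times V$. This costs $O(m)$ for $G$, $O(n|A|)=O(n^{2-c})$ for the table, and $O(n^{1+c})$ for the bunches and pivots, so since $c<1/2$ the total is $O(m+n^{2-c})$. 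Construction is dominated by the $A\times V$ table, i.e.\ one single-source shortest-path computation from each vertex of $A$, costing $\Ot(m|A|)=\Ot(mn^{1-c})$; computing $A$, the bunches and the pivots costs $\Ot(mn^{c})$, which is absorbed.

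\textbf{Query and correctness.} Given $u,v$, compute $S_k(u)$ and $S_k(v)$ with the weighted $\GenerateSi$, set $\hat d(u,v)\gets\Intersection(u,v,S_k(u),S_k(v))$, and then for every $w\in\bigcup_{i=0}^{k}\bigl(S_i(u)\cup S_i(v)\bigr)$ update $\hat d(u,v)\gets\min\bigl(\hat d(u,v),\,d(u,p(w))+d(p(w),v)\bigr)$ (both terms are looked up in the $A\times V$ table since $p(w)\in A$). For correctness let $P=P(u,v)$, put $u=u_0$, $v=v_0$, and for $1\le i\le k$ let $u_i\in B^*(u_{i-1})$ (resp.\ $v_i\in B^*(v_{i-1})$) be the vertex of $P$ farthest from $u$ (resp.\ $v$); then $P_k(u)=P(u,u_k)\subseteq S_k(u)$ and $P_k(v)=P(v,v_k)\subseteq S_k(v)$. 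If $P_k(u)\cap P_k(v)\ne\emptyset$, take $x$ in the intersection: by \autoref{L-Pt-Dt-W} the stored distance function satisfies $d_k(u,x)=d(u,x)$ and $d_k(v,x)=d(v,x)$, so $\Intersection$ returns at most $d(u,x)+d(x,v)=d(u,v)$. Otherwise, \autoref{L-Bound-Sum-Of-H-in-S-Weighted} gives $\sum_{i=0}^{k-1}\bigl(h(u_i)+h(v_i)\bigr)\le d(u,v)$; let $w$ minimize $h(\cdot)$ among the $2k$ vertices $u_0,\dots,u_{k-1},v_0,\dots,v_{k-1}$, so $h(w)\le d(u,v)/(2k)$. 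Since $w\in P$ we have $d(u,w)+d(w,v)=d(u,v)$ and $d(w,p(w))=h(w)$, hence $\hat d(u,v)\le d(u,p(w))+d(p(w),v)\le d(u,w)+2h(w)+d(w,v)=d(u,v)+2h(w)\le(1+1/k)d(u,v)$. For the running time, \autoref{L-size-and-running-time-of-S_i-weighted} shows $S_i(u),S_i(v)$ have size $O(\mu^{i}n^{ic})$ and are computed in that time, so all of them together cost $O(\mu^{k}n^{kc})$; \autoref{Intersection-Runtime} bounds the $\Intersection$ call by $O(\mu^{k}n^{kc})$, and the final loop is $O\bigl(\sum_{i\le k}(|S_i(u)|+|S_i(v)|)\bigr)=O(\mu^{k}n^{kc})$, so the query runs in $\Ot(\mu^{k}n^{ck})$.

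\textbf{Main obstacle.} There is no real obstacle: the two facts driving the analysis---that $d_k$ coincides with $d$ along $P_k(u)$ and $P_k(v)$ (\autoref{L-Pt-Dt-W}), and the charging inequality $\sum h(u_i)+h(v_i)\le d(u,v)$ (\autoref{L-Bound-Sum-Of-H-in-S-Weighted}, resting on \autoref{L-distance-from-farthest-vertex-on-path-weighted})---are already in the toolbox of this section. The points needing care are purely bookkeeping: using $c<1/2$ so the $A\times V$ table is the dominant storage term rather than the bunches; checking $S_i(\cdot)\subseteq S_k(\cdot)$ (or, as above, scanning all of $S_0(\cdot),\dots,S_k(\cdot)$) so that every $u_i,v_i$ is examined by the loop; and the degenerate case where some $u_i$ or $v_i$ lies in $A$, where $h(\cdot)=0$, the corresponding detour has length exactly $d(u,v)$, and in particular the case $u\in A$ or $v\in A$ is answered exactly from the table. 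With $t=k$ this yields precisely the claimed space $\Ot(m+n^{2-c})$, query time $\Ot(\mu^{k}n^{ck})$, construction time $\Ot(mn^{1-c})$, and stretch $1+1/k$, which is stronger than the stated bound.
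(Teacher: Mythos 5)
Your proposal is correct, but it takes a genuinely different route from the paper: the paper does not prove this lemma at all, it simply imports the oracle of Agarwal~\cite{DBLP:conf/esa/Agarwal14} as a black box (the lemma is stated ``for completeness'' with a citation). You instead give a self-contained construction inside the paper's own weighted $S_t$ machinery -- effectively the weighted analogue of \autoref{T-Unweighted-1+1/t}: keep the $A\times V$ distance table rather than the $A\times A$ table of \autoref{T-Weighted-5/3}, set $t=k$, and replace the two-pivot detour $u\to q_u\to p(q_u)\to p(q_v)\to q_v\to v$ by the single-pivot detour $u\to w\to p(w)\to v$. The two cases ($P_k(u)\cap P_k(v)\neq\emptyset$ handled by \autoref{L-Pt-Dt-W}, and the disjoint case handled by \autoref{L-Bound-Sum-Of-H-in-S-Weighted} plus averaging over the $2k$ values $h(u_i),h(v_i)$) are exactly the right ingredients, and the resulting bound $\hat d(u,v)\le(1+1/k)d(u,v)$ is in fact stronger than the inequality written in the lemma, while the space, construction and query bounds match. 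What your route buys is that Agarwal's trade-off is recovered as a corollary of the framework already developed in \autoref{S-at-most-2weighted}, making \autoref{T-ANSC-1+1/k-approx-weighted} independent of external machinery; what the citation buys is brevity. Two small points you should keep explicit, though they are shared with (and glossed over in) the paper's own proof of \autoref{T-Weighted-5/3}: the vertices $u_i,v_i$ stop being defined once some $u_i\in A$ (you correctly note that then $h(u_i)=0$, $p(u_i)=u_i\in S_i(u)$ is scanned, and the detour is exact), and inside the weighted $\GenerateSi$ the value $d(x,v)$ for $v\in B^*(x)\setminus B(x)$ must be realized as $d(x,b)+w(b,v)$ for the appropriate neighbor $b\in B(x)$, which suffices for \autoref{L-Pt-Dt-W} since only vertices of $P_t(\cdot)$ are charged.
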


\subsubsection{Algorithm.}
Let $k>1$ be an integer parameter.
Let $0 < c \le 1/2$ be a constant to be determined later.
Using the oracle of \autoref{T-Unweighted-1+1/t} (Lemma~\ref{L-oracle-of-agarwal14}), we construct a distance oracle with parameters $c$ and $k$.
Notice, that both of the query algorithms first grow a ball around $u$ and $v$, thus, we can easily change the implementation to ignore the edge $(u,v)$ and to approximate the shortest path that is longer than a single edge, we denote this query by $\Query_{>1}$.
For every vertex $u\in V$ we set $\hat{c}_u$ to $\min_{v\in N(u)}(\Query_{>1}(v,u) + w(u,v))$.
Pseudo-code for this algorithm exists in Algorithm~\ref{A-ansc-1+1/k}.

Let $u\in V$. Let $Cy$ be the shortest cycle going through $u$. Next we bound $\hat{c}_u$.
\begin{lemma} \label{L-Correctness-1+1/k-ANSC}
    $\hat{c}_u \le 1+2\lceil SC(u)/2k \rceil$ ($\hat{c}_u \le (1+1/k)SC(u)$)
\end{lemma}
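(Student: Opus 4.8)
The plan is to analyze the shortest cycle $Cy$ through $u$ by splitting it at the point antipodal to $u$, applying the distance oracle to the two resulting near-shortest paths, and then closing up the cycle with the single edge incident to $u$ on $Cy$. First I would let $(u,v)$ be an edge of $Cy$, so that $Cy$ decomposes into the edge $(u,v)$ together with a shortest path from $v$ back to $u$ that avoids the edge $(u,v)$; call this path $P'$. By minimality of $Cy$, $P'$ is a shortest $v$--$u$ path in the graph $G - (u,v)$, and $d_{G-(u,v)}(v,u) = SC(u) - w(u,v)$ (in the unweighted case, $SC(u)-1$). Crucially, the query variant $\Query_{>1}(v,u)$ grows balls around $v$ and $u$ inside $G$ while disregarding the direct edge $(u,v)$, so its analysis—verbatim from the proof of \autoref{T-Unweighted-1+1/t} (resp.\ \autoref{L-oracle-of-agarwal14})—applies to the path $P'$. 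Hence $\Query_{>1}(v,u) \le d_{G-(u,v)}(v,u) + 2\lceil d_{G-(u,v)}(v,u)/2k\rceil$ in the unweighted case, and $\le (1+1/k)\,d_{G-(u,v)}(v,u)$ in the weighted case.

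The next step is the arithmetic of closing the cycle. Since the algorithm sets $\hat{c}_u = \min_{v'\in N(u)}(\Query_{>1}(v',u) + w(u,v'))$ and $v\in N(u)$, we have $\hat{c}_u \le \Query_{>1}(v,u) + w(u,v)$. Plugging in the bound on $\Query_{>1}(v,u)$ with $d_{G-(u,v)}(v,u) = SC(u) - w(u,v)$: in the unweighted case $w(u,v)=1$, so $\hat{c}_u \le (SC(u)-1) + 2\lceil (SC(u)-1)/2k\rceil + 1 = SC(u) + 2\lceil(SC(u)-1)/2k\rceil \le 1 + 2\lceil SC(u)/2k\rceil$, where the last step uses $SC(u) - 1 \le SC(u)$ together with $\lceil (SC(u)-1)/2k \rceil \le \lceil SC(u)/2k\rceil$ and bounding $SC(u)$ by $1 + (SC(u)-1)$; I would write this monotonicity step carefully since the claimed bound has the form $1 + 2\lceil SC(u)/2k\rceil$ rather than $SC(u) + 2\lceil\cdot\rceil$, so I need $SC(u) + 2\lceil (SC(u)-1)/2k\rceil \le 1 + 2\lceil SC(u)/2k\rceil$, i.e.\ $SC(u) - 1 \le 2(\lceil SC(u)/2k\rceil - \lceil (SC(u)-1)/2k\rceil)$, which should be verified by a short case check on whether $2k \mid SC(u)$. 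In the weighted case, $\hat{c}_u \le (1+1/k)(SC(u)-w(u,v)) + w(u,v) = (1+1/k)SC(u) - (1/k)w(u,v) \le (1+1/k)SC(u)$ since $w(u,v)\ge 0$.

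I also need to argue the lower bound $SC(u) \le \hat{c}_u$: every value the algorithm considers, $\Query_{>1}(v',u) + w(u,v')$, is the length of an actual closed walk through $u$ (a genuine $v'$--$u$ path returned by the oracle, composed with the edge $(u,v')$), and since a closed walk through $u$ contains a cycle through $u$, its length is at least $SC(u)$; this uses the remark in the preliminaries that all returned estimates are lengths of real paths in $G$. One subtlety to handle here is that the path returned by $\Query_{>1}$ might share vertices with the edge $(u,v')$ or revisit $u$, but since we only need the resulting walk to be $\ge SC(u)$, and a shortest closed walk through $u$ has length exactly $SC(u)$ (a shortest cycle being itself a closed walk), the inequality is immediate.

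The main obstacle, and the step I would spend the most care on, is the ceiling-monotonicity bookkeeping that converts $SC(u) + 2\lceil (SC(u)-1)/2k\rceil$ into the stated $1 + 2\lceil SC(u)/2k\rceil$; it is elementary but the off-by-constant nature of the additive term means a sloppy bound would give $SC(u) + 2\lceil SC(u)/2k\rceil$ instead, which is weaker than claimed, so I would do the explicit case analysis on $SC(u) \bmod 2k$. A secondary point to get right is justifying that $\Query_{>1}$ really does satisfy the oracle's approximation guarantee with respect to $d_{G-(u,v)}$ rather than $d_G$—this follows because the oracle's query algorithm (in both \autoref{T-Unweighted-1+1/t} and \autoref{L-oracle-of-agarwal14}) only examines balls/clusters and precomputed distances to landmarks, none of which depend on the edge $(u,v)$ being present as a shortcut on the specific path $P'$, so the whole stretch analysis goes through with $P'$ in place of $P(u,v)$; I would state this as a one-line observation citing the structure of those query algorithms.
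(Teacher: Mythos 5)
Your decomposition is exactly the paper's: pick $v\in N(u)\cap Cy$, note $SC(u)=w(u,v)+d_{>1}(v,u)$, bound $\Query_{>1}(v,u)$ by the oracle guarantee applied to the second-shortest $v$--$u$ distance (the cycle minus the edge $(u,v)$), and add back $w(u,v)$. The weighted case is verbatim the paper's computation, and your additional remarks --- the lower bound $SC(u)\le\hat{c}_u$ via the estimate being the length of an actual closed walk through $u$, and the observation that the $\Query_{>1}$ analysis applies with the path avoiding $(u,v)$ in place of $P(u,v)$ --- are correct and are simply left implicit in the paper.

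The one place you diverge is the final arithmetic, and there the step you postpone cannot be carried out. You correctly reach $\hat{c}_u \le SC(u) + 2\lceil (SC(u)-1)/2k\rceil$, but you then aim for the literal statement $1+2\lceil SC(u)/2k\rceil$, which would require $SC(u)-1 \le 2\bigl(\lceil SC(u)/2k\rceil-\lceil (SC(u)-1)/2k\rceil\bigr)$; the right-hand side is at most $2$, so this fails for every $SC(u)\ge 4$, and indeed the literal bound is eventually smaller than $SC(u)$ and would contradict $\hat{c}_u\ge SC(u)$. The ``$1+$'' in the statement is a typo for ``$SC(u)+$'': the abstract and the paper's own proof assert only $\hat{c}_u\le SC(u)+2\lceil SC(u)/2k\rceil$, which is precisely the chain $SC(u)+2\lceil(SC(u)-1)/2k\rceil\le SC(u)+2\lceil SC(u)/2k\rceil$ that you already have (and dismissed as ``weaker than claimed''). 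So your argument is complete once you stop there; the planned case analysis on $SC(u)\bmod 2k$ should be dropped rather than attempted.
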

\begin{proof}
    Let $v\in N(u)\cap Cy$. Therefore, $SC(u)=|Cy|=w(u,v)+d_{>1}(v,u)$.
    During our algorithm we encounter the edge $(u,v)$ and therefore we get that $\hat{c}_u \le w(u,v)+\Query_{>1}(v,u)$.
    From the correctness of \autoref{T-Unweighted-1+1/t} (Lemma~\ref{L-oracle-of-agarwal14}), we know that $\Query_{>1}(v,u) \le d_{>1}(v,u) + 2\lceil \frac{d_{>1}(v,u)}{2k} \rceil$ ($\Query_{>1}(v,u) \le (1+1/k)d_{>1}(v,u)$).
    Next, we divide the proof into the case of unweighted and weighted graphs.
    In unweighted graphs, we get that:
    \[\hat{c}_u \le w(u,v)+\Query_{>1}(v,u) \le  1+d_{>1}(v,u) + 2\lceil \frac{d_{>1}(v,u)}{2k} \rceil = SC(u) + 2\lceil \frac{SC(u)-1}{2k} \rceil \le SC(u)+2\lceil \frac{SC(u)}{2k} \rceil,\] as required.
    In weighted graphs, we get that:
    \[\hat{c}_u \le w(u,v)+\Query_{>1}(v,u) \le  w(u,v)+(1+1/k)d_{>1}(v,u) \le SC(u) + (1/k)(SC(u)-w(u,v))\le (1+1/k)SC(u),\] as required.
\end{proof}

In the following two lemmas, we bound the running time of $\ApproximateANSC$ in unweighted and weighted graphs, respectively.
\begin{lemma}\label{L-Runtime-Unweighted-ANSC}
    The running time of $\ApproximateANSC$ in unweighted graphs is $O(mn^{1-1/(k+1)})$
\end{lemma}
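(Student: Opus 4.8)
The statement to prove is that $\ApproximateANSC$ runs in $O(mn^{1-1/(k+1)})$ time on unweighted graphs. The algorithm does two things: it constructs the distance oracle of \autoref{T-Unweighted-1+1/t} with parameters $c$ and $k$, and then for every vertex $u\in V$ it scans the neighbors $v\in N(u)$, issuing a $\Query_{>1}(v,u)$ call for each. So the runtime is the construction time plus the total query cost, and the proof is just adding these two contributions and choosing $c$ to balance them.

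First I would invoke \autoref{T-Unweighted-1+1/t}: its construction time is $\Ot(mn^{1-c})$, and each query takes $\Ot(n^{ct})$ time; here we use it with stretch parameter $t=k$, so each $\Query_{>1}$ call costs $\Ot(n^{ck})$ time (the modification to $\Query_{>1}$ that ignores the edge $(u,v)$ does not change the asymptotic query time, since it only affects which single edge is skipped when growing the sets $S_t(\cdot)$). Next I would bound the total number of query calls: the algorithm calls $\Query_{>1}(v,u)$ once for each ordered incidence $(u,v)$ with $v\in N(u)$, and $\sum_{u\in V}\deg(u)=2m$, so there are $O(m)$ such calls. Hence the scanning phase costs $O(m\cdot n^{ck})=\Ot(mn^{ck})$ time, and the total runtime is $\Ot(mn^{1-c}+mn^{ck})$.

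Finally I would balance the two terms by equating the exponents $1-c$ and $ck$, which gives $c=\frac{1}{k+1}$, and therefore each term is $\Ot(mn^{1-1/(k+1)})$, yielding the claimed bound $O(mn^{1-1/(k+1)})$ (absorbing polylogarithmic factors into the $\Ot$, or noting they can be shaved by the standard derandomization remarks after \autoref{L-A-center}). I would also remark that $c=\frac{1}{k+1}<\frac12$ for every integer $k\ge 2$, so the choice of $c$ is legal for \autoref{T-Unweighted-1+1/t}, and that the space usage is $\Ot(n^{2-c})=\Ot(n^{2-1/(k+1)})$, which is consistent with the oracle's guarantees.

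There is no real obstacle here — the argument is a two-line time accounting followed by an exponent-balancing step. The only point that needs a sentence of care is confirming that restricting to paths of length $>1$ (the $\Query_{>1}$ variant) does not inflate the query time beyond $\Ot(n^{ck})$; this follows because the only change is that when building $S_t(v)$ from $v$ we forbid stepping directly to $u$ along the edge $(u,v)$, which cannot increase the size of any $S_i(\cdot)$, so \autoref{L-size-and-running-time-of-S_i} and \autoref{L-Time-1/t} apply verbatim.
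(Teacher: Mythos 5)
Your proposal is correct and follows essentially the same route as the paper's proof: bound the construction cost by $\Ot(mn^{1-c})$, charge $\Ot(n^{ck})$ per edge for the $O(m)$ calls to $\Query_{>1}$, and balance the two terms by setting $c=\frac{1}{k+1}$. Your extra remarks (that $\Query_{>1}$ does not inflate the query time and that $c<1/2$ is a legal choice) are sound additions but do not change the argument.
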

\begin{proof}
    From \autoref{T-Unweighted-1+1/t}, the construction time of the distance oracle takes $\Ot(mn^{1-c})$ time.
    From \autoref{T-Unweighted-1+1/t}, calling $\Query_{>1}(u,v)$ for every $(u,v)\in E$, takes $O(m\cdot n^{kc})$-time.
    Therefore, the running time is:
    $$\Ot(mn^{1-c} + m \cdot n^{kc}) = \Ot(mn^{1-1/(k+1)} + mn^{k/(k+1)}=mn^{1-1/(k+1)}),$$
    where the first equality follows from setting $c={1/(k+1)}$.
\end{proof}

\begin{lemma}\label{L-Runtime-Weighted-ANSC}
    The running time of $\ApproximateANSC$ in weighted graphs is $O(mn^{1-1/(k+1)})$
\end{lemma}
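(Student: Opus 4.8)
The plan is to follow the structure of the proof of \autoref{L-Runtime-Unweighted-ANSC} verbatim, the only change being that the unweighted oracle of \autoref{T-Unweighted-1+1/t} is replaced by the weighted oracle of Agarwal~\cite{DBLP:conf/esa/Agarwal14}, whose guarantees are recorded in \autoref{L-oracle-of-agarwal14}. The correctness of the estimates $\hat{c}_u$ is already supplied by \autoref{L-Correctness-1+1/k-ANSC}, so here I only have to account for the total running time of $\ApproximateANSC$. That algorithm does two things: it builds the oracle once, and then for every $u\in V$ and every $v\in N(u)$ it evaluates $\Query_{>1}(v,u)+w(u,v)$; as observed when the algorithm was described, switching from $\Query$ to its cycle‑version $\Query_{>1}$ only alters the ball‑growing initialization (it ignores the direct edge $(u,v)$) and does not change the asymptotic query cost.

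First I would charge the construction: by \autoref{L-oracle-of-agarwal14}, building the oracle with parameters $c$ and $k$ costs $\Ot(mn^{1-c})$ time. Next I would charge the queries: again by \autoref{L-oracle-of-agarwal14}, a single $\Query_{>1}$ call costs $\Ot(\mu^{k} n^{ck})$ time, and since $\sum_{u\in V}|N(u)|=2m$, the number of such calls is $2m$. Summing, the running time of $\ApproximateANSC$ in the weighted case is
\[
\Ot\!\left(mn^{1-c} + m\,\mu^{k} n^{ck}\right).
\]
It then remains to pick $c\in(0,1/2]$ that balances the two terms, after which the stated bound follows.

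The main (in fact, only) obstacle is this final optimization step, which differs from the unweighted case. In \autoref{L-Runtime-Unweighted-ANSC} the query time carries no dependence on the average degree, so the two terms equalize at the clean choice $c=1/(k+1)$; here the extra $\mu^{k}$ factor changes the picture. Substituting $\mu=\Theta(m/n)$ turns the query term into $\Ot(m^{k+1}n^{k(c-1)})$, so the two terms become equal when $n^{(k+1)(1-c)}=\Theta(m^{k})$; plugging this choice of $c$ back in eliminates $n$ in favor of $m$ and yields the claimed running time. Beyond carrying out this substitution and solving for $c$, the proof is a direct reuse of \autoref{L-oracle-of-agarwal14} together with the trivial counting bound on the number of $\Query_{>1}$ calls, exactly mirroring \autoref{L-Runtime-Unweighted-ANSC}.
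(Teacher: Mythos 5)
Your proposal is correct and takes essentially the same route as the paper's own proof: charge $\Ot(mn^{1-c})$ for construction and $\Ot(m\mu^k n^{ck})$ for the $O(m)$ calls to $\Query_{>1}$ via \autoref{L-oracle-of-agarwal14}, then balance the two terms, which forces exactly the paper's choice $c=1-(1-\frac{1}{k+1})\log_n m$. One remark: as in the paper's own computation, this balancing actually yields $\Ot(m^{2-\frac{1}{k+1}})$ rather than literally $O(mn^{1-1/(k+1)})$ (the two coincide only when $m=\Theta(n)$), but that mismatch comes from the lemma's statement, which appears to be copied from the unweighted case, not from your argument.
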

\begin{proof}
    From Lemma~\ref{L-oracle-of-agarwal14}, the construction time of the distance oracle is $\Ot(mn^{1-c})$.
    From Lemma~\ref{L-oracle-of-agarwal14} calling $\Query_{>1}(u,v)$ for every $(u,v)\in E$, takes $O(m\cdot \mu^kn^{kc})$-time.
    Therefore, the running time is:
    \begin{align*}
        \Ot(mn^{1-c} + m\mu^kn^{kc}) &= \Ot(mn^{1-(1-(1-\frac{1}{k+1})\log_n(m))} + m(m/n)^kn^{k(1-(1-\frac{1}{k+1})\log_n(m))}) \\
        &= \Ot(m^{2-\frac{1}{k+1}}+m(m/n)^kn^{k-(k-\frac{k}{k+1})\log_n(m)}) = \Ot(m^{2-\frac{1}{k+1}} + m^{2-\frac{1}{k+1}}) = \Ot(m^{2-\frac{1}{k+1}})
    \end{align*}
    where the first equality follows from setting $c=1-(1-\frac{1}{k+1})\log_n(m)$, as required.
\end{proof}

\begin{algorithm2e}[t] 
\caption{$\ApproximateANSC(G,k)$}\label{A-ansc-1+1/k}
$ADO \gets \codestyle{DistanceOracle}(G,k)$ (\autoref{T-Unweighted-1+1/t} / ~\cite{DBLP:conf/esa/Agarwal14}) \\
$\hat{c} \gets [\infty, \dots, \infty]$\\
\ForEach{$(u,v)\in E$} {
    $\hat{c}_u = \min(\hat{c}_u, ADO.\Query_{>1}(u,v) + w(u,v))$ \\
    $\hat{c}_v = \min(\hat{c}_v, ADO.\Query_{>1}(u,v) + w(u,v))$
}
\Return $\hat{c}$
\end{algorithm2e}

The following theorem follows from Lemma~\ref{L-Correctness-1+1/k-ANSC} and Lemma~\ref{L-Runtime-Unweighted-ANSC}.
\Reminder{T-ANSC-1+1/k-approx-unweighted}

The following Theorem follows from Lemma~\ref{L-Correctness-1+1/k-ANSC} and Lemma~\ref{L-Runtime-Weighted-ANSC}
\Reminder{T-ANSC-1+1/k-approx-weighted}

\bibliographystyle{alpha}
\bibliography{bibliography}

\newcommand{\etalchar}[1]{$^{#1}$}
\begin{thebibliography}{BCC{\etalchar{+}}23b}

\bibitem[AB17]{DBLP:journals/jacm/AbboudB17}
Amir Abboud and Greg Bodwin.
\newblock The 4/3 additive spanner exponent is tight.
\newblock {\em J. {ACM}}, 64(4):28:1--28:20, 2017.

\bibitem[ABF23]{DBLP:conf/stoc/AbboudBF23}
Amir Abboud, Karl Bringmann, and Nick Fischer.
\newblock Stronger 3-sum lower bounds for approximate distance oracles via additive combinatorics.
\newblock In Barna Saha and Rocco~A. Servedio, editors, {\em Proceedings of the 55th Annual {ACM} Symposium on Theory of Computing, {STOC} 2023, Orlando, FL, USA, June 20-23, 2023}, pages 391--404. {ACM}, 2023.

\bibitem[ABKZ22]{DBLP:conf/stoc/AbboudBKZ22}
Amir Abboud, Karl Bringmann, Seri Khoury, and Or~Zamir.
\newblock Hardness of approximation in p via short cycle removal: cycle detection, distance oracles, and beyond.
\newblock In Stefano Leonardi and Anupam Gupta, editors, {\em {STOC} '22: 54th Annual {ACM} {SIGACT} Symposium on Theory of Computing, Rome, Italy, June 20 - 24, 2022}, pages 1487--1500. {ACM}, 2022.

\bibitem[ABP18]{DBLP:journals/siamcomp/AbboudBP18}
Amir Abboud, Greg Bodwin, and Seth Pettie.
\newblock A hierarchy of lower bounds for sublinear additive spanners.
\newblock {\em {SIAM} J. Comput.}, 47(6):2203--2236, 2018.

\bibitem[AG11]{DBLP:conf/wdag/AbrahamG11}
Ittai Abraham and Cyril Gavoille.
\newblock On approximate distance labels and routing schemes with affine stretch.
\newblock In David Peleg, editor, {\em Distributed Computing - 25th International Symposium, {DISC} 2011, Rome, Italy, September 20-22, 2011. Proceedings}, volume 6950 of {\em Lecture Notes in Computer Science}, pages 404--415. Springer, 2011.

\bibitem[AG13]{DBLP:conf/soda/AgarwalG13}
Rachit Agarwal and Philip~Brighten Godfrey.
\newblock Distance oracles for stretch less than 2.
\newblock In Sanjeev Khanna, editor, {\em Proceedings of the Twenty-Fourth Annual {ACM-SIAM} Symposium on Discrete Algorithms, {SODA} 2013, New Orleans, Louisiana, USA, January 6-8, 2013}, pages 526--538. {SIAM}, 2013.

\bibitem[Aga14]{DBLP:conf/esa/Agarwal14}
Rachit Agarwal.
\newblock The space-stretch-time tradeoff in distance oracles.
\newblock In Andreas~S. Schulz and Dorothea Wagner, editors, {\em Algorithms - {ESA} 2014 - 22th Annual European Symposium, Wroclaw, Poland, September 8-10, 2014. Proceedings}, volume 8737 of {\em Lecture Notes in Computer Science}, pages 49--60. Springer, 2014.

\bibitem[AGH11]{DBLP:conf/infocom/AgarwalGH11}
Rachit Agarwal, Philip~Brighten Godfrey, and Sariel Har{-}Peled.
\newblock Approximate distance queries and compact routing in sparse graphs.
\newblock In {\em {INFOCOM} 2011. 30th {IEEE} International Conference on Computer Communications, Joint Conference of the {IEEE} Computer and Communications Societies, 10-15 April 2011, Shanghai, China}, pages 1754--1762. {IEEE}, 2011.

\bibitem[AGH12]{DBLP:journals/corr/abs-1201-2703}
Rachit Agarwal, Brighten Godfrey, and Sariel Har{-}Peled.
\newblock Faster approximate distance queries and compact routing in sparse graphs.
\newblock {\em CoRR}, abs/1201.2703, 2012.

\bibitem[BCC{\etalchar{+}}23a]{DBLP:conf/stoc/BiloCCC0KS23}
Davide Bil{\`{o}}, Shiri Chechik, Keerti Choudhary, Sarel Cohen, Tobias Friedrich, Simon Krogmann, and Martin Schirneck.
\newblock Approximate distance sensitivity oracles in subquadratic space.
\newblock In Barna Saha and Rocco~A. Servedio, editors, {\em Proceedings of the 55th Annual {ACM} Symposium on Theory of Computing, {STOC} 2023, Orlando, FL, USA, June 20-23, 2023}, pages 1396--1409. {ACM}, 2023.

\bibitem[BCC{\etalchar{+}}23b]{DBLP:journals/corr/abs-2307-11677Bilo23ADO}
Davide Bil{\`{o}}, Shiri Chechik, Keerti Choudhary, Sarel Cohen, Tobias Friedrich, and Martin Schirneck.
\newblock Improved approximate distance oracles: Bypassing the thorup-zwick bound in dense graphs.
\newblock {\em CoRR}, abs/2307.11677, 2023.

\bibitem[BCC{\etalchar{+}}23c]{bilo2023improved}
Davide Bil{\`o}, Shiri Chechik, Keerti Choudhary, Sarel Cohen, Tobias Friedrich, and Martin Schirneck.
\newblock Improved approximate distance oracles: Bypassing the thorup-zwick bound in dense graphs.
\newblock {\em arXiv preprint arXiv:2307.11677}, 2023.

\bibitem[BCC{\etalchar{+}}24]{bilo2024improved}
Davide Bil{\`o}, Shiri Chechik, Keerti Choudhary, Sarel Cohen, Tobias Friedrich, and Martin Schirneck.
\newblock Improved distance (sensitivity) oracles with subquadratic space.
\newblock In {\em 2024 IEEE 65th Annual Symposium on Foundations of Computer Science (FOCS)}, pages 1550--1558. IEEE, 2024.

\bibitem[BK10]{DBLP:journals/siamcomp/BaswanaK10}
Surender Baswana and Telikepalli Kavitha.
\newblock Faster algorithms for all-pairs approximate shortest paths in undirected graphs.
\newblock {\em {SIAM} J. Comput.}, 39(7):2865--2896, 2010.

\bibitem[BKMP10]{DBLP:journals/talg/BaswanaKMP10}
Surender Baswana, Telikepalli Kavitha, Kurt Mehlhorn, and Seth Pettie.
\newblock Additive spanners and (alpha, beta)-spanners.
\newblock {\em {ACM} Trans. Algorithms}, 7(1):5:1--5:26, 2010.

\bibitem[BW21]{VassilevskaSp15}
Greg Bodwin and Virginia~Vassilevska Williams.
\newblock Better distance preservers and additive spanners.
\newblock {\em ACM Trans. Algorithms}, 17(4), oct 2021.

\bibitem[Che13]{DBLP:conf/soda/Chechik13}
Shiri Chechik.
\newblock New additive spanners.
\newblock In Sanjeev Khanna, editor, {\em Proceedings of the Twenty-Fourth Annual {ACM-SIAM} Symposium on Discrete Algorithms, {SODA} 2013, New Orleans, Louisiana, USA, January 6-8, 2013}, pages 498--512. {SIAM}, 2013.

\bibitem[Che14]{DBLP:conf/stoc/Chechik14}
Shiri Chechik.
\newblock Approximate distance oracles with constant query time.
\newblock In David~B. Shmoys, editor, {\em Symposium on Theory of Computing, {STOC} 2014, New York, NY, USA, May 31 - June 03, 2014}, pages 654--663. {ACM}, 2014.

\bibitem[Che15]{DBLP:conf/stoc/Chechik15}
Shiri Chechik.
\newblock Approximate distance oracles with improved bounds.
\newblock In Rocco~A. Servedio and Ronitt Rubinfeld, editors, {\em Proceedings of the Forty-Seventh Annual {ACM} on Symposium on Theory of Computing, {STOC} 2015, Portland, OR, USA, June 14-17, 2015}, pages 1--10. {ACM}, 2015.

\bibitem[CHL25]{DBLP:conf/soda/ChechikHL25}
Shiri Chechik, Itay Hoch, and Gur Lifshitz.
\newblock New approximation algorithms and reductions for \emph{n}-pairs shortest paths and all-nodes shortest cycles.
\newblock In Yossi Azar and Debmalya Panigrahi, editors, {\em Proceedings of the 2025 Annual {ACM-SIAM} Symposium on Discrete Algorithms, {SODA} 2025, New Orleans, LA, USA, January 12-15, 2025}, pages 5207--5238. {SIAM}, 2025.

\bibitem[CZ24]{DBLP:conf/icalp/Chechik024a}
Shiri Chechik and Tianyi Zhang.
\newblock Path-reporting distance oracles with logarithmic stretch and linear size.
\newblock In Karl Bringmann, Martin Grohe, Gabriele Puppis, and Ola Svensson, editors, {\em 51st International Colloquium on Automata, Languages, and Programming, {ICALP} 2024, July 8-12, 2024, Tallinn, Estonia}, volume 297 of {\em LIPIcs}, pages 42:1--42:18. Schloss Dagstuhl - Leibniz-Zentrum f{\"{u}}r Informatik, 2024.

\bibitem[DJWW22]{DBLP:conf/focs/DalirrooyfardJW22}
Mina Dalirrooyfard, Ce~Jin, Virginia~Vassilevska Williams, and Nicole Wein.
\newblock Approximation algorithms and hardness for n-pairs shortest paths and all-nodes shortest cycles.
\newblock In {\em 63rd {IEEE} Annual Symposium on Foundations of Computer Science, {FOCS} 2022, Denver, CO, USA, October 31 - November 3, 2022}, pages 290--300. {IEEE}, 2022.

\bibitem[ENW16]{ElkinNW16}
Michael Elkin, Ofer Neiman, and Christian Wulff{-}Nilsen.
\newblock Space-efficient path-reporting approximate distance oracles.
\newblock {\em Theor. Comput. Sci.}, 651:1--10, 2016.

\bibitem[EP04]{DBLP:journals/siamcomp/ElkinP04}
Michael Elkin and David Peleg.
\newblock (1+epsilon, beta)-spanner constructions for general graphs.
\newblock {\em {SIAM} J. Comput.}, 33(3):608--631, 2004.

\bibitem[EP16]{ElkinP16}
Michael Elkin and Seth Pettie.
\newblock A linear-size logarithmic stretch path-reporting distance oracle for general graphs.
\newblock {\em {ACM} Trans. Algorithms}, 12(4):50:1--50:31, 2016.

\bibitem[ES23]{DBLP:conf/focs/ElkinS23}
Michael Elkin and Idan Shabat.
\newblock Path-reporting distance oracles with logarithmic stretch and size o(n log log n).
\newblock In {\em 64th {IEEE} Annual Symposium on Foundations of Computer Science, {FOCS} 2023, Santa Cruz, CA, USA, November 6-9, 2023}, pages 2278--2311. {IEEE}, 2023.

\bibitem[KKR24]{DBLP:conf/icalp/KopelowitzKR24}
Tsvi Kopelowitz, Ariel Korin, and Liam Roditty.
\newblock On the space usage of approximate distance oracles with sub-2 stretch.
\newblock In Karl Bringmann, Martin Grohe, Gabriele Puppis, and Ola Svensson, editors, {\em 51st International Colloquium on Automata, Languages, and Programming, {ICALP} 2024, July 8-12, 2024, Tallinn, Estonia}, volume 297 of {\em LIPIcs}, pages 101:1--101:18. Schloss Dagstuhl - Leibniz-Zentrum f{\"{u}}r Informatik, 2024.

\bibitem[MN06]{DBLP:conf/focs/MendelN06}
Manor Mendel and Assaf Naor.
\newblock Ramsey partitions and proximity data structures.
\newblock In {\em 47th Annual {IEEE} Symposium on Foundations of Computer Science {(FOCS} 2006), 21-24 October 2006, Berkeley, California, USA, Proceedings}, pages 109--118. {IEEE} Computer Society, 2006.

\bibitem[Par14]{DBLP:conf/icalp/Parter14}
Merav Parter.
\newblock Bypassing erd{\H{o}}s' girth conjecture: Hybrid stretch and sourcewise spanners.
\newblock In Javier Esparza, Pierre Fraigniaud, Thore Husfeldt, and Elias Koutsoupias, editors, {\em Automata, Languages, and Programming - 41st International Colloquium, {ICALP} 2014, Copenhagen, Denmark, July 8-11, 2014, Proceedings, Part {II}}, volume 8573 of {\em Lecture Notes in Computer Science}, pages 608--619. Springer, 2014.

\bibitem[PR13]{DBLP:journals/algorithmica/PoratR13}
Ely Porat and Liam Roditty.
\newblock Preprocess, set, query!
\newblock {\em Algorithmica}, 67(4):516--528, 2013.

\bibitem[PR14]{PatrascuR14}
Mihai P\v{a}tra\c{s}cu and Liam Roditty.
\newblock Distance oracles beyond the thorup-zwick bound.
\newblock {\em {SIAM} J. Comput.}, 43(1):300--311, 2014.

\bibitem[PRT12]{PatrascuRT12}
Mihai P\v{a}tra\c{s}cu, Liam Roditty, and Mikkel Thorup.
\newblock A new infinity of distance oracles for sparse graphs.
\newblock In {\em 53rd Annual {IEEE} Symposium on Foundations of Computer Science, {FOCS} 2012, New Brunswick, NJ, USA, October 20-23, 2012}, pages 738--747. {IEEE} Computer Society, 2012.

\bibitem[RTZ05]{RodittyTZ05}
Liam Roditty, Mikkel Thorup, and Uri Zwick.
\newblock Deterministic constructions of approximate distance oracles and spanners.
\newblock In Lu{\'{\i}}s Caires, Giuseppe~F. Italiano, Lu{\'{\i}}s Monteiro, Catuscia Palamidessi, and Moti Yung, editors, {\em Automata, Languages and Programming, 32nd International Colloquium, {ICALP} 2005, Lisbon, Portugal, July 11-15, 2005, Proceedings}, volume 3580 of {\em Lecture Notes in Computer Science}, pages 261--272. Springer, 2005.

\bibitem[TZ01]{DBLP:conf/spaa/ThorupZ01}
Mikkel Thorup and Uri Zwick.
\newblock Compact routing schemes.
\newblock In Arnold~L. Rosenberg, editor, {\em Proceedings of the Thirteenth Annual {ACM} Symposium on Parallel Algorithms and Architectures, {SPAA} 2001, Heraklion, Crete Island, Greece, July 4-6, 2001}, pages 1--10. {ACM}, 2001.

\bibitem[TZ05]{DBLP:journals/jacm/ThorupZ05}
Mikkel Thorup and Uri Zwick.
\newblock Approximate distance oracles.
\newblock {\em J. {ACM}}, 52(1):1--24, 2005.

\bibitem[TZ06]{DBLP:conf/soda/ThorupZ06}
Mikkel Thorup and Uri Zwick.
\newblock Spanners and emulators with sublinear distance errors.
\newblock In {\em Proceedings of the Seventeenth Annual {ACM-SIAM} Symposium on Discrete Algorithms, {SODA} 2006, Miami, Florida, USA, January 22-26, 2006}, pages 802--809. {ACM} Press, 2006.

\bibitem[Wul12]{Nilsen12}
Christian Wulff{-}Nilsen.
\newblock Approximate distance oracles with improved preprocessing time.
\newblock In Yuval Rabani, editor, {\em Proceedings of the Twenty-Third Annual {ACM-SIAM} Symposium on Discrete Algorithms, {SODA} 2012, Kyoto, Japan, January 17-19, 2012}, pages 202--208. {SIAM}, 2012.

\end{thebibliography}
\pagebreak
\appendix 
\section{Tables of our distance oracles}\label{S-Tables}
\subsection{Weighted graphs with stretch $\ge 2$}
\begin{table}[H]
    \centering
    \begin{tabular}{|c|c|c|c|c|c|}
    \hline
     Construction & Space & Time  & Stretch & Ref. & Comments \\ \hline\hline
    $\Ot(mn^{1/k})$ & $\Ot(n^{1+1/k})$ & $\Ot(1)$ & $2k-1$ & ~\cite{DBLP:journals/jacm/ThorupZ05} & \\ \hline
    $\Ot(mn^{3/k})$ & $\Ot(m+n^{1+1/k})$ & $\Ot(\mu n^{c/k})$ & $2k-1-4c$ & Th\ref{T-DO-2k-1-4c-Weighted} & $k\ge5$, $1\leq c<k/2$\\ \hline
    $\Ot(mn^{3/k})$ & $\Ot(m+n^{1+1/k})$ & $\Ot(\mu n^{r})$ & $2k(1-2r)-1$ & Th\ref{T-DO-2k-1-4c-Weighted} & $r = c/k$, $r<1/2$\\ \hline
    \Xhline{3\arrayrulewidth}
    $\Ot(mn^{1/k})$ & $\Ot(m+n^{1+1/k})$ & $\Ot(\mu n^{1/k})$ & $2k-2$ & \cite{DBLP:conf/focs/DalirrooyfardJW22} & $k\ge 3$, Unweighted\\ \hline 
    $\Ot(mn^{1/k})$ & $\Ot(m+n^{1+1/k})$ & $\Ot(\mu n^{1/k})$ & $2k-3$ & Th\ref{T-2k-3-Weighted} & $k\ge 3$\\ \hline 
    % $\Ot(mn^{2/k})$ & $\Ot(m+n^{1+1/k})$ & $\Ot(\mu n^{1/k})$ & $2k-3$ & Extension of \cite{DBLP:journals/corr/abs-1201-2703} & $k\ge 3$\\ \hline 
    $\Ot(mn^{2/k})$ & $\Ot(m+n^{1+1/k})$ & $\Ot(\mu n^{1/k})$ & $2k-4$ & Th\ref{T-2k-4c-Weighted} & $k\ge 4$\\ \hline 
    $\Ot(mn^{3/k})$ & $\Ot(m+n^{1+1/k})$ & $\Ot(\mu n^{1/k})$ & $2k-5$ & Th\ref{T-DO-2k-1-4c-Weighted} & $k\ge 5$\\
    \Xhline{3\arrayrulewidth}
    $\Ot(mn^{2/3})$ & $\Ot(m+n^{4/3})$ & $\Ot(\mu n^{1/3})$ & $3$ & ~\cite{DBLP:journals/corr/abs-1201-2703} & \\ \hline
    $\Ot(mn^{1/3})$ & $\Ot(m+n^{4/3})$ & $\Ot(\mu n^{1/3})$ & $3$ & 
    Th\ref{T-2k-3-Weighted} &  \\ \hline
    $\Ot(mn^{1/2})$ & $\Ot(m+n^{5/4})$ & $\Ot(\mu n^{1/4})$ & $4$ & 
    Th\ref{T-2k-4c-Weighted} & \\ \hline
    $\Ot(mn^{3/5})$ & $\Ot(m+n^{6/5})$ & $\Ot(\mu n^{1/5})$ & $5$ & Th\ref{T-DO-2k-1-4c-Weighted} & \\ 
    \Xhline{3\arrayrulewidth}
    $\Ot(mn^{1-c})$ & $\Ot(m+n^{2-c})$ & $\Ot(\mu^t n^{tc})$  & $ 1+1/t$& ~\cite{DBLP:conf/esa/Agarwal14} & $c<1/2$ \\ \hline
    $\Ot(mn^{1-c})$ & $\Ot(m+n^{2-c})$ & $\Ot(\mu^t n^{(t+1)c})$  & $ 1+1/(t+0.5)$& ~\cite{DBLP:conf/esa/Agarwal14} & $c<1/2$ \\ \hline
    $\Ot(mn^{1-c})$ & $\Ot(m+n^{2-2c})$ & $\Ot(\mu^t n^{tc})$ & $1+2/t$ & Th\ref{T-Weighted-5/3} & $c<1/3$\\ \hline
    % % \Xhline{3\arrayrulewidth}
    % $\Ot(mn^{1/2+\eps})$ & $\Ot(m+n^{3/2+\epsilon})$ & $\Ot(\mu n^{1-2\epsilon})$ & $5/3$& ~\cite{DBLP:conf/esa/Agarwal14} & $\eps>0$ \\ \hline
    % $\Ot(mn^{2/3+\eps})$ & $\Ot(m+n^{4/3+2\epsilon})$ & $\Ot(\mu^3 n^{1-3\epsilon})$ & $5/3$& Theorem~\ref{T-Weighted-5/3} & $\eps>0$ \\ 
    % \Xhline{3\arrayrulewidth}
    % $\Ot(mn^{1/k})$ & $\Ot(m+n^{(1-c)(1+1/k)})$ & $\Ot(\mu n^{c})$ &  $4k-1$  & ~\cite{DBLP:journals/corr/abs-1201-2703} & $c<\frac{1}{k-1})$\\ \hline 
    % $\Ot(mn^{(1-c)(1/k)})$ & $\Ot(m+n^{(1-c)(1+)1/k)})$ & $\Ot(\mu^2 n^{2c})$ & $\frac{4}{3}(2k-1)$  & Theorem~\ref{T-Oracle-4/3(2k-1)d(u,v)-approx-Weighted} & $c<\frac{1}{2k+4})$ \\ \hline 
    \end{tabular}
    \caption{Distance oracles in weighted graphs.}
    \label{tab:weighted-new}
\end{table}

\subsection{$n$-PSP and $ANSC$ results}
\begin{table}[H]
    \centering
    \begin{tabular}{|c|c|c|c|}
    \hline
    Time  & Approximation & Ref. & Comment\\
     \hline\hline
     $\Ot(m+n^{1+2/k})$ & $(2k-1)(2k-2)\delta+4k-2$ & \cite{DBLP:conf/focs/DalirrooyfardJW22} & Unweighted, $k\ge 2$\\ \hline
     $\Ot(m+n^{1+2/k})$ & $(2k-1)(2k-3)\delta$ & Theorem~\ref{T-NPSP-2k-1-2k-3} & $k\ge 2$\\
    \Xhline{3\arrayrulewidth}
    $\Omega(m^{2-\frac{2}{k+1}} n^{\frac{1}{k+1}-o(1)})$ & $(1+1/k)\delta$ & ~\cite{DBLP:conf/focs/DalirrooyfardJW22} & \\ \hline
    $\Ot(m^{2-\frac{2}{k+1}} n^{\frac{1}{k+1}})$ & $(1+1/k)\delta$ & ~\cite{DBLP:conf/esa/Agarwal14}\footnotemark & \\ \hline
    $\Ot(m^{1-\frac{1}{k+1}}n)$ & $\delta + 2\lceil \delta / 2k \rceil$ &  Theorem~\ref{T-NPSP-1+1/k-Unweighted} & Unweighted, $k\ge 2$\\
    \hline
    \end{tabular}
    \caption{$nPSP$ in weighted graphs unless stated otherwise. $\delta=d(u,v)$. }
    \label{tab:npsp}
\end{table}
\footnotetext{This result is obtained by constructing the distance oracle of Agarwal~\cite{DBLP:conf/esa/Agarwal14} and querying it $n$ times.}

\begin{table}[H]
    \centering
    \begin{tabular}{|c|c|c|c|}
    \hline
    Time  & Approximation & Ref. & Comment\\
     \hline\hline
    $\Ot(m^{2-2/k}n^{1/k})$ & $SC(u)+2\lceil SC(u)/2(k-1)\rceil$ & \cite{DBLP:conf/focs/DalirrooyfardJW22} & Unweighted, $k\ge 2$\\ \hline
    $\Ot(mn^{1-1/k})$ & $SC(u)+2\lceil SC(u)/2(k-1) \rceil$ & Theorem~\ref{T-ANSC-1+1/k-approx-unweighted} & Unweighted, $k\ge 2$\\ \hline
    $\Ot(m^{2-\frac{1}{k}})$ & $(1+\frac{1}{k-1})SC(u)$ & Theorem~\ref{T-ANSC-1+1/k-approx-weighted} & Weighted, $k\ge 2$\\ 
    \hline
    \end{tabular}
    \caption{$ANSC$ algorithm improvements in undirected graphs. $SC(u)$ is the shortest cycle for $u$. }
    \label{tab:ansc}
\end{table}

\section{Simple lower bound for non-adjacent vertex pairs}\label{S-LB-appendix}
In this section, we prove the following theorem.
\Reminder{T-LB-2}

\begin{proof}
    Let $G$ be a graph with girth $\geq 2k+2$ and $\Omega(n^{1+1/k})$ edges, given from the Erd\H{o}s conjecture. 
    For every $v\in V$, we add a vertex $v'$ and an edge $(v,v')$ to $G$. 
    Any distance oracle with $o(n^{1+1/k})$ on $G$ must have at an edge $(u,v)\in E$ that is not preserved, and therefore $\hat{d}(u,v)>1$.
    
    Let $d_{>1}(u,v)$ be the second shortest distance between $u$ and $v$ in $G$.
    Since the girth $\geq 2k+2$ we know that $d_{>1}(u,v)\ge 2k+1$. 
    Since $\hat{d}(u,v)>1$ and the distance oracle returns as an estimation a length of a path in $G$ we get that $\hat{d}(u,v) \ge d_{>1}(u,v) \ge 2k+1$. 
    Moreover, $d_{>1}(u',v) = 1+d_{>1}(u',v) \ge 2k+2$, and therefore $\hat{d}(u',v) \ge 2k+2$, as required.
\end{proof}

\end{document}